\documentclass[12pt, reqno]{amsart}
\usepackage{mathtools}
\usepackage{amssymb,amsmath,dsfont}
\usepackage[table,xcdraw]{xcolor}
\usepackage{pgfplots}
\pgfplotsset{compat=1.14}
\usepackage{enumitem}
\usepackage{calc}
\usepackage{hyperref}
\usepackage{comment}

\usepackage[margin=1in]{geometry}

\usepackage{booktabs}
\usepackage{caption} 
\captionsetup[table]{skip=10pt}

\newtheorem{thm}{Theorem}[section]
\newtheorem{theorem}[thm]{Theorem}
\newtheorem{lemma}[thm]{Lemma}
\newtheorem{lem}[thm]{Lemma}
\newtheorem{prop}[thm]{Proposition}
\newtheorem{fact}[thm]{Fact}
\newtheorem{cor}[thm]{Corollary}
\newtheorem{defn}[thm]{Definition}
\newtheorem{claim}[thm]{Claim}
\newtheorem{remark}[thm]{Remark}
\theoremstyle{plain}

\newcommand{\Z}{\mathbb{Z}}
\newcommand{\ep}{\varepsilon}
\newcommand{\R}{\mathbb{R}}
\newcommand{\N}{\mathbb N}
\newcommand{\white}{\text{white}}
\DeclareMathOperator{\cov}{Cov}
\DeclareMathOperator{\var}{Var}
\renewcommand{\P}{\mathbb{P}}
\newcommand{\E}{\mathbb{E}}

\newcommand{\eqd}{\,{\buildrel d \over =}\,}
\DeclareMathOperator{\GE}{GE}

\DeclareMathOperator{\std}{Std}
\DeclareMathOperator{\med}{Med}
\DeclareMathOperator{\supp}{Supp}
\DeclareMathOperator{\DE}{DE}

\DeclareMathOperator{\law}{law}

\DeclareMathOperator{\Dob}{Dob}
\DeclareMathOperator{\sign}{sign}

\setcounter{secnumdepth}{4}

\newtheorem*{assumption*}{\assumptionnumber}
\providecommand{\assumptionnumber}{}
\makeatletter
\newenvironment{assumption}[1]
 {%
  \renewcommand{\assumptionnumber}{Assumption #1}%
  \begin{assumption*}%
  \protected@edef\@currentlabel{#1}%
 }
 {%
  \end{assumption*}
 }
\makeatother

\newtheorem*{disorder*}{\assumptionnumber}
\providecommand{\assumptionnumber}{}
\makeatletter

\makeatother

\title{Minimal surfaces in random environment}

\author{Barbara Dembin}
\address{Barbara Dembin\hfill\break
    IRMA, CNRS et Université de Strasbourg, Strasbourg,France}
\email{barbara.dembin@math.unistra.fr}
\author{Dor Elboim}
\address{Dor Elboim\hfill\break
    Department of Mathematics,
    Stanford University,
    California, United States.}
\email{dorelboim@gmail.com}
\author{Daniel Hadas}
\address{Daniel Hadas\hfill\break School of Mathematical Sciences, Tel Aviv University, Tel Aviv, Israel.}
\email{danielhadas1@mail.tau.ac.il}
\author{Ron Peled}
\address{Ron Peled\hfill\break Department of Mathematics, University of Maryland, College Park, United States.\hfill\break
School of Mathematical Sciences, Tel Aviv University, Tel Aviv, Israel.}
\email{peledron@tauex.tau.ac.il}
\date{\today}

\begin{document}

\begin{abstract}
A minimal surface in a random environment (MSRE) is a surface which minimizes the sum of its elastic energy and its environment potential energy, subject to prescribed boundary conditions. Apart from their intrinsic interest, such surfaces are further motivated by connections with disordered spin systems and first-passage percolation models. We wish to study the geometry of $d$-dimensional minimal surfaces in a $(d+n)$-dimensional random environment. Specializing to a model that we term harmonic MSRE, in an ``independent'' random environment, we rigorously establish bounds on the geometric and energetic fluctuations of the minimal surface, as well as versions of the scaling relation $\chi=2\xi+d-2$ that ties together these two types of fluctuations. In particular, we prove, for all values of $n$, that the surfaces are delocalized in dimensions $d\le 4$ and localized in dimensions $d\ge 5$. Moreover, the surface delocalizes with power-law fluctuations when $d\le 3$ and sub-power-law fluctuations when $d=4$. Our localization results apply also to harmonic minimal surfaces in a periodic random environment.
\end{abstract}

\maketitle

\begin{figure}[htp]
    \centering
    \includegraphics[width=17cm]{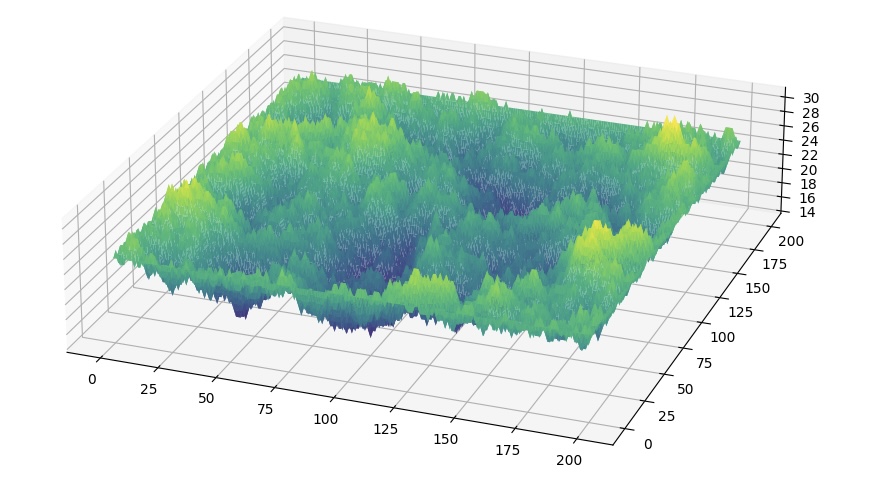}
    \caption{A minimal surface in independent disorder with $d=2$, $n=1$.}
    \label{fig:d=2}
\end{figure}

\setcounter{tocdepth}{2}
\newpage
\tableofcontents

\section{Introduction}

The mathematical theory of minimal surfaces
has been a topic of enduring interest since the 18th century works of Euler and Lagrange (see~\cite{colding2011course, meeks2012survey, de2022regularity}). In this work we embark on a mathematical exploration of the properties of minimal surfaces \emph{in a random environment}, i.e., surfaces placed in an inhomogeneous, random media, which minimize the sum of an internal (elastic) energy and an environment potential energy subject to prescribed boundary conditions. Apart from its intrinsic interest, the study of such surfaces is further motivated by connections with disordered spin systems and first-passage percolation models.

We focus on a specific model, that we term \emph{harmonic} minimal surfaces in random environment, having special properties which facilitate its analysis. 
The next sections describe the model and our results. Further discussion of the mathematics and physics background is presented in Section~\ref{sec:background} and Section~\ref{sec:disorder types}.

\subsection{The model}\label{subsection : the model}

Let $d,n\ge 1$ be integers. We model a $d$-dimensional minimal surface in a $(d+n)$-dimensional random environment by the ground configurations (functions of minimal energy) of the following Hamiltonian. The surface is modeled by a function $\varphi:\Z^d\to\R^n$, defined on the cubic lattice $\Z^d$ and having $n$ components. Given an \emph{environment} $\eta:\Z^d\times\R^n\to(-\infty,\infty]$, later taken to be random and termed the \emph{disorder}, and an \emph{environment strength} $\lambda>0$, the \emph{formal Hamiltonian} for $\varphi$ is
\begin{equation}\label{eq:formal Hamiltonian}
    H^{\eta,\lambda}(\varphi):=\frac{1}{2}\sum_{u\sim v}\|\varphi_u - \varphi_v\|^2 + \lambda \sum_{v} \eta_{v,\varphi_v},
\end{equation}
where $\|\cdot\|$ is the Euclidean norm in $\R^n$ and $u\sim v$ indicates that $u,v\in\Z^d$ are adjacent.

Our goal is to study the minimizers of $H^{\eta,\lambda}$ in finite domains with prescribed boundary values.
Given a finite $\Lambda\subset\Z^d$ and a function $\tau:\Z^d\to\R^n$, the \emph{finite-volume Hamiltonian} in~$\Lambda$ is given by
\begin{equation}\label{eq:finite volume Hamiltonian}
H^{\eta,\lambda,\Lambda}(\varphi):=\frac{1}{2}\sum_{\substack{u\sim v\\\{u,v\}\cap\Lambda\neq\emptyset}}\|\varphi_u - \varphi_v\|^2 + \lambda \sum_{v\in\Lambda} \eta_{v,\varphi_v},
\end{equation}
and the \emph{configuration space with boundary value $\tau$ outside $\Lambda$} is given by
\begin{equation}
   \Omega^{\Lambda,\tau} := \{\varphi:\Z^d\to\R^n\colon \varphi_v=\tau_v\text{ for $v\in\Z^d\setminus\Lambda$}\}.
\end{equation}
We write $\varphi^{\eta,\lambda,\Lambda,\tau}$ for the \emph{ground configuration} of the finite-volume model, i.e., for the $\varphi\in \Omega^{\Lambda,\tau}$ which minimizes $H^{\eta,\lambda,\Lambda}$ (we assume that minimizers exist in~\ref{as:exiuni} below). In the case of multiple minimizers, we let $\varphi^{\eta,\lambda,\Lambda,\tau}$ denote the first in some lexicographic order (defined from a fixed total order on $\Z^d$ and a lexicographic order on $\R^n$). We let
\begin{equation}
   \GE^{\eta,\lambda,\Lambda,\tau}:=H^{\eta,\lambda,\Lambda}(\varphi^{\eta,\lambda,\Lambda,\tau})
\end{equation}
be the \emph{ground energy}. To lighten notation, we shall often omit the superscript $\tau$ when $\tau\equiv 0$.

We call this model \emph{harmonic minimal surfaces in random environment} to highlight the important role that harmonic functions on the lattice play in its analysis (see Section~\ref{sec:main identity and consequences})\footnote{Similarly, the lattice Gaussian free field, for which~\eqref{eq:formal Hamiltonian} is a disordered version, is sometimes termed the harmonic crystal (e.g., in~\cite{brascamp1975statistical}).}.

\subsection{The disorder}
The properties of the minimal surfaces of our model depend crucially on the choice of  disorder $\eta$. Our main, but not exclusive, focus is on ``independent disorders'' which informally means that the $(\eta_{v,\cdot})_{v\in\Z^d}$ are independent processes and that, for each $v$, the process $t\mapsto\eta_{v,t}$ is \emph{stationary with finite-range correlations}. This choice is related to first-passage percolation and disordered (random-bond) Ising ferromagnet models, as discussed in Section~\ref{sec:disorder types} where we also discuss other ``disorder universality classes''. We note that our localization results apply also to the class of ``periodic disorders'' (Section~\ref{sec:periodic disorder}). 

Figure~\ref{fig:d=2}, Figure~\ref{fig:n=1} and Figure~\ref{fig:n=2} present simulations of discrete minimal surfaces in an independent disorder having $d+n\le 3$ (directed first-passage percolation and an analogous surface model). These are believed to behave similarly to the model~\ref{eq:formal Hamiltonian} (see Section~\ref{sec:independent disorder}).

\begin{figure}[tp]
    \centering
    \includegraphics[width=17cm]{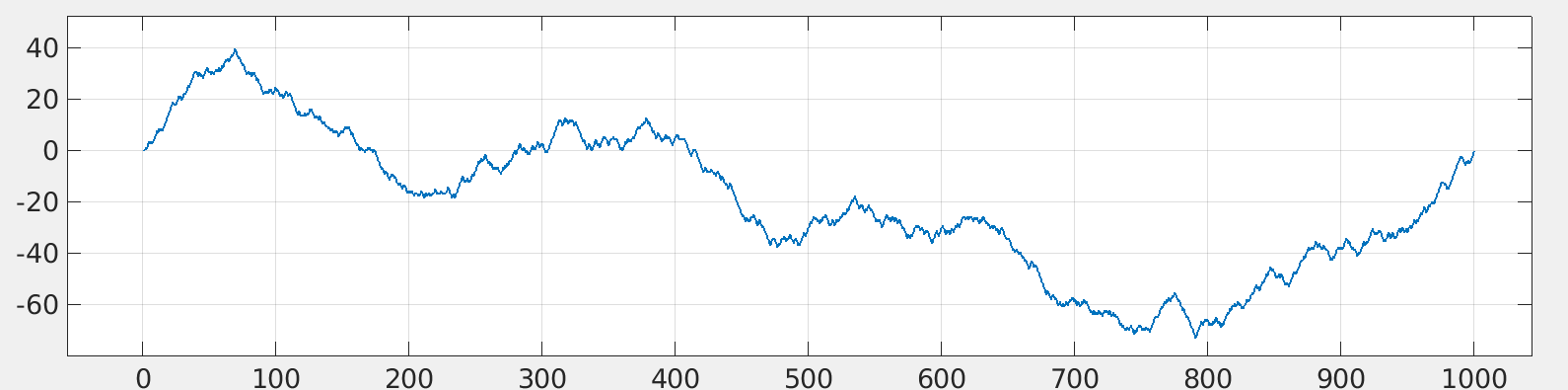}
    \caption{Simulation of directed first-passage percolation with $d=n=1$.
    }
    \label{fig:n=1}
\end{figure}

\begin{figure}[tp]
    \centering
    \includegraphics[width=17cm]{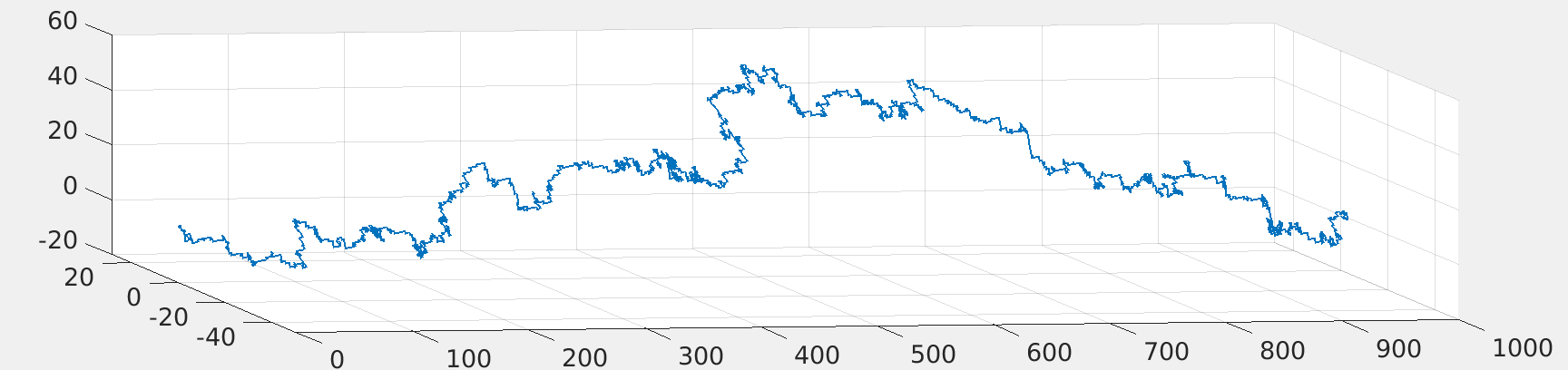}
    \caption{Simulation of directed first-passage percolation with $d=1$, $n=2$.
    }
    \label{fig:n=2}
\end{figure}

We proceed to describe a specific class of disorders, termed $\eta^\white$, to which all of our results apply. Following that, as many of our results (especially on localization and the scaling relation) apply significantly more generally, we continue with a list of general assumptions on the disorder which we will use in our proofs (all these assumptions are satisfied for $\eta^\white$).

\subsubsection{Smoothed white noise disorder $\eta^\white$.}\label{sec:whitenoise}
Our main disorder example consists of a smoothed version of white noise, as we now define. Recall that a white noise $W$ on $\R^n$ is a centered Gaussian process, indexed by (real-valued) functions $f\in L^2(\R^n)$, with covariance given by $\cov(W(f),W(g))=\int f(t)\cdot g(t)dt$ (see, e.g.,~\cite[Section 1.4]{le2016brownian}).

Let $b\in L^2(\R^n)$ be a function (termed the ``bump function'') satisfying:
\begin{enumerate}
\item \label{as:lipschitz} $b$ is a Lipschitz function: $\exists\mathcal{L}>0$ such that $|b(t)-b(s)|\le \mathcal{L}\|t-s\|$ for $t,s\in\R^n$,
\item $b\ge 0$ everywhere and $b(t)=0$ when $\|t\|\ge 1$,
\item \label{as:standard_1} $\int b(t)^2\,dt=1$.
\end{enumerate}
Let $(W_v)_{v\in\Z^d}$ be independent white noises on $\R^n$. For each $v\in\Z^d$, define $t\mapsto \eta_{v,t}$ as the convolution of $W_v$ with $b$, i.e.,
\begin{equation}\label{eq:eta white definition}
   \eta_{v,t} := W_v(b(\cdot-t)).
\end{equation}
In the sequel we denote by $\eta^\white$ the (sample path) continuous  version of the convolution~\eqref{eq:eta white definition} with any choice of $b$ satisfying the above properties (the precise choice of $b$ is immaterial to our analysis); the continuous version exists by (a generalization of) Kolmogorov's theorem (see, e.g., \cite[Theorem 4.23]{Kallenberg}), whose hypothesis follows from the fact that for each $v$ and every $t,s\in \mathbb R^n$, $\eta _{v,t}-\eta _{v,s}$ has the centered Gaussian distribution with standard deviation is at most a constant times $\mathcal L \|t-s\|^2$. Our results often have constants depending on the choice of $\eta$; when applying such results to $\eta^\white$, the constants only depend on $n$ and the (minimal) value of the Lipschitz constant $\mathcal{L}$ of the bump function~$b$.

\subsubsection{Assumptions on the disorder}\label{sec:disorder assumptions}
We now describe general assumptions on $\eta$ that will be used in our proofs. These are all verified for $\eta^\white$ in Section~\ref{sec:verifying assumptions}. The general setup highlights the important features of $\eta$ and allows the results to apply to a wide variety of disorders.

\smallskip
Our first assumption, used throughout, ensures the existence of configurations with minimal energy. It implicitly requires the disorder to be defined with a sufficiently rich sigma algebra to include the described event (our main disorder examples satisfy this as they either have continuous sample paths or are simple transformations of point processes; see Section~\ref{sec:disorder types}).
\begin{assumption}{(ExiMin)}\label{as:exiuni}For every $\lambda>0$, finite $\Lambda\subset\Z^d$ and every $\tau:\Z^d\to\R^n$, almost surely, $H^{\eta,\lambda,\Lambda,\tau}$ attains a minimum on every closed subset of $\Omega^{\Lambda,\tau}$ (the minimum may be $+\infty$ but cannot be $-\infty$). Moreover, the global minimum satisfies $\GE^{\eta,\lambda,\Lambda,\tau}<\infty$ almost surely.
\end{assumption}
A major role in our analysis is played by the following stationarity assumption, stating that the distribution of the disorder $\eta$ is invariant under shifts.
\begin{assumption}{(Stat)}\label{as:stat} For every $s:\Z^d\to\R^n$, the \emph{shifted disorder} $\eta^s:\Z^d\times\R^n\to(-\infty,\infty]$ defined by
\begin{equation}\label{eq:eta s def}
    \eta^s_{v,t}:=\eta_{v,t-s_v}
\end{equation}
has the same distribution as $\eta$.
\end{assumption}

It is also natural to assume that the disorder processes are independent between $v\in\Z^d$ and have finite-range correlations in $\R^n$. This is made explicit in the following assumption.

\begin{assumption}{(Indep)}\label{as:indep} The processes $(\eta _{v,\cdot})_{v\in\Z^d}$ are independent. Additionally, the restrictions $\eta |_{\{v\}\times A}$ and $\eta |_{\{v\}\times B}$ are independent for each $v\in\Z^d$ and each $A,B\subset\R^n$ satisfying $\inf_{t\in A, s\in B}\|t-s\|\ge 2$.
\end{assumption}
Note that if~\ref{as:indep} is assumed then~\ref{as:stat} follows simply from requiring that for each $v\in\Z^d$, the process $t\mapsto\eta_{v,t}$ is stationary (under translations in $\R^n$).

\begin{remark}\label{rem:wide variety of disorders}
    The above assumptions turn out to already suffice for our scaling relation results (Section~\ref{sec:scaling relation}), and yet allow a very wide variety of disorders: 
For instance, they allow the distributions of the processes $(\eta_{v,\cdot})$ to vary arbitrarily between vertices $v$, possibly making some parts of $\Lambda$ have ``stronger'' disorder than other parts. Additionally, as no restrictions are made on the tail decay of the disorder values (as long as~\ref{as:exiuni} is satisfied), the assumptions allow disorders with heavy tails.

Note also that as these assumptions do not reference the environment strength $\lambda$, it may be absorbed into $\eta$ when no other assumptions are imposed.
\end{remark}

\smallskip
Our final assumption, used in our localization results, concerns the concentration of the ground energy. We assume that the variance of the energy grows at most like the domain size, with sub-Gaussian tails on this scale. Moreover, we require a conditional version of this statement, stating that when conditioning on some of the disorder processes, concentration holds with the domain size replaced by the number of remaining disorder processes.

For a subset $A\subset\Z^d$ we write $\eta_A$ for the collection of processes $(\eta_{v,\cdot})_{v\in A}$. We also use the notation $A^c:=\Z^d\setminus A$.
\begin{assumption}{(Conc)}\label{as:conc} 
There exist $K,\kappa>0$ such that for every $\lambda>0$, every finite $\Delta\subset\Lambda\subset\Z^d$, every $\tau:\Z^d\to\R^n$ and every $\rho>0$, we have $\E|\GE^{\eta,\lambda,\Lambda,\tau}|<\infty$ and, almost surely,
\begin{equation}\label{eq:tal}
    \P\left(\Big|\GE^{\eta,\lambda,\Lambda,\tau} - \E\big(\GE^{\eta,\lambda,\Lambda,\tau}\mid \eta_{\Delta^c}\big)\Big|\ge \rho\lambda\sqrt{|\Delta|}\Biggm|\eta_{\Delta^c}\right)\le K e^{-\kappa\rho^2}.
\end{equation}
\end{assumption}
This type of sub-Gaussian concentration is often derived from Talagrand's or Hoeffding's concentration inequalities (this route is taken for periodic disorders in Section~\ref{sec:periodic disorder}). For $\eta^\white$, we verify~\ref{as:conc} using the Gaussian concentration inequality (Lemma~\ref{lem:ass_conc}).

\smallskip
Our delocalization results rely on perturbations of the disorder for which the Gaussian structure of $\eta^\white$ is convenient. We refrain from phrasing general assumptions allowing such perturbations and limit our proofs of the delocalization results to $\eta^\white$.

\subsubsection{Effect of $\lambda$}\label{sec:effect of lambda} Changing the environment strength $\lambda$ is equivalent to rescaling the minimal surface, ground energy and environment. Indeed, the Hamiltonian~\eqref{eq:finite volume Hamiltonian} satisfies
\begin{equation}
    H^{\eta,\lambda,\Lambda}(\varphi) = \lambda\Bigg(\frac{1}{2}\sum_{\substack{u\sim v\\\{u,v\}\cap\Lambda\neq\emptyset}}\Big\|\frac{\varphi_u}{\sqrt{\lambda}} - \frac{\varphi_v}{\sqrt{\lambda}}\Big\|^2 + \sum_{v\in\Lambda} \eta_{v,\varphi_v}\Bigg)=\lambda H^{\eta^\lambda,1,\Lambda}\bigg(\frac{\varphi}{\sqrt{\lambda}}\bigg),
\end{equation}
where the environment $\eta^\lambda:\Z^d\times\R^n\to(-\infty,\infty]$ is given by
\begin{equation}
    \eta^{\lambda}_{v,t}:=\eta_{v,t\sqrt{\lambda}}.
\end{equation}
We note that when $\eta=\eta^\white$ then the rescaled disorder $\eta^\lambda$ is given by the formula~\eqref{eq:eta white definition} with the bump function $b$ replaced by the rescaled function $b^\lambda$ given by $b^\lambda(t):=\lambda^{n/4}b(\sqrt{\lambda}t)$.

\subsection{Main results}
We present results on the following topics: (i) the height fluctuations of the minimal surface, (ii) the scaling relation between the ground energy and height fluctuations, and (iii) lower bounds on the ground energy fluctuations. In each statement, we indicate the assumptions on the disorder under which it is proved, or write $\eta=\eta^\white$ if it is established only for that specific example.

Throughout we let \begin{equation}\label{eq:Lambda L def}
    \Lambda_L:=\{-L,-L+1,\ldots, L\}^d
\end{equation}
for an integer $L\ge 0$. We denote by
\begin{equation}
    d_\infty(v,\Lambda_L^c):=\min_{w\in\Z^d\setminus\Lambda_L}\|v-w\|_\infty
\end{equation}
the $\ell_\infty$ distance of $v$ from the outer boundary of $\Lambda_L$.
Lastly, for a random variable $X$, we write $\med(X)$ for an arbitrary median of $X$ and $\std(X)$ for the standard deviation of $X$ (which is defined, possibly as $+\infty$, when $\E|X|<\infty$).

\begin{table}[]
\begin{tabular}{@{}|lccc|@{}}
\toprule
\rowcolor[HTML]{FFCCC9}   
\multicolumn{4}{|c|}{\cellcolor[HTML]{96C5FF}Height fluctuations for $n=1$}  
\\ \midrule
\multicolumn{1}{|l|}{}                                & \multicolumn{1}{c|}{Lower bound}                       & \multicolumn{1}{c|}{Predicted}                                & Upper bound \\ \midrule
\rowcolor[HTML]{EFEFEF} 
\multicolumn{1}{|l|}{\cellcolor[HTML]{EFEFEF}$d=1$}   & \multicolumn{1}{c|}{\cellcolor[HTML]{EFEFEF}$L^{0.6}$} & \multicolumn{1}{c|}{\cellcolor[HTML]{EFEFEF}$L^{2/3}$}        & $L^{0.75}$  \\ \midrule
\multicolumn{1}{|l|}{$d=2$}                           & \multicolumn{1}{c|}{$L^{0.4}$}                         & \multicolumn{1}{c|}{$L^{0.41\pm0.01}$}                         & $L^{0.5}$   \\ \midrule
\rowcolor[HTML]{EFEFEF} 
\multicolumn{1}{|l|}{\cellcolor[HTML]{EFEFEF}$d=3$}   & \multicolumn{1}{c|}{\cellcolor[HTML]{EFEFEF}$L^{0.2}$} & \multicolumn{1}{c|}{\cellcolor[HTML]{EFEFEF}$L^{0.22\pm0.01}$} & $L^{0.25}$  \\ \midrule
\multicolumn{1}{|l|}{$d=4$}                           & \multicolumn{1}{c|}{$(\log \log L)^{0.2}$}                                 & \multicolumn{1}{c|}{$(\log L)^{0.2083\ldots}$}                  & $\log L$    \\ \midrule
\rowcolor[HTML]{EFEFEF} 
\multicolumn{1}{|l|}{\cellcolor[HTML]{EFEFEF}$d\ge5$} & \multicolumn{1}{c|}{\cellcolor[HTML]{EFEFEF}1}          & \multicolumn{1}{c|}{\cellcolor[HTML]{EFEFEF}1}                & 1           \\ \bottomrule
\end{tabular}
\caption{The order of magnitudes of the lower and upper bounds of Theorem~\ref{thm:localization} and Theorem~\ref{thm:delocalization intro} on the transversal fluctuations of minimal surfaces on $\Lambda_L$ with $n=1$, $\eta^\white$ disorder and fixed $\lambda$. These are presented alongside the predictions from the physics literature (see Section~\ref{sec:physics results background}).
}\label{table:transversal fluctuations n=1} 
\end{table}

\subsubsection{Transversal fluctuations}
Our first pair of results concern the localization and delocalization properties of the minimal surfaces. More precisely, we consider the minimal surface on the domain $\Lambda_L$ with zero boundary values and estimate the norm of the surface at vertices of the domain. Confirming physics predictions (see Section~\ref{sec:physics results background}), we prove that the surface heights have, at least as two-sided bounds, \emph{power-law} fluctuations in dimensions $d=1,2,3$, while having unbounded \emph{sub-power-law} fluctuations in dimension $d=4$ and \emph{bounded} fluctuations in all dimensions $d\ge5$. These results are established for all values of~$n$.

\smallskip
The upper bounds in our localization result, stated next, are smaller for vertices close to the boundary of~$\Lambda_L$. They are accompanied by an upper bound on the \emph{maximal} fluctuation with an iterated logarithm behavior for $d=1$ and a fractional logarithm behavior for $d\in\{2,3\}$.
\begin{theorem}[Localization. \ref{as:exiuni}+\ref{as:stat}+\ref{as:conc}]\label{thm:localization}
There exist $C,c>0$, depending only on $d,n$ and the parameters $(K,\kappa)$ of~\eqref{eq:tal}, such that for each integer $L\ge 1$, each $\lambda>0$ and each $v\in\Lambda_L$,
\begin{equation}\label{eq:localization estimate}
   \E(\|\varphi^{\eta,\lambda,\Lambda_L}_v\|)\le C\sqrt\lambda\begin{cases}r_v^{\frac{4-d}{4}}&d=1,2,3\\ \log(r_v+1)&d=4\\1&d\ge 5\end{cases}
\end{equation}
where we write $r_v:=d_\infty(v,\Lambda_L^c)$.
Moreover, for $d=1$,
\begin{equation}\label{eq:loglog}
    \P\left(\forall v\in\Lambda_L, \  \|\varphi^{\eta,\lambda,\Lambda_L}_v\|\le t\sqrt\lambda \left(\log\log \frac{3L}{r_v}\right)^ {1/4} r_v^ {\frac{4-d}4} \right)\ge 1-Ce^ {-ct^4}
\end{equation}
and for $d\in\{2,3\}$,
\begin{equation}\label{eq:locintro}
    \P\left(\forall v\in\Lambda_L, \  \|\varphi^{\eta,\lambda,\Lambda_L}_v\|\le t\sqrt\lambda \left(\log \frac{2L}{r_v}\right)^ {1/4} r_v^ {\frac{4-d}4} \right)\ge 1-Ce^ {-ct^4}.
\end{equation}
\end{theorem}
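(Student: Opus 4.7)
The plan is to combine the main harmonic identity (from Section~\ref{sec:main identity and consequences}) with the conditional concentration assumption~\ref{as:conc} in a multi-scale Imry--Ma argument. The key building block is a \emph{scale-$r'$ inequality}: for $v \in \Lambda_L$, $r' \le r_v$, and $s \in \R^n$, take a pyramid-type test function $h_s : \Z^d \to \R^n$ equal to $s$ on an inner ball $B_{r'/2}(v)$, decaying linearly to zero on the annulus $B_{r'}(v) \setminus B_{r'/2}(v)$, and vanishing outside $B_{r'}(v)$. Using $\varphi^{\eta^{-h_s}}+h_s$ as a competitor for $\varphi^\eta$ and $\varphi^\eta-h_s$ as a competitor for $\varphi^{\eta^{-h_s}}$, together with the elementary identity $\sum_u \eta_{u,\varphi_u+h_{s,u}} = \sum_u \eta^{-h_s}_{u,\varphi_u}$, yields two one-sided bounds that combine into
\[
\bigl|\langle s,\ \bar{\varphi}^{\,\eta}_{\partial B_{r'/2}(v)} - \bar{\varphi}^{\,\eta}_{\partial B_{r'}(v)}\rangle\bigr|\, {r'}^{\,d-2} \lesssim \|s\|^2\, {r'}^{\,d-2} + \bigl|\GE^{\eta,\lambda,\Lambda_L} - \GE^{\eta^{-h_s},\lambda,\Lambda_L}\bigr|,
\]
the factor ${r'}^{\,d-2}$ being the Dirichlet energy of $h_s$ per unit $\|s\|^2$ and matching the net mass of $-\Delta h_s$ on each shell. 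Since $h_s$ is supported in $B_{r'}(v) \subset \Lambda_L$, stationarity~\ref{as:stat} applied to this shift gives the conditional-mean identity $\E(\GE^{\eta^{-h_s}} \mid \eta_{B_{r'}^c}) = \E(\GE^\eta \mid \eta_{B_{r'}^c})$; combined with~\ref{as:conc} (with $\Delta = B_{r'}(v)$) this yields $|\GE^\eta - \GE^{\eta^{-h_s}}| \le 2t\lambda\, {r'}^{\,d/2}$ with conditional probability $\ge 1 - 2Ke^{-\kappa t^2}$. Optimising $s \in \R^n$ then produces
\[
\P\Bigl(\bigl\|\bar{\varphi}^{\,\eta}_{\partial B_{r'/2}(v)} - \bar{\varphi}^{\,\eta}_{\partial B_{r'}(v)}\bigr\| \ge u\sqrt\lambda\, {r'}^{\,(4-d)/4}\Bigr) \le C e^{-c u^4},
\]
the quartic exponent emerging because $\|s\|$ enters quadratically against a Gaussian-tailed $\GE$ difference.

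The pointwise bound~\eqref{eq:localization estimate} then follows by dyadic telescoping. Setting ${r'}_k = 2^k$ for $k = 0, 1, \ldots, K$ with $2^K \asymp 2 r_v$, one writes $\varphi^\eta_v$ as the small-scale correction $\varphi^\eta_v - \bar{\varphi}^{\,\eta}_{\partial B_1(v)}$ plus the telescoping sum $\sum_{k=1}^{K}\bigl(\bar{\varphi}^{\,\eta}_{\partial B_{2^{k-1}}(v)} - \bar{\varphi}^{\,\eta}_{\partial B_{2^k}(v)}\bigr)$, whose last shell average vanishes because $\partial B_{2 r_v}(v) \subset \Lambda_L^c$. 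Taking $L^1$ norms in the scale-${r'}_k$ estimate gives
\[
\E\|\varphi^\eta_v\| \lesssim \sqrt\lambda \sum_{k=0}^{K} ({r'}_k)^{(4-d)/4},
\]
which sums geometrically to $\sqrt\lambda\, r_v^{(4-d)/4}$ for $d \le 3$, diverges as $\sqrt\lambda\log(r_v+1)$ for $d = 4$, and is dominated by the smallest-scale term (yielding $\sqrt\lambda$) for $d \ge 5$.

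The uniform bounds~\eqref{eq:loglog} and~\eqref{eq:locintro} follow from a union bound over $v \in \Lambda_L$ and scales ${r'}_k$ applied to the sub-Gaussian tail. For $d = 2, 3$ the $\asymp L^d$ vertices and $\asymp \log r_v$ scales per vertex contribute an absorbable logarithmic factor, producing $(\log(2L/r_v))^{1/4}$. For $d = 1$ a finer chaining argument exploiting the linear ordering of $\Z$ (in the spirit of the law of iterated logarithm for sums) reduces the effective number of independent extremes and yields the sharper $(\log\log(3L/r_v))^{1/4}$ factor.

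The main obstacle is rigorously deriving the scale-$r'$ inequality: the pyramid's discrete Laplacian must be shown to localise the cross-term $\sum_u \varphi^\eta_u (-\Delta h_s)_u$ into the stated shell-average difference with coefficient $\asymp {r'}^{\,d-2}$ and controlled bulk remainder, and the conditional-mean identity for $\GE^\eta$ and $\GE^{\eta^{-h_s}}$ must be cleanly isolated from~\ref{as:stat} when $h_s$ extends up to $\partial B_{r'}(v)$. Both steps are where the harmonic structure established in Section~\ref{sec:main identity and consequences} plays the decisive role.
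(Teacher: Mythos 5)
Your high-level strategy matches the paper's: telescope $\varphi^{\eta}_v$ over dyadic scales using compactly supported shift functions, control each scale via the main identity plus~\ref{as:conc}, and sum. However, there are two substantive gaps in the scale-by-scale step which the paper's choice of test functions is precisely designed to avoid.

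\textbf{The pyramid has the wrong Laplacian for $d\ge 2$.} A radial test function that equals $s$ on $B_{r'/2}(v)$ and decays \emph{linearly} to $0$ on the annulus has discrete Laplacian approximately $h'' + \frac{d-1}{r}h'$; for $d\ge 2$ the second term produces a bulk contribution of size $\sim\|s\|/{r'}^2$ throughout the annulus. Consequently $(\varphi^\eta,-\Delta h_s)$ is a weighted average of $\varphi^\eta$ over the whole annulus, not the shell-average difference $\bar\varphi_{\partial B_{r'/2}}-\bar\varphi_{\partial B_{r'}}$ you write down, and this bulk term is of the same order as the shell terms — it is not a ``controlled remainder.'' This breaks the telescoping: successive annulus averages do not cancel. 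The paper avoids the issue by taking $s_j := (G^v_{\Lambda_j}-G^v_{\Lambda_{j-1}})\,e$ with $\Lambda_j = \Lambda_L\cap(v+(-2^j,2^j)^d)$, which is \emph{harmonic} in the dyadic annulus. Then $\Delta s_j$ is genuinely supported on $\partial\Lambda_{j-1}\cup\partial\Lambda_j$, and the telescoping $\sum_j s_j = G^v_{\Lambda_L}e$ gives the exact identity $\varphi^\eta_v\cdot e=\sum_j(\varphi^\eta,-\Delta s_j)$; the Dirichlet energies $\|\nabla s_j\|^2$ are then read off from Green's-function asymptotics (Lemma~\ref{lem:Green}). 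Note that in $d=1$ (where $\frac{d-1}{r}h'\equiv 0$) the pyramid and the Green's-function difference coincide, which is why the $d=1$ intuition misleads.

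\textbf{The termination of the telescoping is wrong.} You stop at $2^K\asymp 2r_v$ and claim $\partial B_{2r_v}(v)\subset\Lambda_L^c$, which is false for generic $v$ (it requires $v$ to be near the center). The cube $B_{2r_v}(v)$ escapes $\Lambda_L$ through one face but typically not the others. The paper keeps going: it truncates $\Lambda_j := \Lambda_L\cap(v+(-2^j,2^j)^d)$ and runs $j$ up to $m$, where $\Lambda_m = \Lambda_L$ and $\varphi^\eta$ vanishes on $\partial\Lambda_m$. Crucially, in the boundary regime $j>m':=\lfloor\log_2 r_v\rfloor-1$ the Dirichlet energy improves to $\|\nabla s_j\|^2\lesssim r_v^2\,2^{-dj}$, while~\ref{as:conc} gives fluctuations $\lesssim\lambda\sqrt{|\Lambda_j|}$, and it is precisely the interplay of these two estimates (see the two-regime bound~\eqref{1} in Lemma~\ref{lem:loc}) that makes the boundary-scale contributions decay geometrically away from $j=m'$. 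Without this regime, the stated $r_v$-dependence in~\eqref{eq:localization estimate} does not come out, and the uniform bounds~\eqref{eq:loglog},~\eqref{eq:locintro} (which require chaining over neighboring vertices near the boundary via the difference estimate of Lemma~\ref{lem:holder}) cannot be closed.

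Once these two gaps are filled — replace the pyramid with Green's-function differences, and carry the dyadic sum past $r_v$ with the two-regime Dirichlet-energy estimate — the proposal becomes essentially the paper's proof.
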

An extension of the results in dimensions $d\le 3$ under a stronger concentration hypothesis is presented in Theorem~\ref{thm:scaling relation>} below (see also Remark~\ref{rem:chi conc}).

\smallskip
Our delocalization result, stated next, implies that a uniformly-positive fraction of the surface heights exceed a stated delocalization threshold $h$ with uniformly-positive probability.
\begin{theorem}[Delocalization. $\eta = \eta^\white$]\label{thm:delocalization intro} There exists $c>0$, depending only on $n$ and the Lipschitz constant $\mathcal{L}$ of the bump function in~\eqref{eq:eta white definition}, such that the following holds. Let $L\ge 1$ integer and $\lambda>0$. Then
\begin{equation}\label{eq:expected fraction of delocalization}
    \frac{\mathbb E \big[ | \{v\in \Lambda _L : \|\varphi^{\eta,\lambda,\Lambda_L}_v\| \ge h \} | \big]}{|\Lambda_L|} \ge c
\end{equation}
in each of the following cases:
\begin{equation}\label{eq:h cases in delocalization}
h = \begin{cases} \max\big\{c_\lambda^0 L^{\frac{3}{4+n}}, c_\lambda^1 L^{\frac{1}{2}}\big\} & d=1,\\
c_\lambda^0 L^{\frac{4-d}{4+n}} & d\in\{2,3\},\\
c_\lambda^0 (\log \log L)^{\frac{1}{4+n}} & d=4
\end{cases}
\end{equation}
as long as this value of $h$ is at least $1$, with $c_\lambda^0=c\lambda^{\frac{2}{4+n}}$ and $c_\lambda^1=c\min(\lambda,\lambda^ {-1})^{\frac 12}$.

Additionally, in dimensions $d\ge 5$ we have that~\eqref{eq:expected fraction of delocalization} holds with $h=1$ and $c$ replaced by a constant depending only on $d,n,\lambda$ and the Lipschitz constant $\mathcal{L}$ of the bump function in~\eqref{eq:eta white definition}.
\end{theorem}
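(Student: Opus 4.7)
My plan is to combine a lower bound on the ground energy fluctuations with a Cameron--Martin/stationarity comparison argument, converting energy fluctuations into spatial delocalization of the minimizer; the Gaussian structure of $\eta^\white$ is essential throughout. As a first step I would prove $\std(\GE^{\eta,\lambda,\Lambda_L})\gtrsim \lambda L^{d/2}$ (with logarithmic enhancements in low dimensions) using a Gaussian martingale / Efron--Stein decomposition along the independent white noises $(W_v)_{v\in\Lambda_L}$. By the envelope theorem, the differential of $\GE$ with respect to $W_v$ is the $L^2(\R^n)$ vector $\lambda\, b(\cdot-\varphi^*_v)$, of norm $\lambda$ by the normalization of $b$, so summing squared martingale increments accumulates the variance bound across the $L^d$ vertices.

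Second, for a ``bump'' $s:\Z^d\to\R^n$ with $s\equiv 0$ outside $\Lambda_L$ and $s_v=h\cdot e$ on a macroscopic subset $A\subset\Lambda_{L/2}$ (for a unit $e\in\R^n$), denote $\varphi^s:=\varphi^{\eta^s,\lambda,\Lambda_L}$. Assumption~\ref{as:stat} yields $\GE^{\eta^s,\lambda,\Lambda_L}\eqd\GE^{\eta,\lambda,\Lambda_L}$, while the substitution $\psi=\varphi-s$ shows that $\GE^{\eta^s,\lambda,\Lambda_L}=\min_\psi\!\big[H^{\eta,\lambda,\Lambda_L}(\psi)+\sum_{u\sim v}(\psi_u-\psi_v)\cdot(s_u-s_v)\big]+\tfrac12\sum_{u\sim v}\|s_u-s_v\|^2$, with the last term --- the elastic cost of the bump --- of order $h^2L^{d-2}$ in $d\ge 2$, $h^2/L$ in $d=1$, and $h^2\log L$ in $d=4$. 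Comparing the tilted minimum to $\GE^\eta$ under the same realization of $\eta$, the stationarity equality forces the linear tilt to absorb an $\Omega(\std(\GE))$ fluctuation; translated through the first step, this gives a lower bound on how much $\varphi^*$ must fluctuate in the direction $e$ on a macroscopic fraction of $A$. A Paley--Zygmund / second-moment argument, using the conditional sub-Gaussian concentration of Assumption~\ref{as:conc} to bound the second moment of $|\{v\in A:\|\varphi^*_v\|\ge h\}|$, then promotes the expectation bound to the in-probability statement~\eqref{eq:expected fraction of delocalization}.

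Optimizing $h$ by balancing the elastic cost of the bump against the Gaussian disorder gain available in its transverse $h$-tube (which scales non-trivially with $h$ because of the $n$-dimensional transverse structure of $\eta^\white$) yields the threshold $h\asymp \lambda^{2/(4+n)}L^{(4-d)/(4+n)}$ in $d\in\{2,3\}$, its iterated-logarithm analogue $(\log\log L)^{1/(4+n)}$ for the marginal dimension $d=4$, and a trivial threshold in $d\ge 5$, where the $h=1$ statement is instead obtained by a direct single-vertex perturbation: Assumption~\ref{as:indep} isolates a single $\varphi^*_v$ and the Lipschitz property of $b$ produces $\Omega(1)$ delocalization there. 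The extra $L^{1/2}$ branch in $d=1$ is a random-walk-type estimate using that the free ($\lambda\to 0$) minimizer on an interval of length $L$ is a discrete Brownian bridge of fluctuation $L^{1/2}$, and that a Gaussian fluctuation of $\eta^\white$ of magnitude comparable to the elastic cost of such a shift forces delocalization. The main obstacle is obtaining the sharp exponent $(4-d)/(4+n)$, not the cruder $(4-d)/4$ produced by a naive central-limit argument on the disorder: extracting the improved exponent requires exploiting the $n$-dimensional transverse Gaussian covariance structure of $\eta^\white$ in the balance step, which I expect to proceed via a conditional Gaussian argument carefully accounting for the disorder gain across $\asymp h^n$ approximately decorrelated transverse slots, combined with the conditional sub-Gaussian concentration~\eqref{eq:tal}.
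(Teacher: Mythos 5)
The proposal contains a fundamental gap at the very first step. The claimed lower bound $\std(\GE^{\eta,\lambda,\Lambda_L})\gtrsim\lambda L^{d/2}$ is \emph{false} for $d\le 4$ precisely in the delocalized regime you are trying to establish (for $d=n=1$ the prediction is $\chi=1/3$, and the paper's best rigorous lower bound is $L^{1/5}$). The reason the envelope-theorem argument breaks down is that a Gaussian martingale (or Clark--Ocone) lower bound on the variance requires the squared $L^2$-norms of the \emph{conditional expectations} of the Malliavin derivative, namely $\E\big[\lambda\,b(\cdot-\varphi^*_v)\,\big|\,\mathcal F\big]$; when $\varphi^*_v$ fluctuates over a region of volume $h^n$, this conditional expectation spreads the bump out and its $L^2$ norm drops to order $\lambda h^{-n/2}$, not $\lambda$. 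This is exactly the Wehr--Aizenman mechanism behind $\chi\ge(d-n\xi)/2$ (Theorem~\ref{thm:lower bounds on energy fluctuations by localization}): the lower bound on the energy fluctuations is \emph{weaker} when the surface delocalizes, and together with the matching upper bound from~\ref{as:conc} it leaves a fully localized surface with $\std(\GE)\asymp\lambda L^{d/2}$ perfectly consistent. There is therefore no contradiction to extract from an energy-fluctuation lower bound, and your ``tilt must absorb $\Omega(\std(\GE))$'' step has nothing to push against. The subsequent balance step inherits the problem.

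The paper's argument is structurally different and never invokes any lower bound on $\std(\GE)$. Instead it is a direct Imry--Ma-type change of measure: one tilts the Gaussian noise by adding an explicit $f$ (e.g.\ $f_1=-\tfrac{\alpha}{\sqrt{L^dh^n}}\mathbf 1_{\Lambda_L\times B_{h+1}(0)}$) chosen so that (a) the Radon--Nikodym derivative of the tilted law is uniformly close to $1$, giving a TV-small coupling $(\eta,\zeta)$ with $\zeta$ the tilted noise (Lemma~\ref{lem:bounty_construction}, via Propositions~\ref{prop:translation_deriv}--\ref{prop:injective_union}), and (b) the tilt gains a quantified disorder advantage $B=\Omega(\alpha\lambda\sqrt{L^d/h^n})$ over localized configurations. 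This is then fed into the deterministic key lemma (Proposition~\ref{prop:Deloc2}), which uses the main identity to show that having both $\Delta_1+\Delta_2$ large and $\Delta_3$ controlled forces one of the minimizers out of the localized class. Your proposal also omits the entirely separate multi-scale (Mandelbrot-percolation) argument the paper needs at the critical dimension $d=4$ to produce the $\log\log L$ threshold (Lemma~\ref{lem:83-1}), and the $L^{1/2}$ branch in $d=1$ comes from a second tilt $f_2$ concentrated in a width-$2$ tube around the minimizer rather than from a Brownian-bridge heuristic. Finally, the Paley--Zygmund second-moment step you invoke to pass from expectation to probability does not appear and is not needed; the deterministic Proposition~\ref{prop:Deloc2} already produces a probability bound directly.
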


We also remark, in relation to the theorem, that if the bump function $b$ defining $\eta^\white$ is radially symmetric then the heights $\varphi^{\eta,\lambda,\Lambda_L}_v$ have a rotationally-invariant distribution.

A mild extension of the theorem is provided by Theorem~\ref{thm: deloc}.

\smallskip
It is common in the physics literature to assume that the heights of the minimal surface in~$\Lambda_L$ (with zero boundary values) fluctuate according to the power law $L^\xi$ for some exponent $\xi$ depending on the dimension $d$ and the number of components $n$. With this notation, we may informally express the above results, for a fixed value of $\lambda$, as
\begin{alignat}{2}
    3/5\le\mbox{}&\xi\le 3/4&& \text{for }d=n=1,\\
    1/2\le\mbox{}& \xi\le 3/4&&\text{for }d=1\text{ and } n\ge 2,\\
    \frac{4-d}{4+n}\le\mbox{}&\xi\le \frac{4-d}{4}\qquad&&\text{for }d\in\{2,3\}\text{ and }n\ge1,\label{eq:xi first bounds}\\
    &\xi = 0&&\text{for }d\ge 5\text{ and }n\ge 1.
\end{alignat}

In the critical dimension $d=4$ we obtain sub-power-law upper and lower bounds which the notation $\xi$ is not refined enough to describe. A summary of our results for fixed $\lambda$ and for $n=1$ is presented in Table~\ref{table:transversal fluctuations n=1}, alongside predictions from the physics literature.

\smallskip
The exponent $\frac{4-d}{4+n}$, $1\le d\le 3$, is called a \emph{Flory-type estimate} in the physics literature and our result confirms predictions that it yields a rigorous lower bound on the height fluctuation exponent $\xi$; see discussion in Section~\ref{sec:transversal fluctuations}.

\subsubsection{Scaling relation}\label{sec:scaling relation}
It is also common in the physics literature to assume that the ground energy fluctuations follow the power law $L^\chi$ where the exponent $\chi$ again depends on the dimension $d$ and the number of components $n$. In this context, it has been put forth (see Section~\ref{sec:physics results background} for background) that the exponents $\chi,\xi$ satisfy the universal \emph{scaling relation}
\begin{equation}\label{eq:scaling relation}
\chi = 2\xi + d-2.
\end{equation}

Our next two statements, taken together, put the scaling relation on rigorous footing. In addition, a more precise version of the relation is established in dimension $d=1$, as presented in the following Section~\ref{sec:1d scaling relation}. Lastly, a version of the relation based on assuming existence of the exponents in suitable senses is presented in Section~\ref{sec:scaling relation assuming existence}. The scaling relation is further discussed in Section~\ref{sec:scaling relation interpretations}.

\smallskip
We start with a version of the $\ge$ direction of the scaling relation~\eqref{eq:scaling relation}, valid for all $d,n\ge 1$. In this version, energy fluctuations are measured by the deviation of the ground energy from its median while transversal fluctuations are measured using the \emph{average} of the surface. In dimensions $d=1,2$, the \emph{pointwise} transversal fluctuations may be used (up to a logarithmic loss for $d=2$). The $\ge$ direction of~\eqref{eq:scaling relation} is obtained from~\eqref{eq:ge direction of scaling relation} below by substituting $h = L^\xi$.

\begin{theorem}[Version of $\chi\ge2\xi+d-2$. \ref{as:exiuni}+\ref{as:stat}]\label{theorem:half scaling relation}
There exists $c>0$, depending only on the dimension $d$, such that for each integer $L\ge 1$, each $\lambda>0$, each unit vector $e\in\R^n$ and all $h>0$,
\begin{equation}\label{eq:ge direction of scaling relation}
    \P\left(\left|\GE^{\eta,\lambda,\Lambda_L}-\med(\GE^{\eta,\lambda,\Lambda_L})\right|\ge ch^2 L^{d-2}\right)\ge \frac{1}{3}\,\P\left(\bigg|\frac{1}{|\Lambda_L|}\sum_{v\in\Lambda_L} \varphi^{\eta,\lambda,\Lambda_{L}}_v\cdot e\bigg| \ge h\right).
\end{equation}
Moreover, in dimensions $d=1,2$, for each $v\in\Lambda_L$, with $r_v:=d_\infty(v,\Lambda_L^c)$, 
\begin{alignat}{3}\label{eq:one dimension pointwise lower bound}&d=1:\quad  &&\mathbb P \big( |\GE^{\eta,\lambda,\Lambda_L} -\med(\GE^{\eta,\lambda,\Lambda_L}) | \ge c h^2/r_v \big)\ge \frac{1}{3}\,\mathbb P \big( | \varphi^{\eta,\lambda,\Lambda_L}_v\cdot e  | \ge h \big),\\
\label{eq:d=2 pointwise scaling relation}&d=2:&&
         \P\left(\left|\GE^{\eta,\lambda,\Lambda_L}-\med(\GE^{\eta,\lambda,\Lambda_L})\right|\ge \frac{ch^2}{\log (r_v+1)}\right)\ge \frac{1}{3}\,\P\left(\left| \varphi^{\eta,\lambda,\Lambda_L}_v\cdot e\right| \ge h\right).
\end{alignat}
\end{theorem}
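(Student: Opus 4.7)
Everything rests on a single deterministic energy identity obtained by perturbing the disorder by a deterministic shift. Fix $s:\Z^d\to\R^n$ supported in $\Lambda_L$ (i.e., $s_v=0$ outside). Both $\varphi:=\varphi^{\eta,\lambda,\Lambda_L}$ and $s$ vanish outside $\Lambda_L$, so $\varphi+s$ is an admissible test configuration for the minimization defining $\GE^{\eta^s,\lambda,\Lambda_L}$. Since $\eta^s_{v,\varphi_v+s_v}=\eta_{v,\varphi_v}$, the disorder contribution is unchanged, and expanding the elastic term yields, after one summation by parts, the master inequality
\begin{equation}\label{eq:master}
\GE^{\eta^s,\lambda,\Lambda_L}-\GE^{\eta,\lambda,\Lambda_L}\;\le\;-\sum_{v}\varphi_v\cdot\Delta s_v\;+\;D(s),\qquad D(s):=\tfrac12\sum_{u\sim v}\|s_u-s_v\|^2.
\end{equation}
By stationarity \ref{as:stat}, $\GE^{\eta^s,\lambda,\Lambda_L}\eqd\GE^{\eta,\lambda,\Lambda_L}$, so in particular they share the median $M:=\med(\GE^{\eta,\lambda,\Lambda_L})$.

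\textbf{Averaged bound~\eqref{eq:ge direction of scaling relation}.} Take $s_v:=-\theta h\,g_v\,e$ with $\theta>0$ to be optimized and $g$ the discrete torsion function of $\Lambda_L$ (the unique $g$ with $-\Delta g\equiv 1$ on $\Lambda_L$ and $g\equiv 0$ on $\Z^d\setminus\Lambda_L$, which is nonnegative). Then $-\sum_v\varphi_v\cdot\Delta s_v=\theta h\sum_{v\in\Lambda_L}\varphi_v\cdot e$ and $D(s)=\tfrac{\theta^2h^2}{2}\sum_{v\in\Lambda_L}g_v\le C_d\,\theta^2h^2L^{d+2}$ by the standard estimate $\sum g_v\asymp L^{d+2}$. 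On the event $A:=\{\bar\varphi\cdot e\ge h\}$ (with $\bar\varphi:=|\Lambda_L|^{-1}\sum\varphi_v$), \eqref{eq:master} becomes $\GE^{\eta^s}-\GE^\eta\le -\theta h^2|\Lambda_L|+C_d\theta^2h^2L^{d+2}$, and optimizing at $\theta\asymp L^{-2}$ gives $\GE^{\eta^s}-\GE^\eta\le -c h^2L^{d-2}$ on $A$, i.e.\ $\GE^\eta-\GE^{\eta^s}\ge 2t$ on $A$ with $t:=\tfrac c2h^2L^{d-2}$. Hence at least one of $\GE^\eta-M\ge t$ or $M-\GE^{\eta^s}\ge t$ holds on $A$, and stationarity gives $\P(A)\le\P(|\GE^\eta-M|\ge t)$. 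Running the identical argument with $-e$ controls $A':=\{\bar\varphi\cdot e\le -h\}$, so $\P(|\bar\varphi\cdot e|\ge h)\le 2\P(|\GE^\eta-M|\ge t)$, yielding \eqref{eq:ge direction of scaling relation} (in fact with constant $\tfrac12$, a fortiori $\tfrac13$).

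\textbf{Pointwise bounds~\eqref{eq:one dimension pointwise lower bound} and~\eqref{eq:d=2 pointwise scaling relation}.} The only change is the choice of $g$: replace the torsion function by the discrete Green's function $G_v$ of $\Lambda_L$ centered at $v$ (i.e., $-\Delta G_v=\delta_v$ on $\Lambda_L$, $G_v\equiv 0$ outside), and set $s_w:=-\theta h\,G_v(w)\,e$. Then $-\sum_w\varphi_w\cdot\Delta s_w=\theta h\,\varphi_v\cdot e$ and $D(s)=\tfrac{\theta^2h^2}{2}G_v(v)$, so on $\{\varphi_v\cdot e\ge h\}$ the master inequality together with optimizing $\theta=1/G_v(v)$ gives $\GE^\eta-\GE^{\eta^s}\ge h^2/(2G_v(v))$. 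The classical diagonal estimates $G_v(v)\le C r_v$ for $d=1$ and $G_v(v)\le C\log(r_v+1)$ for $d=2$, combined with the same stationarity/median conversion as before, produce \eqref{eq:one dimension pointwise lower bound} and \eqref{eq:d=2 pointwise scaling relation}.

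\textbf{Main obstacle.} The arithmetic is essentially forced once~\eqref{eq:master} is in hand, so the real work is in carefully setting up the master inequality: verifying that $\varphi+s$ is indeed admissible, correctly tracking every edge of $\Lambda_L$ and its outer boundary in both $D(s)$ and the summation-by-parts step (cleanly handled because $\varphi$ and $s$ both vanish outside $\Lambda_L$), and confirming that~\ref{as:exiuni} plus \ref{as:stat} really do give $\eta^s\eqd\eta$ without any measurability subtleties. Once~\eqref{eq:master} is established, the choices of $g$ (torsion function vs.\ Green's function), the standard discrete PDE estimates, and the one-line probability conversion give the three displayed inequalities essentially in parallel.
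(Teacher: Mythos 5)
Your proposal is correct and takes essentially the same route as the paper: the ``master inequality'' you derive is exactly the paper's main identity (Proposition~\ref{prop:main identity}), applied with the (rescaled) torsion function for the averaged bound and the Green's function $G_{\Lambda_L}^v$ for the pointwise $d=1,2$ bounds, then converted to a probability statement via stationarity — this is precisely what the paper packages as Lemma~\ref{lem:fluctuation and concentration} and applies in Section~3.3 and Lemmas~\ref{lem:46}--\ref{lem:47}. One cosmetic caveat: with $s_v=-\theta h g_v e$ the correct computation gives $-\sum_v\varphi_v\cdot\Delta s_v=-\theta h\sum_v\varphi_v\cdot e$ (not $+\theta h\sum_v\varphi_v\cdot e$, and likewise in the pointwise case), which is the sign actually used in your subsequent inequality on $A$, so the conclusions stand.
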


We emphasize that~\eqref{eq:ge direction of scaling relation} (or~\eqref{eq:one dimension pointwise lower bound}, or~\eqref{eq:d=2 pointwise scaling relation} with a logarithmic loss) is more detailed than the $\ge$ direction of the scaling relation~\eqref{eq:scaling relation} in several ways:
\begin{enumerate}
    \item It applies not only on the level of the exponents $\chi, \xi$ but directly on the fluctuations themselves (i.e., no sub-power corrections to the scaling relation are necessary).
    \item It holds \emph{uniformly} for all disorders satisfying the assumptions~\ref{as:exiuni} and~\ref{as:stat} (this allows a very wide variety of disorders; see remark~\ref{rem:wide variety of disorders}), all values of the disorder strength $\lambda$ and even all values of the number of components~$n$.
    \item It compares the distributions of the transversal and ground energy fluctuations not only in their bulks but also in their tails.
\end{enumerate}

We remark that, for fixed $\lambda$, combining~\eqref{eq:ge direction of scaling relation} with our concentration assumption~\ref{as:conc} (which implies that $\chi\le d/2$) shows that the average height of the surface is at most of order~$L^{\frac{4-d}{4}}$. In particular, the average height decays to zero with $L$ in dimensions $d\ge 5$. As the height at a point does not decay to zero (Theorem~\ref{thm:delocalization intro}), it follows that a pointwise relation such as~\eqref{eq:one dimension pointwise lower bound} or~\eqref{eq:d=2 pointwise scaling relation} cannot hold in dimensions $d\ge 5$ (under~\ref{as:conc}). Similarly, a pointwise relation cannot hold in dimension $d=4$ (under~\ref{as:conc}) as it would imply that the height at points in the bulk of the domain is tight as $L\to\infty$, contradicting our delocalization result (Theorem~\ref{thm:delocalization intro}). It is unclear whether a pointwise relation holds in dimension $d=3$ (see also Section~\ref{sec:improving fluctuation estimates} and Section~\ref{sec:scaling relation interpretations}).

\smallskip
We continue with a version of the $\le$ direction of the scaling relation~\eqref{eq:scaling relation}, also valid for all $d,n\ge 1$. This version bounds the energy fluctuations which arise from the disorder in the \emph{middle part} of the domain (the sub-domain $\Lambda_{\lfloor L/2\rfloor}$), by comparing the ground energy in the given disorder to the ground energy after the disorder is resampled in this middle part. Transversal fluctuations are measured using the \emph{maximum} of the surface. The $\le$ direction of~\eqref{eq:scaling relation} is obtained from~\eqref{eq:le direction of scaling relation} below by substituting $h = L^\xi$.

Given $\Delta\subset\Z^d$, define a disorder $\eta[\Delta]:\Z^d\times\R^n\to(-\infty,\infty]$, with the same distribution as $\eta$, by resampling $\eta$ on $\Delta\times\R^n$. Precisely, we set $\eta[\Delta] := \eta$ on $\Delta^c\times\R^n$ and, conditioned on this restriction of $\eta$ to $\Delta^c\times\R^n$, we let $\eta[\Delta]$ on $\Delta\times\R^n$ be an independent copy of $\eta$ on $\Delta\times\R^n$ (though the conditioning is unimportant here as we shall assume~\ref{as:indep}).

\begin{theorem}[Version of $\chi\le2\xi+d-2$. \ref{as:exiuni}+\ref{as:stat}+\ref{as:indep}]\label{theorem:other half of scaling relation}
There exists $C>0$, depending only on the dimension $d$, such that for each integer $L\ge 1$, each $\lambda>0$, each unit vector $e\in\R^n$ and all $h\ge 1$, 
\begin{equation}\label{eq:le direction of scaling relation}
    \P\left(\big|\GE^{\eta,\lambda,\Lambda_{L}}-\GE^{\eta[\Lambda_{\lfloor L/2\rfloor}],\lambda,\Lambda_{L}}\big|\ge Ch^2 L^{d-2}\right) \le C\P\left(\max_{v\in\Lambda_L} \left|\varphi^{\eta,\lambda,\Lambda_{L}}_v\cdot e\right| \ge h\right).
\end{equation}
\end{theorem}

Equation~\eqref{eq:le direction of scaling relation} is more detailed than the $\le$ direction of the scaling relation~\eqref{eq:scaling relation}, in the same ways described above that Theorem~\ref{theorem:half scaling relation} was more detailed than the $\ge$ direction (except that assumption~\ref{as:indep} should now be added to the list of assumptions in (2) above).

In Theorem~\ref{theorem:local other half of scaling relation} we present a \emph{local} extension of Theorem~\ref{theorem:other half of scaling relation}, which implies, roughly, that if the height fluctuations at vertices at distance $\ell$ from the boundary are at most of order $h$ then the ground energy fluctuations due to the disorder above these vertices are at most of order $h^2\ell^{d-2}$. This allows to bound the (full) ground energy fluctuations via upper bounds on the transversal fluctuations at all distances from the boundary (see also Theorem~\ref{thm:scalingrelation<}). For $d=1$, such an upper bound, together with a near matching lower bound, is explained in the next section. For $d=2$, an analogous lower bound is obtained from~\eqref{eq:d=2 pointwise scaling relation}.

\smallskip

\paragraph{\emph{The scaling relation in dimension $d=1$}}\label{sec:1d scaling relation}
Our techniques yield the following rather precise form of the scaling relation for $d=1$ (and arbitrary $n$).
\begin{theorem}[\ref{as:exiuni}+\ref{as:stat}]\label{thm: scaling relation d=1}
    Suppose $d=1$. There exist universal $C,c>0$ such that the following holds for each integer $L\ge 1$, each $\lambda>0$ and each unit vector $e\in\R^n$. Suppose that $\E(|\GE^{\eta,\lambda,\Lambda_L}|)<\infty$. Define, for integer $k\ge 0$,
    \begin{equation}
        M_k := \max_{L-k\le |v|\le L} |\varphi^{\eta,\lambda,\Lambda_{L}}_v\cdot e|.
    \end{equation}
    Then, on the one hand,
    \begin{equation}\label{eq:std lower bound}
        \std(\GE^{\eta,\lambda,\Lambda_L})\ge c\max_{0\le j\le \lceil\log_2 L\rceil} 2^{-j}(\E M_{2^j})^2
    \end{equation}
    and, on the other hand, if the disorder $\eta$ also satisfies~\ref{as:indep} then
    \begin{equation}\label{eq:std upper bound}
        \std(\GE^{\eta,\lambda,\Lambda_L})\le C\sum_{0\le j\le \lceil\log_2 L\rceil} 2^{-j}\left(1+\sqrt{\E M_{2^j}^4}\right).
    \end{equation}
\end{theorem}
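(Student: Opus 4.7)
The two bounds are the $d=1$ specializations of the two halves of the scaling relation (Theorems~\ref{theorem:half scaling relation} and~\ref{theorem:other half of scaling relation}), combined with a dyadic decomposition of $\Lambda_L$ into annular shells around its boundary. Both halves are stated as tail inequalities, and the plan is to integrate them into moment inequalities and then assemble the scale-$j$ contributions.

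For the upper bound~\eqref{eq:std upper bound}, I would partition $\Lambda_L$ into the shells $A_j:=\{v\in\Lambda_L:2^{j-1}<r_v\le 2^j\}$ for $0\le j\le\lceil\log_2 L\rceil$ and form the Doob martingale of $\GE^{\eta,\lambda,\Lambda_L}$ along the filtration $\mathcal{F}_j:=\sigma(\eta_w:w\in A_0\cup\dots\cup A_j)$. By~\ref{as:indep} the shell disorders are jointly independent, so, with $D_j:=\E[\GE\mid\mathcal{F}_j]-\E[\GE\mid\mathcal{F}_{j-1}]$, one has $\var(\GE)=\sum_j\E D_j^2$. The standard resampling identity $D_j=\E[\GE-\GE^{[A_j]}\mid\mathcal{F}_j]$ (with $\GE^{[A_j]}$ the ground energy after $\eta$ is resampled on $A_j$), together with Jensen's inequality, yields $\E D_j^2\le\E(\GE-\GE^{[A_j]})^2$. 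The local version of the $\chi\le 2\xi+d-2$ direction (Theorem~\ref{theorem:local other half of scaling relation}), applied at scale $\sim 2^j$, provides a tail bound of the form $\P(|\GE-\GE^{[A_j]}|\ge Ct^2/2^j)\le C\,\P(M_{2^j}\ge t)$ for $t\ge 1$, which integrates to $\E(\GE-\GE^{[A_j]})^2\le C(1+\E M_{2^j}^4)/2^{2j}$. Assembling via $\std(\GE)=\sqrt{\sum_j\E D_j^2}\le\sum_j\sqrt{\E D_j^2}$ then delivers~\eqref{eq:std upper bound}.

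For the lower bound~\eqref{eq:std lower bound}, I would fix a scale $j$ and apply~\eqref{eq:one dimension pointwise lower bound} at a single vertex $v$ with $L-2^j\le|v|\le L$, for which $r_v\le 2^j+1$. Integrating the resulting tail inequality against $2u\,du$ under the substitution $u=ct^2/r_v$ gives the fourth-moment bound
\[
\var(\GE)\ge\frac{c}{r_v^2}\,\E\bigl|\varphi^{\eta,\lambda,\Lambda_L}_v\cdot e\bigr|^4\ge\frac{c}{r_v^2}\bigl(\E\bigl|\varphi^{\eta,\lambda,\Lambda_L}_v\cdot e\bigr|\bigr)^4,
\]
so $\std(\GE)\ge c(\E|\varphi^{\eta,\lambda,\Lambda_L}_v\cdot e|)^2/2^j$ for every such $v$. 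To upgrade this to the claimed bound involving $\E M_{2^j}$, it suffices to exhibit a vertex $v_j$ in the slab with $\E|\varphi^{\eta,\lambda,\Lambda_L}_{v_j}\cdot e|\ge c\,\E M_{2^j}$. The natural candidate is the inner edge $v_j:=L-2^j$, and the comparison is intended to follow from the ``near-harmonic'' structure of the minimizer recalled in Section~\ref{sec:main identity and consequences}: the discrete Laplacian of $\varphi$ is pointwise controlled by the Lipschitz behaviour of the disorder, which constrains the argmax of $|\varphi\cdot e|$ over the slab to stay close in height to $|\varphi_{v_j}\cdot e|$ up to a universal constant.

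The decisive step, and the main obstacle, is this $d=1$ comparison in the lower bound. A naive union bound across vertices in the slab after applying~\eqref{eq:one dimension pointwise lower bound} loses a spurious factor of $\sqrt{2^j}$ and is therefore insufficient; avoiding this loss requires genuine one-dimensional input — either the near-harmonic argument sketched above, or an aggregated disorder shift over the entire slab $\{v:L-2^j\le|v|\le L\}$ which captures $M_{2^j}$ in a single energy-change identity rather than through a vertex-by-vertex bound.
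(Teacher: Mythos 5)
Your upper-bound argument is essentially the paper's: the paper decomposes $\Lambda_L$ into dyadic blocks commensurate with their distance to the boundary, applies the Efron--Stein inequality (your Doob-martingale orthogonality is the same calculation), and feeds the increments into the local scaling relation (Theorem~\ref{theorem:local other half of scaling relation}), with the boundary-adjacent vertices handled by the $\ell=0$ case~\eqref{eq:le direction of local scaling relation l=0}. One small caveat: each of your shells $A_j$ consists of two well-separated intervals, and the local scaling relation is stated for a single box $\Lambda_{w,\ell}$ with $\Lambda_{w,2\ell}\subset\Lambda$, so you must resample the two halves of $A_j$ separately. This only changes constants and does not affect the structure of the proof.

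The lower bound has the gap you yourself flag, and your proposed fix does not work. The claim that the interior endpoint $v_j=L-2^j$ satisfies $\E|\varphi_{v_j}\cdot e|\ge c\,\E M_{2^j}$ is not a consequence of any near-harmonic structure of the minimizer: the bound $|\Delta\varphi_v|\le\lambda\mathcal L$ (which is what the Euler--Lagrange relation with a Lipschitz bump gives) is perfectly compatible with a profile that is zero at both ends of the slab and peaks of order $\lambda\mathcal L\,2^{2j}$ in the middle, so the value at a single vertex can be arbitrarily small compared to the max over the slab. The paper's route is a dyadic \emph{chaining} argument (Corollary~\ref{cor:21}), whose essential ingredient is the second part of Lemma~\ref{lem:46}: a shift by the difference of two Green's functions yields a tail bound for $|(\varphi_u-\varphi_v)\cdot e|$ whose scale is $|u-v|$, not $r_v$. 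Writing the height at a vertex as a telescoping sum of increments over a chain of dyadic points approaching the boundary (at most $\log K$ of them, over $O(K/2^j)$ vertices at scale $j$), choosing a geometrically decaying envelope, and union bounding using the pairwise tail estimate produces $\E M_K\le C\sqrt{K\sigma_L}$, whence $\sigma_L\ge c\,2^{-j}(\E M_{2^j})^2$. This avoids the $\sqrt{2^j}$ loss of the naive union bound precisely because the pair-difference tail bound is used, not the pointwise one~\eqref{eq:one dimension pointwise lower bound}; neither a single-vertex comparison nor a single aggregated disorder shift appears.
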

We make two remarks regarding the theorem.
\begin{enumerate}
    \item The bounds are indeed rather precise: it is natural to expect the quantities $(\E M_k)^2$ and $\sqrt{\E M_k^4}$ to be of the same order (i.e., the same up to universal multiplicative constants). In this case, when also $\E M_k$ is at least of order $1$, the lower and upper bounds,~\eqref{eq:std lower bound} and~\eqref{eq:std upper bound}, differ by at most a multiplicative factor of order $\log L$.
    
    We also emphasize that the theorem holds \emph{uniformly} for all disorders satisfying the assumptions, all values of the disorder strength $\lambda$ and even all values of the number of components~$n$.
    
    \item In the most standard case, when all the $(\eta_{v,\cdot})_v$ have the same distribution, we expect that the main contribution to both the lower and upper bounds comes from the ``bulk term'' $j = \lceil \log_2 L\rceil$, leading to the familiar form of the scaling relation $\chi = 2\xi-1$.
    
    However, the theorem captures a richer structure, revealing that the energy fluctuations are determined from the height fluctuations via \emph{contributions from all boundary scales}. The assumptions~\ref{as:exiuni},\ref{as:stat} and~\ref{as:indep} do not force the $(\eta_{v,\cdot})_v$ to have identical distributions (see also Remark~\ref{rem:wide variety of disorders}) and the theorem applies equally well to cases in which the distributions differ significantly; for instance, when each $\eta_{v,\cdot}$ has distribution $\alpha_v \eta^\white$ for arbitrary positive $(\alpha_v)$ (fixing $\lambda=1$). For some choices of $(\alpha_v)$ we expect the largest term of the form $2^{-j}(\E M_{2^j})^2$ to be a ``boundary term'', i.e., to have $j\ll \log_2 L$, in which case the theorem reveals that the scaling relation holds in a different form, stating that $\std(\GE)\approx 2^{-j}(\E M_{2^j})^2$ for that $j$ (up to a possible logarithmic factor, and assuming that $(\E M_k)^2$ is comparable to $1+\sqrt{\E M_k^4}$).

    Another case of interest is when all terms $2^{-j}(\E M_{2^j})^2$ have the same order, in which case it may indeed be that $\std(GE)$ is larger than their magnitude by a poly-logarithmic factor in $L$. This is reminiscent of the behavior in dimension $d=4$, where the predicted fluctuations involve a poly-logarithmic factor (Table~\ref{table:transversal fluctuations n=1}).
\end{enumerate}

\smallskip
\paragraph{\emph{The scaling relation assuming existence of the exponents}}\label{sec:scaling relation assuming existence} In this section we present versions of the scaling relation which are valid under the assumptions that the exponents $\chi$ and $\xi$ are well defined in suitable senses.

\smallskip

{\textbf{The direction $\chi\ge2\xi+d-2$}:} It is natural to assume that the ground energy has stretched-exponential tails on the scale $\ell^\chi$ on box domains of side length $\ell$ (see also Section~\ref{sec:regularity of the distributions}), as phrased in~\eqref{eq:chi conc} below. This implies localization with the predicted exponent $\xi = \frac{1}{2}(\chi - (d-2))$ in the following strong sense (extending the $d\le 3$ cases of Theorem~\ref{thm:localization}).

Recall that the notation $\eta_A$ refers to $(\eta_{v,\cdot})_{v\in A}$ (as defined before assumption~\ref{as:conc}).

\begin{theorem}[\ref{as:exiuni}+\ref{as:stat}] \label{thm:scaling relation>}
Let $\lambda>0$, integer $L\ge 1$ and $d-2<\chi<d$. Suppose that the following concentration estimate holds for some $\alpha, f(\lambda),K,\kappa>0$:
\begin{equation}\tag{$\chi$-conc}\label{eq:chi conc}
\begin{split}
    &\text{for all $\rho>0$, integer $0\le\ell\le L$ and $\Delta=(v+\Lambda_\ell)\cap \Lambda_L$ with $v\in\Lambda_L$,}\\
    &\P\left(\Big|\GE^{\eta,\lambda,\Lambda_L} - \E\big(\GE^{\eta,\lambda,\Lambda_L}\mid \eta_{\Delta^c}\big)\Big|\ge \rho f(\lambda)\ell^{\chi}\Biggm|\eta_{\Delta^c}\right)\le K e^{-\kappa\rho^\alpha}\quad\text{almost surely}.
\end{split}
\end{equation}
Then there exist $C,c>0$, depending only on $d,n,\chi,\alpha,K,\kappa$, such that for each $v\in\Lambda_L$,
\begin{equation}\label{eq:pointwise localization under exponent assumption}
   \E(\|\varphi^{\eta,\lambda,\Lambda_L}_v\|)\le C \sqrt{f(\lambda)}\,r_v^{\frac{\chi - (d-2)}{2}}
\end{equation}
where $r_v:=d_\infty(v,\Lambda_L^c)$. Moreover, for $d=1$,
\begin{equation}\label{eq:loglog chi concentration}
    \P\left(\forall v\in\Lambda_L, \  \|\varphi^{\eta,\lambda,\Lambda_L}_v\|\le t \sqrt{f(\lambda)}\left(\log\log \frac{3L}{r_v}\right)^ {1/(2\alpha)} r_v^{\frac{\chi - (d-2)}{2}} \right)\ge 1-Ce^ {-ct^{2\alpha}}
\end{equation}
and, for $d\ge2$,
\begin{equation}\label{eq:log chi concentration}
    \P\left(\forall v\in\Lambda_L, \  \|\varphi^{\eta,\lambda,\Lambda_{L}}_v\|\le t\sqrt{f(\lambda)}\left(\log \frac{2L}{r_v}\right)^ {1/(2\alpha)} r_v^{\frac{\chi - (d-2)}{2}} \right)\ge 1-Ce^ {-ct^{2\alpha}}.
\end{equation}
\end{theorem}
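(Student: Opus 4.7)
The plan is to establish a conditional, local version of the $\chi\ge 2\xi+d-2$ direction of the scaling relation (Theorem \ref{theorem:half scaling relation}), conditioning on $\eta_{\Delta^c}$ for $\Delta = (v+\Lambda_{\lfloor r_v/2\rfloor})\cap\Lambda_L$, and then to combine it with the hypothesis \eqref{eq:chi conc}. The conditioning is compatible with the shift argument thanks to the following observation: if $s:\Z^d\to\R^n$ is supported in $\Delta$, then the shifted disorder $\eta^s$ of \eqref{eq:eta s def} satisfies $\eta^s=\eta$ on $\Delta^c\times\R^n$; together with \ref{as:stat} this implies that the conditional law of $\eta^s$ given $\eta_{\Delta^c}$ equals that of $\eta$. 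Thus the tilt/swap argument underlying Theorem \ref{theorem:half scaling relation} can be run conditionally on $\eta_{\Delta^c}$.

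Running this tilt argument with $s = h\,e\cdot\phi$, where $\phi$ is an appropriate bump supported in $\Delta$ (harmonic in the pointwise versions, a Lipschitz cutoff of a macroscopic subbox in the averaged version), yields the conditional analogue of \eqref{eq:ge direction of scaling relation},
\begin{equation*}
    \P\!\left(\bigl|\varphi^{\eta,\lambda,\Lambda_L}_v\cdot e\bigr|\ge h\,\middle|\,\eta_{\Delta^c}\right)\le 3\,\P\!\left(\bigl|\GE^{\eta,\lambda,\Lambda_L}-M\bigr|\ge c\, h^2 r_v^{d-2}\,\middle|\,\eta_{\Delta^c}\right),
\end{equation*}
where $M:=\med(\GE^{\eta,\lambda,\Lambda_L}\mid\eta_{\Delta^c})$. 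The Dirichlet energy of the tilt is $\mathcal{E}(s)\le C h^2 r_v^{d-2}$ for $d\ge 3$ (harmonic capacity of a point in a box) and also for $d=1$ (where $r_v^{d-2}=1/r_v$ matches \eqref{eq:one dimension pointwise lower bound}); in the borderline $d=2$ case an extra $1/\log r_v$ factor from the harmonic capacity is handled by first controlling an average of $\varphi$ over a macroscopic subbox---for which a Lipschitz cutoff has Dirichlet energy $h^2$---and transferring to the pointwise value via the Euler--Lagrange near-harmonicity of the minimizer.

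The hypothesis \eqref{eq:chi conc} applied to $\Delta$ (side $r_v$), combined with the fact that $M$ is within $Cf(\lambda)r_v^\chi$ of $\E(\GE^{\eta,\lambda,\Lambda_L}\mid\eta_{\Delta^c})$ (from integrating the assumed tail), then yields
\begin{equation*}
    \P\!\left(\bigl|\varphi^{\eta,\lambda,\Lambda_L}_v\cdot e\bigr|\ge h\right)\le 3K\exp\!\left(-c\bigl(h/(\sqrt{f(\lambda)}\,r_v^{(\chi-(d-2))/2})\bigr)^{2\alpha}\right)
\end{equation*}
for all $h\ge C\sqrt{f(\lambda)}\,r_v^{(\chi-(d-2))/2}$. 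A union bound over a basis $e_1,\dots,e_n$ of $\R^n$ (using $\|\varphi_v\|\le\sqrt{n}\max_i|\varphi_v\cdot e_i|$) converts this into a tail bound on $\|\varphi_v\|$, and integrating gives \eqref{eq:pointwise localization under exponent assumption}. For \eqref{eq:log chi concentration} and \eqref{eq:loglog chi concentration}, I partition $\Lambda_L$ into dyadic shells $S_k:=\{v:2^k\le r_v<2^{k+1}\}$, $0\le k\le\lceil\log_2 L\rceil$, and take a union bound at the corresponding heights $t\sqrt{f(\lambda)}(\log(2L/2^k))^{1/(2\alpha)}2^{k(\chi-(d-2))/2}$: in $d\ge 2$ the shell sizes $|S_k|\le CL^{d-1}\min(2^k,L)$ combine with the stretched-exponential tail to give a sum $\sum_k|S_k|(L/2^k)^{-ct^{2\alpha}}$ that is bounded by $Ce^{-c't^{2\alpha}}$, producing \eqref{eq:log chi concentration}; in $d=1$, $|S_k|=O(1)$ and the $\log\log$ correction comes only from the $\lceil\log_2 L\rceil$ dyadic scales, producing \eqref{eq:loglog chi concentration}.

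The principal obstacle is the conditional local scaling relation in the second paragraph: carefully adapting the swap argument of Theorem \ref{theorem:half scaling relation} so that it operates conditionally on $\eta_{\Delta^c}$ (requiring the shift to be supported in $\Delta$), and choosing the tilt so that its Dirichlet energy matches $h^2 r_v^{d-2}$ without spurious logarithmic losses---especially in the borderline dimension $d=2$, where the pass from an averaged to a pointwise fluctuation bound is the most delicate. Once this local relation is established, combining with \eqref{eq:chi conc} and the dyadic union bound is routine.
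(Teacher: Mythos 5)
Your high-level observation is correct and is precisely the mechanism used in the paper (it is built into Lemma~\ref{lem:fluctuation and concentration}, part~(2)): a shift $s$ supported in a box $\Delta$, combined with~\ref{as:stat}, allows the tilt/swap argument to run conditionally on $\eta_{\Delta^c}$, so that~\eqref{eq:chi conc} with the box $\Delta$ enters at scale $\ell$ rather than at scale $L$. However, there are two genuine gaps in the way you use this.

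\textbf{The single-scale tilt does not isolate $\varphi_v$.} You take $s = h\,\phi\,e$ with $\phi$ supported in $\Delta=(v+\Lambda_{\lfloor r_v/2\rfloor})\cap\Lambda_L$ and harmonic inside $\Delta$ away from $v$, intending $(\varphi^\eta,-\Delta s)\approx c\,h\,\varphi_v\cdot e/r_v^{\,2-d}$. But because $\phi$ vanishes outside $\Delta$, $-\Delta\phi$ has mass not only at $v$ but also on $\partial\Delta$, and these boundary contributions are of the \emph{same order} as the contribution at $v$. For example in $d=1$, with $\phi = G^v_\Delta/G^v_\Delta(v)$, one has $(\varphi,-\Delta\phi) = \varphi_v/G^v_\Delta(v) - c_1\varphi_{x_-}/r_v - c_2\varphi_{x_+}/r_v$ with $x_\pm\in\partial\Delta$ and $c_1,c_2\asymp 1$, so $(\varphi,-\Delta\phi)$ is not a positive multiple of $\varphi_v\cdot e$ and the conditional conclusion
\[
\P\!\left(|\varphi_v\cdot e|\ge h\mid\eta_{\Delta^c}\right)\le 3\,\P\!\left(|\GE^\eta-M|\ge c h^2 r_v^{d-2}\mid\eta_{\Delta^c}\right)
\]
does not follow. (There is no way to choose a shift supported in a finite box whose Laplacian is exactly $\delta_v$.) The paper circumvents this with a \emph{multiscale telescoping} of the full Green's function: writing $G^v_{\Lambda_L}e=\sum_{j=1}^m s_j$ with $s_j := (G^v_{\Lambda_j}-G^v_{\Lambda_{j-1}})e$ and $\Lambda_j := \Lambda_L\cap(v+(-2^j,2^j)^d)$, so that the boundary terms of scale $j$ cancel in the sum, while each $s_j$ is supported in $\Lambda_j$ and can be fed into Lemma~\ref{lem:fluctuation and concentration} with $\Delta=\Lambda_j$ and $\ell\asymp 2^j$. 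The Dirichlet energies decay geometrically away from $j\approx\log_2 r_v$, which is what produces the summable tails. Your proposed $d=2$ fix (``transfer via Euler--Lagrange near-harmonicity'') is not a proof step; the paper's multiscale bound handles $d=2$ uniformly with no special case.

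\textbf{The dyadic-shell union bound does not yield the uniform bound.} For $d\ge 2$ the bulk shell $\{v:r_v\asymp L\}$ contains $\asymp L^d$ vertices, each with height $t(\log 2)^{1/(2\alpha)}\sqrt{f(\lambda)}L^\xi$ and pointwise tail only $e^{-c t^{2\alpha}\log 2}=2^{-ct^{2\alpha}}$. The union bound over this single shell already gives $L^d\cdot 2^{-ct^{2\alpha}}$, which is not $\le Ce^{-c't^{2\alpha}}$ uniformly in $L$; your claimed estimate $\sum_k |S_k|(L/2^k)^{-ct^{2\alpha}}\le Ce^{-c't^{2\alpha}}$ is off by a factor $L^d$. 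The paper does not union bound over vertices in a shell: it chains. It proves a H\"older-type estimate (Lemma~\ref{lem:holder}) $\P\big(|(\varphi_u-\varphi_v)\cdot e|>t\sqrt{f(\lambda)}\|u-v\|^\xi\big)\le Ce^{-ct^{2\alpha}}$ by the same multiscale method (now telescoping differences of Green's functions at two poles), and then runs a dyadic chaining (Corollary~\ref{cor:loc2} and the following corollary) whose union bound is over a sparse dyadic grid of side $2^j$ at each scale $j$, so the total number of constraints at scale $j$ is $(L/2^j)^d$, and the per-constraint tail $e^{-ct^{2\alpha}(m-j+1)}$ carries enough decay in $m-j$ to beat this count. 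For $d=1$ the double-logarithmic correction arises from this same chaining with the shell geometry of $d=1$ (at most $O(1)$ grid points per scale, with an extra geometric decay in $k$ coming from the factor $\max(r_u,r_v)^{\xi/2}2^{j\xi/2}$ in the paper's event), not from a naive union bound over $r_v$-shells.

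In short, the conditional-stationarity reduction is the right lever, but the argument needs the paper's two additional ingredients: (i) the telescoping/multiscale decomposition of the Green's function so that each scale's shift is supported in a box of that scale, and (ii) a H\"older increment estimate fed into dyadic chaining to pass to the maximum.
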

\begin{remark}\label{rem:chi conc}
Assumption~\eqref{eq:chi conc} is only stated for the given $\lambda,L$, so that, in principle, $f(\lambda)$ is a constant which may be absorbed into $\kappa$. We chose to phrase~\eqref{eq:chi conc} this way since it seems natural that~\eqref{eq:chi conc} holds uniformly in $\lambda$ when $f(\lambda)$ is a properly chosen function of~$\lambda$ (see Section~\ref{sec:dependence on the disorder strength}) and also as we would like to unify the notation with assumption~\ref{as:conc} in order to obtain unified proofs of Theorem~\ref{thm:localization} and Theorem~\ref{thm:scaling relation>}. Indeed, assumption~\ref{as:conc} implies~\eqref{eq:chi conc} uniformly in $\lambda$ and $L$ with $\chi = d/2, \alpha=2$ and $f(\lambda)=\lambda$. Related to this, we mention that the proof of Theorem~\ref{thm:localization} for specific $\lambda,L$ only uses assumption~\ref{as:conc} for that $\lambda$ and for $\Lambda=\Lambda_L$.

The requirement $d-2<\chi<d$ is natural as it corresponds to $0<\xi<1$ by the scaling relation~\eqref{eq:scaling relation}. Note, however, that the following further restrictions on $\chi$ apply if~\eqref{eq:chi conc} is to be satisfied uniformly in $L$, for fixed $\lambda$: if $\eta$ satisfies~\ref{as:conc} (i.e., $\eta=\eta^\white$) then $\chi\le d/2$, whence $\chi>d-2$ is only possible for $d\le 3$. Moreover, for $\eta=\eta^\white$ we have $\chi\ge \frac{d-1}{2}$ due to the boundary contribution to the energy fluctuations~\eqref{eq:boundary contribution to energy fluctuations}. Additional lower bounds on $\chi$ are summarized in~\eqref{eq:best lower bounds on energy fluctuations}.

    While it is convenient to phrase~\ref{eq:chi conc} for all $\rho>0$, the proofs of the statements in Theorem~\ref{thm:scaling relation>} use only a limited range of values of $\rho$. Moreover, in obtaining the pointwise localization statement~\eqref{eq:pointwise localization under exponent assumption}, the stretched-exponential concentration assumption~\ref{eq:chi conc} may be replaced by a suitable (weaker) power-law concentration assumption.

    For $\eta=\eta^\white$ with fixed $\lambda$, say, the stretched-exponential exponent in~\ref{eq:chi conc} may be expected to satisfy $\alpha =  d / (d-\chi)$. This is because the probability that $\GE^{\eta,\lambda,\Lambda_L}$ lies below its expectation by order~$L^d$ is naturally expected to be $\exp(-\Theta(L^d))$ as $L\to\infty$. Similarly, an upper-tail stretched-exponential exponent may be predicted by noting that the probability that $\GE^{\eta,\lambda,\Lambda_L}$ exceeds its expectation by order~$L^d$ is expected to be of order $\exp(-\Theta(L^{d+n}))$ (see Ganguly--Hegde~\cite{ganguly2023optimal} for related statements for $d=n=1$ in directed last-passage percolation).

    Several extensions of the theorem are provided in Section~\ref{sec:loc}:
A tail estimate extending the pointwise expectation bounds~\eqref{eq:localization estimate} and~\eqref{eq:pointwise localization under exponent assumption} is in Lemma~\ref{lem:loc}. Additionally, Lemma~\ref{lem:holder} controls the regularity of the minimal surface in the sense of bounding the height differences $\|\varphi^{\eta,\lambda,\Lambda_L}_u - \varphi^{\eta,\lambda,\Lambda_L}_v\|$ for arbitrary $u,v$. Additionally, a version of the results with the constants $C,c$ uniform in $n$ holds if one replaces $\|\varphi^{\eta,\lambda,\Lambda_L}_v\|$ by $|\varphi^{\eta,\lambda,\Lambda_L}_v\cdot e|$ for a unit vector $e\in\R^n$.
\end{remark}

\smallskip

{\textbf{The direction $\chi\le2\xi+d-2$}:} The next theorem derives a bound on the energy fluctuations in the case that, for some exponent $\xi\ge 0$, the minimal surface is contained up to a lower-order correction within an envelope of the form $d_\infty(v,\Lambda_L^c)^\xi$.
\begin{theorem}[ \ref{as:exiuni}+\ref{as:stat}+\ref{as:indep}]\label{thm:scalingrelation<} Let $\xi\ge0$ and $\beta>0$. There exists $C>0$, depending only on $d,\xi$ and $\beta$, such that the following holds. Let $\lambda>0$ and integer $L\ge 1$. Denote
\begin{equation}\label{eq:kappa def}
\kappa:=\E\left [\max_{v\in\Lambda_L}\left(\frac{\|\varphi^{\eta,\lambda,\Lambda_{L}}_v\|}{r_v^{\xi}(\log \frac L {r_v})^{\beta }}\right)^4\right]
\end{equation}
where we write $r_v:=d_\infty(v,\Lambda_L^c)$. Then
\begin{equation}\label{eq:std upper bound with exponent}
    \std(\GE^ {\eta,\lambda,\Lambda_L})\le C\sqrt \kappa\begin{cases}L ^ {2\xi +d-2} & \xi>\frac{3-d}{4}\\
    L ^ {2\xi +d-2}(\log L)^{2\beta+\frac{1}{2}}& \xi = \frac{3-d}{4}\\
   (\log L)^{2\beta } L^{(d-1)/2}&\xi<\frac{3-d}{4}
    \end{cases}.
\end{equation}
\end{theorem}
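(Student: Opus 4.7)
The plan is to combine three ingredients: a Whitney-type dyadic partition of $\Lambda_L$, the Efron--Stein variance inequality, and the local extension of Theorem~\ref{theorem:other half of scaling relation} (Theorem~\ref{theorem:local other half of scaling relation}) announced in the paragraph following that theorem. Concretely, I would first partition $\Lambda_L=\bigsqcup_{B\in\mathcal B}B$ into boxes whose sides are adapted to the distance from the boundary: each $B\in\mathcal B$ has side $\ell_B\asymp 2^{k(B)}$ and lies at $\ell_\infty$-distance $\asymp 2^{k(B)}$ from $\Lambda_L^c$, so that $\#\{B\in\mathcal B:\ell_B=2^k\}\le C(L/2^k)^{d-1}$ for every dyadic scale $2^k\le L$. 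Assumption~\ref{as:indep} makes the processes $(\eta_{v,\cdot})_{v\in B}$ independent across distinct $B\in\mathcal B$, so the Efron--Stein inequality, applied with these groups as the independent coordinates, gives
\begin{equation*}
    \var\bigl(\GE^{\eta,\lambda,\Lambda_L}\bigr)\le\tfrac{1}{2}\sum_{B\in\mathcal B}\E\!\left[\bigl(\GE^{\eta,\lambda,\Lambda_L}-\GE^{\eta[B],\lambda,\Lambda_L}\bigr)^2\right].
\end{equation*}

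Next I would estimate each summand via the local resampling bound: it asserts a tail comparison of the form $\P(|\GE^{\eta,\lambda,\Lambda_L}-\GE^{\eta[B],\lambda,\Lambda_L}|\ge Ch^2\ell_B^{d-2})\le C\,\P(\max_{v\in B'}\|\varphi^{\eta,\lambda,\Lambda_L}_v\|\ge h)$ for $h\ge 1$ and a slight enlargement $B'$ of $B$ satisfying $r_v\asymp \ell_B$ for $v\in B'$. Writing $\E X^2=\int_0^\infty 2t\,\P(|X|\ge t)\,dt$ and changing variables to $t=Ch^2\ell_B^{d-2}$, this yields $\E[(\GE^{\eta,\lambda,\Lambda_L}-\GE^{\eta[B],\lambda,\Lambda_L})^2]\le C\ell_B^{2(d-2)}\bigl(1+\E[\max_{v\in B'}\|\varphi^{\eta,\lambda,\Lambda_L}_v\|^4]\bigr)$. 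To tie the individual box contributions to the single number $\kappa$ of~\eqref{eq:kappa def}, write $f(r):=r^\xi(\log(L/r))^\beta$ (regularized by $+1$ near $r=L$): since $\max_{v\in B'}f(r_v)\le C\ell_B^\xi(\log(L/\ell_B)+1)^\beta$, one can pull this deterministic factor out of the max and obtain $\E[\max_{v\in B'}\|\varphi^{\eta,\lambda,\Lambda_L}_v\|^4]\le C\kappa\,\ell_B^{4\xi}(\log(L/\ell_B)+1)^{4\beta}$ with the same $\kappa$ serving every box simultaneously. Summing the $(L/2^k)^{d-1}$ boxes of each scale yields
\begin{equation*}
    \var\bigl(\GE^{\eta,\lambda,\Lambda_L}\bigr)\le C\kappa\,L^{d-1}\sum_{k=0}^{\lceil\log_2 L\rceil}(2^k)^{4\xi+d-3}\bigl(\log(L/2^k)+1\bigr)^{4\beta}.
\end{equation*}

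The three cases in~\eqref{eq:std upper bound with exponent} then follow from elementary analysis of this geometric sum. When $4\xi+d-3>0$ (i.e.\ $\xi>(3-d)/4$) the sum is dominated by the top scale $k\asymp\log_2 L$ and is of order $L^{4\xi+d-3}$, giving $\std\le C\sqrt\kappa\,L^{2\xi+d-2}$. At the critical exponent $\xi=(3-d)/4$ the $2^k$ powers are constant in $k$ and the logarithmic weights accumulate to $\asymp(\log L)^{4\beta+1}$, producing the correction $(\log L)^{2\beta+1/2}$ after taking the square root. When $4\xi+d-3<0$ the sum is dominated by $k=O(1)$ and yields $(\log L)^{4\beta}$, giving $\std\le C\sqrt\kappa\,(\log L)^{2\beta}L^{(d-1)/2}$.

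The principal obstacle is securing the local extension of Theorem~\ref{theorem:other half of scaling relation} in precisely the form needed above: the original statement compares the energy fluctuation under resampling on the middle half of $\Lambda_L$ to the \emph{global} maximum of $\varphi\cdot e$, whereas here I need a version that allows resampling on an arbitrary box $B$ at any scale and any location and whose right-hand side involves only the heights in a small enlargement $B'$ of $B$ with $r_v\asymp\ell_B$. A secondary, mostly cosmetic issue is the regularization of $\log(L/r_v)$ in~\eqref{eq:kappa def} at vertices with $r_v\asymp L$, which only costs multiplicative constants that can be absorbed into $C$.
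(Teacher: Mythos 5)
Your proposal is correct and matches the paper's proof step by step: the dyadic Whitney partition adapted to the distance from the boundary, Efron--Stein across independent blocks, the local resampling estimate of Theorem~\ref{theorem:local other half of scaling relation}, the reduction of the fourth moments to the single constant $\kappa$, and the case analysis of the geometric sum in $4\xi+d-3$. The ``principal obstacle'' you flag is already resolved in the paper: Theorem~\ref{theorem:local other half of scaling relation} is stated and proved for an arbitrary resampling set $\Delta\subset\Lambda_{w,\ell}$ at any scale and location, with the right-hand side controlled by $\max_{v\in\Lambda_{w,2\ell}}|\varphi_v\cdot e|$, which is exactly the local form your argument requires.
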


\begin{remark}
    The conclusion~\eqref{eq:std upper bound with exponent} with $\beta=0$ also follows (with the same proof) using the following alternative definition of $\kappa$, in which the maximum is taken over cubes whose distance to the boundary of $\Lambda_L$ is commensurate with their side length:
    \begin{equation}\label{eq:kappa alternative def}
        \kappa:=\max_{0\le \ell\le L}\max_{\substack{\Delta = w+\Lambda_\ell\\w\in\Lambda_L,\, r_w\ge2\ell}}\E\left [\max_{v\in\Delta}\left(\frac{\|\varphi^{\eta,\lambda,\Lambda_{L}}_v\|}{r_v^{\xi}}\right)^4\right].
    \end{equation}
    We note that the arguments leading to~\eqref{eq:loglog chi concentration} and~\eqref{eq:log chi concentration} in the proof of Theorem~\ref{thm:scaling relation>} may also be used to control this variant of $\kappa$.

    The regime $\xi\ge \frac{3-d}{4}$ is natural as it translates to $\chi\ge\frac{d-1}{2}$ by the scaling relation, matching the lower bound obtained from the boundary contribution for $\eta=\eta^\white$ in~\eqref{eq:boundary contribution to energy fluctuations}. 
    \end{remark}

{\textbf{The equality $\chi=2\xi+d-2$}:} Let us detail how Theorem~\ref{thm:scaling relation>} and Theorem~\ref{thm:scalingrelation<} combine to yield the scaling relation. Fix $\lambda>0$ and integer $d,n\ge 1$. Let $\eta$ be a disorder, or sequence of disorders depending on~$L$, satisfying assumptions \ref{as:exiuni},\ref{as:stat},\ref{as:indep} and the following:
\begin{enumerate}
    \item There exists $d-2<\chi<d$ such that $\std(\GE^{\eta,\lambda,\Lambda_L}) = \Theta(L^\chi)$ as $L\to\infty$. Moreover, the concentration estimate~\eqref{eq:chi conc} holds for all $L$ with this $\chi$ and the same $\alpha, f(\lambda),K,\kappa$.
    \item There exists $\xi\ge \max\{0,\frac{3-d}{4}\}$ such that $\E(\|\varphi^{\eta,\lambda,\Lambda_L}_0\|)=\Theta(L^\xi)$ as $L\to\infty$. Moreover, with this $\xi$ and some $\beta>0$, the quantity $\kappa$ of~\eqref{eq:kappa def} satisfies $\limsup_{L\to\infty}\kappa<\infty$.
\end{enumerate}
Then the exponents $\chi,\xi$ satisfy the scaling relation~\eqref{eq:scaling relation}.

\subsubsection{Lower bounds on the ground energy fluctuations} 

Our final set of main results concerns the fluctuations of the ground energy $\GE^{\eta,\lambda,\Lambda_L}$ on the cube domain $\Lambda_L$. Our concentration assumption~\ref{as:conc} (which is verified, e.g., for $\eta^\white$) implies that these fluctuations are at most of order $\lambda L^{d/2}$. We present here several bounds in the opposite direction.

First, observe that in dimensions $d=1,2$, lower bounds on the ground energy fluctuations are immediately obtained from the version of the relation $\chi\ge 2\xi+d-2$ given in~\eqref{eq:one dimension pointwise lower bound} and~\eqref{eq:d=2 pointwise scaling relation} together with Theorem~\ref{thm:delocalization intro}. This is made explicit in the following corollary.
\begin{cor}[$\eta = \eta^\white$]\label{cor:lower bounds on energy fluctuations in d=1,2}
There exists $c>0$, depending only on $n$ and the Lipschitz constant $\mathcal{L}$ of the bump function in~\eqref{eq:eta white definition}, such that for each integer $L\ge 2$ and each $\lambda>0$,
\begin{alignat}{3}\label{eq:one two dimensions energy fluctuation lower bound}  &\text{d=1:}&&\quad\mathbb P \big( |\GE^{\eta,\lambda,\Lambda_L} -\med(\GE^{\eta,\lambda,\Lambda_L}) | \ge c h^2/L \big)\ge c,\\
&\text{d=2:}&&\quad\mathbb P \big( |\GE^{\eta,\lambda,\Lambda_L} -\med(\GE^{\eta,\lambda,\Lambda_L}) | \ge c h^2/\log{L} \big)\ge c\end{alignat}
where $h$ is given by~\eqref{eq:h cases in delocalization} (in either of the two cases $d\in\{1,2\}$) as long as that value of $h$ is at least $1$.
\end{cor}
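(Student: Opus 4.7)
The plan is to combine the delocalization result of Theorem~\ref{thm:delocalization intro} with the pointwise versions of the scaling relation $\chi\ge 2\xi+d-2$ provided in~\eqref{eq:one dimension pointwise lower bound} (for $d=1$) and~\eqref{eq:d=2 pointwise scaling relation} (for $d=2$) of Theorem~\ref{theorem:half scaling relation}. Both ingredients are available under the hypotheses of the corollary, since $\eta^\white$ satisfies \ref{as:exiuni}, \ref{as:stat} and \ref{as:indep}.

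First I rewrite Theorem~\ref{thm:delocalization intro} as $\sum_{v\in\Lambda_L}\P(\|\varphi^{\eta,\lambda,\Lambda_L}_v\|\ge h)\ge c_1|\Lambda_L|$ for some $c_1>0$ depending only on $n$ and $\mathcal L$, and average over $v$ to extract a single vertex $v_0\in\Lambda_L$ with $\P(\|\varphi^{\eta,\lambda,\Lambda_L}_{v_0}\|\ge h)\ge c_1$. Since the scaling relation is phrased in terms of projections onto a unit vector rather than Euclidean norms, I next pass from $\|\varphi^{\eta,\lambda,\Lambda_L}_{v_0}\|$ to such a projection. Writing $\|\varphi^{\eta,\lambda,\Lambda_L}_{v_0}\|^2=\sum_{i=1}^n(\varphi^{\eta,\lambda,\Lambda_L}_{v_0}\cdot e_i)^2$ for the standard basis gives $\|\varphi^{\eta,\lambda,\Lambda_L}_{v_0}\|\le\sqrt n\max_i|\varphi^{\eta,\lambda,\Lambda_L}_{v_0}\cdot e_i|$, and a union bound then produces a unit vector $e\in\R^n$ with $\P(|\varphi^{\eta,\lambda,\Lambda_L}_{v_0}\cdot e|\ge h/\sqrt n)\ge c_1/n$.

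Feeding $(v_0,e)$ and the threshold $h/\sqrt n$ into \eqref{eq:one dimension pointwise lower bound} for $d=1$ yields
\begin{equation*}
\P\bigl(|\GE^{\eta,\lambda,\Lambda_L}-\med(\GE^{\eta,\lambda,\Lambda_L})|\ge c\,h^2/(n\,r_{v_0})\bigr)\ge c_1/(3n),
\end{equation*}
and using the trivial estimate $r_{v_0}\le L$ together with absorbing the $n$-dependent factors into the final constant (which, by the statement of the corollary, is allowed to depend on $n$ and $\mathcal L$) proves the $d=1$ case. For $d=2$ I repeat the argument verbatim with \eqref{eq:d=2 pointwise scaling relation} in place of \eqref{eq:one dimension pointwise lower bound}, and use the elementary bound $\log(r_{v_0}+1)\le \log(L+1)\le 2\log L$ (valid for $L\ge 2$) to replace $\log(r_{v_0}+1)$ by $\log L$ in the denominator, at the cost of a harmless factor of $2$.

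There is no real obstacle here: the argument is essentially a mechanical combination of the two quoted theorems, as already hinted in the paragraph preceding the corollary. The only mildly non-trivial point is the transition from the Euclidean norm in the delocalization statement to the one-dimensional projection in the pointwise scaling relation, which costs only the constant factor $\sqrt n$ in the height and $n$ in the probability, both absorbed into the final $c$.
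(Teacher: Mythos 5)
Your proof is correct and follows exactly the route the paper intends: the paper does not write out the proof of this corollary, saying only that it "is immediately obtained" from the pointwise scaling relations \eqref{eq:one dimension pointwise lower bound}, \eqref{eq:d=2 pointwise scaling relation} together with Theorem~\ref{thm:delocalization intro}, and you have supplied exactly the details that make that combination precise. The pigeonhole extraction of a single delocalized vertex, the passage from $\|\varphi_{v_0}\|$ to a coordinate projection at the cost of a factor $\sqrt n$ via union bound, and the replacement of $r_{v_0}$ by $L$ (respectively $\log(r_{v_0}+1)$ by $\log L$) are all exactly the steps required, and the resulting constant correctly depends only on $n$ and $\mathcal L$ (the constant in Theorem~\ref{theorem:half scaling relation} is universal given $d$, and the delocalization constant in Theorem~\ref{thm:delocalization intro} depends on $n,\mathcal L$). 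One cosmetic slip: since $\Lambda_L=\{-L,\dots,L\}^d$, the $\ell_\infty$ distance $r_{v_0}=d_\infty(v_0,\Lambda_L^c)$ can be as large as $L+1$, not just $L$, so the bound you use should be $r_{v_0}\le L+1$ and correspondingly $\log(r_{v_0}+1)\le\log(L+2)$; this costs nothing since $\log(L+2)\le 2\log L$ for $L\ge 2$ and $L+1\le 2L$, so the argument is unaffected.
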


Second, if the minimal surface is localized to height $h$ above a domain $\Lambda^0$ then its energy fluctuations are at least of order $\sqrt{|\Lambda^0|}/h^{n/2}$. This results from the fluctuations of the average disorder in $\Lambda^0\times\{t\in\R^n\colon\|t\|\le h\}$. In particular, it yields a version of the exponent relation
\begin{equation}\label{eq:lower bound on ground energy fluctuations via average disorder}
    \chi\ge \frac{d - n\xi}{2},
\end{equation}
whose special case when $d=1$ is the Wehr--Aizenman~\cite{wehr1990fluctuations} bound of first-passage percolation (proved there via martingale methods).
\begin{theorem}[$\eta = \eta^\white$]\label{thm:lower bounds on energy fluctuations by localization} For each $\delta>0$ there exists $c_\delta>0$ depending only on $\delta,n$ and the Lipschitz constant $\mathcal{L}$ of the bump function in~\eqref{eq:eta white definition} such that the following holds. For each finite $\Lambda^0\subset\Lambda^1\subset\Z^d$ and each $\lambda>0$, if $h\ge 1$ satisfies 
\begin{equation}\label{eq:assumption expected fraction of delocalizatio}
    \frac{\mathbb E \big[ | \{v\in \Lambda^0 : \|\varphi^{\eta,\lambda,\Lambda^1}_v\| \le h \} | \big]}{|\Lambda^0|} \ge \delta
\end{equation}
then
\begin{equation}
    \P\left(\left|\GE^{\eta,\lambda,\Lambda^1}-\med(\GE^{\eta,\lambda,\Lambda^1})\right|\ge c_\delta \lambda\sqrt{\frac{|\Lambda^0|}{h^n}}\right)\ge c_\delta.
\end{equation}
\end{theorem}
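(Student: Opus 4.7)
The plan is to derive a variance lower bound for $\GE^{\eta,\lambda,\Lambda^1}$ via Stein's identity applied to a specific Gaussian shift of the disorder, and then upgrade this to the asserted probability-of-deviation bound through a conditional analysis.

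I would first isolate a Gaussian shift coordinate at each site of $\Lambda^0$. Fix a smooth bump $\phi:\R^n\to[0,1]$ with $\phi\equiv 1$ on $B(0,h+2)$ and $\supp\phi\subset B(0,h+4)$, and for each $v\in\Lambda^0$ set $\xi_v:=\int\phi(t)\,\eta^{\white}_{v,t}\,dt = W_v(\phi\ast b)$. Then $\xi_v$ is centered Gaussian with variance $\sigma^2:=\|\phi\ast b\|_{L^2}^2\asymp h^n$, where the implicit constants depend only on $n,\mathcal{L}$. A unit shift in $\xi_v$ (keeping the orthogonal complement of $W_v$ fixed) translates $\eta_{v,t}$ by $(\phi\ast R)(t)/\sigma^2$, with $R:=b\ast\tilde b$ the autocorrelation of $b$; since $\phi$ varies slowly relative to $\supp R$, this shift equals a positive constant $c/h^n$ for all $t\in B(0,h+1)$ and vanishes outside $B(0,h+5)$. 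The envelope theorem therefore gives, at the minimizer $\varphi^\ast$, $\partial_{\xi_v}\GE^{\eta,\lambda,\Lambda^1}= c\lambda/h^n$ whenever $\|\varphi^\ast_v\|\le h$. Applying the Gaussian covariance identity to $\bar\xi:=|\Lambda^0|^{-1}\sum_{v\in\Lambda^0}\xi_v$ and using the hypothesis~\eqref{eq:assumption expected fraction of delocalizatio} in the form $\sum_v\P(\|\varphi^\ast_v\|\le h)\ge\delta|\Lambda^0|$ yields
\[
   \cov(\bar\xi,\GE^{\eta,\lambda,\Lambda^1}) = \frac{\sigma^2}{|\Lambda^0|}\sum_{v\in\Lambda^0}\E[\partial_{\xi_v}\GE^{\eta,\lambda,\Lambda^1}] \;\gtrsim\; \frac{\lambda\delta\sigma^2}{h^n}.
\]
Since $\std(\bar\xi)=\sigma/\sqrt{|\Lambda^0|}$, Cauchy--Schwarz produces the variance lower bound $\std(\GE^{\eta,\lambda,\Lambda^1})\gtrsim \lambda\delta\sqrt{|\Lambda^0|/h^n}$.

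To obtain the deviation probability with a constant depending only on $\delta,n,\mathcal{L}$, I would work conditionally on the $\sigma$-algebra $\mathcal{G}$ generated by the orthogonal complements $W_v^\perp$ for $v\in\Lambda^0$ together with $(\eta_v)_{v\notin\Lambda^0}$. Conditional on $\mathcal{G}$, the $(\xi_v)_{v\in\Lambda^0}$ are iid $N(0,\sigma^2)$ and $\GE^{\eta,\lambda,\Lambda^1}$ is a deterministic function of them that is concave (as a pointwise minimum of affine functions in the all-ones direction) and Lipschitz with constant $\lesssim\lambda|\Lambda^0|/h^n$ in that direction. Hence $\GE^{\eta,\lambda,\Lambda^1}\mid\mathcal{G}$ is subgaussian at scale $\lambda\sqrt{|\Lambda^0|/h^n}$, matching the target deviation scale up to the factor $\delta^{-1}$. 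Re-running the Stein-identity argument conditionally gives $\E[\var(\GE^{\eta,\lambda,\Lambda^1}\mid\mathcal{G})]\gtrsim\lambda^2\delta^2|\Lambda^0|/h^n$, and combining with the conditional subgaussian upper bound $\var(\GE^{\eta,\lambda,\Lambda^1}\mid\mathcal{G})\le C\lambda^2|\Lambda^0|/h^n$ implies $\P(\var(\GE^{\eta,\lambda,\Lambda^1}\mid\mathcal{G})\ge c\lambda^2\delta^2|\Lambda^0|/h^n)\gtrsim\delta^2$. On this event a conditional Paley--Zygmund estimate (whose fourth-to-second moment ratio is bounded by $C\delta^{-4}$ by the subgaussian bound) yields conditional probability $\gtrsim\delta^4$ that $|\GE^{\eta,\lambda,\Lambda^1}-\E[\GE^{\eta,\lambda,\Lambda^1}\mid\mathcal{G}]|\gtrsim\lambda\delta\sqrt{|\Lambda^0|/h^n}$; a final step replaces the conditional mean by the unconditional median using the same Lipschitz-Gaussian structure.

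The main obstacle is this last step. A naive Paley--Zygmund argument using the global subgaussian scale $\lambda\sqrt{|\Lambda^1|}$ furnished by Assumption~\ref{as:conc} fails, producing a constant proportional to $(|\Lambda^0|/(h^n|\Lambda^1|))^2$, which can be arbitrarily small. The conditional approach succeeds because, conditioned on $\mathcal{G}$, the relevant subgaussian scale of $\GE^{\eta,\lambda,\Lambda^1}$ matches the square root of the variance lower bound up to the factor $\delta^{-1}$, so all moment ratios entering Paley--Zygmund are controlled purely in terms of $\delta$.
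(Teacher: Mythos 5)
Your proposal is correct in outline, but the route is genuinely different from the paper's. The paper also starts from the observation that the ground energy responds monotonically to a Gaussian degree of freedom living on $\Lambda^0 \times B_{h+1}(0)$ (its $\hat\eta$ in Lemma~\ref{lem : decomposition noise-1}, with $\kappa_{v,t}\in[0,1]$ playing the role of your envelope derivative), but it then proceeds by a direct \emph{two-point comparison}: it conditions on the orthogonal complement $\eta^\perp$, draws two independent copies $\eta^{\rm S}_0,\eta^{\rm S}_1$ of the one-dimensional Gaussian mode, argues via Markov that the minimizer lies in a $\delta/2$-localized class $\Pi$ for both copies with probability polynomial in $\delta$, and then reads off $\GE^{\eta^0}-\GE^{\eta^1}\le -\lambda\delta c_\delta\sqrt{|\Lambda^0|/h^n}$ \emph{deterministically} from the pointwise bound $\kappa_{v,t}=1$ on $\|t\|\le h$. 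This gives the deviation-from-median estimate in one step with a clean $\delta^3$ constant. Your approach instead routes through Stein's covariance identity plus the envelope theorem to get a variance lower bound, then a conditional Paley--Zygmund argument. The two approaches buy different things: the paper's resampling is more elementary (no differentiability of $\GE$ needed, no fourth-moment control) and gives sharper $\delta$-dependence; yours cleanly separates a reusable variance lower bound $\std(\GE)\gtrsim\lambda\delta\sqrt{|\Lambda^0|/h^n}$ from the anti-concentration upgrade, and the Stein identity makes the ``localization implies energy sensitivity'' mechanism more transparent.

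Two points to tighten. First, the envelope theorem requires the minimizer to be a.s.\ unique as a function of $(\xi_v)$; this holds for $\eta^{\white}$ (it is the ``moreover'' in Lemma~\ref{lem:ass_basic}) and should be cited, since otherwise $\partial_{\xi_v}\GE$ is only defined in a generalized sense. Second, the final step of ``replacing the conditional mean by the unconditional median'' needs an explicit argument: the conditional mean $\E[\GE\mid\mathcal G]$ can fluctuate on a scale comparable to the target deviation, so $|\GE-\E[\GE\mid\mathcal G]|$ large does not immediately give $|\GE-\med(\GE)|$ large. The fix is to note that conditional on $\mathcal G$, $\GE$ has conditional variance $\ge c\delta^2 V$ (on a $\gtrsim\delta^2$-probability $\mathcal G$-event, $V:=\lambda^2|\Lambda^0|/h^n$) and conditional fourth moment $\le CV^2$; a one-sided Paley--Zygmund on $(\GE-m_{\mathcal G})^2\mathbf 1_{\{\GE\ge m_{\mathcal G}\}}$ or its mirror, where $m_{\mathcal G}$ is the conditional median, gives $\P(\GE\ge m_{\mathcal G}+c\delta\sqrt V\mid\mathcal G)\ge c'$ (or the reflected version), which together with $\P(\GE\le m_{\mathcal G}\mid\mathcal G)\ge 1/2$ yields anti-concentration around \emph{any} fixed $\gamma$, including $\med(\GE)$. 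Once this is spelled out the argument closes, at the cost of a somewhat worse polynomial in $\delta$ than the paper's.
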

Specific lower bounds on the ground energy fluctuations are obtained by combining Theorem~\ref{thm:lower bounds on energy fluctuations by localization} with the localization results of Theorem~\ref{thm:localization}. For instance, taking $\Lambda^1=\Lambda_L$ we obtain, for $\eta = \eta^\white$, each integer $L\ge 2$ and each $\lambda>0$, the two lower bounds:
\begin{enumerate}
    \item $\Lambda^0=\Lambda_L$:
    \begin{equation}\label{eq:bulk contribution to energy fluctuations}
       \P\left(\left|\GE^{\eta,\lambda,\Lambda_L}-\med(\GE^{\eta,\lambda,\Lambda_L})\right|\ge c_\lambda \frac{L^{d/2}}{h^{n/2}}\right)\ge c,
    \end{equation}
    with
    \begin{equation}
        h = \begin{cases}L^{\frac{4-d}{4}}&d=1,2,3\\
        \log(L)&d=4\\
        1&d\ge 5
        \end{cases},
    \end{equation}
    \item $\Lambda^0$ the interior vertex boundary of $\Lambda_L$:
    \begin{equation}\label{eq:boundary contribution to energy fluctuations}
    \P\left(\left|\GE^{\eta,\lambda,\Lambda_L}-\med(\GE^{\eta,\lambda,\Lambda_L})\right|\ge c_\lambda L^{\frac{d-1}{2}}\right)\ge c,
\end{equation}
\end{enumerate}
where $c>0$ depends only on $n$ and the Lipschitz constant $\mathcal{L}$ of the bump function in~\eqref{eq:eta white definition} and $c_\lambda>0$ depends additionally on $d$ and $\lambda$.

\begin{table}[]
\begin{tabular}{@{}|lccc|@{}}
\toprule
\rowcolor[HTML]{FFCCC9} 
\multicolumn{4}{|c|}{\cellcolor[HTML]{96C5FF}Ground energy fluctuations for $n=1$}  
\\ \midrule
\multicolumn{1}{|l|}{}                                & \multicolumn{1}{c|}{Lower bound}                       & \multicolumn{1}{c|}{Predicted}                                & Upper bound \\ \midrule
\rowcolor[HTML]{EFEFEF} 
\multicolumn{1}{|l|}{\cellcolor[HTML]{EFEFEF}$d=1$}   & \multicolumn{1}{c|}{\cellcolor[HTML]{EFEFEF}$L^{0.2}$} & \multicolumn{1}{c|}{\cellcolor[HTML]{EFEFEF}$L^{1/3}$}        & $L^{0.5}$  \\ \midrule
\multicolumn{1}{|l|}{$d=2$}                           & \multicolumn{1}{c|}{$L^{0.8}/\log(L)$}                         & \multicolumn{1}{c|}{$L^{0.84\pm0.03\ldots}$}                         & $L$   \\ \midrule
\rowcolor[HTML]{EFEFEF} 
\multicolumn{1}{|l|}{\cellcolor[HTML]{EFEFEF}$d=3$}   & \multicolumn{1}{c|}{\cellcolor[HTML]{EFEFEF}$L^{1.375}$} & \multicolumn{1}{c|}{\cellcolor[HTML]{EFEFEF}$L^{1.45\pm0.04\ldots}$} & $L^{1.5}$  \\ \midrule
\multicolumn{1}{|l|}{$d=4$}                           & \multicolumn{1}{c|}{$L^2 / \sqrt{\log L}$}                                 & \multicolumn{1}{c|}{$?$}                  & $L^2$    \\ \midrule
\rowcolor[HTML]{EFEFEF} 
\multicolumn{1}{|l|}{\cellcolor[HTML]{EFEFEF}$d\ge5$} & \multicolumn{1}{c|}{\cellcolor[HTML]{EFEFEF}$L^{d/2}$}          & \multicolumn{1}{c|}{\cellcolor[HTML]{EFEFEF}$L^{d/2}$}                & $L^{d/2}$           \\ \bottomrule
\end{tabular}
\caption{The order of magnitudes of the lower and upper bounds on the ground energy fluctuations in the cube domain $\Lambda_L$ with $n=1$, $\eta^\white$ disorder and fixed $\lambda$. These are presented alongside the predictions from the physics literature (see Section~\ref{sec:physics results background}).
}\label{table:energy fluctuations n=1} 
\end{table}

The following list summarizes the best lower bounds on the ground energy fluctuations for each dimension $d$ and number of components $n$, which are obtained for $\eta=\eta^\white$ and fixed $\lambda$ from Corollary~\ref{cor:lower bounds on energy fluctuations in d=1,2} and the combination of Theorem~\ref{thm:lower bounds on energy fluctuations by localization} with Theorem~\ref{thm:localization}:
\begin{equation}\label{eq:best lower bounds on energy fluctuations}
    \E\left(\left|\GE^{\eta,\lambda,\Lambda_L}-\med(\GE^{\eta,\lambda,\Lambda_L})\right|\right)\ge c_\lambda\begin{cases}L^{\frac{1}{5}}&d=n=1,\\
    1&d=1, n\ge 2,\\
    L^{\frac{4}{4+n}}(\log L)^{-1}&d=2, 1\le n\le 3,\\
    \sqrt{L}&d=2, n\ge 4,\\
    L^{\frac{3}{2} - \frac{n}{8}}&d=3, 1\le n\le 3,\\
    L&d=3, n\ge 4,\\
    L^2(\log L)^{-\frac{n}{2}}&d=4,\\
    L^{\frac{d}{2}}&d\ge 5.
    \end{cases}
\end{equation}
We also remind that assumption~\ref{as:conc} provides the upper bound $C\lambda L^{d/2}$, which in dimensions $d\ge 5$ is sharp up to the dependence on $\lambda$. Here $C,c_\lambda>0$ depend only on $d,n$ and the Lipschitz constant $\mathcal{L}$ of the bump function in~\eqref{eq:eta white definition}, with $c_\lambda$ depending additionally on~$\lambda$. Table~\ref{table:energy fluctuations n=1} summarizes the lower and upper bounds for the case $n=1$ alongside the physics predictions.

\subsection{Background and comparison with previous work}\label{sec:background}

In this section we discuss some of the physics and mathematics literature on related models and point out how our results fit within the existing picture.

\subsubsection{Harmonic minimal surfaces in random environment} The model~\eqref{eq:formal Hamiltonian} and closely-related variants have received significant attention in the physics literature, e.g., in the papers~\cite{grinstein1982roughening, villain1982commensurate, grinstein1983surface, huse1985pinning, fisher1986interface, kardar1987domain, nattermann1987interface, nattermann1988random, balents1993large,  middleton1995numerical, emig1998roughening, scheidl2000interface, le2004functional, husemann2018field} and the reviews~\cite{forgacs1991behavior,giamarchi1998statics, wiese2003functional, giamarchi2009disordered,ferrero2021creep,wiese2022theory}. As we discuss in Section~\ref{sec:disorder types}, the model has been proposed as an approximation to the domain walls of various disordered spin systems and also as a height function approximation to the angle field of the random-field XY model. In the mathematically-rigorous literature, however, the model has received only little attention: research has been limited to the works of Bakhtin et al.~\cite{bakhtin2014space, bakhtin2016inviscid,bakhtin2019thermodynamic, bakhtin2018zero, bakhtin2022dynamic} on the $d=n=1$ case in relation with the Burgers equation, to the works of Berger and Torri~\cite{berger2019entropy,berger2021beyond} on the $d=n=1$ case in continuous space with a Poisson environment and to studies with $d=1$ and general $n$ of related (positive temperature) models of directed polymers based on Gaussian random walk or on Brownian motion (see, e.g., the review by Comets--Cosco~\cite{comets2018brownian}). The recent work of Ben Arous--Bourgade--McKenna~\cite{ben2024landscape} considers the model~\eqref{eq:formal Hamiltonian} (termed \emph{the elastic manifold} there) for fixed $d$ and $n\to\infty$ but focuses on different questions, of landscape complexity (following Fyodorov--Le Doussal~\cite{fyodorov2020manifolds}; see also~\cite{fyodorov2018exponential,fyodorov2020bmanifolds}). Our work appears to be the first mathematically-rigorous investigation of the surfaces minimizing~\eqref{eq:formal Hamiltonian} in $d\ge 2$ dimensions.

\subsubsection{The transversal fluctuations, scaling relation and ground energy fluctuations in the physics literature}\label{sec:physics results background}

The height fluctuations of the minimal surface and the ground energy fluctuations, as captured (at least for $d\le 3$) by the transversal fluctuation exponent $\xi$ and the energy fluctuation exponent $\chi$, received great attention in the physics literature. Already the seminal work of Huse--Henley~\cite{huse1985pinning}, dealing with general $d$ and $n=1$, put forth the scaling relation~\ref{eq:scaling relation}, predicted that the minimal surface is localized for $d>4$ (following arguments of Imry--Ma~\cite{imry1975random}), delocalized for $d<4$ and predicted, based on numerical simulations, the exponents $\chi=1/3$ and $\xi=2/3$ for $d=n=1$. Theoretical arguments for the $d=n=1$ exponents were then presented by Kardar--Nelson~\cite{kardar1985commensurate} and Huse--Henley--Fisher~\cite{huse1985huse}. The exponents reported in Table~\ref{table:transversal fluctuations n=1} and Table~\ref{table:energy fluctuations n=1} for dimensions $d=2,3$ were obtained by Middleton~\cite{middleton1995numerical} using numerical simulations based on an exact min-cut algorithm. Fisher~\cite{fisher1986interface} developed a functional renormalization group method to obtain an expansion of $\xi$ to first order in $\varepsilon>0$ in dimension $d=4-\varepsilon$ when $n=1$ (later extended to second~\cite{le2004functional} and third orders~\cite{husemann2018field}). The expression presented in Table~\ref{table:transversal fluctuations n=1} for $d=4$ is taken from the renormalization group results of Emig--Natterman (\cite{emig1998roughening} and ~\cite[following (79)]{emig1999disorder}). We are not aware of a prediction for the ground energy fluctuations for $d=4$ (we are also unaware of an interpretation of the scaling relation for $d=4$; see also the remarks after Theorem~\ref{theorem:half scaling relation} and Section~\ref{sec:scaling relation interpretations}).

A replica-theoretic approach, aided by a variational method, was developed by M{\'e}zard--Parisi~\cite{mezard1990interfaces,mezard1991replica,mezard1992manifolds} for the study of $\xi$ for $n$ large. 
Balents--Fisher~\cite{balents1993large} extend the functional renormalization group method to $d=4-\varepsilon$ with $n$ large; further extensions are in~\cite{emig1999disorder, le2002functional, le2003functional} and reviewed in~\cite{wiese2022theory}. 

It is believed that the transversal fluctuation exponent $\xi$ is non-increasing with the number of components $n$ (for a fixed $d$). For $d=1$, it is also believed that $\xi\ge1/2$, as we prove here. It is an open question in the physics literature whether there exists $n_0<\infty$ for which $\xi=1/2$ exactly; see~\cite[equation (783) and discussion in Section 7.11]{wiese2022theory}.

\subsubsection{Comparison with mathematical literature} We summarize here some of the ways in which our results complement the existing mathematical literature.
\smallskip
\paragraph{\emph{Transversal fluctuations}.}\label{sec:transversal fluctuations} Our work justifies the prediction that minimal surfaces in ``independent disorder'' are delocalized in dimensions $1\le d\le 4$ and localized in dimensions $d\ge 5$. Moreover, we prove delocalization with power-law fluctuations in dimensions $1\le d\le 3$ and sub-power-law fluctuations in dimension $d=4$. Only the $d=1$ case was previously proved, by Licea--Newman--Piza~\cite{licea1996superdiffusivity}. Moreover, even for $d=1$, our lower bounds $\xi\ge 3/5$ for $d=n=1$ and $\xi\ge 1/2$ for $d=1, n\ge 2$ are new, though the following related results were known: W{\"u}thrich~\cite{wuthrich1998superdiffusive} proves $\xi\ge 1/2$ for all $n$ and $\xi\ge 3/5$ for $n=1$ for \emph{point-to-plane} Brownian motion in a Poissonian potential (a \emph{``positive temperature''} model), mentioning that the arguments do not adapt to the point-to-point setting. Bakhtin--Wu~\cite{bakhtin2019transversal} show $\xi\ge 3/5$ for $n=1$ for a point-to-line first-passage percolation model in a Poissonian environment. Licea--Newman--Piza~\cite{licea1996superdiffusivity} also establish versions of both bounds, but for definitions of $\xi$ which consider ``almost geodesics in some direction''.
    The upper bound $\xi\le 3/4$ for $d=1$ was shown by Newman--Piza~\cite{newman1995divergence} for the length of geodesics in directions of curvature in first-passage percolation (see also~\cite{bakhtin2019thermodynamic,bakhtin2022dynamic}).

    Balents--Fisher~\cite[Section VI]{balents1993large} predict that for all $d\le 3$ and $n\ge 1$ the exponent $\xi$ is between $\frac{4-d}{4+n}$ and $\frac{4-d}{4}$. The prediction $\xi\ge\frac{3}{4+n}$ for $d=1$ was also made by Le Doussal and Machta~\cite[Last displayed equation in Section 6.1]{le1991self}. These predictions are established rigorously in this work. The exponent $\frac{4-d}{4+n}$ is termed a Flory-type estimate~\cite{mezard1991replica, giamarchi2009disordered, wiese2022theory}.

    The law of the iterated logarithm (LIL)~\eqref{eq:loglog chi concentration} upper bound ($d=1$) and the law of the fractional logarithm~\eqref{eq:log chi concentration} upper bound ($d\ge 2$) for the \emph{heights} of the minimal surface are new, even for the $d=n=1$ case. In the latter case, plugging the predicted values $\chi=1/3$ and $\alpha=3/2$ in~\eqref{eq:chi conc}, our LIL upper bound~\eqref{eq:loglog chi concentration} has the scaling $(\log\log x)^{1/3}$. A related LIL for the \emph{ground energy} on a sequence of growing length scales is established, for integrable directed last-passage percolation ($d=n=1$), by Ledoux~\cite{ledoux2018law}. In fact, the scaling of the LIL of~\cite{ledoux2018law} differs for the upper and lower deviations of the ground energy, being $(\log\log x)^{2/3}$ for the former and $(\log\log x)^{1/3}$ for the latter (the last claim is completed by Basu et al.~\cite{basu2021lower}). The $(\log\log x)^{1/3}$ scaling that our LIL exhibits for $d=n=1$ may indeed be sharp, as it seems plausible that it is a height counterpart to the $(\log\log x)^{2/3}$ ground-energy LIL of Ledoux via the scaling relation~\eqref{eq:scaling relation}. We note also that an LIL for the solution of the KPZ equation with narrow wedge initial data is established by Das--Ghosal~\cite{das2023law}.
    
    \smallskip
    \paragraph{\emph{Scaling relation}.} The inequality $\chi'\ge 2\xi-1$ of the $d=1$ scaling relation was established by Newman--Piza~\cite{newman1995divergence} for all $n$ for geodesics in ``directions of curvature'' in first-passage percolation, with $\chi'$ the exponent for the rate of convergence to the limit shape (the so-called ``non-random fluctuations'') in the slowest direction. The full scaling relation for $d=1$ and arbitrary $n$ was established under assumptions which are still unverified by W{\"u}thrich~\cite{wuthrich1998scaling} in the context of Brownian motion in a Poissonian potential and by Chatterjee~\cite{chatterjee2013universal} for lattice first-passage percolation, with simplifications of the latter and an extension to a directed, positive-temperature version by Auffinger--Damron~\cite{auffinger2014simplified,auffinger2013scaling}. The scaling relations proved in these works are on the level of exponents, allowing for possible sub-power-law corrections. Just before the first version of this paper appeared, an impressive work of Basu--Sidoravicius--Sly~\cite{basu2023rotationally} showed that the scaling relation holds when $d=n=1$, under assumptions which will be verified for a class of rotationally-invariant first-passage percolation models in a sequel paper. For all choices of $d,n$ other than $d=n=1$, our work is the first to establish versions of the scaling relation without relying on unverified assumptions. For $d=n=1$ it appears concurrently with~\cite{basu2023rotationally}. Moreover, our results apply for a wide class of disorder distributions and show that the scaling relation holds without sub-power-law corrections and even on the level of the tails of the distributions.

Alexander~\cite{alexander2023properties} considers $d=1$ and general $n$ and initiates a mathematical exploration of the consequences of having $\chi=0$ (equivalent to $\xi=1/2$ by the scaling relation). In relation to this, note that Theorem~\ref{thm:delocalization intro} and~\eqref{eq:std lower bound} show that, for $\eta=\eta^\white$ and fixed $\lambda$, if the ground energy has standard deviation of order $1$ then the (typical) maximal transversal fluctuation is exactly of order $L^{1/2}$.

\smallskip
    \paragraph{\emph{Ground energy fluctuations}.} Our lower bounds on the ground energy fluctuations~\eqref{eq:best lower bounds on energy fluctuations} are all new. Earlier results concern the $d=n=1$ case: Close to our $L^{1/5}$ lower bound is the result of Newman--Piza~\cite{newman1995divergence} showing that $\max\{\chi,\chi'\}\ge 1/5$, where $\chi$ is the ground energy fluctuation exponent maximized over all directions and $\chi'$ is the exponent for the ``non-random fluctuations'' mentioned above. It is also shown by Newman--Piza~\cite{newman1995divergence} for general first-passage percolation models with $d=n=1$ that the ground energy fluctuations are at least $\sqrt{\log L}$ (also obtained by Pemantle--Peres~\cite{pemantle1994planar} for exponential weights), which is improved to $L^{1/8}$ in directions of curvature (using the bound $\xi\le3/4$ and the Wehr--Aizenman~\cite{wehr1990fluctuations} inequality $\chi\ge \frac{1-\xi}{2}$).

\subsection{Reader's guide} Section~\ref{sec:main identity and consequences} presents the \emph{main identity} (Proposition~\ref{prop:main identity}), an identity satisfied by the Hamiltonian~\eqref{eq:formal Hamiltonian} which lies at the heart of all of our proofs. First consequences of the identity including a basic relation~\eqref{eq:lower bound on fluctuations} between the heights of the minimal surface and its ground energy are also presented. 

Localization and the direction $\chi\ge2\xi+d-2$ of the scaling relation are the subject of Section~\ref{sec:loc}, while delocalization and the direction $\chi\le2\xi+d-2$ of the scaling relation are the subject of Section~\ref{sec:deloc}.

In Section~\ref{sec:assumptions for eta white} we verify that the disorder $\eta^\white$ satisfies all the assumptions of Section~\ref{sec:disorder assumptions}.

Section~\ref{sec:disorder types} discusses universality classes of disorders, the predicted behavior of the minimal surfaces with such disorders and connections with disordered spin systems. 

Section~\ref{sec:open questions and extensions} discusses extensions of our results and several of the outstanding open questions.

\section{Main identity, first consequences and overview of proof of main results}\label{sec:main identity and consequences}

This section introduces the main identity, which is the special feature of the model~\eqref{eq:formal Hamiltonian} on which our analysis is based. We start with required notation, proceed to describe the identity and then present some of its first consequences.

\subsection{Notation}
\subsubsection*{Shifts} Given a function $s:\Z^d\to\R^n$ we let
\begin{equation}
    \supp(s) := \{v\in\Z^d\colon s_v\neq 0\}
\end{equation}
be the \emph{support} of $s$. Given additionally an environment $\eta:\Z^d\times\R^n\to(-\infty,\infty]$ we 
recall from~\eqref{eq:eta s def} that the $s$-shifted environment $\eta^s$ is defined by \begin{equation*}
    \eta^s_{v,t}:=\eta_{v,t-s_v}.
\end{equation*}

\subsubsection*{Inner products and Laplacian} For $x,y\in\R^n$ we use the standard notation
    \begin{equation}\label{eq:inner product R n}
        x\cdot y:=\sum_{i=1}^n x_i y_i
    \end{equation}
    so that $\|x\|^2=x\cdot x$. We use the notation $(\cdot ,\cdot )$ for inner products of functions on the vertices:  
    Given $\varphi,\psi:\Z^d\to\R^n$ and $\Lambda\subset\Z^d$, we let
    \begin{equation}
        (\varphi,\psi)_\Lambda:=\sum_{v\in\Lambda} \varphi_v \cdot  \psi_v
    \end{equation}
    and also set $\|\varphi\|_\Lambda^2:=(\varphi,\varphi)_\Lambda$.
    In addition, we write
    \begin{equation}
        (\nabla \varphi, \nabla\psi)_\Lambda:= \sum_{\substack{u\sim v\\\{u,v\}\cap\Lambda\neq\emptyset}}(\varphi_u - \varphi_v)\cdot ( \psi_u - \psi_v)
    \end{equation}
    and also set $\|\nabla\varphi\|_\Lambda^2:=(\nabla\varphi,\nabla\varphi)_\Lambda$ (the Dirichlet energy of $\varphi$ on $\Lambda$). We use the abbreviations $(\varphi,\psi):=(\varphi,\psi)_{\Z^d}$ and $(\nabla \varphi,\nabla\psi):=(\nabla \varphi,\nabla\psi)_{\Z^d}$.

    We note the following useful \emph{discrete Green's identity},
    \begin{equation}\label{eq:Green's identity}
        (\nabla\varphi,\nabla\psi)_\Lambda=(\varphi,-\Delta_\Lambda\psi)=(-\Delta_\Lambda\varphi,\psi),
    \end{equation}
    which holds whenever one of the sums (and then all others) converges absolutely. Here,
    $\Delta_\Lambda$ is the discrete Laplacian which acts on $\varphi:\Z^d\to\R^n$ by
    \begin{equation}\label{eq:Laplacian definition}
        (\Delta_\Lambda\varphi)_v:=\sum_{u\,:\, \substack{u\sim v,\\ \{u,v\}\cap\Lambda\neq\emptyset}} (\varphi_u - \varphi_v).
    \end{equation}
    In particular, letting
    \begin{equation}
        \Lambda^+:=\{v\in\Z^d\colon \exists u, u\sim v\text{ and }\{u,v\}\cap\Lambda\neq \emptyset\}
    \end{equation}
    we have $\Delta_\Lambda\varphi=0$ off $\Lambda^+$ with this definition, so that, e.g., $(-\Delta_\Lambda\varphi,\psi) = (-\Delta_\Lambda\varphi,\psi)_{\Lambda^+}$. Again, we abbreviate $\Delta:=\Delta_{\Z^d}$.

    \subsubsection*{Harmonic extension and Dirichlet energy} Given a finite $\Lambda\subset\Z^d$ and $\tau:\Z^d\to\R^n$, let $\overline{\tau}^\Lambda$ be the harmonic extension of $\tau$ to $\Lambda$. Precisely, $\overline{\tau}^\Lambda:\Z^d\to\R^n$ is the unique function satisfying
    \begin{equation}
    \begin{split}
        &\overline{\tau}^\Lambda = \tau\quad\text{on $\Z^d\setminus\Lambda$},\\
        &\Delta_\Lambda(\overline{\tau}^\Lambda) = 0\quad\text{on $\Lambda$}.
    \end{split}
    \end{equation}
    This is equivalent, as is well known, to setting $\bar{\tau}^\Lambda_v = \E(\tau_{X_{T_\Lambda}})$ with $(X_t)_{t=0}^\infty$ a simple random walk on $\Z^d$ started at $v$ and $T_\Lambda:=\min\{t\ge 0\colon X_t\notin\Lambda\}$. In addition, we let
    \begin{equation}
    \|\tau\|_{\DE(\Lambda)}^2:=\|\nabla\bar{\tau}^\Lambda\|_\Lambda^2
    \end{equation}
    be the Dirichlet energy of the harmonic extension of $\tau$ to $\Lambda$.

\subsection{Main identity} The starting point for our analysis of the disordered random surface model~\eqref{eq:formal Hamiltonian} is the following deterministic identity, which controls the change in the energy of surfaces under a shift of both the disorder and the surface. 
\begin{prop}[Main identity]\label{prop:main identity} Let $\eta:\Z^d\times\R^n\to(-\infty,\infty]$, $\lambda\in\R$ and finite $\Lambda\subset\Z^d$. For each $s:\Z^d\to\R^n$ and $\varphi:\Z^d\to\R^n$ we have
\begin{equation}\label{eq:main identity}
    H^{\eta^s,\lambda,\Lambda}(\varphi+s) - H^{\eta,\lambda,\Lambda}(\varphi) = (\varphi,-\Delta_\Lambda s) + \frac{1}{2}\|\nabla s\|_\Lambda^2.
\end{equation}
\end{prop}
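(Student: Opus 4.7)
The identity should follow from a direct computation in which the disorder terms cancel by construction and the elastic terms are handled by expansion plus discrete Green's identity. Let me sketch it.

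First I would split $H^{\eta,\lambda,\Lambda}(\varphi) = \tfrac{1}{2}\|\nabla\varphi\|_\Lambda^2 + \lambda\sum_{v\in\Lambda}\eta_{v,\varphi_v}$ and do the same for $H^{\eta^s,\lambda,\Lambda}(\varphi+s)$. The key observation is that the shift of the disorder is \emph{precisely} designed to compensate the shift of the configuration: by the definition $\eta^s_{v,t} := \eta_{v,t-s_v}$, evaluating at $t=(\varphi+s)_v=\varphi_v+s_v$ gives
\[
   \eta^s_{v,(\varphi+s)_v} \;=\; \eta_{v,\varphi_v+s_v-s_v} \;=\; \eta_{v,\varphi_v},
\]
so the potential parts of the two Hamiltonians agree term by term on $\Lambda$ and therefore cancel in the difference. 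This leaves only the elastic contribution.

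Next I would expand the quadratic form, using the bilinearity of the gradient pairing:
\[
   \tfrac{1}{2}\|\nabla(\varphi+s)\|_\Lambda^2 - \tfrac{1}{2}\|\nabla\varphi\|_\Lambda^2
   = (\nabla\varphi,\nabla s)_\Lambda + \tfrac{1}{2}\|\nabla s\|_\Lambda^2.
\]
Finally, I would apply the discrete Green's identity~\eqref{eq:Green's identity} to rewrite $(\nabla\varphi,\nabla s)_\Lambda = (\varphi,-\Delta_\Lambda s)$. Assembling the pieces yields exactly the claimed right-hand side $(\varphi,-\Delta_\Lambda s) + \tfrac{1}{2}\|\nabla s\|_\Lambda^2$.

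There is essentially no obstacle here: the proof is a deterministic, one-line algebraic manipulation, and the only thing worth verifying is that Green's identity applies, which it does trivially since $\Lambda$ is finite and $\Delta_\Lambda s$ is supported on $\Lambda^+$, making every sum appearing absolutely convergent. The content of the statement is therefore not in its derivation but in the conceptual point it encapsulates: for harmonic MSRE the operation ``shift the surface by $s$ and the disorder by $s$'' changes the energy by a purely deterministic quantity that depends on $\varphi$ only through the harmonic pairing $(\varphi,-\Delta_\Lambda s)$, with the $\tfrac{1}{2}\|\nabla s\|_\Lambda^2$ term independent of $\varphi$ altogether. This decoupling is what will later allow comparisons between ground configurations in different disorders and underlies the subsequent localization and scaling-relation arguments.
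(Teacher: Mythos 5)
Your proof is correct and follows essentially the same approach as the paper: the disorder terms cancel because $\eta^s_{v,\varphi_v+s_v}=\eta_{v,\varphi_v}$ by construction, the elastic term expands by bilinearity, and the cross term $(\nabla\varphi,\nabla s)_\Lambda$ is rewritten as $(\varphi,-\Delta_\Lambda s)$ via the discrete Green's identity. The only cosmetic difference is that the paper carries out all three steps in a single chain of equalities rather than narrating them separately.
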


\begin{proof}
We first observe that, by the definition~\eqref{eq:eta s def} of $\eta^s$, for each $v\in\Z^d$,
\begin{equation}
    \eta^s_{v,\varphi_v+s_v} = \eta_{v,\varphi_v}.
\end{equation}
Therefore, using the discrete Green's identity~\eqref{eq:Green's identity},
\begin{equation}
\begin{split}
    H^{\eta^s,\lambda,\Lambda}(\varphi+s) - H^{\eta,\lambda,\Lambda}(\varphi) &= \frac{1}{2}(\|\nabla (\varphi+s)\|_\Lambda^2 - \|\nabla\varphi\|_\Lambda^2) + \lambda \sum_{v\in\Lambda} (\eta^s_{v,\varphi_v+s_v}-\eta_{v,\varphi_v})\\
    &=\frac{1}{2}\big((\nabla(\varphi+s),\nabla(\varphi+s))_\Lambda - (\nabla\varphi, \nabla\varphi)_\Lambda\big)\\
    &=(\nabla\varphi,\nabla s)_\Lambda + \frac{1}{2}(\nabla s, \nabla s)_\Lambda=(\varphi,-\Delta_\Lambda s) + \frac{1}{2}\|\nabla s\|_\Lambda^2.\qedhere
\end{split}    
\end{equation}
\end{proof}
We remark that the idea of shifting the configurations of a disordered spin system together with the disorder was also used by Aizenman--Wehr~\cite{aizenman1990rounding} in their study of the random-field spin $O(n)$ models (and its quantitative extension~\cite{dario2024quantitative}).

\subsection{First consequences}
The above identity is our main tool in proving the results of this paper. We collect here several of its immediate consequences. 
\subsubsection{Effect of boundary condition and limit shape}
Among the consequences of the main identity is a useful description of the effect of boundary conditions as given by the following corollary.
\begin{cor}[Effect of boundary conditions. \ref{as:exiuni}]\label{cor:effect of boundary conditions} Let $\Lambda\subset \Z^d$ and $\tau:\Z^d \to \R^n$. Then
\begin{alignat}{1}
\varphi^ {\eta,\lambda,\Lambda,\tau }&=\varphi^ {\eta^ {-\bar{\tau}^\Lambda},\lambda,\Lambda}+\bar{\tau}^\Lambda,\\
\GE^{\eta,\lambda,\Lambda,\tau}&= \GE^ {\eta^ {-\bar{\tau}^\Lambda},\lambda,\Lambda }+\frac{1}{2}\|\tau\|_{\DE(\Lambda)}^2.
\end{alignat}
In particular, if the disorder $\eta$ also satisfies~\ref{as:stat} then
\begin{equation}
    (\varphi^ {\eta,\lambda,\Lambda,\tau}, \GE^{\eta,\lambda,\Lambda,\tau})\eqd(\varphi^ {\eta,\lambda,\Lambda}, \GE^{\eta,\lambda,\Lambda}) + \Big(\bar{\tau}^\Lambda,\frac{1}{2}\|\tau\|_{\DE(\Lambda)}^2\Big).
\end{equation}
\end{cor}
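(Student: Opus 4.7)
The plan is to derive the first identity from the main identity by choosing the shift to match the harmonic extension, and to deduce the energy identity and the distributional statement as formal consequences. Writing $h:=\bar{\tau}^\Lambda$ for brevity, and observing that $(\eta^{-h})^{h}=\eta$, applying Proposition~\ref{prop:main identity} with disorder $\eta^{-h}$ and shift $h$ gives, for every $\psi:\Z^d\to\R^n$,
\begin{equation*}
    H^{\eta,\lambda,\Lambda}(\psi+h)-H^{\eta^{-h},\lambda,\Lambda}(\psi)=(\psi,-\Delta_\Lambda h)+\tfrac{1}{2}\|\nabla h\|_\Lambda^2.
\end{equation*}

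The crucial step is to observe that, once we restrict $\psi$ to lie in $\Omega^{\Lambda,0}$, the linear term collapses. Indeed, by harmonicity of the extension, $\Delta_\Lambda h=0$ on $\Lambda$, while $\Delta_\Lambda h$ is supported in $\Lambda^+$; since $\psi\in\Omega^{\Lambda,0}$ vanishes on $\Lambda^c$, every summand in $(\psi,-\Delta_\Lambda h)$ is zero. Moreover $\|\nabla h\|_\Lambda^2=\|\tau\|_{\DE(\Lambda)}^2$ by definition, and the shift $\psi\mapsto\psi+h$ is a bijection $\Omega^{\Lambda,0}\to\Omega^{\Lambda,\tau}$ because $h=\tau$ on $\Lambda^c$. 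The identity therefore reduces to
\begin{equation*}
    H^{\eta,\lambda,\Lambda}(\psi+h)=H^{\eta^{-h},\lambda,\Lambda}(\psi)+\tfrac{1}{2}\|\tau\|_{\DE(\Lambda)}^2\qquad\text{for all }\psi\in\Omega^{\Lambda,0}.
\end{equation*}
Hence the restriction of $H^{\eta,\lambda,\Lambda}$ to $\Omega^{\Lambda,\tau}$ equals the restriction of $H^{\eta^{-h},\lambda,\Lambda}$ to $\Omega^{\Lambda,0}$ up to the additive constant $\tfrac{1}{2}\|\tau\|_{\DE(\Lambda)}^2$; assumption~\ref{as:exiuni} transfers between the two sides, $\psi\mapsto\psi+h$ maps minimizers to minimizers, and the ground energies satisfy the claimed relation.

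To upgrade the identity to the lexicographically-first minimizer, I would invoke that the lex order on $\R^n$ (and hence on $(\R^n)^{\Z^d}$ via the fixed order on $\Z^d$) is invariant under pointwise translation: two configurations $\psi_1,\psi_2$ differ first at a vertex $v$ iff $\psi_1+h,\psi_2+h$ do, and the order at $v$ is preserved by adding $h_v$. This yields $\varphi^{\eta,\lambda,\Lambda,\tau}=\varphi^{\eta^{-h},\lambda,\Lambda}+h$. The distributional statement then follows from \ref{as:stat}, which gives $\eta^{-h}\eqd\eta$ and hence $(\varphi^{\eta^{-h},\lambda,\Lambda},\GE^{\eta^{-h},\lambda,\Lambda})\eqd(\varphi^{\eta,\lambda,\Lambda},\GE^{\eta,\lambda,\Lambda})$. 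I do not anticipate any real obstacle; the one point that requires care is the cancellation of $(\psi,-\Delta_\Lambda h)$, which is the mechanism that forces the disorder shift to be exactly the harmonic extension $-\bar{\tau}^\Lambda$ rather than some other extension of the boundary data.
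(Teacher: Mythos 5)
Your proof is correct and follows essentially the same route as the paper's: both apply the main identity with disorder $\eta^{-\bar{\tau}^\Lambda}$ and shift $\bar{\tau}^\Lambda$, kill the linear term by combining harmonicity of $\bar{\tau}^\Lambda$ on $\Lambda$ with the vanishing of $\psi$ off $\Lambda$, and then transfer minimizers along the bijection $\psi\mapsto\psi+\bar{\tau}^\Lambda$. Your explicit handling of the lexicographic tie-breaking and of the distributional consequence under~\ref{as:stat} matches the paper's argument as well.
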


\begin{proof}
    First note that $\varphi+{\bar{\tau}^\Lambda}\in \Omega^ {\Lambda,\tau}$ if and only if $\varphi\in\Omega^ {\Lambda}$.
        Second, Proposition \ref{prop:main identity} implies that for each $\varphi\in\Omega^ {\Lambda}$,
        \begin{equation*}
H^{\eta,\lambda,\Lambda}(\varphi+{\bar{\tau}^\Lambda}) - H^{\eta^{-\bar{\tau}^\Lambda},\lambda,\Lambda}(\varphi) = (\varphi,-\Delta_\Lambda {\bar{\tau}^\Lambda}) + \frac{1}{2}\|\nabla \bar{\tau}^\Lambda\|_\Lambda^2= \frac{1}{2}\|\tau\|_{\DE(\Lambda)}^2
\end{equation*}
using that $\Delta_\Lambda(\overline{\tau}^\Lambda) = 0$ on $\Lambda$ and $\varphi=0$ off $\Lambda$. Thus, $\varphi$ minimizes $H^{\eta^{-\bar{\tau}^\Lambda},\lambda,\Lambda}$ over $\Omega^ {\Lambda}$ if and only if $\varphi+{\bar{\tau}^\Lambda}$ minimizes $H^{\eta,\lambda,\Lambda}$ over $\Omega^ {\Lambda,\tau}$. In the case of multiple minimizers, recall that $\varphi^{\eta,\lambda,\Lambda}$ is selected by lexicographic order and note that shifting preserves the ordering.
\end{proof}
The corollary generalizes the, so called, \emph{shear invariance} familiar from the $d=1$ case (see, e.g., \cite{bakhtin2014space,bakhtin2016inviscid} and \cite[Lemma~10.2]{dauvergne2022directed}) to \emph{harmonic invariance} (invariance under shifts by an harmonic function). We emphasize, however, that we will often make use of the main identity~\eqref{eq:main identity} with non-harmonic shift functions $s$, for which the term $(\varphi,-\Delta_\Lambda s)$ plays an important role.

Corollary~\ref{cor:effect of boundary conditions} shows that the boundary conditions $\tau$ ``cost'' exactly $\frac{1}{2}\|\tau\|_{\DE(\Lambda)}^2$ more in energy than zero boundary conditions. For $d=1$, this is a strong (finite-volume) version of the strict convexity of the limit shape (a property believed to hold in many first-passage percolation models). We briefly make this connection explicit (though it will not be used in the rest of the paper): For $d=1$, let $L\ge 2$, $h\in\R^n$ and $I_L := \{1,\ldots, L-1\}$. Let $\tau^{L,h}:\Z\to\R^n$ be any function with $\tau^{L,h}_0 = 0$ and $\tau^{L,h}_L=h$. The harmonic extension of $\tau^{L,h}$ to $I_L$ is linear, whence $\frac{1}{2}\|\tau^{L,h}\|_{\DE(I_L)}^2=\frac{\|h\|^2}{2L}$. The minimal surface $\varphi^{\eta,\lambda,I_L,\tau^{L,h}}$ is analogous to a first-passage percolation geodesic which starts at the origin at time $0$ and ends at $h$ at time $L$. Define the time constant function (see \cite[Equation~(2.4)]{50years}) by 
\begin{equation}\label{eq:time constant function}
    \mu(x):=\lim _{L\to \infty }\frac{1}{L}\E\left(\GE^{\eta,\lambda,I_L,\tau^{L,xL}}\right)\quad\text{for }x\in\R^n,
\end{equation}
assuming the expression is well defined.
Then, by Corollary~\ref{cor:effect of boundary conditions} (under~\ref{as:exiuni}+\ref{as:stat}),
\begin{equation}\label{eq:d=1 limit shape}
    \mu(x)=\mu (0) +\frac{1}{2}\|x\|^2\quad\text{for }x\in\R^n.
\end{equation}

Corollary~\ref{cor:effect of boundary conditions} also implies directly the following statement, yielding a stationary ``boundary conditions process''. The statement will not be used in the rest of the paper.
\begin{cor}[Boundary conditions process. \ref{as:exiuni}+\ref{as:stat}]\label{cor:stationary process} Let $\Lambda\subset\Z^d$ be finite. Then the stochastic process $\tau\mapsto(\varphi^{\eta,\lambda,\Lambda,\tau} - \overline{\tau}^\Lambda,\, \GE^{\eta,\lambda,\Lambda,\tau} - \frac{1}{2}\|\tau\|_{\DE(\Lambda)}^2)$ is stationary. In other words, for each integer $k\ge1$, $\tau_1,\ldots,\tau_k:\Z^d\to\R^n$ and $s:\Z^d\to\R^n$ the joint distribution of the pairs
\begin{equation}
    (\varphi^{\eta,\lambda,\Lambda,\tau_i+s}-\overline{\tau_i+s}^\Lambda,\,\GE^{\eta,\lambda,\Lambda,\tau_i+s} - \frac{1}{2}\|\tau_i+s\|_{\DE(\Lambda)}^2  
    )\quad\text{for $1\le i\le k$}
\end{equation}
does not depend on $s$. 
\end{cor}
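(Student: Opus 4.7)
The plan is to reduce the claim to assumption~\ref{as:stat} by applying Corollary~\ref{cor:effect of boundary conditions} to each $\tau_i+s$ in the tuple. The key observation is that the subtraction of $\overline{\tau_i+s}^\Lambda$ and $\frac{1}{2}\|\tau_i+s\|_{\DE(\Lambda)}^2$ is precisely designed to cancel the deterministic correction supplied by that corollary, leaving behind only a shift of $\eta$.

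First I would record two basic facts about shifts. Linearity of the harmonic extension gives $\overline{\tau_i+s}^\Lambda=\overline{\tau_i}^\Lambda+\overline{s}^\Lambda$. Composition of shifts gives $(\eta^a)^b=\eta^{a+b}$, as can be checked directly from the definition~\eqref{eq:eta s def}: $(\eta^a)^b_{v,t}=\eta^a_{v,t-b_v}=\eta_{v,t-b_v-a_v}=\eta^{a+b}_{v,t}$. Then, applying Corollary~\ref{cor:effect of boundary conditions} once with $\tau=\tau_i+s$, for each $i$ we get
\begin{equation*}
\Big(\varphi^{\eta,\lambda,\Lambda,\tau_i+s}-\overline{\tau_i+s}^\Lambda,\; \GE^{\eta,\lambda,\Lambda,\tau_i+s}-\tfrac12\|\tau_i+s\|_{\DE(\Lambda)}^2\Big)=\Big(\varphi^{\eta^{-\overline{\tau_i}^\Lambda-\overline{s}^\Lambda},\lambda,\Lambda},\; \GE^{\eta^{-\overline{\tau_i}^\Lambda-\overline{s}^\Lambda},\lambda,\Lambda}\Big).
\end{equation*}
Using the composition of shifts, the right-hand side is a deterministic functional of the single environment $\eta^{-\overline{s}^\Lambda}$ (applied in parallel across all $i$, with the fixed shifts $-\overline{\tau_i}^\Lambda$).

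To conclude, I would apply assumption~\ref{as:stat} to the function $-\overline{s}^\Lambda:\Z^d\to\R^n$, yielding $\eta^{-\overline{s}^\Lambda}\eqd\eta$ as environments. Since the joint tuple indexed by $1\le i\le k$ is obtained by applying the \emph{same} measurable map to $\eta^{-\overline{s}^\Lambda}$ (respectively $\eta$), the joint distribution of the $k$ pairs above agrees with the one obtained by taking $s\equiv 0$, which is exactly the content of the corollary.

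I do not anticipate any serious obstacle: the work is essentially bookkeeping, and no new probabilistic input beyond~\ref{as:stat} (plus~\ref{as:exiuni} to make sense of the minimizers, already used in Corollary~\ref{cor:effect of boundary conditions}) is needed. The only minor point to be careful about is the selection rule for $\varphi^{\eta,\lambda,\Lambda,\tau}$ in case of multiple minimizers; but the lexicographic convention is preserved under constant-in-$v$ shifts inside the map $\tau\mapsto\tau+s$, and this was already handled in the proof of Corollary~\ref{cor:effect of boundary conditions}, so it transfers without modification.
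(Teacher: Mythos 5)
Your proof is correct and is precisely the reduction the paper has in mind: the paper offers no written proof of this corollary, stating only that Corollary~\ref{cor:effect of boundary conditions} ``implies directly'' the stationarity claim, and your argument is exactly that implication spelled out (subtraction cancels the deterministic corrections, linearity of harmonic extension, composition of shifts, then assumption~\ref{as:stat} applied to $-\bar{s}^\Lambda$). One tiny imprecision in your closing remark: the tie-breaking shift involved is $\bar{\tau}^\Lambda$, which is not constant in $v$; but this is immaterial, since shifting by \emph{any} fixed function preserves lexicographic order, and this was already the observation used in the proof of Corollary~\ref{cor:effect of boundary conditions}.
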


\subsubsection{Concentration for linear functionals}
The next lemma, which follows from the main identity, is our key tool for proving the direction $\chi\ge2\xi+d-2$ of the scaling relation and for providing upper bounds on the height fluctuations of the minimal surface.

Recall that the notation $\eta_A$ refers to $(\eta_{v,\cdot})_{v\in A}$ (as defined before assumption~\ref{as:conc}).

\begin{lemma}[Concentration for linear functionals of the minimal surface. \ref{as:exiuni}+\ref{as:stat}]\label{lem:fluctuation and concentration} Let $\lambda>0$, $\Lambda\subset\Z^d$ be finite, $s:\Z^d\to\R^n$ and $r>0$. Then, almost surely,
\begin{equation}\label{eq:lower bound on fluctuations}
\P\left(\left|(\varphi^{\eta,\lambda,\Lambda},-\Delta_\Lambda s)\right|\ge r\mid \eta_{\supp(s)^c}\right)\le 3\inf_{\gamma\in\R}\P\left(\left|\GE^{\eta,\lambda,\Lambda}-\gamma\right|\ge \frac{r^2}{4\|\nabla s\|_\Lambda^2}\mid \eta_{\supp(s)^c}\right).
\end{equation}

Consequently, 
\begin{enumerate}
    \item if $\eta$ additionally satisfies assumption~\ref{as:conc} then, almost surely,
    \begin{equation}\label{eq:linear functional prob upper bound conc}
\P(|(\varphi^{\eta,\lambda,\Lambda}, -\Delta_\Lambda s)|\ge r\mid \eta_{\supp(s)^c})\le 3K\exp\bigg(-\frac{\kappa r^4}{16\lambda^2\|\nabla s\|_\Lambda^4|\supp(s)|}\bigg)
\end{equation}
where $K,\kappa$ are of~\eqref{eq:tal}.

\item if $\eta$ additionally satisfies the concentration estimate~\eqref{eq:chi conc}, $\Lambda =\Lambda _L$ for some $L\ge 1$ and ${\rm Supp}(s)\subseteq \Delta$ for a box $\Delta$ of the form $\Delta=(v+\Lambda_\ell) \cap \Lambda _L$ for some $0\le \ell\le L$ and $v\in\Lambda_L$, then, almost surely,
\begin{equation}\label{eq:linear functional prob upper bound chiconc}
\P(|(\varphi^{\eta,\lambda,\Lambda}, -\Delta_\Lambda s)|\ge r\mid \eta_{\Delta ^c})\le 3K\exp\bigg(-\frac{\kappa r^{2\alpha }}{4^\alpha f(\lambda )^\alpha \|\nabla s\|_\Lambda^{2\alpha }\ell^{\alpha \chi}}\bigg)
\end{equation}
where $\chi,\alpha,K,\kappa$ are of~\eqref{eq:chi conc}.
\end{enumerate}
\end{lemma}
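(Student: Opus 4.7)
The plan is to apply the main identity (Proposition~\ref{prop:main identity}) with the ground configuration $\varphi=\varphi^{\eta,\lambda,\Lambda}$ and the shifted disorder $\eta^{\pm ts}$ for a positive scalar $t$ to be optimized. Assuming $\supp(s)\subseteq\Lambda$ (the case of interest in the stated consequences; otherwise one absorbs boundary contributions via Corollary~\ref{cor:effect of boundary conditions}), the test configuration $\varphi\pm ts$ still lies in $\Omega^\Lambda$ and is thus admissible for $H^{\eta^{\pm ts},\lambda,\Lambda}$. The identity then gives
$\GE^{\eta^{\pm ts},\lambda,\Lambda}\le \GE^{\eta,\lambda,\Lambda}\pm tX+\tfrac{t^2}{2}\|\nabla s\|_\Lambda^2$
with $X:=(\varphi^{\eta,\lambda,\Lambda},-\Delta_\Lambda s)$. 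Choosing the optimizing value $t=r/\|\nabla s\|_\Lambda^2$, the event $\{X\ge r\}$ forces $\GE^{\eta,\lambda,\Lambda}-\GE^{\eta^{-ts},\lambda,\Lambda}\ge r^2/(2\|\nabla s\|_\Lambda^2)$, and the mirror conclusion holds on $\{X\le -r\}$ via the $+ts$ shift.

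Setting $a:=r^2/(4\|\nabla s\|_\Lambda^2)$ and fixing any $\eta_{\supp(s)^c}$-measurable $\gamma$, the triangle inequality yields
$\{X\ge r\}\subseteq\{\GE^{\eta,\lambda,\Lambda}\ge\gamma+a\}\cup\{\GE^{\eta^{-ts},\lambda,\Lambda}\le\gamma-a\}$,
with the analogous inclusion for $\{X\le-r\}$. The crucial probabilistic input is a conditional version of stationarity: assumption~\ref{as:stat} gives $\eta^{-ts}\eqd\eta$, and since $\eta$ and $\eta^{-ts}$ coincide pointwise on $\supp(s)^c$, the regular conditional distributions of $\GE^{\eta^{-ts},\lambda,\Lambda}$ and $\GE^{\eta,\lambda,\Lambda}$ given $\eta_{\supp(s)^c}$ agree almost surely. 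A union bound applied to the above inclusion then gives
$\P(X\ge r\mid\eta_{\supp(s)^c})\le\P(|\GE^{\eta,\lambda,\Lambda}-\gamma|\ge a\mid\eta_{\supp(s)^c})$,
the symmetric bound for $\{X\le-r\}$, and, after taking the infimum over $\gamma$, the main estimate~\eqref{eq:lower bound on fluctuations}.

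The two consequences are obtained by inserting a convenient $\gamma$. For~(1), take $\gamma=\E(\GE^{\eta,\lambda,\Lambda}\mid\eta_{\supp(s)^c})$ and apply~\ref{as:conc} with $\Delta=\supp(s)$ at level $\rho=a/(\lambda\sqrt{|\supp(s)|})$; substituting $a=r^2/(4\|\nabla s\|_\Lambda^2)$ produces the quartic sub-Gaussian bound~\eqref{eq:linear functional prob upper bound conc}. For~(2), take $\gamma=\E(\GE^{\eta,\lambda,\Lambda}\mid\eta_{\Delta^c})$, which is $\eta_{\Delta^c}$-measurable and hence $\eta_{\supp(s)^c}$-measurable since $\Delta\supseteq\supp(s)$; take conditional expectation of the main bound given $\eta_{\Delta^c}$ by the tower property, and apply~\eqref{eq:chi conc} for the given box $\Delta$ of side length $\ell$ to obtain~\eqref{eq:linear functional prob upper bound chiconc}.

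The one genuinely delicate step is the conditional-stationarity claim: from the equality in law of $\eta$ and $\eta^{-ts}$ one must deduce almost-sure equality of the conditional laws of the two ground energies given $\eta_{\supp(s)^c}$. This follows because~\ref{as:stat} in fact provides equality of the joint laws of $(\eta_{\supp(s)},\eta_{\supp(s)^c})$ and $(\eta^{-ts}_{\supp(s)},\eta^{-ts}_{\supp(s)^c})$, so regular conditional distributions are equal a.s.; this is also what dictates restricting $\gamma$ to be measurable with respect to the conditioning sigma-algebra. Everything else is elementary: the quadratic optimization in $t$ and two union bounds, with the factor $3$ stated in the lemma being a convenient over-estimate of the factor $2$ that this argument actually produces.
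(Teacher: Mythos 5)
Your proof is correct and takes essentially the same route as the paper's: apply the main identity with the optimally scaled shifts $\pm ts$ ($t=r/\|\nabla s\|_\Lambda^2$) to the ground configuration, deduce from $\{X\ge r\}$ a gap of $2a$ between $\GE^{\eta}$ and the shifted ground energy, and close via a union bound together with the observation that~\ref{as:stat} forces the conditional laws of $\GE^{\eta^{\pm ts}}$ and $\GE^{\eta}$ given $\eta_{\supp(s)^c}$ to coincide almost surely. The only deviation is cosmetic: your one-sided split $\{\GE^{\eta}\ge\gamma+a\}\cup\{\GE^{\eta^{-ts}}\le\gamma-a\}$ yields the slightly sharper constant $2$, whereas the paper passes to the two-sided containments $\{|\GE^{\eta^{\rho s}}-\gamma|\ge a\}\cup\{|\GE^{\eta^{-\rho s}}-\gamma|\ge a\}\cup\{|\GE^{\eta}-\gamma|\ge a\}$ and union-bounds over three events of identical conditional probability, so your remark that the stated factor $3$ is a harmless over-estimate is accurate; your flagging of the implicit requirement $\supp(s)\subseteq\Lambda$ (needed so that $\varphi^{\eta}\pm ts\in\Omega^{\Lambda}$) is also correct, though the paper uses it silently.
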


\begin{proof}
    For brevity, we write $H^{\eta}$, $\varphi^\eta$ and $\GE^\eta$ for $H^{\eta,\lambda,\Lambda}$, $\varphi^{\eta,\lambda,\Lambda}$ and $\GE^{\eta,\lambda,\Lambda}$, respectively.
    
    By the main identity (Proposition~\ref{prop:main identity}), for each $\rho\in\R$, 
    \begin{equation}
        H^{\eta^{\rho s}}(\varphi^{\eta}+\rho s) - H^{\eta}(\varphi^{\eta}) = \rho(\varphi^{\eta},-\Delta_\Lambda s) + \frac{\rho^2}{2}\|\nabla s\|_\Lambda^2.
    \end{equation}
    It follows that for $\rho = -\frac{r}{\|\nabla s\|_\Lambda^2}$ we have the containment
    \begin{equation}
        \left\{(\varphi^{\eta}, -\Delta_\Lambda s)\ge r\right\} \subset\left\{H^{\eta^{\rho s}}(\varphi^{\eta}+\rho s) - H^{\eta}(\varphi^{\eta})\le -\frac{r^2}{2\|\nabla s\|_\Lambda^2}\right\}.
    \end{equation}
    Noting now that $\GE^{\eta}=H^{\eta}(\varphi^{\eta})$ while $\GE^{\eta^{\rho s}}\le H^{\eta^{\rho s}}(\varphi^{\eta}+\rho s)$ we conclude that
    \begin{equation}
    \left\{(\varphi^{\eta}, -\Delta_\Lambda s)\ge r\right\} 
    \subset\left\{\GE^{\eta^{\rho s}}\le \GE^{\eta}-\frac{r^2}{2\|\nabla s\|_\Lambda^2}\right\}.
    \end{equation}
    In particular, for any $\gamma\in\R$,
    \begin{equation}\label{eq:bound for +r}
    \left\{(\varphi^{\eta}, -\Delta_\Lambda s)\ge r\right\} \subset\left\{|\GE^{\eta^{\rho s}} - \gamma|\ge \frac{r^2}{4\|\nabla s\|_\Lambda^2}\right\}\cup\left\{|\GE^{\eta} - \gamma|\ge \frac{r^2}{4\|\nabla s\|_\Lambda^2}\right\}.
    \end{equation}
    Repeating the above derivation with $r$ replaced by $-r$, we also conclude that, for any $\gamma\in\R$,
    \begin{equation}\label{eq:bound for -r}
    \left\{(\varphi^{\eta}, -\Delta_\Lambda s)\le -r\right\} \subset\left\{|\GE^{\eta^{-\rho s}} - \gamma|\ge \frac{r^2}{4\|\nabla s\|_\Lambda^2}\right\}\cup\left\{|\GE^{\eta} - \gamma|\ge \frac{r^2}{4\|\nabla s\|_\Lambda^2}\right\}.
    \end{equation}

    Finally, our assumption~\ref{as:stat} implies that conditionally on $\eta_{\supp(s)^c}$, the ground energies $\GE^{\eta^{\rho s}}$, $\GE^{\eta^{-\rho s}}$ and $\GE^\eta$ all have the same distribution. Thus, inequality  ~\eqref{eq:lower bound on fluctuations} follows from~\eqref{eq:bound for +r} and~\eqref{eq:bound for -r} (noting that they hold for all $\gamma\in\R$).
    
    Now, inequality~\eqref{eq:linear functional prob upper bound conc} follows from~\eqref{eq:lower bound on fluctuations} by taking $\gamma=\E\big(\GE^{\eta}\mid \eta_{\supp(s)^c}\big)$ and applying assumption~\ref{as:conc} with $\rho = \frac{r^2}{4\lambda \|\nabla s\|_\Lambda^2\sqrt{|\supp(s)|}}$, while inequality~\eqref{eq:linear functional prob upper bound chiconc} follows from~\eqref{eq:lower bound on fluctuations} by taking $\gamma=\E\big(\GE^{\eta}\mid \eta_{\Delta^c}\big)$ and applying~\eqref{eq:chi conc} with $\rho = \frac{r^2}{4f(\lambda) \|\nabla s\|_\Lambda^2\ell^\chi}$.
\end{proof}

We remark that the idea of using conditional concentration inequalities as an ingredient in proving long-range order (similar to our upper bound on the height fluctuations) was used by Ding--Zhuang~\cite{ding2024long} in their recent work on the random-field Ising and Potts models.

\subsection{Overview of proof of main results}

\subsubsection{Localization and the scaling inequality $\chi\ge2\xi+d-2$}
We briefly overview the proof of the localization estimates for the minimal surface $\varphi^\eta$ and the scaling inequality $\chi\ge2\xi+d-2$, referencing the full details in Section~\ref{sec:loc}. For simplicity, we only discuss the $n=1$ case.

Dimension $d=1$ admits a simple approach: The pointwise localization estimate~\eqref{eq:localization estimate} at a vertex $v$, as well as the scaling inequality~\eqref{eq:one dimension pointwise lower bound}, follow from Lemma~\ref{lem:fluctuation and concentration} with a function $s$ whose Laplacian is $1$ on $v$ and $0$ everywhere else in $\Lambda $ (see the first part of Lemma~\ref{lem:46}). This approach also works in dimension $d=2$ to prove the pointwise scaling inequality~\eqref{eq:d=2 pointwise scaling relation}.

In higher dimensions, in order to obtain a pointwise bound at $v$, a multiscale approach is required (see Lemma~\ref{lem:loc}). In this proof we use Lemma~\ref{lem:fluctuation and concentration} with a sequence of functions $s_j$ such that the Laplacian of $s_j$ is supported on the two boundaries of an annulus at scale $2^j$ around $v$. This allows us to bound the difference between the averages of $\varphi^\eta$ on the boundary of a box of side length $2^j$ around $v$ and on the boundary of a box of side length $2^{j+1}$ around $v$. A bound on the height at $v$ then follows by a union bound over the scales.

In order to obtain a bound on the maximal height of $\varphi^\eta$ (as in, e.g.,~\eqref{eq:loglog},~\eqref{eq:locintro} and~\eqref{eq:std lower bound}), we first show that the surface satisfies a H\"older-type bound (Lemma~\ref{lem:holder} and the second part of Lemma~\ref{lem:46}). Namely, we bound $\varphi^\eta_u-\varphi^\eta_v$ in terms of a power of $\|u-v\|$. In dimension $d=1$ this follows from Lemma~\ref{lem:fluctuation and concentration} with a function whose Laplacian is $1$ on $u$, $-1$ on $v$ and $0$ elsewhere in $\Lambda$. In higher dimensions, this is done with a multiscale approach similar to the one explained above. Then, to bound the maximal height, we use a chaining argument (see Corollary~\ref{cor:21} and Corollary~\ref{cor:loc2}. This is reminiscent of the proof of Dudley's theorem on the maximum of a Gaussian process).

Finally, the scaling inequality~\eqref{eq:ge direction of scaling relation} involving the average height of $\varphi^\eta$ is derived in Section~\ref{sec:average height localization} from Lemma~\ref{lem:fluctuation and concentration} using a function $s$ whose Laplacian is constant in $\Lambda_L$.

\subsubsection{Delocalization and the scaling inequality $\chi\le2\xi+d-2$}

Section~\ref{sec:deloc} is devoted to the scaling inequality $\chi\le2\xi+d-2$ and delocalization results. To obtain a unified approach to the different results, the section begins with a deterministic abstract proposition (Proposition~\ref{prop:Deloc2}) from which the delocalization results are derived. The proposition relies on the main identity (Proposition~\ref{prop:main identity}): The basic idea is to show that the minimal energy among delocalized minimal surfaces is lower than the minimal energy among localized ones. This is derived by showing that the energy cost for shifting a localized minimal surface (to make it delocalized) may be outweighed by the energy saving from visiting a new environment.

Nevertheless, among our results, the proof of Theorem~\ref{theorem:other half of scaling relation} (the scaling inequality $\chi\le2\xi+d-2$) is simple enough that establishing it directly, without Proposition~\ref{prop:Deloc2}, may be enlightening. Thus, we have chosen to include a direct proof of Theorem~\ref{theorem:other half of scaling relation} here (up to a reference to the simple Proposition~\ref{prop:const_laplace}), while in Section~\ref{sec:the scaling inequality chi less than xi} we include a stronger result, Theorem~\ref{theorem:local other half of scaling relation}, establishing a kind of \emph{local} (applicable to subdomains) scaling inequality, proved with Proposition~\ref{prop:Deloc2}. Theorem~\ref{theorem:local other half of scaling relation} is used to derive the one-dimensional scaling inequality ~\eqref{eq:std upper bound} in Theorem~\ref{thm: scaling relation d=1} and the scaling inequality of Theorem~\ref{thm:scalingrelation<}.

\begin{proof}[Proof of Theorem \ref{theorem:other half of scaling relation}]
 For brevity, we omit $\lambda$ and $\Lambda_L$ from the notation, writing, e.g., $H^\eta$ for $H^{\eta,\lambda,\Lambda_L}$. Write 
\[\GE^\eta_h\coloneqq\inf_{\varphi: \max_{v\in\Lambda_L}|\varphi_v\cdot e|\le h }H^\eta(\varphi)\]
and let $\varphi^{\eta,h}$ be the corresponding minimizer (using~\ref{as:exiuni}).
Set $\zeta:=\eta[\Lambda_{\lfloor L/2\rfloor}] $. Using Proposition~\ref{prop:const_laplace}, let $s = \pi\cdot e$ with $\pi:\Z^d\to\R$ satisfying $|\pi_v\cdot e|\ge 3h$ on $\Lambda_{\lfloor L/2\rfloor}$, $\pi=0$ outside $\Lambda_L$, $\|\nabla \pi\|^2\le Ch^2L^{d-2}$ and $\sum_v|\Delta \pi_v|\le ChL^{d-2}$.
By the main identity (Proposition \ref{prop:main identity}), using that $\eta = (\eta^s)^{-s}$,
\begin{equation}\label{eq:sketch1}
    \GE^\eta\le H^\eta(\varphi^{\eta^s,h}-s)=   \GE^{\eta^s}_h+(\varphi^{\eta^s,h},\Delta s)+ \frac 12 \|\nabla s\|^2\le \GE^{\eta^s}_h+ 2Ch^2L^{d-2}
\end{equation}
and similarly (using $-s$ instead of $s$)
\begin{equation}\label{eq:sketch2}
    \GE^{\zeta^s}\le \GE^\zeta_h +2Ch^2L^{d-2}.
\end{equation}

Let $\mathcal F$ be the sigma algebra of the restriction of $\eta $ to $\Lambda_{\lfloor L/2\rfloor }^c\times\R^n$.
Conditionally on $\mathcal F $, the events
\[
A_1:=\{\GE^\eta_h\ge \GE^\zeta_h+5Ch^2L^{d-2}\}\quad\text{and}\quad A_2:=\{\GE^{\eta^s}_h\le \GE^{\zeta^s}_h\}
\]
are independent by \ref{as:indep}, since $A_1$ depends on $\eta$ and $\zeta$ restricted to $\Lambda_{\lfloor L/2\rfloor }\times \{t\colon |t\cdot e|\le h\}$ whereas $A_2$ depends on $\eta$ and $\zeta$  restricted to $\Lambda_{\lfloor L/2\rfloor}\times \{t\colon |t\cdot e|\ge 2h\}$ (using that $h\ge1$ and $|s_v\cdot e|\ge 3h$ on $\Lambda_{\lfloor L/2\rfloor})$. Conditionally on $\mathcal F$, we have by symmetry that $\P(A_2\,|\,\mathcal F)\ge1/2$.

On the event $A_1\cap A_2\cap \{\GE ^{\zeta^s}_h= \GE ^{\zeta^s}\}$,
we have, using \eqref{eq:sketch1} and \eqref{eq:sketch2},
\[\GE^\eta \le \GE^{\eta^s}_h+ 2Ch^2L^{d-2}\le \GE^{\zeta^s}_h+ 2Ch^2L^{d-2} \le  \GE^{\zeta}_h+ 4Ch^2L^{d-2}\le \GE^{\eta}_h -Ch^2L^{d-2}   <\GE^\eta_h\]
implying that $\max_{v\in\Lambda_L}|\varphi^\eta _v\cdot e|> h $.
It follows that
\begin{multline*}
\P(\max_{v\in\Lambda_L}|\varphi^\eta _v\cdot e|\ge h\,|\,\mathcal F)\ge \P(A_1\cap A_2\cap \{\GE ^{\zeta^s}_h= \GE ^{\zeta^s}\}\,|\,\mathcal F)\\
\ge \P(A_1\cap A_2\,|\,\mathcal F) - \P(\max_{v\in\Lambda_L}|\varphi^{\zeta^s} _v\cdot e|\ge h\,|\,\mathcal F) \ge \frac{1}{2}\P(A_1\,|\,\mathcal F) - \P(\max_{v\in\Lambda_L}|\varphi^{\zeta^s} _v\cdot e|\ge h\,|\,\mathcal F)
\end{multline*}
The result follows by taking the expectation of both sides and using the fact that $\zeta^s$ has the same distribution as $\eta$ by~\ref{as:stat} and the definition of $\zeta$.
\end{proof}

Theorem \ref{theorem:other half of scaling relation} can be used to establish an analogue of Theorem~\ref{thm:delocalization intro} 
in which the \emph{maximum height} of the minimal surface is shown to delocalize. This analogue will follow from a lower bound on the fluctuations of $\GE^\eta-\GE^{\eta[\Lambda_{L/2}]}$. We sketch two different strategies to obtain such a lower bound for the disorder $\eta^\white$.

First strategy: As the goal is to show delocalization, we may assume, to obtain a contradiction, that the maximum height of the minimal surface does not exceed a given $h$ with high probability. Now, consider perturbing the disorder $\eta$ by adding a constant $\varepsilon$ to all the disorder variables $\eta_{v,t}$ with $(v,t)\in \Lambda_{L/2}\times[-h,h]^n$ (also changing some variables outside this range, due to the bump function in the definition of $\eta^\white$). This perturbation does not change significantly the disorder distribution, as long as the order of $\varepsilon$ does not exceed $\lambda (L^dh^n)^{-1/2}$. As this additive perturbation necessarily adds $L^d\varepsilon$ to the energy of every surface with maximal height less than $h$, one may conclude that the typical fluctuations of $\GE^\eta-\GE^{\eta[\Lambda_{L/2}]}$ are at least of order $\lambda \sqrt{L^d/h^n}$ (this is also the idea behind Theorem~\ref{thm:lower bounds on energy fluctuations by localization}). However, our assumption that the maximum height is localized to height $h$, implies, by Theorem~\ref{theorem:other half of scaling relation}, that the typical fluctuations of $\GE^\eta-\GE^{\eta[\Lambda_{L/2}]}$ are at most of order $h^2 L^{d-2}$. To avoid a contradiction, we conclude that the maximum height of the minimal surface delocalizes at least to order $\lambda^{\frac{2}{4+n}}L^{\frac{4-d}{4+n}}$.

Second strategy: Here, we obtain a lower bound on the fluctuations of $\GE^\eta-\GE^{\eta[\Lambda_{L/2}]}$ by additively decreasing $\eta$ on a tube of width $1$ around the actual minimal surface. As the minimal surface is at a random location, correlated with the disorder, the exact amount of additive perturbation that is possible without significant change to the disorder distribution seems difficult to study. Nevertheless, using that this perturbation preserves the minimal surface and that the energy of the minimal surface on $\Lambda_L$ is typically of order $|\Lambda_L|$, we are able to use this strategy to obtain a lower bound of order $c_\lambda$ (i.e., depending only on $\lambda$) on the typical fluctuations of $\GE^\eta-\GE^{\eta[\Lambda_{L/2}]}$. This leads to a non-trivial delocalization result in dimension $d=1$, translating via Theorem~\ref{theorem:other half of scaling relation} to the delocalization of the maximum height to at least order $c_\lambda' \sqrt{L}$ (for all codimensions $n$).

The actual proof of Theorem~\ref{thm:delocalization intro} in Section~\ref{sec:delocalization in dimension 4} and Section~\ref{sec:delocalization in low dimensions} differs from the above description in several ways. First, as we show that a positive fraction of the heights exceed $h$ (instead of just the maximum), we cannot rely on Theorem~\ref{theorem:other half of scaling relation} and instead use Proposition~\ref{prop:Deloc2} directly. Second, our perturbations, while following the above two strategies, are made to the white noise underlying $\eta^\white$ rather than to $\eta^\white$ itself; the perturbations and the resulting energy fluctuation lower bounds are described in Section~\ref{sec:two specific mappings} and Proposition~\ref{lem:bounty_construction}. Lastly, our proof of delocalization in the critical dimension $d=4$ (Section~\ref{sec:delocalization in dimension 4}) uses additional arguments, employing fractal percolation.

\section{Localization}\label{sec:loc}

In this section we prove most of the localization results from the introduction. Throughout this section we often use the Green's function $G_\Lambda ^v$ which is defined as follows. Let $\Lambda\subset \mathbb Z^d$. Let $\partial\Lambda$ denote the exterior boundary of $\Lambda$, that is $\partial \Lambda:=\Lambda^+\setminus \Lambda$. The Green's function in the domain $\Lambda $ for $v\in\Lambda$ is defined by 
\[\forall x\in\Z^ d\qquad G_\Lambda^v(x):=\frac{1}{2d} \cdot \E_x\left[ \big| \big\{ t\in [0, \tau _{\Lambda }] : X_t=v\big\} \big| \right],\]
where $(X_t)_{t\ge0}$ is a simple discrete-time random walk on $\Z^d$ with $X_0=x$ and $\tau _\Lambda $ is the first exit time of $\Lambda $,  $\tau_{\Lambda }:=\min\{t\ge 0:X_t\notin \Lambda \}$. When $\Lambda =\Lambda _L$ we write $G_L ^v:=G_{\Lambda _L}^v$. Note that $\Delta G_\Lambda ^v (v)=-1$ and $\Delta G_\Lambda ^v (u)=0$ for any  $u\notin \partial \Lambda \cup \{v\}$.

Throughout this section, for brevity, we write $H^{\eta}$, $\varphi^\eta$ and $\GE^\eta$ for $H^{\eta,\lambda,\Lambda}$, $\varphi^{\eta,\lambda,\Lambda}$ and $\GE^{\eta,\lambda,\Lambda}$, respectively.

\subsection{Localization when $d\in\{1,2\}$}
In this section we prove the lower bound \eqref{eq:std lower bound} in Theorem~\ref{thm: scaling relation d=1} and the pointwise estimates \eqref{eq:one dimension pointwise lower bound} and \eqref{eq:d=2 pointwise scaling relation} in Theorem~\ref{theorem:half scaling relation}.

We will need the following estimates on the Green's function whose proofs are postponed to Appendix \ref{appendix:greenfunction}. 

\begin{lem}\label{lem:88}
    Suppose that $d=1$. Then, for any $v\in \Lambda _L$ we have that $G_{\Lambda _L}^v(v)\le C r_v$. Moreover, for any pair of vertices $u,v\in \Lambda _L$ we have that $0\le G^v_L (v)-G^v_L (u)\le C|u-v|$.  
\end{lem}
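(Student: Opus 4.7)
The plan is to exploit the fact that in one dimension the Green's function on an interval admits a fully explicit description via gambler's ruin. I would introduce the shorthand $a := L+1-v$ and $b := L+1+v$, so that $a+b = 2L+2$ and $r_v = \min(a,b)$ is precisely the distance from $v$ to the exterior vertices $\pm(L+1)$ of $\Lambda_L$.

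For the first bound, I observe that $G_L^v(v) = \tfrac{1}{2}\,\E_v[N_v]$ where $N_v$ counts the visits to $v$ before exiting $\Lambda_L$. Since $N_v$ is geometric with some return probability $p$, this gives $G_L^v(v) = \tfrac{1}{2(1-p)}$. Conditioning on the first step of the walk and applying gambler's ruin on each side yields $1-p = \tfrac{1}{2}\bigl(\tfrac{1}{a}+\tfrac{1}{b}\bigr) = \tfrac{a+b}{2ab}$, so $G_L^v(v) = \tfrac{ab}{a+b} \le \min(a,b) = r_v$, which is the first estimate (with $C=1$).

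For the second bound, I apply the strong Markov property at the first hitting time of $v$: for $u \le v$, the walk from $u$ must pass through $v$ in order to exit $\Lambda_L$ on the right, so gambler's ruin on $\{-L-1,\dots,v\}$ supplies the hitting probability $(u+L+1)/b$, giving
\[
G_L^v(u) \;=\; \tfrac{u+L+1}{b}\cdot G_L^v(v) \;=\; \tfrac{a(u+L+1)}{a+b},
\]
and symmetrically $G_L^v(u) = \tfrac{b(L+1-u)}{a+b}$ for $u \ge v$. Subtracting then yields $G_L^v(v) - G_L^v(u) = \tfrac{a(v-u)}{a+b}$ (for $u \le v$) or $\tfrac{b(u-v)}{a+b}$ (for $u \ge v$); both expressions are manifestly nonnegative, and each is at most $|u-v|$ since $a,b \le a+b$. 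There is no genuine obstacle here: the proof is an elementary gambler's ruin calculation, the only mild bookkeeping being the factor $\tfrac{1}{2d}=\tfrac{1}{2}$ in the definition of $G_\Lambda^v$.
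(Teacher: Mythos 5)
Your proof is correct and takes a genuinely different route from the paper's. You derive an exact closed-form expression for the one-dimensional interval Green's function: with $a=L+1-v$, $b=L+1+v$, a return-probability calculation gives $G_L^v(v)=\tfrac{ab}{a+b}$, and the strong Markov property at the hitting time of $v$ gives $G_L^v(u)=\tfrac{a(u+L+1)}{a+b}$ for $u\le v$; both bounds then drop out algebraically (with optimal constant $C=1$). The paper instead compares $G_{\Lambda_L}^v$ with the Green's function $G_H^v$ of a half-space $H\supset\Lambda_L$ via the reflection identity $G_H^v(x)=a(v'-x)-a(v-x)$ for the potential kernel $a$, and for the second bound uses the strong Markov property together with the already-established bound on $G_L^v(v)$ and a gambler's-ruin estimate on the escape probability $\P^u(X_{\tau'}\neq v)$. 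The paper's reflection route is less explicit but dimension-robust: the same argument is immediately recycled for $d=2$ (Lemma~\ref{lem:89}) and in Claim~\ref{claim:1} for general $d$, which is why it is the method of choice there. Your gambler's-ruin calculation is more elementary and delivers exact constants, but it is special to $d=1$ and would not transfer to the higher-dimensional estimates the appendix also needs.
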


\begin{lem}\label{lem:89}
    Suppose that $d=2$. Then, for any $v\in \Lambda _L$ we have that $G^v_L (v) \le C \log (1+r_v)$.
\end{lem}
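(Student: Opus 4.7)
The plan is to bound $G_L^v(v)$ by comparing it with the Green's function of a half-plane in $\Z^2$ that contains $\Lambda_L$, and then estimating the half-plane quantity using standard walk estimates.

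First, because $r_v = d_\infty(v,\Lambda_L^c)$, there must exist a coordinate $i\in\{1,2\}$ and a sign $\sigma\in\{\pm 1\}$ such that the half-plane
\[
H := \{x \in \Z^2 : \sigma(x_i - v_i) \le r_v - 1\}
\]
contains $\Lambda_L$, while $v$ sits at $\ell_\infty$-distance exactly $r_v$ from $\Z^2\setminus H$. The Green's function is monotone under domain inclusion: enlarging the domain only delays the exit time $\tau$ of the random walk and so can only increase the expected number of visits to $v$. Hence $G_L^v(v) \le G_H^v(v)$, reducing the problem to showing $G_H^v(v) \le C\log(1+r_v)$.

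Second, I would estimate the half-plane Green's function directly from
\[
G_H^v(v) \;=\; \frac{1}{2d}\sum_{n\ge 0}\P_v\bigl[X_n = v,\ \tau_H > n\bigr],
\]
splitting the sum at $n_0 := r_v^2$. For $n \le n_0$, the local central limit theorem on $\Z^2$ gives $\P_v[X_n = v] \le C/(n+1)$, so this range contributes $O(\log(1+r_v))$. For $n > n_0$, combine the gambler's ruin estimate $\P_v[\tau_H > n] \le C r_v / \sqrt{n}$ (obtained by projecting the walk onto the coordinate normal to $\partial H$ and applying the one-dimensional reflection bound) with the heat-kernel bound $\P_v[X_n = v] \le C/n$ to get a tail contribution bounded by $\sum_{n > n_0} Cr_v/n^{3/2} = O(1)$. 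Combining the two ranges yields the claim.

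The main point of care is in the two classical inputs: the local CLT on $\Z^2$ (which must account for the bipartiteness of the lattice, yielding only a harmless parity factor) and the gambler's-ruin bound for the projected one-dimensional walk. An equivalent alternative is the method of images: letting $v^*$ denote the lattice reflection of $v$ across the axis bordering $H$, the reflection principle yields the identity $G_H^v(v) = \tfrac{1}{2d}\, a(v - v^*)$, where $a$ is the potential kernel on $\Z^2$; Spitzer's asymptotic $a(x) = \tfrac{2}{\pi}\log\|x\| + O(1)$ then delivers the bound directly, since $\|v-v^*\|$ is of order $r_v$.
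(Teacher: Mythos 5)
Your proposal is correct and its main argument takes a genuinely different route from the paper's, though you mention the paper's route as an aside at the end. Both proofs begin with the same reduction: pass by domain monotonicity to the half-plane $H$ bordered by the face of $\Lambda_L$ closest to $v$, so that $G_L^v(v)\le G_H^v(v)$ with $v$ at $\ell_\infty$-distance $r_v$ from $H^c$. After that they diverge. The paper (in Appendix~\ref{appendix:greenfunction}) proves an \emph{exact} identity $G_H^v(v)=a(v'-v)$, where $v'$ is the lattice reflection of $v$ across $\partial H$ and $a$ is the potential kernel, and then invokes the $d=2$ expansion $a(x)=C_2\log(\|x\|+1)+C_2'+O(\|x\|^{-2})$ from Theorem~\ref{thm:Lawler}; since $\|v-v'\|=2r_v$ this gives the bound directly. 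That is precisely your ``method of images'' alternative. Your main argument instead estimates the series $G_H^v(v)=\tfrac1{2d}\sum_{n\ge0}\P_v[X_n=v,\ \tau_H>n]$ by splitting at $n_0=r_v^2$, using the on-diagonal heat-kernel bound $C/(n+1)$ on the early range (giving the $\log(1+r_v)$) and a gambler's-ruin/heat-kernel tail estimate on the late range (giving $O(1)$). Your route needs slightly less (only the crude $C/n$ on-diagonal bound and one-dimensional gambler's ruin, not the constant-order correction term in the potential kernel), at the cost of being a genuinely asymptotic decomposition rather than an exact identity.

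One step deserves to be spelled out. To control the tail you need the \emph{joint} bound $\P_v[X_n=v,\ \tau_H>n]\le C r_v/n^{3/2}$, and this cannot be obtained by multiplying the two marginal probabilities as if $\{X_n=v\}$ and $\{\tau_H>n\}$ were independent. The correct route is the standard mid-time Markov decomposition: for $n\ge 2$,
\begin{equation*}
\P_v\bigl[X_n=v,\ \tau_H>n\bigr]
\le \sum_{y\in H}\P_v\bigl[X_{\lfloor n/2\rfloor}=y,\ \tau_H>\lfloor n/2\rfloor\bigr]\,\P_y\bigl[X_{n-\lfloor n/2\rfloor}=v\bigr]
\le \P_v\bigl[\tau_H>\lfloor n/2\rfloor\bigr]\cdot\sup_{y}\P_y\bigl[X_{n-\lfloor n/2\rfloor}=v\bigr],
\end{equation*}
and only then insert the gambler's ruin bound $Cr_v/\sqrt{n}$ for the first factor and the uniform heat-kernel bound $C/n$ for the second. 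With this inserted, your argument is complete.
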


Lemma~\ref{lem:46} and Lemma~\ref{lem:47} below correspond, respectively, to dimensions $d=1$ and $d=2$. They imply~\eqref{eq:one dimension pointwise lower bound} and~\eqref{eq:d=2 pointwise scaling relation} by choosing $\gamma=\med(\GE^{\eta})$. 

\begin{lem}[\ref{as:exiuni}+\ref{as:stat}]\label{lem:46}
Suppose that $d=1$. There exists $c>0$ such that the following holds for any $\gamma \in \mathbb R$ and any unit vector $e\in \mathbb R ^n$:
\begin{enumerate}
    \item 
    For all $v\in \Lambda _L$ we have  
    \begin{equation}
        \mathbb P \big( | \varphi^ \eta _v  \cdot e| \ge h \big)\le 3 \mathbb P \big( |\GE ^\eta -\gamma | \ge c h^2/r_v \big).
        \end{equation}
        \item 
        For all $u,v\in \Lambda _L$ we have
\begin{equation}
        \mathbb P \big( | (\varphi^ \eta _v-\varphi^ \eta _u)\cdot e | \ge h \big)\le 3 \mathbb P \big( |\GE ^\eta -\gamma | \ge c h^2/|u-v| \big).
\end{equation}
\end{enumerate}
\end{lem}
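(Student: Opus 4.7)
} The plan is to apply the concentration for linear functionals (Lemma~\ref{lem:fluctuation and concentration}) with carefully chosen shift functions $s$ that realize the pointwise functionals $\varphi_v^\eta\cdot e$ and $(\varphi_v^\eta-\varphi_u^\eta)\cdot e$ as inner products of the form $(\varphi^\eta,-\Delta_\Lambda s)$. The Green's function estimates from Lemma~\ref{lem:88} will then convert the resulting Dirichlet energy factors $\|\nabla s\|_\Lambda^2$ into the geometric quantities $r_v$ and $|u-v|$ appearing in the conclusion.

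For part (1), I would set $s := G_L^v\cdot e$, extended by zero outside $\Lambda_L$. Since $G_L^v$ vanishes on $\partial\Lambda_L$ and on its complement, and satisfies $-\Delta G_L^v=\delta_v$ inside $\Lambda_L$, the vector-valued function $-\Delta_\Lambda s$ equals $\delta_v\cdot e$ on $\Lambda_L$. Since $\varphi^\eta$ vanishes off $\Lambda_L$, we get
\begin{equation*}
(\varphi^\eta,-\Delta_\Lambda s)=\varphi_v^\eta\cdot e.
\end{equation*}
Using the discrete Green's identity~\eqref{eq:Green's identity}, the Dirichlet energy is
\begin{equation*}
\|\nabla s\|_\Lambda^2=(s,-\Delta_\Lambda s)=G_L^v(v)\le C r_v
\end{equation*}
by Lemma~\ref{lem:88}. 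Applying~\eqref{eq:lower bound on fluctuations} with $r=h$ (and taking the unconditional bound) yields
\begin{equation*}
\mathbb{P}(|\varphi_v^\eta\cdot e|\ge h)\le 3\,\mathbb{P}\bigl(|\GE^\eta-\gamma|\ge h^2/(4Cr_v)\bigr),
\end{equation*}
which is the claim with $c=1/(4C)$.

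For part (2), I would instead take $s:=(G_L^v-G_L^u)\cdot e$. The same calculation gives $(\varphi^\eta,-\Delta_\Lambda s)=(\varphi_v^\eta-\varphi_u^\eta)\cdot e$, and by symmetry $G_L^u(v)=G_L^v(u)$, so
\begin{equation*}
\|\nabla s\|_\Lambda^2=(s,-\Delta_\Lambda s)=\bigl(G_L^v(v)-G_L^v(u)\bigr)+\bigl(G_L^u(u)-G_L^u(v)\bigr)\le 2C|u-v|,
\end{equation*}
using the second assertion of Lemma~\ref{lem:88}. Plugging into~\eqref{eq:lower bound on fluctuations} then gives the desired bound with $c=1/(8C)$.

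The approach is straightforward once the shift functions are identified; there is no substantial obstacle. The one point that requires care is the boundary bookkeeping: one must check that $s$ (extended by zero) produces exactly the expected point mass under $-\Delta_\Lambda$ when paired against $\varphi^\eta$, which works out because both functions vanish outside $\Lambda_L$ so the only surviving terms are the interior ones where $\Delta_\Lambda$ coincides with the ordinary lattice Laplacian. The essential input is that the Green's function on $\Lambda_L$ for $d=1$ grows only linearly in $r_v$ (respectively $|u-v|$), which is the content of Lemma~\ref{lem:88} and is what makes the scaling relation favorable in one dimension.
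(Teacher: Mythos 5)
Your proposal is correct and follows essentially the same approach as the paper: the same shift functions $s = G_L^v\cdot e$ and $s = (G_L^v - G_L^u)\cdot e$, the same Green's identity computation of $\|\nabla s\|_\Lambda^2$, the same Green's function estimates from Lemma~\ref{lem:88}, and the same application of~\eqref{eq:lower bound on fluctuations}. The only cosmetic difference is that you group the four Green's function terms in part (2) via the symmetry $G_L^u(v)=G_L^v(u)$ before invoking the lemma, whereas the paper groups them the other way; this does not change the argument.
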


\begin{proof}[Proof of Lemma \ref{lem:46}]
Define the function $s(x):=G_{\Lambda _L}^v(x)e$. We would like to use Lemma~\ref{lem:fluctuation and concentration} with the function $s$. To this end, note that $\Delta s(v)=-e$ and that $\Delta s(w)=0$ for all $w\in \Lambda _L \setminus \{v\}$. Thus, using that $\varphi^ \eta |_{\Lambda _L^c}=0$ we obtain that $\varphi^ \eta _v\cdot e=(\varphi^ \eta ,-\Delta s)$. Moreover, using that $s|_{\Lambda _L^c}=0$ we have  
\begin{equation}
        \|\nabla s\|_{\Lambda _L}^2=(\nabla s,\nabla s)_{\Lambda _L} =(s,-\Delta s)=G_{\Lambda _L}^v(v)\le Cr_v,
    \end{equation}
    where the last inequality is by Lemma~\ref{lem:88}. Thus, by Lemma~\ref{lem:fluctuation and concentration}, for all $h\ge 0$ and $\gamma\in\R$
    \begin{equation}
        \mathbb P \big( |\varphi^ \eta _v\cdot e | \ge h \big)=\mathbb P \big( |(\varphi^ \eta ,-\Delta s) | \ge h \big)\le 3\mathbb P \big( \big| \GE^\eta-\gamma \big| \ge ch^2/r_v \big).
        \end{equation}

    We turn to prove the second part. To this end, we use Lemma~\ref{lem:fluctuation and concentration} with the function $s(x):=(G_{\Lambda _L}^v(x)-G_{\Lambda _L}^{u}(x))e$. We have that $\Delta s(v)=-e$, $\Delta s(u)=e$ and $\Delta s(w)=0$ for all $w\in \Lambda _L \setminus \{u,v\}$. Thus, we obtain 
    $(\varphi^ \eta _v-\varphi^ \eta _u)\cdot e=(\varphi^ \eta ,-\Delta s)$ using again that $\varphi^ \eta |_{\Lambda _L^c}=0$.
    Moreover, by Lemma~\ref{lem:88} we have
    \begin{equation}
        \|\nabla s\|_{\Lambda _L}^2=(s,-\Delta s)=(G_{\Lambda _L}^v(v)-G_{\Lambda _L}^u(v))-(G_{\Lambda _L}^v(u)-G_{\Lambda _L}^u(u))\le C|u-v|.
    \end{equation}
    Thus, by Lemma~\ref{lem:fluctuation and concentration}, for all $h\ge 0$
    \begin{equation}
        \mathbb P \big( \big|(\varphi^ \eta _v-\varphi^ \eta _u)\cdot e\big| \ge h\big)\le 3\mathbb P \big( \big| \GE^\eta-\gamma \big| \ge ch^2/|u-v| \big)
    \end{equation}
    as needed.
\end{proof}

Similarly, we obtain the following lemma in dimension $d=2$.

\begin{lem}[\ref{as:exiuni}+\ref{as:stat}]\label{lem:47}
Suppose that $d=2$. There exists $c>0$ such that for any $\gamma \in \mathbb R$, a unit vector $e\in \mathbb R ^n$ and $v\in \Lambda _L$ we have 
    \begin{equation}
        \mathbb P \big( | \varphi^ \eta _v  \cdot e| \ge h \big)\le 3 \mathbb P \big( |\GE ^\eta -\gamma | \ge c h^2/\log (1+r_v) \big).
    \end{equation}
\end{lem}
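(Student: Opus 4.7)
The plan is to mirror the proof of Lemma~\ref{lem:46} for $d=1$, replacing the linear Green's function estimate by the logarithmic one in $d=2$ supplied by Lemma~\ref{lem:89}. The only difference between the two dimensions enters through the norm $\|\nabla s\|_{\Lambda_L}^2$ of the test function, and this is where the $\log(1+r_v)$ factor will appear.

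Concretely, I would fix $v\in\Lambda_L$ and a unit vector $e\in\R^n$, and define the shift
\begin{equation*}
s(x) := G^v_{\Lambda_L}(x)\,e, \qquad x\in\Z^2.
\end{equation*}
Recalling that $\Delta G^v_{\Lambda_L}(v)=-1$, $\Delta G^v_{\Lambda_L}(w)=0$ for $w\in\Lambda_L\setminus\{v\}$, and $G^v_{\Lambda_L}\equiv 0$ on $\Lambda_L^c$, I get $-\Delta s = e\cdot\mathbf{1}_{\{v\}}$ restricted to $\Lambda_L^+$. Since $\varphi^\eta$ vanishes off $\Lambda_L$, summation by parts (or simply the pointwise evaluation) gives
\begin{equation*}
(\varphi^\eta,-\Delta s) \;=\; \varphi^\eta_v\cdot e.
\end{equation*}

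Next I would compute the Dirichlet energy of $s$ using the discrete Green's identity~\eqref{eq:Green's identity}:
\begin{equation*}
\|\nabla s\|_{\Lambda_L}^2 \;=\; (s,-\Delta s) \;=\; G^v_{\Lambda_L}(v) \;\le\; C\log(1+r_v),
\end{equation*}
where the inequality is the content of Lemma~\ref{lem:89}. Plugging these two facts into the linear-functional estimate~\eqref{eq:lower bound on fluctuations} of Lemma~\ref{lem:fluctuation and concentration}, applied with $r=h$ and an arbitrary $\gamma\in\R$, yields
\begin{equation*}
\P\bigl(|\varphi^\eta_v\cdot e|\ge h\bigr) \;=\; \P\bigl(|(\varphi^\eta,-\Delta s)|\ge h\bigr) \;\le\; 3\,\P\!\left(|\GE^\eta-\gamma|\ge \frac{h^2}{4\|\nabla s\|_{\Lambda_L}^2}\right) \;\le\; 3\,\P\!\left(|\GE^\eta-\gamma|\ge \frac{ch^2}{\log(1+r_v)}\right),
\end{equation*}
with $c:=1/(4C)$, which is exactly the claimed inequality.

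There is no real obstacle here since the framework of Lemma~\ref{lem:fluctuation and concentration} and the pointwise Green's function bound in Lemma~\ref{lem:89} are precisely the ingredients required; the proof is structurally identical to the $d=1$ case, only with the scaling of the Green's function at the diagonal replaced according to the dimension. I do not need the second ``two-point'' part of Lemma~\ref{lem:46} (which would require a bound on $G^v_L(v)-G^v_L(u)$), so the argument is in fact shorter than the $d=1$ counterpart.
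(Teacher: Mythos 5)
Your proof is correct and is exactly the argument the paper has in mind: the paper explicitly remarks that Lemma~\ref{lem:47} follows by repeating the proof of Lemma~\ref{lem:46} with the shift $s=G^v_{\Lambda_L}e$, the only change being that the Green's function bound $\|\nabla s\|_{\Lambda_L}^2=G^v_{\Lambda_L}(v)\le C\log(1+r_v)$ now comes from Lemma~\ref{lem:89}. You have supplied precisely the omitted details (the identities $(\varphi^\eta,-\Delta s)=\varphi^\eta_v\cdot e$ and $\|\nabla s\|_{\Lambda_L}^2=(s,-\Delta s)=G^v_{\Lambda_L}(v)$) and then applied Lemma~\ref{lem:fluctuation and concentration}, so there is nothing to add.
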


The proof of Lemma~\ref{lem:47} is almost identical to the proof of Lemma~\ref{lem:46}. The only difference is that when $d=2$ the function $s(x):=G^v_{\Lambda _L}(x)e$ satisfies $\|\nabla s\|_{\Lambda _L}^2 =G_{\Lambda _L}^v(v) \le C \log (1+r_v)$ by Lemma~\ref{lem:89}.

The lower bound in Theorem~\ref{thm: scaling relation d=1} follows from the next corollary using the fact that $\max _{L-K\le v\le L } |\varphi^ \eta _v\cdot e |$ has the same distribution as $\max _{-L\le v\le -L+K } |\varphi^ \eta _v\cdot e |$ for any $K\le 2L$ and any vector $e\in \mathbb R ^n$.

\begin{cor}[\ref{as:exiuni}+\ref{as:stat}]\label{cor:21}
    Suppose that $d=1$ and let $\sigma _L:=\sqrt{{\rm Var} (\GE ^\eta )}$. Then, for any unit vector $e\in \mathbb R ^n$ and $K\le 2L$ we have 
    \begin{equation}\label{eq:jf}
        \mathbb E \Big[ \max _{L-K\le v\le L } |\varphi^ \eta _v\cdot e | \Big] \le C\sqrt{K\sigma _L}.
    \end{equation}
\end{cor}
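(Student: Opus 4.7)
My plan is to combine a single-point bound at $v=L$ with a dyadic chaining of pairwise increment estimates. Set $X_v:=\varphi^\eta_v\cdot e$ and $I:=\{L-K,\ldots,L\}$; by the triangle inequality it suffices to bound $\E|X_L|$ and $\E\max_{v\in I}|X_v-X_L|$ separately (note $X_L$ is \emph{not} zero a priori, since $L\in\Lambda_L$). The bound is trivial if $\sigma_L=\infty$, so assume $\sigma_L<\infty$; then Chebyshev gives $\P(|\GE^\eta-\E\GE^\eta|\ge t)\le\sigma_L^2/t^2$. Taking $\gamma=\E\GE^\eta$ in Lemma~\ref{lem:46}, the two conclusions become
\begin{equation*}
\P(|X_L|\ge h)\le \frac{C\sigma_L^2}{h^4},\qquad \P(|X_u-X_v|\ge h)\le\frac{C|u-v|^2\sigma_L^2}{h^4},
\end{equation*}
for all $u,v\in\Lambda_L$ and $h>0$ (using $r_L=1$ in the first bound). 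The first already yields $\E|X_L|\le(\E|X_L|^2)^{1/2}\le C\sqrt{\sigma_L}\le C\sqrt{K\sigma_L}$ for $K\ge 1$.

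For the second term I would dyadically chain. Set $J:=\lceil\log_2(K+1)\rceil$ and pick nested sets $\{L\}=N_0\subset N_1\subset\cdots\subset N_J=I$ with $|N_j|\le 2^{j+1}$ and mesh at most $3K\cdot 2^{-j}$. For each $v\in I$ let $v_j\in N_j$ be a closest point to $v$, so $v_0=L$, $v_J=v$, and $|v_j-v_{j-1}|\le 3K\cdot 2^{-j}$. Telescoping and taking the maximum gives
\begin{equation*}
\max_{v\in I}|X_v-X_L|\le\sum_{j=1}^{J}\max_{v\in I}|X_{v_j}-X_{v_{j-1}}|.
\end{equation*}
The set of distinct pairs $(v_j,v_{j-1})$ arising from $v\in I$ has cardinality $O(2^j)$ (since each $v_j\in N_j$ determines $v_{j-1}\in N_{j-1}$ up to $O(1)$ choices), so a union bound combined with the pairwise tail yields
\begin{equation*}
\P\Bigl(\max_{v\in I}|X_{v_j}-X_{v_{j-1}}|\ge h\Bigr)\le \frac{C\cdot 2^{-j}K^2\sigma_L^2}{h^4}.
\end{equation*}
Integrating $\int_0^\infty\min(1,\,\cdot\,)\,dh$ against this tail gives $\E\max_v|X_{v_j}-X_{v_{j-1}}|\le C\cdot 2^{-j/4}\sqrt{K\sigma_L}$, and summing this geometric series over $j=1,\ldots,J$ gives $\E\max_{v\in I}|X_v-X_L|\le C\sqrt{K\sigma_L}$, completing the proof.

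The main obstacle is aligning the chaining so the per-scale contributions are summable: the fourth-power tail is exactly strong enough to kill the $2^{j}$ entropy from pairs at scale $j$, leaving $2^{-j/4}\sqrt{K\sigma_L}$ per scale. A naive one-shot union bound over all $K$ vertices against the anchor $L$ would yield only the weaker $K^{3/4}\sqrt{\sigma_L}$, which misses the target scaling by a factor $K^{1/4}$; higher-moment methods for $\GE^\eta$ are unavailable here since only the variance is assumed finite, so the multi-scale organization is essential.
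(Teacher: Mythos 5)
Your proof is correct and uses essentially the same idea as the paper: a dyadic chaining argument powered by Lemma~\ref{lem:46} combined with Chebyshev, giving a fourth-power tail that makes the per-scale contributions geometrically summable. The only cosmetic difference is that you anchor the chain at $v=L$ and pay a separate $\E|X_L|\le C\sqrt{\sigma_L}$ term, while the paper's chain runs rightward by steps in $\{0,2^j\}$ until it exits $\Lambda_L$, so the base point of the telescope automatically vanishes; the paper also chooses an explicit per-scale weight $h\,2^{-4-(m-j)/6}$ and union-bounds the whole event at once, whereas you integrate the tail scale by scale and sum the resulting expectations --- two standard ways of organizing the same chaining computation, with identical entropy counting ($O(2^j)$ pairs at scale $j$, each at distance $O(K2^{-j})$, versus $O(K2^{-j})$ pairs each at distance $2^j$). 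Both give the same $\sqrt{K\sigma_L}$ bound.
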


\begin{proof}
    Fix a unit vector $e\in \mathbb R ^n$. Let $m:=\lfloor \log _2K \rfloor +1$. For any $h>0$ define the event 
    \begin{equation}
        \mathcal E _h :=\bigcap _{j=0}^m \big\{ \forall u\in 2^j \mathbb Z \cap [L-K,\infty ), \ \   |(\varphi^ \eta _u-\varphi^ \eta _{u+2^j}) \cdot e | \le h2^{-4-(m-j)/6} \big\}.
    \end{equation}
    First, we claim that on the event $\mathcal E _h$ we have $\max_{v\ge L-K} |\varphi^ \eta _v\cdot e | \le h$. Indeed, let $L-K\le v \le L$ and note that there is a sequence of vertices $v=v_0\le v_1\le \cdots \le v_{\ell +1}$ with $\ell \le m$ such that  $v_{\ell +1}\notin \Lambda _L$ and for all $0\le j\le \ell $ we have $v_j\in 2^j\mathbb Z$ and $v_{j+1}-v_j \in \{0,2^j\}$. Thus, using that $\varphi^ \eta _{v_{\ell +1}}=0$ we obtain that on $\mathcal E _h$ 
    \begin{equation}
        |\varphi^ \eta _v\cdot e | \le \sum _{j=0}^\ell  |(\varphi^ \eta _{v_{j+1}}-\varphi^ \eta _{v_j})\cdot e | \le h \sum _{j=0}^\ell  2^{-4-(m-j)/6}\le h.
    \end{equation}
   Thus, by Lemma~\ref{lem:46} we have
   \begin{equation}
   \begin{split}
       \mathbb P &\Big( \max _{L-k\le v\le L} |\varphi^ \eta _v\cdot e | \ge h \Big) \le \mathbb P (\mathcal E _h^c ) \le \sum _{j=0}^m \sum _{\substack{u\in 2^j\mathbb Z \\ |u-L|\le 4K}} \mathbb P \big(  |( \varphi^ \eta _u-\varphi^ \eta _{u+2^j})\cdot e| \ge h2^{-4-(m-j)/6} \big) \\
       &\le C\sum _{j=0}^m  K2^{-j} \mathbb P \big( |\GE ^\eta -\mathbb E[\GE^\eta ] |\ge ch^2 2^{-j-(m-j)/3} \big) 
       \le C\frac{K^{5/3} \sigma _L^2}{h^4}\sum _{j=0}^m  2^{j/3} \le C\frac{K^2\sigma _L^2}{h^4},
   \end{split}
   \end{equation}
    where in the fourth inequality we used Chebyshev's inequality. Note that the last bound is nontrivial only when $h\ge C\sqrt{K\sigma _L}$.  Integrating the last inequality over $h\ge \sqrt{K\sigma _L}$ finishes the proof of the corollary.
\end{proof}

\subsection{Pointwise height and maximal height localization}
In this section we prove Theorem~\ref{thm:localization} and Theorem~\ref{thm:scaling relation>}. Throughout this section the constants $C$ and $c$ are allowed to depend on $d$. When we make the assumption \eqref{eq:chi conc} the constants are also allowed to depend on $\chi , \alpha , K $ and $\kappa$ from the assumption.

We start with an estimate on the Green's function whose proof is given in Appendix \ref{appendix:greenfunction}. As usual, for a box $\Lambda $ and $v\in \Lambda $ we write $r_v=d(v,\Lambda ^c)$.

\begin{lem}\label{lem:Green}
    Let $\Lambda :=[a_1,b_1]\times \cdots \times [a_d,b_d]$ be a box. Then:
    \begin{enumerate}
        \item 
        For all $v,x\in \Lambda $ such that $r_v \le 2\|v-x\|$ we have 
        \begin{equation}
           G_\Lambda ^v(x)\le Cr_xr_v \|x-v\|^{-d}. 
        \end{equation}
        \item 
        For all $v,u,x\in \Lambda $ such that $r_v\le 2\|x-v\|$ we have 
        \begin{equation}
           |G_\Lambda ^v(x)-G_\Lambda ^u(x) |\le Cr_x\|u-v\|\cdot \|x-v\|^{-d}. 
        \end{equation}
    \end{enumerate}
\end{lem}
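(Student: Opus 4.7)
Both estimates are classical boundary bounds on the discrete Dirichlet Green's function in a box; I would prove them by combining three standard tools: monotonicity of $G_\Lambda^\bullet$ in the domain, the reflection formula on axis-aligned half-spaces, and a strong-Markov decomposition at an intermediate scale. Throughout I would use the reversibility identity $G_\Lambda^v(x)=G_\Lambda^x(v)$ to swap the roles of $v$ and $x$ freely. Write $R=\|x-v\|$.

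For Part (1), I would first obtain the one-factor bound $G_\Lambda^v(x)\le C r_v R^{1-d}$ in the regime $r_v\le R/4$. Choose the axis-aligned half-space $H_v\supseteq\Lambda$ whose bounding hyperplane contains the face of $\partial\Lambda$ nearest $v$; domain monotonicity gives $G_\Lambda^v(x)\le G_{H_v}^v(x)$, and the reflection formula $G_{H_v}(x,v)=G_{\Z^d}(x,v)-G_{\Z^d}(x,\tilde v)$ (with $\tilde v$ the image of $v$ under reflection across $\partial H_v$) combined with the standard Green's-function gradient bound $|G_{\Z^d}(x,v)-G_{\Z^d}(x,v')|\le C\|v-v'\|\,R^{1-d}$ (understood as a difference up to an additive constant for $d\le 2$) yields the desired estimate. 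To extract the remaining factor $r_x/R$ and obtain $G_\Lambda^v(x)\le C r_v r_x R^{-d}$, I would split into cases. If $r_x\ge R/4$, the factor is already present up to a constant. If $r_x<R/4$, set $\ell=R/4$ and apply the classical half-space gambler's-ruin estimate (comparing $\Lambda$ to a half-space touching the face of $\partial\Lambda$ nearest $x$) to obtain $P_x(\tau_{\partial B(x,\ell)}<\tau_{\Lambda^c})\le C r_x/\ell$; since $v\notin B(x,\ell)$, the strong Markov property at this exit event, together with the one-factor bound applied at the exit location (which sits at distance $\ge 3R/4$ from $v$), yields the claimed product bound. The complementary regime $r_v\in(R/4,2R]$ allowed by the hypothesis is handled by swapping the roles of $v$ and $x$ via reversibility, since there $r_v/R$ is already of order $1$.

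For Part (2), I would use that $v\mapsto G_\Lambda^v(x)=G_\Lambda^x(v)$ is discrete-harmonic on $\Lambda\setminus\{x\}$ and that, by the hypothesis $r_v\le 2R$, the ball $B(v,R/8)$ avoids $x$. The standard discrete gradient / Harnack estimate for positive harmonic functions on a ball gives, for $\|u-v\|\le R/16$,
\begin{equation*}
|G_\Lambda^v(x)-G_\Lambda^u(x)|\le \frac{C\|u-v\|}{R}\sup_{v'\in B(v,R/8)}G_\Lambda^{v'}(x),
\end{equation*}
and for $v'$ in this ball Part (1) gives $G_\Lambda^{v'}(x)\le C r_x R^{1-d}$, where the factor $r_{v'}\le r_v+R/8\le CR$ is absorbed into the constant thanks to the hypothesis $r_v\le 2R$. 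This yields the claimed bound; the complementary regime $\|u-v\|>R/16$ follows directly from two applications of Part (1). The main technical step is the second-factor extraction in Part (1), namely combining the half-space reflection bound with the near-boundary gambler's-ruin estimate to gain the factor $r_x/R$; once Part (1) is in hand, Part (2) is a routine application of discrete gradient estimates in the $v$-variable.
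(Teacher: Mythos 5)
Your Part (1) argument follows essentially the same route as the paper's (domain monotonicity to a touching half-space, the reflection identity, potential-kernel asymptotics, and then a strong-Markov decomposition at scale $R/4$ combined with a gambler's-ruin estimate $\P_x(\tau_{\partial B(x,\ell)}<\tau_{\Lambda^c})\le C r_x/\ell$ to gain the factor $r_x/R$). The paper proves exactly this as Claim~\ref{claim:1} and Corollary~\ref{cor:1}, with the $\ell_\infty$ box in place of your ball and a coordinate-by-coordinate gambler's-ruin analysis; the details match.

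Your Part (2), however, takes a genuinely different route from the paper's, and as written it has a gap. The paper proves the one-factor version $|G_\Lambda^v(x)-G_\Lambda^u(x)|\le C\|u-v\|\,\|x-v\|^{1-d}$ (Lemma~\ref{lem:G diff}) by a \emph{coupling} of two continuous-time walks started at $x$: one watches the local time at $v$, the other at $u$, and the coupling forces $X(t)-X'(t)=v-u$ after a random time that is small with high probability; the extra factor $r_x/R$ is then extracted exactly as in Part (1). You instead propose a discrete interior gradient estimate for the harmonic function $v'\mapsto G_\Lambda^x(v')$. The difficulty is that the ball $B(v,R/8)$ need not lie inside $\Lambda$: the hypothesis is only $r_v\le 2R$, and $r_v$ can be as small as $1$. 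The extension of $G_\Lambda^x(\cdot)$ by zero is not harmonic across $\partial\Lambda$ (it is subharmonic there), so the quoted gradient estimate does not apply to $B(v,R/8)$ as written. Indeed, taking $v$ adjacent to $\partial\Lambda$ and $u$ its exterior neighbour, $G_\Lambda^x(\cdot)$ drops from $G_\Lambda^v(x)$ to $0$ in a single step, which a naive interior estimate on a macroscopic ball cannot see.

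The fix is to replace $R/8$ by a radius $\rho\sim\min(r_v,R)$, so that $B(v,\rho)\subset\Lambda\setminus\{x\}$, and to restrict to $\|u-v\|\lesssim\rho$ (the complementary case $\|u-v\|\gtrsim\min(r_v,R)$ is handled directly by two applications of Part (1), as you note). The resulting factor $1/\rho\sim 1/r_v$ from the gradient estimate is then exactly cancelled by the factor $r_{v'}\sim r_v$ that Part (1) supplies at points $v'\in B(v,\rho)$, giving $\frac{C\|u-v\|}{r_v}\cdot Cr_x r_v R^{-d}=C\|u-v\|r_x R^{-d}$. There is also a secondary constant issue: for $v'\in B(v,\rho)$ one needs Part (1) to apply to $(v',x)$, i.e.\ $r_{v'}\le 2\|v'-x\|$, which can fail when $r_v$ is close to $2R$ (e.g.\ $r_{v'}\le 17R/8$ while $\|v'-x\|\ge 7R/8$). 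You should therefore first establish Part (1) under a more generous hypothesis such as $r_v\le 8\|x-v\|$ (as the paper does internally in Corollary~\ref{cor:1}) and apply that version. With these two repairs the gradient-estimate route is a valid, and in my view cleaner, alternative to the paper's coupling, at the price of having to track harmonicity near the boundary explicitly; the coupling sidesteps that issue because it watches the exit from $\Lambda$ directly for both walks.
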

The two following lemmas will be the two key inputs to prove Theorem~\ref{thm:scaling relation>}.

\begin{lem}[\ref{as:exiuni}+\ref{as:stat}]\label{lem:loc}
    Suppose that \eqref{eq:chi conc} holds for some $\chi < d$. Then, the following holds for any $v\in \Lambda _L$, a unit vector $e\in \mathbb R ^n$ and $t>0$:
    \begin{itemize}
        \item 
    Suppose that $\chi >d-2$ and let $\xi :=(\chi -d+2)/2$ according to the scaling relation \eqref{eq:scaling relation}. Then
    \begin{equation}\label{3}
        \mathbb P \big( |\varphi^ \eta _v\cdot e| \ge t \sqrt{f(\lambda)}r_v^{\xi}  \big) \le Ce^{-ct^{2\alpha}}.
    \end{equation}
    \item
    Suppose that $\chi =d-2$. Then 
    \begin{equation}\label{4}
        \mathbb E |\varphi^ \eta _v\cdot e| \le C\sqrt{f(\lambda)}\log r_v.
    \end{equation}
    \item 
    Suppose that $\chi <d-2$. Then 
    \begin{equation}\label{5}
        \mathbb P \big( |\varphi^ \eta _v\cdot e| \ge t \sqrt{f(\lambda)} \big) \le Ce^{-ct^{2\alpha}}.
    \end{equation}
    \end{itemize}
\end{lem}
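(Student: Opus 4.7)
My plan is to apply the conditional concentration estimate~\eqref{eq:linear functional prob upper bound chiconc} of Lemma~\ref{lem:fluctuation and concentration} with the test function $s = G_\Delta^v \cdot e$, where $G_\Delta^v$ is the Dirichlet Green's function of a box $\Delta = (v + \Lambda_\ell)\cap \Lambda_L$ centered at $v$. A short computation using that $-\Delta_{\Lambda_L} G_\Delta^v$ equals $\delta_v$ on $\Delta\setminus\partial\Delta$ and the ``discrete normal derivative'' on $\partial\Delta$ yields the two key identities
\begin{equation}
    (\varphi^\eta, -\Delta_{\Lambda_L} s) = \varphi_v^\eta\cdot e - \mathbb{E}_{\mu_\Delta^v}\bigl[\varphi^\eta\cdot e\bigr],\qquad \|\nabla s\|_{\Lambda_L}^2 = G_\Delta^v(v),
\end{equation}
where $\mu_\Delta^v$ is the discrete harmonic measure on $\partial\Delta$ from $v$. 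Combined with the diagonal bounds $G_\Delta^v(v)\le Cr_v$ for $d=1$ (Lemma~\ref{lem:88}), $G_\Delta^v(v)\le C\log(r_v+1)$ for $d=2$ (Lemma~\ref{lem:89}), and $G_\Delta^v(v)\le C$ for $d\ge 3$ (transience), \eqref{eq:linear functional prob upper bound chiconc} immediately yields the basic tail estimate
\begin{equation}
    \mathbb{P}\!\left(\bigl|\varphi_v^\eta\cdot e - \mathbb{E}_{\mu_\Delta^v}[\varphi^\eta\cdot e]\bigr|\ge r\right)\le C\exp\!\left(-c\,\frac{r^{2\alpha}}{f(\lambda)^\alpha G_\Delta^v(v)^\alpha \ell^{\alpha\chi}}\right).
\end{equation}

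To pass from this difference estimate to the pointwise bound on $|\varphi_v^\eta\cdot e|$, I would iterate along the dyadic sequence of boxes $\Delta_j=(v+\Lambda_{2^j})\cap \Lambda_L$ for $j=0,1,\ldots,J$, with $J$ large enough that $\Delta_J=\Lambda_L$ and hence $\mathbb{E}_{\mu_{\Delta_J}^v}[\varphi^\eta\cdot e]=0$ by the boundary condition. Each telescoping increment $\mathbb{E}_{\mu^v_{\Delta_j}}[\varphi^\eta\cdot e]-\mathbb{E}_{\mu^v_{\Delta_{j+1}}}[\varphi^\eta\cdot e]$ admits a representation $(\varphi^\eta,-\Delta_{\Lambda_L}s_j)$ with the test function $s_j=(G_{\Delta_{j+1}}^v-G_{\Delta_j}^v)\cdot e$, which is supported in $\Delta_{j+1}$ and satisfies $\|\nabla s_j\|_{\Lambda_L}^2\le G_{\Delta_{j+1}}^v(v)$, so~\eqref{eq:linear functional prob upper bound chiconc} at scale $\ell_{j+1}=2^{j+1}$ controls each increment. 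A union bound over the $O(\log r_v)$ dyadic scales produces the stretched-exponential tails claimed in~\eqref{3}, \eqref{4} and~\eqref{5}, with the logarithmic loss absorbed into the exponent.

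The three cases then emerge from summing the resulting geometric series. When $\chi>d-2$ (so $\xi>0$), the dyadic sum is dominated by the largest scale $\ell_J\sim r_v$, and $\sqrt{G_{\Delta_J}^v(v)}\,r_v^{\chi/2}$ directly matches $r_v^\xi$ for $d=1$ (where $G\sim r_v$ gives $r_v^{(\chi+1)/2}$) and, up to a harmless logarithmic factor absorbed in the stretched-exponential tail, for $d=2$ (where $G\sim\log r_v$ gives $r_v^{\chi/2}$). When $\chi=d-2$ (so $\xi=0$), each dyadic scale contributes $O(\sqrt{f(\lambda)})$, and summing the $O(\log r_v)$ scales produces the claimed $\log r_v$ bound \eqref{4}. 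When $\chi<d-2$, the dyadic contributions form a convergent geometric series, yielding the $O(\sqrt{f(\lambda)})$ bound \eqref{5}.

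The main technical obstacle I anticipate is achieving the sharp exponent $(\chi-d+2)/2$ rather than the weaker $\chi/2$ in dimensions $d\ge 3$, where the boundedness of $G_\Delta^v(v)$ causes the naive dyadic sum to produce only $r_v^{\chi/2}$. Closing this gap requires an additional bootstrap in $r_v$: one controls the boundary averages $\mathbb{E}_{\mu^v_{\Delta_j}}[\varphi^\eta\cdot e]$ at intermediate scales inductively, exploiting the off-diagonal Green's function decay $G_\Delta^v(x)\le C r_x r_v\|x-v\|^{-d}$ from Lemma~\ref{lem:Green}(1) to effectively average the harmonic measure contribution over points at distance comparable to $r_v$, and thereby closing the inductive hypothesis via a Gr\"onwall-type estimate that restores the scaling-relation exponent $\xi=(\chi-d+2)/2$.
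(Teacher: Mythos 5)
Your overall framework — telescope the Green's function over dyadic boxes $\Delta_j=(v+\Lambda_{2^j})\cap\Lambda_L$, write $\varphi_v^\eta\cdot e$ as a sum of increments $(\varphi^\eta,-\Delta s_j)$, and control each increment via Lemma~\ref{lem:fluctuation and concentration} — is exactly the paper's strategy. The gap is concentrated in a single wrong step: the bound $\|\nabla s_j\|^2\le G_{\Delta_{j+1}}^v(v)$ for the telescoping increment $s_j=(G_{\Delta_{j+1}}^v-G_{\Delta_j}^v)e$. For $d\ge 3$ this is $O(1)$, which, as you observe, only produces the exponent $\chi/2$ rather than $\xi=(\chi-d+2)/2$. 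The crucial missing ingredient is that the Dirichlet energy of the \emph{increment} is far smaller than the diagonal value of either Green's function: using Green's identity, $\|\nabla s_j\|^2$ reduces to a sum over $x\in\partial\Delta_j$ of a product of an off-diagonal value $G_{\Delta_{j+1}}^v(x)$ and a discrete normal derivative $\Delta G_{\Delta_j}^v(x)$. By Lemma~\ref{lem:Green} these scale like $2^{j(2-d)}$ and $2^{j(1-d)}$ respectively at distance $2^j$ from $v$, and summing over the $\asymp 2^{j(d-1)}$ boundary points gives $\|\nabla s_j\|^2\lesssim 2^{j(2-d)}$ at interior scales (and $\lesssim r_v^2 2^{-jd}$ once the boxes hit $\partial\Lambda_L$). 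That power decay in $j$ is what geometrically kills the dyadic sum and restores the sharp exponent; no further input is needed.

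Your proposed remedy — an inductive bootstrap on the boundary averages $\mathbb{E}_{\mu^v_{\Delta_j}}[\varphi^\eta\cdot e]$ closed by a Gr\"onwall-type argument — does not address the problem, because it does not reduce the size of the \emph{stochastic} fluctuation at each scale: the probabilistic error in $(\varphi^\eta,-\Delta s_j)$ is governed by $\|\nabla s_j\|$ via Lemma~\ref{lem:fluctuation and concentration} regardless of how sharply the deterministic averages are controlled. So with the naive $\|\nabla s_j\|^2\lesssim 1$ bound, each increment still contributes $\approx 2^{j\chi/2}$ of noise, and no amount of bootstrapping on the conditional means can make the dyadic sum shrink below $r_v^{\chi/2}$. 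You should instead replace the naive Dirichlet-energy bound by the direct computation using the off-diagonal Green's function estimates of Lemma~\ref{lem:Green}, after which the rest of your plan (union bound over scales, geometric summation into three cases according to the sign of $\chi-d+2$) goes through essentially as the paper does it. As a minor point, in $d=2$ the diagonal bound $G_{\Delta_j}^v(v)\lesssim \log r_v$ is also not the route the paper takes — the increment estimate $\|\nabla s_j\|^2\lesssim 1$ for $d=2$ is what the paper uses and is cleaner than trying to absorb a $\sqrt{\log}$ into the stretched-exponential tail.
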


Note that the estimates \eqref{eq:localization estimate} from Theorem~\ref{thm:localization} and \eqref{eq:pointwise localization under exponent assumption} from Theorem~\ref{thm:scaling relation>} follow immediately from Lemma~\ref{lem:loc} using a union bound over the coordinates of $\varphi _v^\eta $. 

Next, we obtain the following bound on the height difference between pairs of points. 

\begin{lem}[\ref{as:exiuni}+\ref{as:stat}]\label{lem:holder}
    Suppose that \eqref{eq:chi conc} holds for some $d-2 < \chi < d$. Let $\xi :=(\chi -d+2)/2$. Then, for all $v,u\in \mathbb Z ^d$, a unit vector $e\in \mathbb R ^n$ and $t>0$ we have  
    \begin{equation}\label{eq:holder}
        \mathbb P \big( | (\varphi^ \eta _v-\varphi^ \eta _u)\cdot e | > t\sqrt{f(\lambda)} \|v-u\|^{\xi}  \big) \le Ce^{-ct^{2\alpha }}.
    \end{equation}
\end{lem}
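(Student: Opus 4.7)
The plan is to follow the strategy of Lemma~\ref{lem:loc} (and of Lemma~\ref{lem:46}~(2) for $d=1$), applying Lemma~\ref{lem:fluctuation and concentration}~(2) to an appropriate shift function $s$ so that the concentration estimate~\eqref{eq:linear functional prob upper bound chiconc} controls $(\varphi^\eta_v - \varphi^\eta_u)\cdot e$ at scale $\sqrt{f(\lambda)}\,\|u-v\|^\xi$.

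I would first make a case split. When $\|u-v\| \gtrsim \min(r_u, r_v)$, both $r_u$ and $r_v$ are at most $C\|u-v\|$, and the bound follows from the triangle inequality $|(\varphi^\eta_v - \varphi^\eta_u)\cdot e| \le |\varphi^\eta_v\cdot e| + |\varphi^\eta_u\cdot e|$ combined with the pointwise stretched-exponential tail~\eqref{eq:log chi concentration} of Lemma~\ref{lem:loc} (or~\eqref{eq:loglog chi concentration} if $d=1$) applied separately to $v$ and $u$.

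In the remaining bulk case $\|u-v\| \ll \min(r_u, r_v)$, choose a box $\tilde\Delta \subset \Lambda_L$ of side $\ell \asymp \|u-v\|$ containing both $u$ and $v$, and take
\[
s(x) := \big(G^v_{\tilde\Delta}(x) - G^u_{\tilde\Delta}(x)\big)\,e,
\]
so that $\mathrm{supp}(s) \subseteq \tilde\Delta$ and Lemma~\ref{lem:fluctuation and concentration}~(2) applies with $\Delta = \tilde\Delta$. Using the identity $\Delta G^v_{\tilde\Delta} = -\delta_v + H^v_{\tilde\Delta}\,\mathbf 1_{\partial\tilde\Delta}$, where $H^v_{\tilde\Delta}$ is the harmonic measure on $\partial\tilde\Delta$ started at $v$, one obtains the exact decomposition
\[
(\varphi^\eta, -\Delta_\Lambda s) = (\varphi^\eta_v - \varphi^\eta_u)\cdot e + \sum_{w\in\partial\tilde\Delta}\varphi^\eta(w)\cdot \big(H^u_{\tilde\Delta}(w) - H^v_{\tilde\Delta}(w)\big)\,e,
\]
while the Dirichlet energy $\|\nabla s\|_\Lambda^2 = G^v_{\tilde\Delta}(v) + G^u_{\tilde\Delta}(u) - 2G^v_{\tilde\Delta}(u)$ is the effective resistance between $u$ and $v$ in $\tilde\Delta$, which is controlled via Lemma~\ref{lem:Green} and its counterparts from Appendix~\ref{appendix:greenfunction}. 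Feeding these estimates into~\eqref{eq:linear functional prob upper bound chiconc} yields a stretched-exponential bound of the right order for the left-hand side; the boundary term is then controlled by the harmonic-measure estimate $\|H^v_{\tilde\Delta} - H^u_{\tilde\Delta}\|_{\mathrm{TV}} \le C\|u-v\|/\ell$ together with the pointwise tail bound for $\varphi^\eta$ on $\partial\tilde\Delta$ supplied by Lemma~\ref{lem:loc}.

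The main obstacle is calibrating the bulk and boundary contributions through the single parameter $\ell$. This is immediate in $d\le 2$, where the effective resistance scales as $\|u-v\|$ for $d=1$ or $\log\|u-v\|$ for $d=2$, but it is more delicate in $d\ge 3$ where the effective resistance in $\tilde\Delta$ saturates at a constant as $\ell$ grows, so one must take $\ell$ slightly larger than $\|u-v\|$ (or iterate the argument at dyadic scales) and exploit the smallness of the harmonic-measure difference to absorb the boundary correction into the target order $\sqrt{f(\lambda)}\,\|u-v\|^\xi$, yielding~\eqref{eq:holder}.
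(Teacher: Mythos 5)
Your case split and the choice of a single shift $s = (G^v_{\tilde\Delta} - G^u_{\tilde\Delta})e$ supported in a box of side $\ell \asymp \|u-v\|$ capture part of the paper's argument, but the single-box version has two gaps that are both fatal, and the fallback you gesture at (``iterate at dyadic scales'') is in fact the entire argument, which the paper carries out as a telescoping Green-function decomposition.

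First, the concentration scaling fails in $d\ge 3$ no matter how you calibrate $\ell$. Inequality~\eqref{eq:linear functional prob upper bound chiconc} produces an exponent proportional to $\big(r^2 / (f(\lambda)\|\nabla s\|_\Lambda^2\,\ell^\chi)\big)^\alpha$. For $d\ge 3$ the effective resistance $\|\nabla s\|_\Lambda^2 = G^v_{\tilde\Delta}(v)+G^u_{\tilde\Delta}(u)-2G^v_{\tilde\Delta}(u)$ is $O(1)$, so at the target scale $r=t\sqrt{f(\lambda)}\|u-v\|^\xi$ the exponent becomes $\asymp (t^2\|u-v\|^{2\xi}/\ell^\chi)^\alpha$; since $2\xi=\chi-(d-2)<\chi$ while $\ell\ge\|u-v\|$, this is $\lesssim (t^2\|u-v\|^{-(d-2)})^\alpha$, which degrades as $\|u-v\|\to\infty$. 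Taking ``$\ell$ slightly larger'' only makes $\ell^\chi$ bigger and hurts further. So there is no stretched-exponential bound at the right scale from the single box.

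Second, the boundary-correction term is not controllable as you propose. You have correctly identified that $(\varphi^\eta,-\Delta_\Lambda s) = (\varphi^\eta_v-\varphi^\eta_u)\cdot e + B$ with $B = \sum_{w\in\partial\tilde\Delta}\varphi^\eta(w)\cdot(H^u_{\tilde\Delta}(w)-H^v_{\tilde\Delta}(w))e$, but the harmonic-measure Hölder bound $\|H^u-H^v\|_{\mathrm{TV}}\le C\|u-v\|/\ell$ combined with Lemma~\ref{lem:loc}'s pointwise tail only gives $|B|\lesssim\sqrt{f(\lambda)}\,r_v^\xi\,\|u-v\|/\ell$, and with $\ell\asymp\|u-v\|$ this is $\asymp\sqrt{f(\lambda)}\,r_v^\xi\gg\sqrt{f(\lambda)}\|u-v\|^\xi$ in the bulk regime $r_v\gg\|u-v\|$. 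This happens in every dimension, including $d\le 2$, so ``immediate in $d\le 2$'' is not right. (The $d=1,2$ results Lemma~\ref{lem:46}(2) and Lemma~\ref{lem:47} avoid the issue only because they use the Green's function of the full box $\Lambda_L$, where $\varphi^\eta$ vanishes on the boundary; those lemmas do not use the localized concentration~\eqref{eq:chi conc}.) One cannot even hope to kill $B$ by subtracting a boundary median, since the fluctuation of $\varphi^\eta$ along $\partial\tilde\Delta$ is exactly the quantity the lemma is trying to control.

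The correct fix is the dyadic telescope, and it is more than ``iterating'': the paper sets $\Lambda_j := \Lambda_L\cap(v+(-2^j,2^j)^d)$, $m'\asymp\log_2\|u-v\|$, and $s_j := (G^v_{\Lambda_j}-G^u_{\Lambda_j})e - (G^v_{\Lambda_{j-1}}-G^u_{\Lambda_{j-1}})e$, so that $\sum_{j=m'}^m s_j = (G^v_{\Lambda_L}-G^u_{\Lambda_L})e$ and the exact identity $(\varphi^\eta_v-\varphi^\eta_u)\cdot e = -\sum_{j}(\varphi^\eta,\Delta s_j)$ holds with no boundary correction at all, because $\varphi^\eta$ vanishes off $\Lambda_L$. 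Each $s_j$ has $\|\nabla s_j\|_\Lambda^2\le C\|u-v\|^2\,2^{-jd}$ (Lemma~\ref{lem:Green}) and support in a box of scale $2^j$, so~\eqref{eq:linear functional prob upper bound chiconc} gives a geometrically improving tail as $j$ grows; the base scale $j=m'$ is handled by the argument of Lemma~\ref{lem:loc} inside $\Lambda_{m'}$. A union bound over $j$ then gives~\eqref{eq:holder}. Finally, note a misattribution in your easy case: \eqref{eq:log chi concentration} and \eqref{eq:loglog chi concentration} are conclusions of Theorem~\ref{thm:scaling relation>} whose proofs (Corollary~\ref{cor:loc2} and the corollary following it) invoke Lemma~\ref{lem:holder}, so quoting them here would be circular; the reference you want is the pointwise bound~\eqref{3} of Lemma~\ref{lem:loc}.
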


Note that the last statement is not empty only when either $u$ or $v$ are in $\Lambda _L$.

\begin{proof}[Proof of Lemma~\ref{lem:loc}]
    Let $v\in \Lambda_L $ and let $e\in \mathbb R ^n$ be a unit vector. Define the boxes $\Lambda _j:=\Lambda_L  \cap (v+(-2^j,2^j)^d)$ and let $m$ be the first integer for which $\Lambda _m=\Lambda_L $. Let $G_j^v(x)=G_{\Lambda _j}^v(x)$ and for $j>1$ let $s_j(x):=(G_j^v(x)-G_{j-1}^v(x) )e$. Finally, let $s_1(x):=G_1^v(x)e$. We have that $G_m^ve=\sum _{j=1}^ms_j$ and therefore, using that $\Lambda _m=\Lambda _L$ and that  $\varphi^ \eta |_{\Lambda _L^c}=0$ we obtain 
    \begin{equation}\label{eq:phi i}
        \varphi^ \eta _v\cdot e=(\varphi ^\eta ,-\Delta G_m^ve)= \sum _{j=1}^m (\varphi^ \eta ,- \Delta s_j).
    \end{equation}

    Next, let $m':=\lfloor \log _2r_v \rfloor -1$. We claim that for all $r>0$ 
    \begin{equation}\label{1}
        \mathbb P \big( | (\varphi^ \eta , \Delta s_j)| \ge r \big) \le C\begin{cases}\exp \big( -c \big( r^{2}2^{j (d-2-\chi)}/f(\lambda) \big)^ {\alpha }\big) & \ \, 1\le j\le m' \\ \exp \big( -c\big( r^{2}r_v^{-2}2^{j (d-\chi) } /f(\lambda) \big) ^ {\alpha }\big) & m'<j \le m\end{cases}.  
    \end{equation}
    To this end, we estimate the Dirichlet energy of $s_j$ and use Lemma~\ref{lem:fluctuation and concentration}. For all $1<j\le m$ we have that 
    \begin{equation}\label{eq:laplacian norm}
        \|\nabla s_j\|^2_{\Lambda } = (s_j,-\Delta s_j)=\big( G_j^v-G_{j-1}^v,\Delta (G_j^v-G_{j-1}^v) \big)=-\!\!\! \sum _{x\in \partial \Lambda _{j-1}\setminus\partial \Lambda } \! G_j^v(x) \Delta G_{j-1}^v(x),
    \end{equation}
    where in the last equality we used that $\Delta (G_j^v-G_{j-1}^v)$ is supported on $\partial \Lambda _j \cup \partial \Lambda _{j-1}$ and that $G_j$ is supported on $\Lambda _j$ and $G_{j-1}$ is supported on $\Lambda _{j-1} $. 

    Next, we use Lemma~\ref{lem:Green} in order to bound the right hand side of \eqref{eq:laplacian norm}. We consider separately the cases $1<j\le m'$ and $m'<j \le m$. When $1<j\le m'$, by Lemma~\ref{lem:Green}, for all $x\in \partial \Lambda _{j-1} $ we have $|G_j^v(x)| \le C2^{j(2-d)}$ and $|\Delta G_{j-1}(x)| = | G_{j-1}(y)|\le C2^{j(1-d)}$, where $y\sim x$ is the unique neighbour of $x$ in $\Lambda _{j-1}$. Similarly, when $m'<j\le m$, for all $x\in \partial \Lambda _{j-1}\setminus \partial \Lambda $ we have $|G_j^v(x)| \le Cr_v2^{j(1-d)}$ and $|\Delta G_{j-1}(x)| = | G_{j-1}(y)|\le Cr_v2^{-dj}$. Substituting these estimates into \eqref{eq:laplacian norm} we obtain
    \begin{equation}
        \|\nabla s_j\|_\Lambda ^2 \le C\begin{cases}2^{j(2-d)} & \ \, 1\le j\le m' \\ r_v^22^{-dj} & m'<j \le m\end{cases},  
    \end{equation}
    where here it is clear that this estimate also holds in the case $j=1$. Thus, using Lemma~\ref{lem:fluctuation and concentration} and the fact that $\text{Supp}(s_j)\subset \Lambda_j$ we obtain the estimate \eqref{1}.
    
    Next, suppose that $d-2<\chi < d$ and recall that $\xi := (\chi+2-d)/2$ and hence $0<\xi<1 $. Let $\delta :=\frac{1}{2}\min(\xi ,1-\xi )$ and define the event 
    \begin{equation}
        \mathcal E := \bigcap _{j=1}^m \big\{  |(\varphi^ \eta ,\Delta s_j)| \le t\sqrt {f(\lambda)} r_v^{\xi} 2^{-\delta |j-m'|}  \big\}.
    \end{equation}
    Clearly, on the event $\mathcal E $ we have that $|\varphi^ \eta _v\cdot e|\le Ct\sqrt {f(\lambda)}r_v^{\xi}$ as most of the contribution to the sum in \eqref{eq:phi i} will come from constantly many terms corresponding to $j\approx m'$. Moreover, by \eqref{1} we have 
    \begin{equation*}
    \begin{split}
        \mathbb P (\mathcal E ^c )&\le \sum _{j=1}^{m'} \exp \big( -c\big( t^{2}r_v^{2 \xi}2^{-2j\xi}2^{-2\delta (m'-j)} \big)^\alpha \big)+
        \sum _{j=m'+1}^m \exp \big( -c\big( t^{2} r_v^{2(\xi-1)} 2^{-2j(\xi-1)-2\delta (j-m')} \big)^\alpha  \big) \\
        &\le \sum _{j=1}^{m'} \exp \big( -c\big( t^{2} 2^{2\xi (m'-j) -2\delta (m'-j)}  \big)^\alpha \big)+
        \sum _{j=m'+1}^m \exp \big( -c\big( t^{2 } 2^{2(1-\xi )(j-m')-2\delta (j-m')} \big)^\alpha  \big) \\
        &\le \sum _{j=1}^{m'} \exp \big( -ct^{2\alpha} 2^{\delta \alpha (m'-j)} \big)+
        \sum _{j=m'+1}^m \exp \big( -ct^{2\alpha} 2^{\delta \alpha (j-m')} \big)
        \le Ce^{-ct^{2\alpha}},
    \end{split}
    \end{equation*}
     where in the second inequality we substituted $2^{m'+1}\ge r_v\ge 2^{m'}$. In here the constants $C,c$ may depend additionally on $\chi, \alpha $ and $\kappa $ from assumption~\ref{eq:chi conc}. This finishes the proof of \eqref{3}.
     
   When $\chi=d-2$ that is $\xi=0$, by \eqref{1} we have that $\mathbb E |(\varphi^ \eta , \Delta s_j)|\le C\sqrt {f(\lambda)} $ for $j\le m'$ and $\mathbb E |(\varphi^ \eta , \Delta s_j)|\le C\sqrt {f(\lambda)} 2^{m'-j}$ for $j> m'$. By \eqref{eq:phi i}, it follows that $\mathbb E |\varphi^ \eta _v\cdot e| \le C\sqrt {f(\lambda)} \log r_v$. This finishes the proof of \eqref{4}.

    Finally, suppose that $\chi<d-2$, that is $\xi<0$. Define the event 
    \begin{equation}
        \mathcal E := \bigcap _{j=1}^m \big\{  |(\varphi^ \eta ,\Delta s_j)| \le t\sqrt {f(\lambda)} 2^{\xi j/2}  \big\}.
    \end{equation}
    As in the case $\xi>0$, on $\mathcal E$ we have that $|\varphi^ \eta _v\cdot e|\le Ct\sqrt {f(\lambda)} $ and by \eqref{1} we have 
    \begin{equation*}
        \mathbb P (\mathcal E ^c )\le \sum _{j=1}^{m} \exp \big( -c\big( t^{2}2^{-\xi j} \big)^\alpha \big)\le Ce^{-ct^{2\alpha}}.
    \end{equation*}
    Note that in \eqref{1}, the bound when $j>m'$ is better than $\exp \big( -c \big( r^{2}2^{j (d-2-\chi)}/f(\lambda) \big)^ {\alpha }\big)$. However, this improvement is not needed in the case $\xi <0$.
\end{proof}

\begin{proof}[Proof of Lemma~\ref{lem:holder}]
    The proof is similar to the proof of Lemma~\ref{lem:loc} and some of the details are omitted. We may assume that $u,v\in \Lambda_L $ and that $\|u-v\|\le r_v/10$. Indeed, if $\|u-v\|\ge  r_v/10$ then the inequality in \eqref{eq:holder} follows from Lemma~\ref{lem:loc}. 

    Next, as in the proof of Lemma~\ref{lem:loc} we let $\Lambda _j:=\Lambda_L  \cap (v+(-2^j,2^j)^d)$ and let $m$ be the first integer for which $\Lambda _{m}=\Lambda _L$. We denote $G_j^v:=G_{\Lambda _j}^v$ and $G_j^u:=G_{\Lambda _j}^u$. Finally, let $m':=\lceil  \log _2 \|u-v\| \rceil +2$ and note that both $u,v$ are in $\Lambda _{m'}$. By the same argument as in the proof of Lemma~\ref{lem:loc} we have that 
    \begin{equation}\label{eq:839}
        \mathbb P \big( \big| (\varphi^ \eta , (\Delta  G_{m'}^v)e ) \big| \ge t \sqrt {f(\lambda)} \|u-v\|^{\xi} \big)  \le Ce^{-ct^{2\alpha }}
    \end{equation}
    and 
    \begin{equation}\label{eq:840}
        \mathbb P \big( \big| (\varphi^ \eta , (\Delta  G_{m'}^u)e ) \big| \ge t \sqrt {f(\lambda)} \|u-v\|^{\xi} \big)  \le Ce^{-ct^{2\alpha }}.
    \end{equation}
    Indeed, this is done by replacing the box $\Lambda _L$ with the box $\Lambda _{m'}$ and using the fact that both $d(u,\partial \Lambda _{m'})$ and $d(v,\partial \Lambda _{m'})$ are of order $\|u-v\|$ (the estimates \eqref{eq:839} and \eqref{eq:840} correspond to \eqref{3} since $|\varphi _v^\eta \cdot e|=
| (\varphi ^\eta ,(\Delta G_m^v)e)|$). 

    Define the functions $s_j$ for $m'\le j \le m$ by 
    \begin{equation}
        s_j(x):=\begin{cases}
        (G_{m'}^v(x)-G_{m'}^u(x))e, \quad \quad \quad  &j=m'\\
            (G_j^v(x)-G_j^u(x))e-(G_{j-1}^v(x)-G_{j-1}^u(x))e, \quad \quad &j>m'
        \end{cases}.
    \end{equation}
 We have that 
    \begin{equation}
        (\varphi^ \eta _v-\varphi^ \eta _u)\cdot e =- \sum _{j=m'}^{m} (\varphi^ \eta ,\Delta s_j).
    \end{equation}

    For all $m'<j \le m$ we have 
    \begin{equation}
        \|\nabla s_j\| _\Lambda ^2 = - \sum _{x\in \partial \Lambda _{j-1}\setminus\partial \Lambda } \! (G_j^v(x)-G_j^u(x)) \Delta (G_{j-1}^v(x)-G_{j-1}^u(x))\le C\|u-v\|^22^{-jd}
    \end{equation}
    where in here we used that, by Lemma~\ref{lem:Green}, for all $x\in \partial \Lambda _{j-1}\setminus\partial \Lambda_L$ we have $|G_j^v(x)-G_j^u(x)|\le C\|u-v\|2^{j(1-d)}$ and $|\Delta (G_{j-1}^v(x)-G_j^u(x))|=| G_{j-1}^v(y)-G_j^u(y)|\le C\|u-v\|2^{-jd}$ where $y$ is the unique neighbour of $x$ in $\Lambda _{j-1}$. Thus, letting $\delta :=(1-\xi )/2$ and using that $\text{Supp}(s_j)\subseteq \Lambda _j$ and Lemma~\ref{lem:fluctuation and concentration} we obtain
    \begin{equation}\label{eq:838}
        \mathbb P \big( |(\varphi^ \eta ,\Delta s_j)| \ge t\sqrt {f(\lambda)}\|u-v\|^{\xi} 2^{-\delta (j-m')} \big) \le C\exp \big(-c t^{2\alpha }2^{\alpha \delta (j-m')}  \big).
    \end{equation}
    On the intersection of the complements of the events in \eqref{eq:839}, \eqref{eq:840} and \eqref{eq:838} for all $m'<j\le m$, we have that $|(\varphi^ \eta _v-\varphi^ \eta _u)\cdot e|\le Ct\sqrt {f(\lambda)}\|u-v\|^{\xi}$.
\end{proof}

In the following corollary we establish the bounds \eqref{eq:locintro} from Theorem~\ref{thm:localization} and \eqref{eq:log chi concentration} from Theorem~\ref{thm:scaling relation>}.

\begin{cor}[\ref{as:exiuni}+\ref{as:stat}]\label{cor:loc2}
    Suppose that \eqref{eq:chi conc} holds for some $d-2< \chi < d$. Then, the following holds for any $v\in \Lambda _L$, a unit vector $e\in \mathbb R ^n$ and $t>0$:
\begin{equation}
    \P\left(\forall v\in\Lambda_L, \   | \varphi^{\eta}_v\cdot e |\le t\sqrt{f(\lambda)}\left(\log \frac{2L}{r_v}\right)^ {1/(2\alpha)} r_v^{\xi } \right)\ge 1-Ce^ {-ct^{2\alpha}}.
\end{equation}
\end{cor}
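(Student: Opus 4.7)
The plan is a multiscale chaining argument. The key point is that Lemma~\ref{lem:holder} provides a stretched-exponential tail at every dyadic scale, and by inserting a weight $h_j\asymp\log(2L/2^j)$ into the Hölder threshold the $(L/2^j)^d$ pairs at scale $j$ are balanced against the $e^{-ct^{2\alpha}h_j}$ decay, so that the union bound over scales collapses to a geometric series whose total mass is comparable to a single-scale $e^{-ct^{2\alpha}}$.

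Concretely, set $K:=\lceil\log_2(2L)\rceil$ and $h_j:=\max(K-j+1,1)$, so that $h_j\ge 1$ always and $h_j\asymp\log(2L/2^j)$. I would introduce the good event $\mathcal{G}$ asking that for every scale $j\in\{1,\ldots,K+1\}$ and every ``admissible'' pair $(u,w)$ at scale $j$ one has
\[
|(\varphi^{\eta}_u-\varphi^{\eta}_w)\cdot e|\le C_d t\,h_j^{1/(2\alpha)}\sqrt{f(\lambda)}\,\|u-w\|^{\xi}.
\]
A pair is admissible at scale $j$ if either (a) it is a rounding pair $u\in 2^{j-1}\Z^d$, $w\in 2^j\Z^d$ with $\|u-w\|_\infty\le 2^{j-1}$ and $\{u,w\}\cap\Lambda_L^+\neq\emptyset$, or (b) it is an exit pair in which $u\in 2^{j-1}\Z^d\cap\Lambda_L$ satisfies $r_u\le C_d 2^{j-1}$ and $w$ is a canonical nearest point of $\Lambda_L^c$ to $u$. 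A direct count gives at most $C_d(L/2^j+1)^d$ admissible pairs at scale $j$, and applying Lemma~\ref{lem:holder} with $t$ replaced by $C_d t\,h_j^{1/(2\alpha)}$ gives per-pair failure probability $\le Ce^{-ct^{2\alpha}h_j}$. Writing $m=K-j$,
\[
\P(\mathcal{G}^c)\le C\sum_{j=1}^{K+1}(L/2^j+1)^d e^{-ct^{2\alpha}h_j}\le Ce^{-ct^{2\alpha}}\sum_{m\ge -1}\bigl(2^d e^{-ct^{2\alpha}}\bigr)^m\le Ce^{-ct^{2\alpha}},
\]
once $t^{2\alpha}$ exceeds a constant depending only on $d$; otherwise the claim is trivial after enlarging $C$.

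Then, for each $v\in\Lambda_L$, I would form the canonical chain $v_0:=v$, with $v_j$ the coordinatewise rounding of $v$ to the nearest multiple of $2^j$, for $1\le j\le J:=\lceil\log_2 r_v\rceil$, and, if $v_J\in\Lambda_L$, a single final step $v_{J+1}$ to the canonical exit point of $v_J$. Rounding yields $\|v_{j-1}-v_j\|_\infty\le 2^{j-1}$, and the triangle inequality $r_{v_J}\le r_v+2^{J-1}\le 2 r_v\le 2^{J+1}$ shows that $v_J$ is close enough to the boundary for the exit pair $(v_J,v_{J+1})$ to be admissible at scale $J+1$. Since $\varphi^{\eta}_{v_{J+1}}=0$, telescoping on $\mathcal{G}$ gives
\[
|\varphi^{\eta}_v\cdot e|\le\sum_{j=1}^{J+1}C_d t\,h_j^{1/(2\alpha)}\sqrt{f(\lambda)}\,\|v_{j-1}-v_j\|^{\xi}\le C t\sqrt{f(\lambda)}\,h_J^{1/(2\alpha)}\,2^{J\xi},
\]
where the final inequality uses $\xi>0$ to collapse the geometric sum to its last few terms. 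Substituting $2^J\le 2r_v$ and $h_J\le C\log(2L/r_v)$, and then absorbing the prefactor $C$ into $t$ via $t\mapsto t/C$ (which only modifies $c$), yields the stated estimate uniformly in $v\in\Lambda_L$.

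The main obstacle I anticipate is the combinatorial/geometric bookkeeping in the first step: one must choose the admissible-pair enumeration so that every canonical chain from step three is covered while keeping the per-scale count at only $C_d(L/2^j+1)^d$, rather than the $L^d$ that a naive enumeration of individual vertices would produce. This is precisely what forces the weight $h_j^{1/(2\alpha)}$ on the Hölder bound, and hence explains why the factor $(\log(2L/r_v))^{1/(2\alpha)}$ is the natural form of the correction appearing in the conclusion.
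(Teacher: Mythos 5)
Your proposal is correct and follows essentially the same multiscale chaining strategy as the paper: both define a dyadic good event via Lemma~\ref{lem:holder} with a logarithmic boost $\asymp (\log(L/2^j))^{1/(2\alpha)}$ in the threshold at scale $j$, union-bound over $O((L/2^j)^d)$ pairs per scale to get a geometric series controlled by $e^{-ct^{2\alpha}}$, and then telescope along a $v$-dependent dyadic chain terminating outside $\Lambda_L$. The only difference is bookkeeping: the paper uses same-scale pairs $u,v\in 2^j\Z^d$ with $\|u-v\|_\infty\le 2^j$ and a monotone chain, whereas you use cross-scale rounding pairs plus explicit exit pairs (and incidentally your rounding pairs need $\|u-w\|_\infty\lesssim 2^j$, not $\le 2^{j-1}$, since $v_{j-1}$ and $v_j$ are each rounded from $v$ rather than from each other) — none of which affects the argument's substance.
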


\begin{proof}
    The proof is similar to the proof of Corollary~\ref{cor:21}. Let $m:=\lfloor \log _2 L \rfloor +1$ and $t>0$ sufficiently large. Define the event
    \begin{equation}
        \mathcal E := \bigcap _{j=1}^{m} \Big\{ \forall u,v\in 2^j \mathbb Z ^d , \|u-v\|_\infty \le 2^{j}, \  |(\varphi^ \eta _u -\varphi^ \eta _v)\cdot e| \le t\sqrt {f(\lambda)} 2^{j\xi } (m-j+1)^{1/(2\alpha )} \Big\}.
    \end{equation}
    By Lemma~\ref{lem:holder} and a union bound over the pairs of vertices $u,v$ in the event $\mathcal E$ for which either $u$ or $v$ are in $\Lambda _L$ we have that 
    \begin{equation}
        \mathbb P (\mathcal E ^c) \le C\sum  _{j=1}^m (L2^{-j})^d e^{-ct^{2\alpha }(m-j+1)}\le Ce^{-ct^{2\alpha }},
    \end{equation}
    where the last inequality holds when $t$ is sufficiently large. Next, we let $v\in \Lambda _L$ and bound $|\varphi^ \eta _v\cdot e|$ on the event $\mathcal E$. Without loss of generality, suppose that all the coordinates of $v$ are non-negative. Let $k:=\lceil \log _2r_v \rceil $. There exists a sequence of vertices $v=v_0,v_1,\dots ,v_k$ that is non decreasing in each coordinate such that the following holds. For all $j<k$ we have that $v_j\in 2^j\mathbb Z ^d$ and $v_{j+1}-v_j\in \{ 0,2^j \}^d$ and $v_k\notin \Lambda _L$. Thus, using that $\varphi^ \eta _{v_k}=0$ we have on the event $\mathcal E $
    \begin{equation}
        | \varphi^ \eta _v \cdot e | \le \sum _{j=1}^k |(\varphi^ \eta _{v_{j}}-\varphi^ \eta _{v_{j-1}} )\cdot e| \le t\sum _{j=1}^k 2^{j\xi } (m-j+1)^{1/(2\alpha )} \le Ct r_v^{\xi }  (m-k+1)^{1/(2\alpha )}
    \end{equation}
    as needed.
\end{proof}

In the following corollary we establish the bounds \eqref{eq:loglog} from Theorem~\ref{thm:localization} and \eqref{eq:loglog chi concentration} of Theorem~\ref{thm:scaling relation>}.

\begin{cor}[\ref{as:exiuni}+\ref{as:stat}]
    Suppose that $d=1$ and that \eqref{eq:chi conc} holds for some $-1< \chi < 1$. Then, the following holds for any $v\in \Lambda _L$, a unit vector $e\in \mathbb R ^n$ and $t>0$:
\begin{equation}
    \P\left(\forall v\in\Lambda_L, \   | \varphi^{\eta}_v\cdot e |\le t\sqrt{f(\lambda)}\left(\log \log  \frac{3L}{r_v}\right)^ {1/(2\alpha)} r_v^{\xi } \right)\ge 1-Ce^ {-ct^{2\alpha}}.
\end{equation}
\end{cor}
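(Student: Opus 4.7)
The plan is to parallel the proof of Corollary~\ref{cor:loc2} but to exploit the one-dimensional structure through a two-stage chaining: first bound the heights at $O(\log L)$ \emph{anchor vertices} placed at dyadic distances from the boundary, and then chain from an arbitrary $v\in\Lambda_L$ only to the nearest anchor rather than all the way to the boundary. Set $m:=\lfloor\log_2 L\rfloor+1$ and, for $0\le k\le m$, let $u_{k,\pm}\in\Lambda_L$ denote the (up to two) vertices with $r_{u_{k,\pm}}=2^k$. Partition $\Lambda_L$ into dyadic shells $S_k:=\{v\in\Lambda_L:2^k\le r_v<2^{k+1}\}$, which in $d=1$ satisfy $|S_k|\le 2^{k+1}$ and lie within the slab of width $2^{k+1}$ adjacent to the boundary. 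Since $\log\log(3L/r_v)\asymp\log(m-k+2)$ and $r_v\asymp 2^k$ on $S_k$, it suffices to prove that $|\varphi^\eta_v\cdot e|\le C t\sqrt{f(\lambda)}\,(\log(m-k+2))^{1/(2\alpha)}\,2^{k\xi}$ uniformly over $v\in S_k$ and $k$, with probability at least $1-Ce^{-ct^{2\alpha}}$.

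For the anchor step I will apply the pointwise estimate~\eqref{3} of Lemma~\ref{lem:loc} to each $u_{k,\pm}$ with threshold $t_k:=t(\log(m-k+2))^{1/(2\alpha)}$:
\[
\P\bigl(|\varphi^\eta_{u_{k,\pm}}\cdot e|>t_k\sqrt{f(\lambda)}\,2^{k\xi}\bigr)\le C(m-k+2)^{-ct^{2\alpha}}.
\]
The key arithmetic is that, for $ct^{2\alpha}$ larger than an absolute constant,
\[
\sum_{k=0}^{m}(m-k+2)^{-ct^{2\alpha}}=\sum_{\ell=2}^{m+2}\ell^{-ct^{2\alpha}}\le C\cdot 2^{-ct^{2\alpha}}\le Ce^{-c't^{2\alpha}},
\]
the sum being dominated by its $\ell=2$ term. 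This is the sole source of the $\log\log$ factor; compared with Corollary~\ref{cor:loc2}, a full $\log$ is saved because there are only $O(\log L)$ anchors in total rather than $O(L)$ lattice points to union-bound over at the finest scale.

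The in-shell step, which I expect to be the most delicate, controls $\varphi^\eta_v-\varphi^\eta_{u_k}$ for $v\in S_k$, where $u_k$ is the anchor on the same side of $0$ as $v$ (so $|v-u_k|<2^k$). I will use the dyadic chain $v=w_0,w_1,\dots,w_K=u_k$ with $w_j\in 2^j\mathbb Z$, $|w_{j+1}-w_j|\le 2^j$, and $K\le k$. Because the entire chain stays inside the slab of width $2^{k+1}$, at each scale $j$ only $O(2^{k-j+1})$ chain pairs ever appear, rather than $O(L/2^j)$. Applying Lemma~\ref{lem:holder} with per-scale penalty $g_{j,k}:=(1+k-j)^{1/(2\alpha)}$ and threshold $t_k g_{j,k}$, a union bound gives
\[
\P\Bigl(\exists v\in S_k,\exists j:\,|(\varphi^\eta_{w_j}-\varphi^\eta_{w_{j+1}})\cdot e|>t_k g_{j,k}\sqrt{f(\lambda)}\,2^{j\xi}\Bigr)\le C\sum_{j=0}^{k}2^{k-j+1}e^{-ct_k^{2\alpha}(1+k-j)}\le Ce^{-ct_k^{2\alpha}},
\]
the geometric series converging once $ct_k^{2\alpha}>\log 2$ (which is automatic once $ct^{2\alpha}>1$; for smaller $t$ the statement is trivial). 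On the complementary event, telescoping yields
\[
|(\varphi^\eta_v-\varphi^\eta_{u_k})\cdot e|\le t_k\sqrt{f(\lambda)}\sum_{j=0}^{k}g_{j,k}\,2^{j\xi}\le C(\xi)\,t_k\sqrt{f(\lambda)}\,2^{k\xi},
\]
using $\sum_{\ell\ge 0}(1+\ell)^{1/(2\alpha)}2^{-\ell\xi}<\infty$, which is finite because $\xi>0$.

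A union of the in-shell events over $k\in\{0,\dots,m\}$ is summable exactly as in the anchor step, so combining everything with the triangle inequality $|\varphi^\eta_v\cdot e|\le|(\varphi^\eta_v-\varphi^\eta_{u_k})\cdot e|+|\varphi^\eta_{u_k}\cdot e|$ delivers the claimed bound with probability at least $1-Ce^{-c't^{2\alpha}}$. The core difficulty is calibrating the in-shell penalty $g_{j,k}$: it must grow fast enough in $k-j$ for the union bound $2^{k-j+1}\exp(-ct_k^{2\alpha}g_{j,k}^{2\alpha})$ to decay geometrically, yet slow enough that $\sum_j g_{j,k}2^{(j-k)\xi}$ remains bounded, which forces the linear-in-$(k-j)$ growth of $g_{j,k}^{2\alpha}$ and relies essentially on $\xi>0$.
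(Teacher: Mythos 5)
Your proposal is essentially sound and captures the two key points: that the double logarithm arises from the arithmetic $\sum_k(m-k+2)^{-ct^{2\alpha}}\le Ce^{-c't^{2\alpha}}$, dominated by $k$ closest to $m$; and that the in-shell penalty $g_{j,k}$ must grow fast enough in $k-j$ to beat the factor $2^{k-j}$ in the union bound while keeping $\sum_{\ell\ge 0}(1+\ell)^{1/(2\alpha)}2^{-\ell\xi}$ finite, which forces $g_{j,k}^{2\alpha}$ linear in $k-j$ and needs $\xi>0$. This is a genuinely different organization from the paper. The paper uses no anchors and no explicit shell decomposition: it defines a single event bounding every increment at every scale by the interpolated threshold $\max(r_u,r_v)^{\xi/2}2^{j\xi/2}\log(m-j+1)^{1/(2\alpha)}$ (where the factor $\max(r_u,r_v)^{\xi/2}2^{j\xi/2}$ simultaneously makes the chain sum geometric and suppresses, in the union bound, pairs far from the boundary), and then chains $v$ directly to a vertex of $\Lambda_L^c$ where the surface vanishes. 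Your version replaces the interpolation by an explicit shell index $k$ and the product penalty $(1+k-j)^{1/(2\alpha)}\log(m-k+2)^{1/(2\alpha)}$; both choices sandwich the right threshold and both union bounds close.

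There is, however, a concrete bug in your in-shell step. You demand a dyadic chain $v=w_0,\dots,w_K=u_k$ with $w_j\in 2^j\mathbb Z$, but $u_k$ is defined by $r_{u_k}=2^k$, i.e., $u_{k,+}=L-2^k+1$ on the positive side, which for a generic $L$ is not a multiple of any $2^K$ with $K\ge 1$. So the endpoint constraint $w_K=u_k$ and the lattice constraint $w_K\in 2^K\mathbb Z$ are incompatible and the chain does not exist. The cleanest repair is to drop the anchors altogether and chain $v$ directly to the boundary of $\Lambda_L$ as the paper does: all chain vertices still lie within a slab of width $O(2^k)$ of the boundary, so your pair count $O(2^{k-j})$ per scale and your thresholds are unchanged, the telescoping ends at a vertex where $\varphi$ vanishes, and Lemma~\ref{lem:loc} is never needed. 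Alternatively one could relocate the anchors to $2^k\mathbb Z$, accepting that $r_{u_k}$ is then only in $[1,2^k]$ and that the chain terminus may still miss $u_k$ by one step, requiring one extra link bounded via Lemma~\ref{lem:holder}. With the boundary-chaining repair your proof is correct and constitutes a valid alternative to the paper's argument.
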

 
\begin{proof}
  The proof is almost identical to the proof of Corollary~\ref{cor:loc2}.  Let $m:=\lfloor \log _2 L \rfloor $ and $t>0$ sufficiently large. Define the event
    \begin{equation*}
        \mathcal E := \bigcap _{j=1}^{m} \Big\{ \forall u,v\in 2^j \mathbb Z  , |u-v|= 2^{j}, \  |(\varphi^ \eta _u -\varphi^ \eta _v)\cdot e| \le t\sqrt {f(\lambda)}\max(r_v,r_u)^ {\frac \xi2} 2^{j\frac{\xi}{2}} \log (m-j+1)^{\frac{1}{2\alpha }} \Big\}.
    \end{equation*}
    Note that for each $j\in \{1,\dots,m\}$ and $k\in\{1,\dots , 2^{m+1-j}\}$ there is at most two distinct vertices in $\Lambda_L\cap 2^j \Z$ such that $(k-1)2^j\le r_u< k 2^j$
    By Lemma~\ref{lem:holder} we have that 
    \begin{equation}
        \mathbb P (\mathcal E ^c) \le C\sum  _{j=1}^m  \sum_{k=1}^ {2^ {m+1-j}}\exp \big( -ct^{2\alpha} k^{\alpha \xi}\log (m-j+1) \big) \le Ce^{-ct^{2\alpha}},
    \end{equation}
    where the last inequality holds when $t$ is sufficiently large. Next, we let $v\in \Lambda _L$ and bound $|\varphi^ \eta _v\cdot e|$ on the event $\mathcal E$. Let $k:=\lceil \log _2r_v \rceil $. Without loss of generality, we assume that $v\ge 0$. There exists a sequence of vertices $v_0,v_1,\dots ,v_k$ such that $v_0=v$, for all $j<k$, $v_j\in 2^j\mathbb Z $ and $v_{j+1}-v_j\in \{0, 2^j\}$ and $v_k\in\Lambda_L^c$. In particular, we have $\varphi_{v_k}^\eta =0$ and $r_{v_0}\ge \dots\ge r_{v_k}$. Thus, we have on the event $\mathcal E $
    \begin{equation}
    \begin{split}
        | \varphi^ \eta _v \cdot e | \le \sum _{j=1}^k |(\varphi^ \eta _{v_{j}}-\varphi^ \eta _{v_{j-1}})\cdot e |  &\le t\sqrt {f(\lambda)}\sum _{j=1}^kr_{v}^ {\frac \xi 2} 2^{\frac{(j-1)\xi}{2}} \log (m-j+2)^{1/(2\alpha)} \\&\le Ct \sqrt {f(\lambda)}r_v^{\xi} \left(\log \log \frac {3L}{r_v}\right)^ {\frac 1{2\alpha}},
        \end{split}
    \end{equation}
    as needed.
\end{proof}

\subsection{Average height localization}\label{sec:average height localization}
In this section we prove the bound \eqref{eq:ge direction of scaling relation} in Theorem~\ref{theorem:half scaling relation}.
Let $L\ge 1$ and $e\in \mathbb R^ n$.
We define
\[\forall x\in\Z^ d\quad s(v):=\frac {e}{2d|\Lambda_L|}\E_v\left[\tau_{\Lambda_L}\right]\]
where $\tau _\Lambda $ is the first exit time of $\Lambda $ for a simple discrete time random walk on $\Z^d$. By conditioning on the first step of the random walk we have for all $v\in \mathbb Z ^d$
\[s(v)=\frac{e}{2d|\Lambda_L|}+\frac{1}{2d}\sum_{u:u\sim v}s(u)\]
and therefore
\[\Delta s(v)=\sum_{u:u\sim v}(s(u)-s(v))=-\frac{e}{|\Lambda_L|}.\]
Moreover, it is easy to check that
\begin{equation}
\E_v[\tau_{\Lambda_L}]\le \sum_{i=1}^d\E_{v_i}[\tau^i_{\Lambda_L}]
\end{equation}
where $\tau^i_{\Lambda_L}$ denotes the exit time of $\{-L,\dots,L\}$ for a simple discrete time random walk on $\Z$ starting at $v_i$.
Using Lemma~\ref{lem:88}, we have
\begin{equation*}
\|s(v)\|= \frac{1}{2d|\Lambda_L|}\E_v[\tau_{\Lambda_L}]\le CL^{2-d}.
\end{equation*}
It yields
\begin{equation*}
    \|\nabla s\|_\Lambda^2=-\sum_{v\in\Lambda_L} s(v)\cdot \Delta s(v) \le C L^ {2-d}.
\end{equation*}
We conclude by applying Lemma \ref{lem:fluctuation and concentration} (inequality~\eqref{eq:lower bound on fluctuations}) with $\gamma=\med(\GE^{\eta,\lambda,\Lambda_{L}}) $.

\section{Delocalization}\label{sec:deloc}
\global\long\def\indic#1{\mathbf{1}_{#1}}
This section provides the proofs of the delocalization results in Theorems \ref{thm:delocalization intro}, \ref{theorem:other half of scaling relation}, \ref{thm:scalingrelation<} and \ref{thm:lower bounds on energy fluctuations by localization}. We fix $\lambda>0$ throughout and remove it from the notation.
\subsection{Preliminary results}

We start by proving some preliminary key results. Let $\Lambda\subset \Z^d$. Recall that $H^{\eta,\lambda,\Lambda}$ and $\Omega^\Lambda$ were defined in Section \ref{subsection : the model}. In this subsection, $\Lambda$ is fixed and we remove it from the notation.
\subsubsection{Deterministic results}
Let $\Pi\subset\Omega^{\Lambda}$ be a closed set of surfaces
and let $\eta$ be an environment such that $H^{\eta}$ admits a minimum on $\Pi$. Denote $\varphi^{\eta,\Pi}\coloneqq\arg\min_{\varphi\in\Pi}H^{\eta}(\varphi)$
and $\GE_{\Pi}^{\eta}\coloneqq H^{\eta}(\varphi^{\eta,\Pi})$. The following proposition, which builds upon the main identity (Proposition~\ref{prop:main identity}), provides the main deterministic input of our proofs of delocalization. 

\begin{prop}\label{prop:Deloc2}
Let $\Lambda\subset\Z^{d}$. Let $\Pi\subset\Omega^{\Lambda}$ be a closed set.
Let $s:\Z^{d}\to\R^{n}$ be supported on $\Lambda$. Let $\eta$ and $\zeta$
be two environments satisfying that $H^{\xi}$ admits a finite minimum on $\Omega ^{\Lambda,\tau}$ and on $\Pi$ when $\xi\in\{\eta,\zeta,\eta^s,\zeta^s,\eta ^{-s},\zeta^{-s}\}$. Denote 
\begin{align*}
\Delta_{1} & :=\GE_{\Pi}^{\eta}-\GE_{\Pi}^{\zeta}\\
\Delta_{2} & :=\inf_{\varphi\in (\Pi+s)\cup(\Pi-s)}H^\zeta(\varphi)-H^\eta(\varphi)\\
\Delta_{3} & :=\frac{1}{2}(\varphi^{\zeta^{s}}-\varphi^{\zeta^{-s}},-\Delta s)\\
\Delta_{2}' & :=\frac{1}{2}\big(\GE_{\Pi}^{\zeta^{s}}+\GE_{\Pi}^{\zeta^{-s}}-\GE_{\Pi}^{\eta^{s}}-\GE_{\Pi}^{\eta^{-s}}\big)\\
\Delta_{3}' & :=\frac{1}{2}(\varphi^{\eta^{s},\Pi}-\varphi^{\eta^{-s},\Pi},-\Delta s).
\end{align*}
Then
\begin{equation}\label{eq:deterministic delocalization}
\{\Delta_1+\Delta_2+\Delta_3>\|\nabla s\|_{\Lambda}^{2}\}\cup \{\Delta_1+\Delta_2'+\Delta_3'>\|\nabla s\|_{\Lambda}^{2}\}\subset \{\varphi^{\zeta^{s}}\notin\Pi\}\cup \{\varphi^{\zeta^{-s}}\notin\Pi\}\cup  \{\varphi^{\eta}\notin\Pi\}.
\end{equation}
\end{prop}

In our use of the proposition, the set $\Pi$ corresponds to localized surfaces, in a suitable sense, and the disorders $\eta$ and $\zeta$ correspond to the original disorder and a resampled (or coupled) new disorder, respectively.

The containment~\eqref{eq:deterministic delocalization} describes two mechanisms leading to delocalization: either $\Delta_1+\Delta_2+\Delta_3$ is large or $\Delta_1+\Delta_2'+\Delta_3'$ is large. Our delocalization results will then use probabilistic arguments to lower bound the probability that the desired inequality is satisfied.

One advantage of the quantity $\Delta_3$ is that it may be controlled via Markov's inequality for general sets $\Pi$; see Corollary~\ref{cor:markov}. This is used (for $\Delta_3$) in our proof of Theorem~\ref{thm:delocalization intro} in order to show that not only the maximum of the minimal surface is large but also that a positive proportion of its heights are large.

In several places, we will make use of the shift function of the following proposition.
\begin{prop}
\label{prop:const_laplace}Fix $d\in\N$, $\ep>0$. Let $L\ge 1$.
Write $\Lambda_{L}^{-}:=\Lambda_{\left\lceil \left(1-\frac{\ep}{2d}\right)L\right\rceil }$
and note that $|\Lambda_{L}^{-}|/|\Lambda_{L}|\ge1-\ep$. There is
a function $\pi:\Z^{d}\to\R$ satisfying the following conditions:
\begin{align}
\pi_{v} & =0 & \,\,\,\,\forall v\notin\Lambda_{L}\label{eq:s_0}\\
\Delta\pi_{v} & =O_{\ep,d}(\frac{1}{L^{2}}) & \,\,\,\,\forall v\in\Z^{d}\label{eq:s_laplace}\\
\pi_{v} & \ge1, & \,\,\,\,\forall v\in\Lambda_{L}^{-}\label{eq:s_min}\\
\|\nabla\pi\|^{2} & =O_{\ep,d}(L^{d-2}).\label{eq:s_energy}
\end{align}
\end{prop}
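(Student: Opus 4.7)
The plan is to construct $\pi$ as a tensor product of a smoothly cut-off one-dimensional bump. Set $\alpha:=\ep/(2d)$ and, for $L$ larger than some threshold $L_0(\ep,d)$, choose a $C^\infty$ function $g\colon\R\to[0,1]$ with $g\equiv 1$ on $[-(1-\alpha/2)L,(1-\alpha/2)L]$, $g\equiv 0$ outside $(-L,L)$, and $|g^{(k)}(x)|\le C_k(\ep,d)L^{-k}$ for every $k\ge 0$; such a $g$ exists because each transition region has length of order $\alpha L$. Define $\pi_v:=\prod_{i=1}^d g(v_i)$ for $v\in\Z^d$. For $L<L_0$ all four target bounds are already $O_{\ep,d}(1)$, so in that range the ad-hoc choice $\pi_v=\mathbf 1_{v\in\Lambda_L^-}$ trivially works.

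Property~\eqref{eq:s_0} is immediate since $v\notin\Lambda_L$ forces some $|v_i|\ge L+1$, where $g=0$. Property~\eqref{eq:s_min} holds because for $L\ge L_0$ one has $\lceil(1-\alpha)L\rceil\le(1-\alpha/2)L$, so every coordinate of $v\in\Lambda_L^-$ lies in the plateau where $g\equiv 1$, giving $\pi_v=1$. For~\eqref{eq:s_laplace} I treat $\pi$ as the restriction of the smooth function $f(x):=\prod_i g(x_i)$ to the lattice and use a fourth-order Taylor expansion,
\[
\pi_{v+e_i}+\pi_{v-e_i}-2\pi_v=\partial_{ii}^2 f(v)+O\bigl(\|\partial_i^4 f\|_\infty\bigr),
\]
which yields $|\Delta\pi_v|\le O_{\ep,d}(L^{-2})+O_{\ep,d}(L^{-4})$ uniformly in $v\in\Z^d$ from the derivative bounds on $g$ together with $|g|\le 1$. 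For~\eqref{eq:s_energy} I compare with the continuum Dirichlet energy: along each edge $|\pi_u-\pi_v|^2\le\int_0^1|\nabla f(u+t(v-u))|^2\,dt$, and summing over edges, using $\partial_i f=g'(x_i)\prod_{j\ne i}g(x_j)$ with $|g'|\le C/L$ supported on a set of measure $O(\alpha L)$, gives
\[
\|\nabla\pi\|_\Lambda^2\le C\int_{\R^d}|\nabla f|^2\,dx\le C\sum_{i=1}^d(2L)^{d-1}\int_{-L}^L g'(t)^2\,dt=O_{\ep,d}(L^{d-2}).
\]

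The main technical point is uniform control of $\Delta\pi$ \emph{near and just outside} $\partial\Lambda_L$: if one instead chose the natural quadratic $g(x)=(1-x^2/L^2)_+$ on $[-L,L]$, its one-sided derivative at $\pm L$ is of order $1/L$, and so at lattice vertices adjacent to $\partial\Lambda_L$ the discrete Laplacian would only be $O(1/L)$. Insisting on a smooth cutoff with $|g^{(k)}|=O(L^{-k})$ up to and past the boundary makes the Taylor remainder $O(L^{-4})$ uniformly and, in particular, forces the values of $\pi$ at lattice points bordering $\partial\Lambda_L$ to themselves be $O(L^{-2})$, which supplies the required Laplacian estimate there.
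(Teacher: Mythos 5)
Your construction (a tensor product of one--dimensional smooth cutoffs, equivalent to the paper's single smooth bump $p(v/L)$) and the Taylor-expansion argument for the Laplacian are essentially the same as the paper's proof. The only small departure is the energy bound: you compare to the continuum Dirichlet integral $\int_{\R^d}|\nabla f|^2$, a step that requires a bit more care to pass from the per-edge estimate to the $d$-dimensional integral, whereas the paper obtains \eqref{eq:s_energy} in one line from the already-established Laplacian bound via Green's identity $\|\nabla\pi\|^2=(\pi,-\Delta\pi)=O(L^{d-2})$.
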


\begin{proof}[Proof of Proposition \ref{prop:Deloc2}]
    We have
\begin{equation*}
\begin{split}
    \GE^{\eta}\le H^{\eta}(\varphi^{\zeta^s,\Pi}-s)&\le  H^{\zeta}(\varphi^{\zeta^s,\Pi}-s)-\Delta_2= \GE^{\zeta^s}_\Pi+ (\varphi^{\zeta^s,\Pi},\Delta s)+ \frac{1}{2}\|\nabla s\|_\Lambda^2-\Delta_2
    \end{split}
\end{equation*}
where we used Proposition~\ref{prop:main identity} in the last equality.
Similarly,
\begin{equation*}
\begin{split}
    \GE^{\eta}\le  \GE^{\zeta^{-s}}_\Pi+ (\varphi^{\zeta^{-s},\Pi},-\Delta s)+ \frac{1}{2}\|\nabla s\|_\Lambda^2-\Delta_2.
    \end{split}
\end{equation*}
Summing the two previous inequalities, we get on the event $\{\varphi^{\zeta^{s}}\in\Pi\}\cap \{\varphi^{\zeta^{-s}}\in\Pi\}$
\begin{equation}\label{eq:000}
    2\GE^{\eta}\le \GE^{\zeta^s}_\Pi+\GE^{\zeta^{-s}}_\Pi+ \|\nabla s\|_\Lambda^2-2\Delta_2-2\Delta_3.
\end{equation}
Moreover, we have 
\begin{equation}
\GE^ {\zeta^{s}}\le H^ {\zeta^ s}(\varphi^ {\zeta ,\Pi}+s)= \GE^ {\zeta}_\Pi+(\varphi^ {\zeta ,\Pi},-\Delta s)+\frac{1}{2}\|\nabla s\|_\Lambda^2
\end{equation}
where we used Proposition~\ref{prop:main identity} in the last equality.
Similarly,
\begin{equation}
\GE^ {\zeta^{-s}}\le \GE^ {\zeta}_\Pi+(\varphi^ {\zeta ,\Pi},\Delta s)+\frac{1}{2}\|\nabla s\|_\Lambda^2.
\end{equation}
So summing the two previous inequalities yields
\begin{equation}\label{eq:001}
\GE^ {\zeta ^{ s}}+ \GE^ {\zeta ^{- s}}\le 2\GE^ {\zeta}_\Pi+\|\nabla s\|_\Lambda^2\le 2\GE^ {\eta}_\Pi- 2\Delta_1 +\|\nabla s\|_\Lambda^2.
\end{equation}
Finally, on the event $\{\varphi^{\zeta^{s}}\in\Pi\}\cap \{\varphi^{\zeta^{-s}}\in\Pi\}$ we get, combining inequalities \eqref{eq:000} and \eqref{eq:001},
\begin{equation}
    2\GE^\eta\le 2\GE^\eta_\Pi +2\|\nabla s\|_\Lambda^2-2(\Delta_1+\Delta_2+\Delta_3).
\end{equation}
Note that $ \GE^\eta< \GE^\eta_\Pi$ implies that $\varphi^{\eta}\notin\Pi$. 
It follows that 
\[\{\Delta_1+\Delta_2+\Delta_3>\|\nabla s\|_{\Lambda}^{2}\}\subset \{\varphi^{\zeta^{s}}\notin\Pi\}\cup \{\varphi^{\zeta^{-s}}\notin\Pi\}\cup  \{\varphi^{\eta}\notin\Pi\}.\]
We are left to prove 
\[\{\Delta_1+\Delta_2'+\Delta_3'>\|\nabla s\|_{\Lambda}^{2}\}\subset \{\varphi^{\zeta^{s}}\notin\Pi\}\cup \{\varphi^{\zeta^{-s}}\notin\Pi\}\cup  \{\varphi^{\eta}\notin\Pi\}.\]
By Proposition~\ref{prop:main identity} and it holds that
\begin{align*}
\GE^{\eta} & \le H^{\eta}(\varphi^{\eta^{s},\Pi}-s)=H^{\eta^{s}}(\varphi^{\eta^{s},\Pi})-(\varphi^{\eta^{s},\Pi},-\Delta s)+\frac{1}{2}\|\nabla s\|_{\Lambda}^{2},\\
\GE^{\eta} & \le H^{\eta}(\varphi^{\eta^{-s},\Pi}+s)=H^{\eta^{-s}}(\varphi^{\eta^{-s},\Pi})+(\varphi^{\eta^{-s},\Pi},-\Delta s)+\frac{1}{2}\|\nabla s\|_{\Lambda}^{2}
\end{align*}
By summing the two previous inequalities, we get
\begin{equation}\label{eq:deloc1}
    2 \GE^{\eta}\le \GE_{\Pi}^{\eta^{s}}+ \GE_{\Pi}^{\eta^{-s}}-2\Delta_3'+ \|\nabla s\|_{\Lambda}^{2}=  \GE_{\Pi}^{\zeta^{s}}+ \GE_{\Pi}^{\zeta^{-s}}-2(\Delta_2'+\Delta_3')+ \|\nabla s\|_{\Lambda}^{2}.
\end{equation}
By inequality \eqref{eq:001}, we have
\begin{equation*}
\GE^ {\zeta ^{ s}}+ \GE^ {\zeta ^{- s}}\le 2\GE^ {\eta}_\Pi- 2\Delta_1 +\|\nabla s\|_\Lambda^2.
\end{equation*}
Assume $\{\varphi^{\zeta^{s}}\in\Pi\}\cap \{\varphi^{\zeta^{-s}}\in\Pi\}$, then by inequality \eqref{eq:deloc1}, we get
\begin{equation*}
    2 \GE^{\eta}\le   \GE^{\zeta^{s}}+ \GE^{\zeta^{-s}}- 2(\Delta_2'+\Delta_3')+ \|\nabla s\|_{\Lambda}^{2}.
\end{equation*}
Combining the two previous inequalities yields
\begin{align*}
    2 \GE^{\eta}\le 2\GE^{\eta}_\Pi -2(\Delta_1+\Delta_2'+\Delta_3')+ 2\|\nabla s\|_{\Lambda}^{2}.
\end{align*}
It follows that 
\[\{\varphi^{\zeta^{s}}\in\Pi\}\cap \{\varphi^{\zeta^{-s}}\in\Pi\}\cap\{\Delta_1+\Delta'_2+\Delta'_3>\|\nabla s\|_{\Lambda}^{2}\}\subset  \{\varphi^{\eta}\notin\Pi\}.  \]
Hence,
\[\{\Delta_1+\Delta_2'+\Delta_3'>\|\nabla s\|_{\Lambda}^{2}\}\subset \{\varphi^{\zeta^{s}}\notin\Pi\}\cup \{\varphi^{\zeta^{-s}}\notin\Pi\}\cup  \{\varphi^{\eta}\notin\Pi\}.\]
The result follows.
\end{proof}
\begin{proof}[Proof of Proposition \ref{prop:const_laplace}]
Fix $d\in\N,\ep>0$. Denote $A\coloneqq(-1+\frac{\ep}{3d},1-\frac{\ep}{3d})^{d}$.
Fix a smooth bump function $p:\R^{d}\to [0,1]$ such that
\begin{equation*}
    p(x)=\begin{cases}
        1, \quad x\in A\\
        0,\quad  x\notin (-1,1)^d.
    \end{cases}
\end{equation*}
Let $L\ge 1$. For small values of $L$, we may define $\pi_{v}=1$ for $v\in\Lambda_{L}^{-}$
and $\pi_{v}=0$ otherwise.

For sufficiently large $L$ define $\pi_{v}=p(v/L)$. Then (\ref{eq:s_0})
holds. Since $L$ is sufficiently large, $\Lambda_{L}^{-}\subset LA$
holds, and implies (\ref{eq:s_min}).

Using a second-order Taylor expansion of $p$ we obtain that for all $v\in \mathbb Z ^d$ and $1\le i\le d$ we have  
\begin{equation*}
    \pi _{v\pm e_i}-\pi _v=\pm \frac{1}{L} \partial _i p (v/L)+O_{\ep,d}(L^{-2}).
\end{equation*}
It follows that $(\Delta \pi )_ v = O_{\ep,d}(L^{-2})$ as the first order cancels out. This finishes the proof of \eqref{eq:s_laplace}. Finally, \eqref{eq:s_energy} follows from \eqref{eq:s_laplace} and Green's identity
\begin{equation*}
    \|\nabla \pi \|^2=(\nabla \pi ,\nabla \pi )=(\pi ,-\Delta \pi ) \le O_{\ep,d}(L^{d-2}).\qedhere
\end{equation*} 
\end{proof}

\subsubsection{Probabilistic results}
For $s\in\Omega^ \Lambda$,
define $H_{s}^{\eta}(\varphi)\coloneqq H^{\eta^{-s}}(\varphi-s)$.
Accordingly, let $\varphi_{s}^{\eta}\coloneqq\varphi^{\eta^{-s}}+s$ be the
minimizer of $H_{s}^{\eta}(\varphi)$. The next proposition reveals a non-obvious positivity statement and harnesses it via Markov's inequality.
\begin{prop}
[\ref{as:exiuni}]\label{prop:close_surf_markov}Fix $\lambda,\Lambda$ and $\eta$ be a random environment.
Let $s\in\Omega^{\Lambda}$. Denote $D:=(\varphi_s^{\eta}-\varphi^{\eta},-\Delta s\,)$.
Then, $0\le H^{\eta}(\varphi_{s}^{\eta})-H^{\eta}(\varphi^{\eta})\le D$
and for every $a>0$ if  the process $(\varphi^ {\eta^ {-ks}})_{k\in\Z}$ is stationary, then

\begin{align*}
\P(\,D & \le a\|\nabla s\|^{2}\,)\ge1-\frac{1}{a}.
\end{align*}
\end{prop}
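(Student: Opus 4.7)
The strategy is to show that $D$ is a nonnegative random variable with $\E[D]=\|\nabla s\|^{2}$, and then conclude by Markov's inequality.

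To establish the pointwise bound $0\le H^{\eta}(\varphi_{s}^{\eta})-H^{\eta}(\varphi^{\eta})\le D$, the lower bound is immediate from the minimality of $\varphi^{\eta}$. For the upper bound, I apply Proposition~\ref{prop:main identity} twice. First, to the pair $(\eta^{-s},\varphi^{\eta^{-s}})$ with shift $s$, yielding
\[
H^{\eta}(\varphi_{s}^{\eta}) = \GE^{\eta^{-s}} + (\varphi^{\eta^{-s}},-\Delta s) + \tfrac{1}{2}\|\nabla s\|^{2}.
\]
Second, since $\varphi^{\eta^{-s}}$ minimizes $H^{\eta^{-s}}$, we have $\GE^{\eta^{-s}}\le H^{\eta^{-s}}(\varphi^{\eta}-s)$, and applying the main identity to $(\eta,\varphi^{\eta})$ with shift $-s$, combined with $(s,-\Delta s)=\|\nabla s\|^{2}$ (Green's identity, using $\supp(s)\subseteq\Lambda$), gives $H^{\eta^{-s}}(\varphi^{\eta}-s)=\GE^{\eta}-(\varphi^{\eta},-\Delta s)+\tfrac{1}{2}\|\nabla s\|^{2}$. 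Combining these yields
\[
H^{\eta}(\varphi_{s}^{\eta}) - H^{\eta}(\varphi^{\eta}) \le (\varphi^{\eta^{-s}}-\varphi^{\eta},-\Delta s) + \|\nabla s\|^{2} = D,
\]
so $D\ge 0$ almost surely.

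To compute $\E[D]$, I write $D=(\varphi^{\eta^{-s}}-\varphi^{\eta},-\Delta s)+\|\nabla s\|^{2}$. Since $-\Delta s$ is supported on the finite set $\Lambda^{+}$, the cross term is a finite linear combination of coordinates of $\varphi^{\eta^{-s}}-\varphi^{\eta}$. Stationarity of $(\varphi^{\eta^{-ks}})_{k\in\Z}$ gives $\varphi^{\eta^{-s}}\eqd\varphi^{\eta}$, so each $\E[\varphi^{\eta^{-s}}_{v}]=\E[\varphi^{\eta}_{v}]$ and the cross-term expectation vanishes; the pathwise bound $(\varphi^{\eta^{-s}}-\varphi^{\eta},-\Delta s)\ge -\|\nabla s\|^{2}$ controls the negative part and ensures the expectation is well-defined, yielding $\E[D]=\|\nabla s\|^{2}$. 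Markov's inequality applied to the nonnegative variable $D$ then concludes
\[
\P(D > a\|\nabla s\|^{2}) \le \frac{\E[D]}{a\|\nabla s\|^{2}} = \frac{1}{a}.
\]

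The main obstacle will be rigorously justifying $\E[(\varphi^{\eta^{-s}}-\varphi^{\eta},-\Delta s)]=0$ when the individual coordinates $\varphi^{\eta}_{v}$ may fail to be integrable. This can be handled by truncation combined with the full stationarity of $(\varphi^{\eta^{-ks}})_{k}$, using the pathwise lower bound on $D$ to control the negative part and monotone convergence on the positive part so that the coordinate-wise distributional equality can be transferred to a cancellation at the level of expectations.
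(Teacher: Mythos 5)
Your proof is correct and follows essentially the same route as the paper: establish the sandwich $0\le H^{\eta}(\varphi_{s}^{\eta})-H^{\eta}(\varphi^{\eta})\le D$ via the main identity and minimality of $\varphi^{\eta^{-s}}$, compute $\E D$ from stationarity, and apply Markov. The paper organizes the sandwich slightly more compactly by introducing $H_s^\eta(\varphi)=H^{\eta^{-s}}(\varphi-s)$ and using $H^{\eta}(\varphi)=H_s^{\eta}(\varphi)-\tfrac12\|\nabla s\|^2+(\varphi,-\Delta s)$ to get $H^{\eta}(\varphi_s^\eta)-H^{\eta}(\varphi^\eta)=\big(H_s^{\eta}(\varphi_s^\eta)-H_s^{\eta}(\varphi^\eta)\big)+D\le D$, since $\varphi_s^\eta$ minimizes $H_s^\eta$; this is just a repackaging of your two applications of Proposition~\ref{prop:main identity}.

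On the integrability wrinkle you flag at the end: the truncation-plus-monotone-convergence sketch does not quite close by itself, because truncating $A_k:=(\varphi^{\eta^{-ks}},-\Delta s)$ does not preserve the increment structure, so the distributional equality of the truncations does not directly yield a cancellation of the untruncated expectations. A clean way to exploit the full two-sided stationarity is Birkhoff's ergodic theorem: the increments $Y_k:=A_{k+1}-A_k$ form a stationary sequence bounded below by $-\|\nabla s\|^2$ (your pointwise bound), so $\tfrac1n\sum_{k<n}Y_k=\tfrac{A_n-A_0}{n}\to\E[Y_0\mid\mathcal{I}]$ almost surely; since $A_n\eqd A_0$, the left side also tends to $0$ in probability, forcing $\E[Y_0\mid\mathcal{I}]=0$ almost surely and hence $\E D=\|\nabla s\|^2$. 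The paper's own proof passes over this point just as yours does, so this is a refinement rather than a gap, but it is worth knowing what the correct resolution looks like.
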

The following corollary follows easily by applying the previous proposition to the disorder~$\eta^s$ and the shift $2s$. As mentioned, it is useful for bounding the quantity $\Delta_3$ of Proposition~\ref{prop:Deloc2}.
\begin{cor}[\ref{as:exiuni}]\label{cor:markov}Fix $\lambda,\Lambda$ and $\eta$.
Let $s\in\Omega^{\Lambda}$. If  the process $(\varphi^ {\eta^ {-ks}})_{k\in\Z}$ is stationary then
\begin{align*}
\P \big( (\varphi^{\eta^{-s}}-\varphi^{\eta^s},-\Delta s\,) & \le (a-2)\|\nabla s\|^{2}\, \big)\ge1-\frac{2}{a}.
\end{align*}
    
\end{cor}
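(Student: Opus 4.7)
The plan is to apply Proposition~\ref{prop:close_surf_markov} to the shifted environment $\tilde\eta:=\eta^{s}$ and the doubled shift $\tilde s:=2s$. Let me first check the stationarity hypothesis: we need the process $(\varphi^{\tilde\eta^{-k\tilde s}})_{k\in\Z}=(\varphi^{\eta^{s-2ks}})_{k\in\Z}=(\varphi^{\eta^{-(2k-1)s}})_{k\in\Z}$ to be stationary. Since this is just the subsequence of the given stationary process $(\varphi^{\eta^{-\ell s}})_{\ell\in\Z}$ along the odd integers $\ell=2k-1$, and shifting the index $k\mapsto k+m$ corresponds to the index shift $\ell\mapsto \ell+2m$ on the original process, stationarity is immediate.

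Next I would unpack the quantity $D$ from Proposition~\ref{prop:close_surf_markov} applied to $(\tilde\eta,\tilde s)$. By definition, $\varphi^{\tilde\eta}_{\tilde s}=\varphi^{\tilde\eta^{-\tilde s}}+\tilde s=\varphi^{\eta^{-s}}+2s$ while $\varphi^{\tilde\eta}=\varphi^{\eta^s}$. Therefore
\begin{equation*}
D=(\varphi^{\tilde\eta}_{\tilde s}-\varphi^{\tilde\eta},\,-\Delta\tilde s)=2(\varphi^{\eta^{-s}}-\varphi^{\eta^{s}},-\Delta s)+4(s,-\Delta s)=2(\varphi^{\eta^{-s}}-\varphi^{\eta^{s}},-\Delta s)+4\|\nabla s\|^{2},
\end{equation*}
using Green's identity $(s,-\Delta s)=\|\nabla s\|^{2}$ (valid since $s$ is supported on $\Lambda$). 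Also $\|\nabla\tilde s\|^{2}=4\|\nabla s\|^{2}$.

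Plugging these into the conclusion of Proposition~\ref{prop:close_surf_markov} with parameter $a'>0$ yields
\begin{equation*}
\P\Big(2(\varphi^{\eta^{-s}}-\varphi^{\eta^{s}},-\Delta s)+4\|\nabla s\|^{2}\le 4a'\|\nabla s\|^{2}\Big)\ge 1-\frac{1}{a'},
\end{equation*}
which rearranges to
\begin{equation*}
\P\Big((\varphi^{\eta^{-s}}-\varphi^{\eta^{s}},-\Delta s)\le (2a'-2)\|\nabla s\|^{2}\Big)\ge 1-\frac{1}{a'}.
\end{equation*}
Substituting $a:=2a'$ gives exactly the asserted bound. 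There is no real obstacle — the only mild subtlety is the verification of the stationarity of $(\varphi^{\eta^{-k\tilde s}})_{k\in\Z}$, which is handled by recognizing it as a subprocess of the given stationary process indexed along an arithmetic progression.
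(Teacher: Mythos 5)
Your proof is correct and follows exactly the route the paper indicates (applying Proposition~\ref{prop:close_surf_markov} to the disorder $\eta^{s}$ and the shift $2s$); the computation unpacking $D$ and the observation that the odd-index subsequence of a stationary sequence is stationary are both valid.
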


We will also need the following Gaussian decomposition for $\eta^\white$. Let
\begin{equation}\label{eq:ball def}
B_{r}(s):=\{t\in\R^n\colon\|t-s\|\le r\}
\end{equation}
denote a ball of radius $r$ around $s$ in $\R^{n}$.
\begin{lemma}[Decomposition of $\eta^{white}$]\label{lem : decomposition noise-1}Let
${\rm S\subset\Lambda\times\mathbb{R}^{n}}$ be a bounded region of
the form $\Lambda'\times B_{r}(0)$ for $r\ge1$ and $\Lambda'\subset\Lambda$.
Set 
\[
\eta^{{\rm S}}:=\frac{\int b(t)dt}{{\rm Vol}(S)}\sum_{v\in\Lambda'}W_{v}(\indic{B_{r}(0)}).
\]
For all $v\in\Lambda$ and $t\in\R^{n}$ denote $\kappa_{v,t}\coloneqq\frac{\cov(\eta_{v,t},\eta^{{\rm S}})}{\var(\eta^{{\rm S}})}$
and define 
\[
\eta_{v,t}^{\perp}:=\eta_{v,t}-\kappa_{v,t}\eta^{{\rm S}}.
\]
Then, $\eta^{{\rm S}}$ and $(\eta_{v,t}^{\perp})_{v\in\Lambda,t\in\R^{n}}$
are independent, $\eta^{{\rm S}}\sim\mathcal{N}(0,\frac{\left(\int b(t)dt\right)^{2}}{{\rm {Vol}({\rm S)}}})$ and
\begin{equation}\label{eq:decomp_bound}
    \mathds{1}\{(v,t)\in\Lambda'\times B_{r-1}(0)\}\le\kappa_{v,t}\le\mathds{1}\{(v,t)\in\Lambda'\times B_{r+1}(0)\}.
\end{equation} 

\end{lemma} 
\begin{proof}[Proof of Lemma \ref{lem : decomposition noise-1}]
Since $\eta$ is a Gaussian process, the random variables $\eta^{{\rm S}}$
and $\eta_{v,t}^{\perp}$ are centered gaussian random variables.
We have 
\[
\E[\eta_{v,t}^{\perp}\eta^{{\rm S}}]=\cov(\eta_{v,t},\eta^{{\rm S}})-\cov(\eta_{v,t},\eta^{{\rm S}})=0.
\]
Then, by a property of Gaussian processes, it follows that $\eta^{{\rm S}}$
and $(\eta_{v,t}^{\perp})_{v\in\Lambda,t\in\R^{n}}$ are independent.
Besides, we have 
\[
\begin{split}\E[(\eta^{{\rm S}})^{2}] & =\E\left[\sum_{v\in\Lambda'}\left(W_{v}\left(\frac{\indic{B_{r}(0)}\int b(t)dt}{{\rm Vol}({\rm S)}}\right)\right)^{2}\right]=\frac{\left(\int b(t)dt\right)^{2}}{{\rm {Vol}({\rm S)^{2}}}}\int(\indic S)^{2}=\frac{\left(\int b(t)dt\right)^{2}}{{\rm {Vol}({\rm S)}}}\end{split}
\]
For $v\notin\Lambda'$, we have $\E[\eta^{{\rm S}}\eta_{v,t}]=0$.
Otherwise, we have 
\[
\begin{split}\E[\eta^{{\rm S}}\eta_{v,t}]=\E\left[W_{v}\left(\frac{\indic{B_{r}(0)}\int b(t)dt}{{\rm Vol}({\rm S)}}\right)W_{v}(b(\cdot-t))\right]=\frac{\int_{B_{r}(0)}b(s-t)ds\int b(s)ds}{\rm {Vol}({\rm S)}}\end{split}
\]
Note that since $b$ is non-negative, $0\le\int_{B_{r}(0)}b(s-t)ds\le\int b(s)ds$.
Also since $b(t)=0$ when $\|t\|\ge1$, it holds that 
\[
\int_{B_{r}(0)}b(s-t)ds=\begin{cases}
0 & \|t\|\ge r+1\\
1 & \|t\|\le r-1
\end{cases}.
\]
The last assertion of the lemma follows.
\end{proof}
\begin{proof}[Proof of Proposition \ref{prop:close_surf_markov}] Denote $\eta(\varphi)\coloneqq\sum_{v\in\Lambda}\eta_{v,\varphi_{v}}$
so that $H^{\eta}(\varphi)=\frac{1}{2}\|\nabla\varphi\|^{2}+\lambda\eta(\varphi)$
and more generally $H_{s}^{\eta}(\varphi)=\frac{1}{2}\|\nabla(\varphi-s)\|^{2}+\lambda\eta(\varphi)$.
Consequently, we may rewrite the main identity as 
\[
H_{s'}^{\eta}(\varphi)-H_{s}^{\eta}(\varphi)\coloneqq\frac{1}{2}\|\nabla s'\|^{2}-\frac{1}{2}\|\nabla s\|^{2}+(\varphi,-\Delta(s-s')\,).
\]
Hence
\[H^ {\eta}(\varphi)=H^\eta _s(\varphi)-\frac 12 \|\nabla s\|^ 2+(\varphi,-\Delta s).\]
This yields
\begin{align*}
0 & \le H^{\eta}(\varphi_{s}^{\eta})-H^{\eta}(\varphi^{\eta})\\
 & =H_{s}^{\eta}(\varphi_{s}^{\eta})-H_{s}^{\eta}(\varphi^{\eta})+(\varphi_{s}^{\eta}-\varphi^{\eta},-\Delta s\,)\\
 & \le0+D
\end{align*}
where we used in the last inequality that $\varphi^ \eta_s$ is the minimizer of $H_{s}^{\eta}$. By assumption, the process $(\varphi^ {\eta^ {-ks}})_{k\in\Z}$ is stationary. It yields that 
\[\E(\varphi^ \eta,-\Delta s)= \E(\varphi^ {\eta^ {-s}},-\Delta s)\]
and
\[\E D= (s,-\Delta s)=\|\nabla s \|^ 2.\]
The result follows by applying Markov's inequality.
\end{proof}

\subsection{The scaling inequality $\chi\le2\xi+d-2$}\label{sec:the scaling inequality chi less than xi}
In this section, we prove a stronger \emph{local} version of Theorem \ref{theorem:other half of scaling relation}. We then deduce Theorem \ref{thm:scalingrelation<} and inequality \eqref{eq:std upper bound} of Theorem \ref{thm: scaling relation d=1}. 

Denote by $\Lambda_{w,\ell}$ the box $w+\Lambda_{\lfloor \ell\rfloor}$. Recall the definition of $\eta[\Delta]$ from before Theorem~\ref{theorem:other half of scaling relation}.

\begin{theorem}[Local version of $\chi\le2\xi+d-2$. \ref{as:exiuni}+ \ref{as:stat}+\ref{as:indep}]\label{theorem:local other half of scaling relation} 
There exists $C>0$, depending only on the dimension $d$, such that the following holds for each $\Lambda\subset \Z^ d$, each $\lambda>0$ and each unit vector $e\in\R^n$. 
For all $h\ge 1$, $\ell\ge 1$, $w\in \Lambda$ such that $\Lambda_{w,2\ell}\subset \Lambda$, for any $ \Delta \subset\Lambda_{w,\ell} $, we have
\begin{equation}\label{eq:le direction of local scaling relation}
\P\left(\big|\GE^{\eta,\lambda,\Lambda}-\GE^{\eta[\Delta],\lambda,\Lambda}\big|\ge Ch^2\ell^{d-2}\,\big|\, \eta_{\Lambda\setminus \Lambda_{w,2\ell}}\right) \le C\P\left(\max_{v\in\Lambda_{w,2\ell}} \left|\varphi^{\eta,\lambda,\Lambda}_v\cdot e\right| \ge h\,\big|\, \eta_{\Lambda\setminus \Lambda_{w,2\ell}}\right).
\end{equation}
Furthermore, we have for all $h\ge 1$ and $w\in\Lambda$ 
\begin{equation}\label{eq:le direction of local scaling relation l=0}
\P\left(\big|\GE^{\eta,\lambda,\Lambda}-\GE^{\eta[\{w\}],\lambda,\Lambda}\big|\ge Ch^2\,\big|\, \eta_{\Lambda\setminus \Lambda_{w,1}}\right) \le C\P\left(\max_{v\in\Lambda_{w,1}} \left|\varphi^{\eta,\lambda,\Lambda}_v\cdot e\right| \ge h\,\big|\, \eta_{\Lambda\setminus \Lambda_{w,1}}\right).
\end{equation}
\end{theorem}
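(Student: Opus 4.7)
The plan is to apply Proposition~\ref{prop:Deloc2} with $\zeta := \eta[\Delta]$, the constraint set $\Pi := \{\varphi\in\Omega^\Lambda : \max_{v\in\Lambda_{w,2\ell}}|\varphi_v\cdot e|\le h\}$, and a shift $s$ constructed from Proposition~\ref{prop:const_laplace}. Namely, apply that proposition to the box $\Lambda_{w,2\ell}$ with $\ep$ small enough that $\Lambda_{w,\ell}\subset\Lambda_{w,2\ell}^-$, take $\pi$ to be the resulting plateau function (so $\pi\equiv 1$ on $\Delta$), and set $s := (2h+2)\pi e$. Then $s$ is supported on $\Lambda_{w,2\ell}$, $s_v\cdot e=2h+2$ on $\Delta$, $\sum_v|\Delta s_v|=O(h\ell^{d-2})$, and $\|\nabla s\|^2=O(h^2\ell^{d-2})$. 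For the $\ell=0$ case~\eqref{eq:le direction of local scaling relation l=0}, I will use instead $s := (2h+2)\mathbf{1}_{\{w\}}e$, for which $\|\nabla s\|^2=O(h^2)$ and $\Delta s$ is supported on $\{w\}\cup\{u:u\sim w\}\subset\Lambda_{w,1}$.

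The magnitude $2h+2$ is dictated by~\ref{as:indep}: for any $\varphi\in\Pi$ and $\varphi'\in\Pi\pm s$, the points $\varphi_v,\varphi'_v$ lie at Euclidean distance $\ge 2$ on $\Delta$, so the disorder values sampled along the two families are independent. Applying Proposition~\ref{prop:Deloc2}, on the event $G := \{\varphi^\eta,\varphi^\zeta,\varphi^{\zeta^s},\varphi^{\zeta^{-s}}\in\Pi\}$ one has $\GE^\eta-\GE^\zeta=\Delta_1\le\|\nabla s\|^2-\Delta_2'-\Delta_3'$. The linear term $|\Delta_3'|\le 2h\cdot\sum_v|\Delta s_v|=O(h^2\ell^{d-2})$ is controlled deterministically since the $\Pi$-minimizers are by definition in $\Pi$. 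The critical quantity is $\Delta_2'=\tfrac12(\GE_\Pi^{\zeta^s}+\GE_\Pi^{\zeta^{-s}}-\GE_\Pi^{\eta^s}-\GE_\Pi^{\eta^{-s}})$: restricted to $\Pi$, each shifted ground energy $\GE_\Pi^{(\eta\text{ or }\zeta)^{\pm s}}$ depends only on the disorder on the far slabs $\Delta\times\{t\cdot e\in[-3h-2,-h-2]\cup[h+2,3h+2]\}$ (since minimizers have $e$-components in $[-h,h]$ on $\Lambda_{w,2\ell}$). By~\ref{as:indep} these slabs are at distance $\ge 2$ from the near slab $\{|t\cdot e|\le h\}$ sampled by $\varphi^\eta,\varphi^\zeta$ on $\Pi$; combining with the fact that $\zeta|_\Delta$ is an independent copy of $\eta|_\Delta$, a simultaneous coupling that replaces $\zeta|_\Delta$'s values on the far slab with those of $\eta|_\Delta$ --- admissible because~\ref{as:indep} guarantees this substitution preserves the joint law of the disorder outside the far slab --- produces $\eta^s|_\text{far}=\zeta^s|_\text{far}$ for both signs of $s$, forcing $\Delta_2'=0$ on $G$.

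This gives $|\GE^\eta-\GE^\zeta|=O(h^2\ell^{d-2})$ on $G$, the reverse direction following by swapping $\eta\leftrightarrow\zeta$ in Proposition~\ref{prop:Deloc2}. The complementary probability is bounded via~\ref{as:stat} (which yields $\varphi^{\zeta^{\pm s}}\eqd\varphi^\zeta$ conditionally on $\mathcal F:=\sigma(\eta_{\Lambda\setminus\Lambda_{w,2\ell}})$) together with the conditional identity $\zeta\eqd\eta\mid\mathcal F$, producing $\P(G^c\mid\mathcal F)\le 4\,\P(\max_{v\in\Lambda_{w,2\ell}}|\varphi^\eta_v\cdot e|\ge h\mid\mathcal F)$ by a union bound; this proves~\eqref{eq:le direction of local scaling relation} with a numerical constant $C$. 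The $\ell=0$ case~\eqref{eq:le direction of local scaling relation l=0} follows identically with the single-site shift.

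The main obstacle is justifying the coupling that forces $\Delta_2'=0$ on $G$: the four shifted ground energies are a priori only conditionally equidistributed (not equal), and naive substitution of $\zeta$'s values by $\eta$'s on the far slab would superficially destroy the independence of $\zeta|_\Delta$ from $\eta|_\Delta$. The gap-$2$ independence from~\ref{as:indep} is the essential input: it guarantees that the substitution does not alter the joint law of the disorder outside the far slab, so the resulting coupled $\zeta$ still has the required distribution $\eta[\Delta]$, while both $\eta^s$ and $\zeta^s$ now sample identical disorder on the only region where their $\Pi$-restricted ground states can see them.
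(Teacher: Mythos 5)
Your setup — choice of shift $s$, constraint set $\Pi$, the gap-$2$ separation of the sampled regions, and the deterministic control of $\Delta_3'$ — matches the paper's. But the step that forces $\Delta_2'=0$ on $G$ via a ``simultaneous coupling'' is a genuine gap, and the paper handles $\Delta_2'$ quite differently. The problem is twofold. First, the theorem is a statement about the \emph{specific} pair $(\eta,\eta[\Delta])$ with its fixed joint law; the random variable $\GE^{\eta}-\GE^{\eta[\Delta]}$ changes if you modify the coupling, so you are not free to redefine $\zeta$ to agree with $\eta$ on the far slab. Second, even if re-coupling were permitted, the proposed $\zeta'$ (equal to $\eta$ on the far slab, to an independent copy elsewhere on $\Delta$) does \emph{not} have the law of $\eta[\Delta]$ jointly with $\eta$: it is required that $\eta[\Delta]|_\Delta$ be independent of $\eta|_\Delta$, whereas your $\zeta'$ coincides with $\eta$ on a region of positive measure. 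Assumption \ref{as:indep} gives independence between regions at distance $\ge 2$ but gives no way to splice $\eta$ and $\zeta$ together across the intervening slab of width $2$ while preserving the process law; there is no clean product decomposition of a generic disorder into ``near'' and ``far'' independent blocks.

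The paper avoids this entirely. It conditions on $\mathcal F := \sigma(\eta|_{(\Lambda\setminus\Delta)\times\R^n})$ and makes two observations: (i) conditionally on $\mathcal F$, $\Delta_1$ depends only on the near-slab disorder $\eta|_{S_0},\zeta|_{S_0}$ and $\Delta_2'$ only on the far-slab disorder $\eta|_{S_{\pm1}},\zeta|_{S_{\pm1}}$, so by \ref{as:indep} they are conditionally \emph{independent}; (ii) by exchangeability of $\eta$ and $\zeta$ given $\mathcal F$, $\Delta_2'$ has a symmetric conditional law, so $\P(\Delta_2'\ge0\mid\mathcal F)\ge\tfrac12$. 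Combining these, $\tfrac12\P(\Delta_1>B\mid\mathcal F)\le \P(\Delta_2'\ge0,\,\Delta_1>B\mid\mathcal F)\le \P(\Delta_1+\Delta_2'+\Delta_3'>\|\nabla s\|^2\mid\mathcal F)$, and Proposition~\ref{prop:Deloc2} bounds this by $\P(G^c\mid\mathcal F)$. Your instinct that $\Delta_2'$ should be ``negligible'' conditionally is correct, but the right formalization is that it is \emph{nonnegative with conditional probability at least $1/2$ and independent of $\Delta_1$}, not that it vanishes under a re-coupling. If you replace your coupling step by this conditional independence and exchangeability argument (and then take a further conditional expectation to pass from $\mathcal F$ to $\sigma(\eta_{\Lambda\setminus\Lambda_{w,2\ell}})$), the rest of your proof goes through.
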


\begin{proof}[Proof of Theorem \ref{theorem:local other half of scaling relation}]
Let $d,n\in\N$, $\Lambda\subset \Z^ d$, $\lambda>0$, $e\in\R^{n}$, $h\ge1$. Let us first assume $\ell\ge1$ and prove \eqref{eq:le direction of local scaling relation}.   Let $\zeta=\eta[\Delta]$,
i.e. $\zeta$ is obtained from $\eta$ by resampling $\eta_{\Delta\times\R^n}$.
Let $\pi$ be the function from Proposition \ref{prop:const_laplace} applied to the box $\Lambda_{w,2\ell}$ and $\ep=1/2$,
let $C_{0}>0$ and define the surface $s\coloneqq C_{0}h\pi e$. Denote
\[\Pi\coloneqq\{\varphi\in\Omega^\Lambda:\,\forall v\in\Lambda_{w,2\ell}\quad|\varphi_{v}\cdot e|\,\le h\,\}\]
and
\[
S_{i}\coloneqq\left\{ (v,\varphi_{v}):v\in\Delta,\varphi\in\Pi+is\right\} .
\]
By (\ref{eq:s_min}), taking $C_{0}$ large enough the sets $S_{i}$
for $i\in\Z$ are pairwise separated by a distance of at least $2$.
Let $\Delta_{1},\Delta_2',\Delta_3'$ be as in Proposition \ref{prop:Deloc2},
for $\Pi,\eta,\zeta$ and $\Lambda=\Lambda_{L}$. For $\varphi\in\Pi$,
it holds that $|(\varphi,-\Delta s)|=\sum_{v\in\Lambda_{w,2\ell}}(\varphi_{v}\cdot e) C_{0}h\Delta\pi_{v}=O(\ell^{d-2}h^{2})$
by (\ref{eq:s_laplace}). Thus $\Delta_3'\ge-O(\ell^{d-2}h^{2})$ almost
surely.

Let $\mathcal{F}$ be the sigma algebra generated by the restriction
of $\eta$ to $(\Lambda\setminus\Delta)\times\R^{n}$.
Conditionally on $\mathcal{F}$, the variable $\Delta_{1}$ is a function
of $\eta|_{S_{0}}$ and $\zeta|_{S_{0}}$, while $\Delta_2',\Delta_3'$
are functions of $\eta|_{S_{-1}\cup S_{1}}$ and $\zeta|_{S_{-1}\cup S_{1}}$. In particular, conditionally on $\mathcal{F}$, $\Delta_1$ is independent of $\Delta_2'$.
Conditionally on $\mathcal{F}$, the disorders $\eta$ and $\zeta$ are
exchangeable, thus $\P(\Delta_2'\ge0\,|\,\mathcal{F})\ge1/2$ almost
surely. By (\ref{eq:s_energy}), and the bound on $\Delta_3'$,
there is a constant $C>0$ such that almost surely 
\[
B\coloneqq C\ell^{d-2}h^{2}>\|\nabla s\|_{\Lambda}^{2}-\Delta'_{3}.
\]
Since, conditionally on $\mathcal{F}$, $\Delta_{1}$ is independent
of $\Delta_2'$, we have that
\begin{equation*}
    \begin{split}
        \frac{1}{2}\P(\Delta_{1}>B\,|\,\mathcal F)&\le\P(\Delta_2'\ge0|\mathcal F)\P(\Delta_{1}>B|\mathcal F)\\&\le\P(\Delta_{1}+\Delta_2'+\Delta_3'>\|\nabla s\|_{\Lambda}^{2}|\mathcal F)\\&\le\P(\varphi^{\eta}\notin\Pi|\mathcal F)+ \P(\varphi^{\zeta^s}\notin\Pi|\mathcal F)+ \P(\varphi^{\zeta^{-s}}\notin\Pi|\mathcal F)
    \end{split}
\end{equation*}
where we used Proposition \ref{prop:Deloc2} in the last inequality.
Note that conditionally on $ \eta_{\Lambda\setminus \Lambda_{w,2\ell}}$ since $s$ has its support included in $\Lambda_{w,2\ell}$ and by assumption \ref{as:stat}, the surfaces $\varphi^{\eta},\varphi^{\zeta^s}$ and $\varphi^{\zeta^{-s}}$ have the same distribution.
Taking expectation over $\eta$ above $\Lambda_{w,2\ell}\setminus \Delta$ and using the exchangeability,
we get \[\P(|\Delta_{1}|>B| \eta_{\Lambda\setminus \Lambda_{w,2\ell}})=2\P(\Delta_{1}>B| \eta_{\Lambda\setminus \Lambda_{w,2\ell}})\le 12\P(\varphi^{\eta}\notin\Pi |\eta_{\Lambda\setminus \Lambda_{w,2\ell}}).\] Finally
\[
\{|\GE^{\eta}-\GE^{\zeta}|\ge B\}\subset\{|\Delta_{1}|\ge B\}\cup\{\varphi^{\eta}\notin\Pi\}\cup\{\varphi^{\zeta}\notin\Pi\}
\]
and by a union bound \[\P(|\GE^{\eta}-\GE^{\zeta}|\ge B|\eta_{\Lambda\setminus \Lambda_{w,2\ell}})\le 14\P(\varphi^{\eta}\notin\Pi|\eta_{\Lambda\setminus \Lambda_{w,2\ell}})\]
which concludes the proof in the case $\ell\ge 1$.
To prove \eqref{eq:le direction of local scaling relation l=0}, it is easy to check that the same proof holds by considering the shift function $\pi'=\mathbf{1}_ve$ instead of $\pi$.
\end{proof}

\begin{proof}[Proof of Theorem \ref{thm:scalingrelation<}] 
We start by building a partition of $\Lambda_L$ into blocks in such a way that each block is at distance at least its side length from $\Lambda_L^c$. We will then interpolate between two different disorders by resampling the disorder block after block.

For all $j\ge 0$, let $\mathcal A_j$ be the set of blocks of the form $w+[0,2^j)^d$ for all $w\in 2^j\mathbb Z ^d$ that intersect $\Lambda _L$. Let $m:=\lfloor \log_2 L \rfloor $. Let $\mathcal B_{m}\subseteq \mathcal A _{m}$ be the set of good blocks in $\mathcal A _{m}$ whose distance from  $\Lambda _L^c$ is at least $2^{m}$ and let $\mathcal C_{m}$ be the rest of the blocks in $\mathcal A _{m}$. We now proceed inductively to smaller and smaller $j$. Suppose $\mathcal B_{j+1},\mathcal C_{j+1}$ were defined. Define $\mathcal B_j$ to be the set of blocks in $\mathcal A _j$ that are contained in some block in $\mathcal C _{j+1}$ that are at distance at least $2^j$ from $\Lambda _L^c$ and let $\mathcal C_j$ be the rest of the blocks in $\mathcal A _j$ that are contained in some block in $\mathcal C _{j+1}$. Finally let $\mathcal B :=\bigcup _{j\ge 0} \mathcal B _j$ be the partition. Clearly, the number of blocks in $\mathcal B _j$ is at most $C(L2^{-j})^{d-1}$ since all of these blocks are in $\mathcal A _j$ with distance at most $2^{j+2}$ from the boundary of $\Lambda _L$. Let us denote by $\Lambda_1,\dots,\Lambda_{k'}$ this family of blocks.

Let $\eta'$ be an independent disorder with the same distribution as $\eta$.
We define a sequence of disorders $\eta^0,\dots,\eta^{k'}$ where $\eta^0=\eta$ and $\eta^k=\eta$ on $\Lambda_L\setminus \cup_{i\le k}\Lambda_i$ and $\eta^k=\eta'$ on $\cup_{i\le k}\Lambda_i$  $\eta^{k'}=\eta'$. Denote by $L_k$ the side length of $\Lambda_k$. Using the Efron--Stein inequality, we get
\begin{equation}\label{eq:efronstein}
\var(\GE^\eta)=\frac{1}{2}\E[(\GE^ \eta-\GE^{\eta'})^2]\le\frac{1}{2}\sum_{k=0}^{k'-1}\E[(\GE^ {\eta^k}-\GE^{\eta^{k+1}})^2].
\end{equation}
If $L_k>1$, pick $w_k\in\Lambda_L$ such that $\Lambda_k\subset (w_k+\Lambda_{L_k/2})$ and set $\Lambda'_k\coloneqq w_k+\Lambda_{L_k}$. Since $\Lambda_k$ is at distance at least $L_k$ from $\Lambda_L^c$, we have $\Lambda'_k\subset \Lambda_L$. If $L_k=1$ with $\Lambda_k=\{w\}$ then $\Lambda'_k=\Lambda_{w,1}$. In particular, thanks to our choice of the partition we can apply Theorem~\ref{theorem:local other half of scaling relation}, it yields for any $k$ and $h\ge 1$
\begin{equation*}
    \P(|\GE^ {\eta^k}-\GE^{\eta^{k+1}}|\ge Ch^2L_{k+1}^{d-2})\le C\P(\max_{v\in\Lambda_{k+1}'}\|\varphi_v^\eta\|\ge h) 
\end{equation*}
and 
\begin{equation}\label{eq:ineqefronstein}
\begin{split}
    \E[(\GE^ {\eta^k}-\GE^{\eta^{k+1}})^2]
&\le \int_{t\ge 0}\P(|\GE^ {\eta^k}-\GE^{\eta^{k+1}}|^2\ge t)\\&\le C^2 L_{k+1}^{2(d-2)}\!+  C \int_{t\ge C^2 L_{k+1}^{2(d-2)}}\P \big( \max_{v\in\Lambda_{k+1}'}\|\varphi_v^\eta\|\ge L_{k+1}^{-d/2+1}t^{1/4}C^{-1/2} \big)\\
&\le C \left(1+\E[\max_{v\in\Lambda_{k+1}'}\|\varphi_v^\eta\|^4]\right)L_{k+1}^{2(d-2)}.
\end{split}
\end{equation}
By construction, any point in the box of side length $L_k$ is at distance at least $L_k$ and at most $5 L_k$ from the boundary of $\Lambda_L$.
The definition~\eqref{eq:kappa def} of $\kappa$ (or the alternative definition~\eqref{eq:kappa alternative def}) then implies that
\begin{equation*}
    \E[(\GE^ {\eta^k}-\GE^{\eta^{k+1}})^2]
\le C\kappa  \left(\log \frac L {L_{k+1}} \right)^{4\beta} L_{k+1}^{2(2\xi+(d-2))}\end{equation*}
where we used that $\xi\ge 0$.
The number of cubes among the $\Lambda_1,\dots,\Lambda_k$ of side length $2^j$ is at most $O((L2^ {-j})^{d-1})$.
Hence, combining the previous inequality together with inequality \eqref{eq:efronstein}, 
\begin{equation}
\begin{split}
    \var(\GE^\eta)
&\le \sum_{j=0} ^m C\kappa(L2^ {-j})^{d-1}\left(\log \frac L {2^j} \right)^{4\beta} 2^{2j(2\xi+(d-2))}\\& \le C\kappa L^{d-1} \sum_{j=0} ^m \left(m+1-j\right)^{4\beta} 2^{j(2(2\xi+(d-2))-(d-1))}\\&\le C\kappa L ^{2(2\xi+(d-2))}\sum_{j=1}^{m+1}j^ {4\beta}2^{j[(d-1)- 2(2\xi+d-2)]}.
\end{split}
\end{equation}
We distinguish three cases:
\begin{itemize}
    \item If $d-1=2(2\xi+d-2)$ then $\var(\GE^\eta)
\le C\kappa (\log L)^{4\beta +1} L ^{2(2\xi+(d-2))}$.
\item If $d-1<2(2\xi+d-2)$ then $\var(\GE^\eta)
\le C \kappa L ^{2(2\xi+(d-2))}$.
\item If $d-1>2(2\xi+d-2)$ then $\var(\GE^\eta)
\le C\kappa (\log L)^{4\beta } L ^{d-1}$.\qedhere
\end{itemize}
\end{proof}

\begin{proof}[Proof of Theorem \ref{thm: scaling relation d=1}, inequality \eqref{eq:std upper bound}] Consider the same decomposition as in the proof of Theorem \ref{thm:scalingrelation<} in the case of $d=1$. Note that there is at most a constant number of boxes in the partition such that their side length is $2^j$. By construction, if a box $\Lambda_k$ has side length $2^j$, then the corresponding box $\Lambda'_k$ is included in $\{v:L-2^{j+2}<|v|\le L\}$.
It yields, combining \eqref{eq:efronstein} and \eqref{eq:ineqefronstein}, that 
\begin{equation}
   \std(\GE^{\eta,\lambda,\Lambda_L})\le C\sum_{0\le j\le \lceil\log_2 L\rceil} 2^{-j}\sqrt{1+ \E M_{2^{j+2}}^4},
\end{equation}
concluding the proof.
\end{proof}

\subsection{Delocalization when $d=4$}

\label{sec:delocalization in dimension 4} In this section, we prove
Theorem \ref{thm:delocalization intro} for $d=4$ and $n\ge1$. The
proof method is inspired by the quantitative bound of~\cite{dario2024quantitative}
on the magnetization decay of the $d=4$ random-field spin $O(m)$
model with $m\ge2$ (following the non-quantitative result of Aizenman--Wehr~\cite{aizenman1990rounding}).
It is a simple application of fractal (Mandelbrot) percolation.

In all this section, $\eta=\eta^{\white}$. Let $L\ge3$ and $\lambda>0$.
We will consider ``delocalizatoin events'' on ``blocks'' $\Lambda$
taken from a nested structure of blocks within $\Lambda_{L}$. Let
$\Lambda\subset\Lambda_{L}$. Define to $D_{\Lambda}=D_{\Lambda }(h)$ be the number
of vertices $v\in\Lambda$ on which the surface delocalizes to height
$h$. That is, 
\begin{equation}
D_{\Lambda}:=\big|\big\{ v\in\Lambda:\|\varphi_{v}\|>h\big\}\big|.
\end{equation}
Define $\hat{\eta}_{\Lambda}$ to be the integral of $W$ in the region
$\Lambda\times B_{h+1}(0)$, where $B_{r}(s)$ is the ball of radius $r$ defined in~\eqref{eq:ball def}.
That is 
\begin{equation}
\hat{\eta}_{\Lambda}:=\sum_{v\in\Lambda}W_{v}(\indic{B_{h+1}(0)}).
\end{equation}
Next, let $\mathcal{F}_{\Lambda^{c}}$ be the sigma algebra generated
by the disorder in $\Lambda^{c}\times\mathbb{R}^{n}$.
\begin{lem}[$\eta=\eta^{\white}$]
\label{lem:83-1}There is a constant $C_{0}>0$ (depending on $\mathcal{L}$)
such that the following holds for all $h\ge\max\{1,\lambda^{\frac{2}{n+4}}\}$.
Let $\Lambda=v+\Lambda_{\ell}\subseteq\Lambda_{L}$
for some $\ell\ge1$. Denote $a(\ell)=C_{0}\lambda^{-1}\ell^{2}h^{n+2}$.
Then 
\begin{equation}
\mathbb{E}\big[D_{\Lambda}\mid\hat{\eta}_{\Lambda},\mathcal{F}_{\Lambda^{c}}\big]\ge\frac{1}{5}|\Lambda|\quad\text{on the event}\quad\big\{\hat{\eta}_{\Lambda}\ge a(\ell)\big\}.\label{eq:block_deloc_cond}
\end{equation}
\end{lem}

\begin{proof}[Proof of Theorem~\ref{thm:delocalization intro} for $d=4$] 
Suppose that $L$ is sufficiently large. We would like to apply Lemma~\ref{lem:83-1} with $h:=c_0 \lambda ^{\frac{2}{4+n}} (\log \log L)^{\frac{1}{4+n}}$ for a sufficiently small constant $c_0>0$ to be determined later. We say that the block $\Lambda=v+ \Lambda _\ell  \subseteq \Lambda _L$ is good if $\hat \eta _\Lambda \ge a(\ell) $ with $a(\ell) =C_0\lambda ^{-1}\ell ^2 h^{n+2}$ as in Lemma~\ref{lem:83-1}. Note that $\hat\eta _\Lambda$ is normally distributed with zero expectation and $\var(\hat\eta _\Lambda)\ge c\ell ^4h^n \ge 10a(\ell )/\log \log L$, as long as $c_0$ is sufficiently small depending on $C_0$, and therefore $\mathbb P (\Lambda \text{ is good}) \ge (\log L)^{-1/3}$ for large $L$.

Our goal will be to construct a set $\mathcal Q$ of disjoint good blocks that cover a positive fraction of $\Lambda _L$ with high probability. To this end, we consider a sequence of hierarchical partitions. Let $j_1:=\lfloor \tfrac{1}{2} \sqrt{\log L} \rfloor $ and for any integer $j\in[0,j_1]$ let $m_j:=\lfloor \log _3L \rfloor -j\lfloor \sqrt{\log L}\rfloor  $ and $R_j:=3^{m_j}$. Define $\Lambda ^0:=[0,R_0)^d \subseteq \Lambda _L$ and let
\begin{equation}
\mathcal T _j:= \Big\{  \Lambda =\big( z+[0,R_j)^d \big) \cap \mathbb Z ^d : z\in R_j \mathbb Z ^d \text{ and } \Lambda \subseteq \Lambda ^0 \Big\}.
\end{equation} 
Note that $\mathcal T _0=\{\Lambda ^0\}$ and that $\mathcal T _{j+1}$ is a refinement of the partition $\mathcal T _j$ such that any block $\Lambda \in \mathcal T _j$ is the union of $3^{d\lfloor \sqrt{\log L} \rfloor }$ blocks in $\mathcal T _{j+1}$. Moreover, by construction, the side length of each block in $\mathcal T _j$ is odd and therefore it is of the form $v+\Lambda _{\ell _j} $ with $\ell _j:=(R_j-1)/2$.

We define $\mathcal Q$ in the following way. A block $\Lambda \in \mathcal T _j$ belongs to $\mathcal Q$ if and only if $\Lambda $ is good and for all $j'<j$ the unique block $\Lambda ' \in \mathcal T _{j'}$ with $\Lambda \subseteq \Lambda '$ is not good. We say that a vertex $v\in \Lambda _L$ is covered if $v\in \Lambda $ for some $\Lambda \in \mathcal Q $. We claim that for all $v\in \Lambda ^0$
\begin{equation}\label{eq:bound for v}
    \mathbb P (v\text{ is covered}) \ge 1/2.
\end{equation}
Indeed, letting $\Lambda ^j=\Lambda ^j(v) $ be the unique block in $\mathcal T _j$ containing $v$ we have
\begin{equation}\label{eq:44}
    \Big\{ \!\! \begin{array}{c} v \text{ is not} \\ \text{covered}\end{array} \!\! \Big\} \subseteq \bigcap _{j=0}^{j_1-1} \!\big\{ \hat \eta _{\Lambda ^j} \le  a(\ell _j) \big\} \subseteq \bigcup _{j=0}^{j_1-1} \! \big\{\hat \eta _{\Lambda ^{j+1}} \le  -a(\ell _j) \big\} \cup \bigcap _{j=0}^{j_1-1} \!\big\{ \hat \eta _{\Lambda ^j}-\hat \eta _{\Lambda ^{j+1}} \le  2a(\ell _j) \big\},
\end{equation}
The random variable $\hat \eta  _{\Lambda ^{j+1}}$ is normal with $\var (\hat \eta  _{\Lambda ^{j+1}})\le C\ell _{j+1}^{4}h^{n} \le a(\ell _j)^2\exp (-\sqrt{\log L})$ and therefore $\mathbb P(\hat \eta  _{\Lambda ^{j+1}} \le -a(\ell _j)) \ll 1/L $. Moreover, the variables $\{\hat \eta  _{\Lambda ^j}-\hat \eta  _{\Lambda ^{j+1}}\}_{j=0}^{j_1-1}$ are independent and normal with $\var (\hat \eta  _{\Lambda ^j}-\hat \eta  _{\Lambda ^{j+1}})\ge c\ell _j^{4}h^{n}\ge 10a(\ell _j)^2/\log \log L$, as long as $c_0$ is sufficiently small, and therefore the probability of the intersection on the right hand side of \eqref{eq:44} is at most $(1-(\log L)^{-1/3})^{j_1} \le 1/4$. Substituting these bounds in \eqref{eq:44} completes the proof of \eqref{eq:bound for v}.

Next, note that the for all $j'\le j$ and any $\Lambda \subseteq \Lambda '$ with $\Lambda \in \mathcal T_j$ and $\Lambda '\in \mathcal T _{j'}$, the random variable $\hat \eta _{\Lambda '}$ is measurable in $ \sigma (\hat \eta _\Lambda , \mathcal F _{\Lambda ^c} )$. Thus, the event $\{\Lambda \in \mathcal Q \}$ is measurable in $ \sigma (\hat \eta _\Lambda , \mathcal F _{\Lambda ^c} )$ and therefore by Lemma~\ref{lem:83-1} we have 
\begin{equation}
    \mathbb E \big[ 
D_{\Lambda }\mathds 1 \{\Lambda \in \mathcal Q \} \big] =\mathbb E \big[ \mathds 1\{\Lambda \in \mathcal Q \} \cdot \mathbb E [D_\Lambda \mid \hat \eta _\Lambda , \mathcal F _{\Lambda ^c} ]  \big]\ge \tfrac{1}{5} |\Lambda | \cdot  \mathbb P (\Lambda \in \mathcal Q ).
\end{equation}
Hence, we obtain
    \begin{equation}
    \begin{split}
        \mathbb E [D_{\Lambda _L}] \ge \mathbb E \Big[ \sum _{\Lambda \in \mathcal Q } D_\Lambda \Big]\ge \sum _{j=0}^{j_1} \sum _{\Lambda \in \mathcal T _j} \mathbb E \big[ 
D_{\Lambda }\mathds 1 \{\Lambda \in \mathcal Q \} \big]\ge \frac{1}{5} \sum _{j=0}^{j_1} \sum _{\Lambda \in \mathcal T _j} |\Lambda |\cdot \mathbb P (\Lambda \in \mathcal Q \big)\\
=\frac{1}{5} \cdot \mathbb E \Big[ \sum _{\Lambda \in \mathcal Q } |\Lambda | \Big] = \frac{1}{5} \sum _{v\in \Lambda ^0 } \mathbb P \big( v \text{ is covered} \big) \ge \frac{1}{10} |\Lambda ^0| \ge c|\Lambda _L|,
\end{split}
\end{equation}
where in the second to last inequality we used \eqref{eq:bound for v}. This finishes the proof.
\end{proof}

It remains to prove Lemma~\ref{lem:83-1}.

\begin{proof}[Proof of Lemma \ref{lem:83-1}]
Fix $\Lambda$ as in the lemma, a function $\eta_{0}:\Lambda^{c}\times\R^{n}\to\R$
and $a\ge a(\ell)$. For the scope of this proof, we let $\zeta$
be distributed according to the law $\overline{\P}\coloneqq{\eta^{\white}}|_{\eta|_{\Lambda^{c}\times\R^{n}}=\eta_{0}}$.
We apply Lemma \ref{lem : decomposition noise-1} to $\zeta$, with
${\rm S}=\Lambda\times B_{h+1}(0)$. We get that $\hat{\zeta}_{\Lambda}$
and $\zeta^{\perp}$ are independent since $\zeta^{S}=\frac{\int b(t)dt}{{\rm Vol}(S)}\hat{\zeta}_{\Lambda}$.
Write $\hat{\zeta}_{\Lambda}=\hat{\zeta}$ for short. Let $\kappa_{v,t}$
be as in the lemma. Define $\eta:=\zeta+\kappa\frac{\int b(t)dt}{{\rm Vol}(S)}(a-\hat{\zeta})$
i.e. $\hat{\eta}=a$ and $\eta^{\perp}=\zeta^{\perp}$. Thus the marginal
law of $\eta$ under $\overline{\P}$ is ${\eta^{\white}}|_{\eta|_{\Lambda^{c}\times\R^{n}}=\eta_{0},\hat{\eta}=a}$
as in (\ref{eq:block_deloc_cond}), and so it remains to prove $\overline{\E}\big[D_{\Lambda}\big]\ge\frac{1}{5}|\Lambda|$
where $\overline{\E}$ is the expectation corresponding to $\overline{\P}$.

Recall that $h\ge\max\{1,\lambda^{\frac{2}{n+4}}\}$ and $a(\ell)=C_{0}\lambda^{-1}\ell^{2}h^{n+2}$
thus 
\[
\frac{a(\ell)}{\sqrt{\var(\hat{\zeta})}}=\Omega(C_{0}\frac{\lambda^{-1}\ell^{2}h^{n+2}}{\ell^{d/2}h^{n/2}})=\Omega(C_{0}\frac{\ell^{2}}{\ell^{2}}\frac{h^{\frac{n+4}{2}}}{\lambda})=\Omega(C_{0}),
\]
(here and in the remainder of the proof the notations $\Omega,O$ hide constants depending on $n,d$ and $\mathcal{L}$). Denote 
\global\long\def\goodev{\mathcal{E}}%
 $\mathcal{\goodev}\coloneqq\{\hat{\zeta}\le a/2\}$. Then $\overline{\P}(\goodev^{c})=O(C_{0}^{-2})$.

Note that by (\ref{eq:decomp_bound}),
\begin{align*}
\kappa_{v,t} & =0\,\text{ for }\|t\|\ge h+2,\\
\kappa_{v,t} & =1\,\text{ for }\|t\|\le h,v\in\Lambda,\\
0\le\kappa_{v,t} & \le1\,\text{ for all }v,t.
\end{align*}
Let $c\coloneqq C_{0}^{-1}$, and define
\[
\Pi\coloneqq\Big\{\varphi\in\Omega^{\Lambda_{L}}:\sum_{v\in\Lambda}\kappa_{v,\varphi_{v}}\ge(1-c)|\Lambda|\Big\}.
\]
Thus $\overline{\E}\big[D_{\Lambda}\big]\ge(1-c)\overline{\P}(\varphi^{\eta}\notin\Pi)$
and it suffices to prove $\overline{\P}(\varphi^{\eta}\notin\Pi)\ge1/6$.
For $\varphi\in\Omega_{L}^{\Lambda}$, denote: 
\begin{align}
B_{0} & \coloneqq\lambda\frac{\int b(t)dt}{{\rm Vol}(S)}|\Lambda|\frac{a}{2}\\
B(\varphi) & \coloneqq H^{\eta}(\varphi)-H^{\zeta}(\varphi)=\lambda\frac{\int b(t)dt}{{\rm Vol}(S)}(a-\hat{\zeta})\sum_{v\in\Lambda}\kappa_{v,\varphi_{v}}
\end{align}
and notice that on $\goodev$ for all $\varphi\in\Omega_{L}^{\Lambda}$ we have $B(\varphi)\ge(1-c)B_{0}\iff\varphi\in\Pi$.

Let $\pi$ be the function from Proposition \ref{prop:const_laplace}
applied to the box $\Lambda$ with $\ep=c$. Recall from the proposition
that $|\Lambda^{-}|/|\Lambda|\ge1-\ep$, thus $|\Lambda\setminus\Lambda^{-}|\le c|\Lambda|$.
Define the surface $s\coloneqq4h\pi e_{1}$. We record the crucial
fact that 
\begin{equation}
B_{0}=C_{0}\Omega(\|\nabla s\|^{2}).\label{eq:bountyisright}
\end{equation}
Indeed, we have that
\begin{equation}\label{eq:int_b_lb}
\int b(t)dt\ge\int\frac{(b(t))^{2}}{\max\{|b(t)|\}}dt\ge\frac{1}{\mathcal{L}}=\Omega(1),
\end{equation}
and therefore 
\[
B_{0}=\lambda\frac{\int b(t)dt}{{\rm Vol}(S)}\frac{a}{2}|\Lambda|=\Omega\left(\frac{\lambda a(\ell)}{{\rm Vol}(B_{h+1}(0))}\right)=\Omega\left(\frac{\lambda C_{0}\lambda^{-1}\ell^{2}h^{n+2}}{h^{n}}\right)=C_{0}\Omega(\ell^{2}h^{2}).
\]
On the other hand, by (\ref{eq:s_energy}), $\|\nabla s\|^{2}=(4h)^{2}O(\ell^{d-2})=O(h^{2}\ell^{2})$.
Also, by (\ref{eq:s_min}) and $h\ge1$ we have 
\begin{equation}
\forall v\in\Lambda^{-}\quad\forall t\in\R^{n}\quad\kappa_{v,t}+\kappa_{v,t\pm s_{v}}\le1.\label{eq:boundsumkappa}
\end{equation}

Let $\Delta_{1},\Delta_{2},\Delta_{3}$ be as in Proposition \ref{prop:Deloc2}.
We first lower-bound $\Delta_{1}$ by:
\begin{equation}
\begin{split}\Delta_{1}=\GE_{\Pi}^{\eta}-\GE_{\Pi}^{\zeta}\ge\inf_{\varphi\in\Pi}(H^{\eta}(\varphi)-H^{\zeta}(\varphi)) & =\inf_{\varphi\in\Pi}B(\varphi)\overset{\text{on \ensuremath{\goodev}}}{\ge}(1-c)B_{0}.\end{split}
\label{eq:bounddelta1}
\end{equation}
For $\varphi\in\Pi$, we have thanks to \eqref{eq:boundsumkappa}
\[
\sum_{v\in\Lambda}\kappa_{v,\varphi_{v}\pm s_{v}}\le|\Lambda^{-}|+2|\Lambda\setminus\Lambda^{-}|-\sum_{v\in\Lambda}\kappa_{v,\varphi_{v}}\le|\Lambda|(1+c-(1-c))=2c|\Lambda|.
\]
It yields that,
\begin{equation}
\begin{split}\Delta_{2}=\inf_{\varphi\in(\Pi+s)\cup(\Pi-s)}H^{\zeta}(\varphi)-H^{\eta}(\varphi) & \overset{\text{on \ensuremath{\goodev}}}{\ge}-\frac{B_{0}}{|\Lambda|}\sup_{\varphi\in(\Pi+s)\cup(\Pi-s)}\sum_{v\in\Lambda}\kappa_{v,\varphi_{v}}\ge-2cB_{0}.\end{split}
\label{eq:bounddelta2}
\end{equation}
Thus combining \eqref{eq:bounddelta1} and
\eqref{eq:bounddelta2} yields 
\begin{equation}
\goodev\subset\{\Delta_{1}+\Delta_{2}\ge(1-3c)B_{0}\}\label{eq:bounddelta12}
\end{equation}

By Corollary \ref{cor:markov} with $a=6$ (unrelated to the $a$
used in this proof), using that the sequence $(\zeta^{ks})_{k\in\Z}$
is stationary under the law $\overline{\P}$ we get 
\begin{equation}
\overline{\P}(\,\Delta_{3}\ge-4\|\nabla s\|^{2}\,)\ge\frac{2}{3}.\label{eq:bounddelta3}
\end{equation}
Combining (\ref{eq:bounddelta12}), (\ref{eq:bounddelta3}), (\ref{eq:bountyisright})
and Proposition \ref{prop:Deloc2}, gives
\begin{equation}
\overline{\P}(\varphi^{\eta}\notin\Pi)+\overline{\P}(\varphi^{\zeta^{s}}\notin\Pi)+\overline{\P}(\varphi^{\zeta^{-s}}\notin\Pi)\ge\frac{2}{3}-\overline{\P}(\goodev^{c}).\label{eq:deloc4dl1}
\end{equation}
We claim that 
\begin{equation}
\goodev\cap\{\varphi^{\zeta}\notin\Pi\}\subset\{\varphi^{\eta}\notin\Pi\}.\label{eq:claim monotony}
\end{equation}
Indeed assume by contradiction that $\varphi^{\eta}\in\Pi,\varphi^{\zeta}\notin\Pi$,
and $\goodev$ holds. Then 
\begin{align*}
B(\varphi^{\eta}) & \ge(1-c)B_{0}>B(\varphi^{\zeta})\\
 & \implies H^{\eta}(\varphi^{\eta})-H^{\zeta}(\varphi^{\eta})>H^{\eta}(\varphi^{\zeta})-H^{\zeta}(\varphi^{\zeta})\\
 & \implies0\ge\GE^{\eta}-H^{\eta}(\varphi^{\zeta})>H^{\zeta}(\varphi^{\eta})-\GE^{\zeta}\ge0.
\end{align*}
Using the stationarity of the sequence $(\zeta^{ks})_{k\in\Z}$ under
the law $\overline{\P}$ and (\ref{eq:claim monotony}) we get 
\[
\overline{\P}(\varphi^{\zeta^{s}}\notin\Pi)=\overline{\P}(\varphi^{\zeta^{-s}}\notin\Pi)=\overline{\P}(\varphi^{\zeta}\notin\Pi)\le\overline{\P}(\varphi^{\eta}\notin\Pi)+\overline{\P}(\goodev^{c})
\]
Combining the previous inequality together with inequality \eqref{eq:deloc4dl1}
gives 
\[
2\overline{\P}(\goodev^{c})+3\overline{\P}(\varphi^{\eta}\notin\Pi)\ge\frac{2}{3}-\overline{\P}(\goodev^{c}).
\]
Taking $C_{0}$ large enough gives
\[
\overline{\P}(\varphi^{\eta}\notin\Pi)\le\frac{2}{9}-\overline{\P}(\goodev^{c})\le\frac{1}{6}.
\]
\end{proof}

\subsection{Delocalization when $d\le 3$}\label{sec:delocalization in low dimensions} In this section, we prove Theorem~\ref{thm:delocalization intro} for $d\le 3$. In all of this section, $\eta = \eta^\white$ and $\Lambda=\Lambda_L$.
\global\long\def\minloc#1#2{\varphi^{#1,\overline{\underline{#2}}}}%
\global\long\def\GEloc#1#2{\GE^{#1,\overline{\underline{#2}}}}%
\global\long\def\loc#1#2#3{\Omega^{#1,#2,\overline{\underline{#3}}}}%

\global\long\def\indic#1{\mathbf{1}_{#1}}%
\global\long\def\condon{\,|\,}%
\global\long\def\N{\mathbb{N}}%
\global\long\def\minloc#1#2{\varphi^{#1,\overline{\underline{#2}}}}%
\global\long\def\GEloc#1#2{\GE^{#1,\overline{\underline{#2}}}}%
\global\long\def\loc#1#2#3{\Omega^{#1,\overline{\underline{#2}},#3}}%

For $h\in(0,\infty)$ and $p\in[0,1]$, say that a surface $\varphi$
is \emph{$p$-localized} to height $h$ if $\varphi\in\loc{\Lambda}hp$
where 
\begin{equation}
\loc{\Lambda}h{p\coloneqq}\left\{ \varphi\in\Omega^{\Lambda}:\frac{|\{v\in\Lambda:\|\varphi_{v}^{\eta}\|\le h\}|}{|\Lambda|}\ge p\right\} .\label{eq:loc_set}
\end{equation}
Otherwise, if $\varphi\notin\loc{\Lambda}hp$, then we say that $\varphi$
is \emph{$(1-p)$-delocalized}. The following theorem is a slightly stronger version of Theorem \ref{thm:delocalization intro} in the case where $d\in\{1,2,3\}$. It also applies in dimensions $d\ge 4$ where it provides additional information when $\lambda$ is large.
\begin{thm}[Delocalization, $\eta\sim\eta^{\white}$]\label{thm: deloc}
There exists $c_0,c_1>0$, depending only on $d,n$ and the Lipschitz constant $\mathcal{L}$ of the bump function in~\eqref{eq:eta white definition}, such that for $\lambda>0,L\ge 1$
and 
\begin{equation}
h\le c_0\max\{\lambda^{\frac{2}{4+n}}L^{\frac{4-d}{4+n}},\min(\lambda^ {-1},\lambda)^{1/2}L^{1-d/2}\}\label{eq:h_deloc}
\end{equation}
it holds that 
\[
\P\left(\varphi^{\eta}\notin\loc{\Lambda}h{51\%}\right)\ge c_1
\]
as long as $h\ge1$.
\end{thm}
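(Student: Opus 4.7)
The proof adapts the strategy of Lemma~\ref{lem:83-1} (the $d=4$ delocalization) to the single block $\Lambda = \Lambda_L$; the fractal-percolation construction is unnecessary when $d \le 3$. Begin with the noise decomposition of Lemma~\ref{lem : decomposition noise-1} applied to $S = \Lambda_L \times B_{h+1}(0)$, writing $\eta = \zeta + \kappa \cdot \frac{\int b(t)\,dt}{\mathrm{Vol}(S)}(a - \hat\zeta)$ with $\hat\zeta$ a centered Gaussian of standard deviation of order $L^{d/2} h^{n/2}$, weights $\kappa_{v,t}$ equal to $\mathbf{1}\{v \in \Lambda_L\}$ for $\|t\| \le h$ and vanishing for $\|t\| \ge h+2$, and $\zeta^\perp$ independent of $\hat\zeta$. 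Condition on $\hat\eta = a$ for $a$ of order one standard deviation of $\hat\zeta$, and further restrict to the positive-probability event $\hat\zeta \le a/2$.

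Apply Proposition~\ref{prop:Deloc2} with $\Pi = \loc{\Lambda_L}{h}{51\%}$, the comparison disorder $\zeta$, and shift $s = C_0 h \pi e$, where $e \in \R^n$ is a fixed unit vector, $\pi$ is the test function from Proposition~\ref{prop:const_laplace} applied to $\Lambda_L$ with $\ep$ small enough that $\pi_v \ge 1$ on at least $99.5\%$ of $\Lambda_L$, and $C_0 \ge 5$ is an absolute constant. Mimicking the proof of Lemma~\ref{lem:83-1}: for $\varphi \in \Pi$ one has $\sum_v \kappa_{v,\varphi_v} \ge 0.51 |\Lambda_L|$, yielding $\Delta_1 \ge 0.51\, \lambda \tfrac{\int b}{\mathrm{Vol}(S)}(a - \hat\zeta)|\Lambda_L|$; for $\varphi \in (\Pi + s) \cup (\Pi - s)$, on the (at least $50.5\%$-sized) intersection of the shift region $\{\pi_v \ge 1\}$ with the localized set of $\varphi \mp s$, we have $\|\varphi_v\| \ge (C_0 - 1)h > h + 2$, so $\kappa_{v,\varphi_v} = 0$ there, giving $\sum_v \kappa_{v,\varphi_v} \le 0.495 |\Lambda_L|$ and $\Delta_2 \ge -0.495\, \lambda \tfrac{\int b}{\mathrm{Vol}(S)}(a - \hat\zeta)|\Lambda_L|$; Corollary~\ref{cor:markov} supplies $\Delta_3 \ge -4\|\nabla s\|^2$ with probability $\ge 2/3$, the stationarity of $(\zeta^{ks})_k$ under the conditional law being preserved because $\hat\zeta$ depends only on the underlying white noise $W$, not on surface shifts. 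Summing and using $\mathrm{Vol}(S) \asymp L^d h^n$ together with $\|\nabla s\|^2 = O(h^2 L^{d-2})$, the inequality $\Delta_1 + \Delta_2 + \Delta_3 > \|\nabla s\|^2$ reduces to $\lambda L^{d/2} h^{-n/2} \gtrsim h^2 L^{d-2}$, i.e., to $h \le c_0 \lambda^{2/(n+4)} L^{(4-d)/(n+4)}$, precisely the first bound in~\eqref{eq:h_deloc}. Proposition~\ref{prop:Deloc2} then produces a positive lower bound on $\P(\varphi^\eta \notin \Pi) + \P(\varphi^{\zeta^s} \notin \Pi) + \P(\varphi^{\zeta^{-s}} \notin \Pi)$, and by the exchangeability of the three marginals under assumption~\ref{as:stat} each term is bounded below by a positive constant, which is the claim.

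The second entry $\min(\lambda, \lambda^{-1})^{1/2} L^{1-d/2}$ in~\eqref{eq:h_deloc} dominates only for small $\lambda$ or, when $d = 1$, for large $n$; it is obtained from a simpler variant of Proposition~\ref{prop:Deloc2} with $\zeta = \eta$ (so $\Delta_1 = \Delta_2 = 0$), balancing the elastic cost $\|\nabla s\|^2 = O(h^2 L^{d-2})$ against unconstrained disorder-driven fluctuations of order $\sqrt{\lambda |\Lambda_L|}$ via the main identity and the stationarity underlying Corollary~\ref{cor:markov}. The principal technical difficulty is the very thin $\sim 1.5\%$ margin between the bounds on $\Delta_1$ and $\Delta_2$, which forces careful tracking of all implicit constants (notably the choice of $C_0$ and $\ep$) and precludes replacing the $51\%$ threshold in $\Pi$ by a substantially larger one via this single-block approach; a further subtlety is verifying that the stationarity required by Corollary~\ref{cor:markov} survives the conditioning on $\hat\eta = a$, which relies essentially on the Gaussian structure of $\eta^\white$.
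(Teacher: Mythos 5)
Your proposal is a genuinely different route from the paper's. The paper proves Theorem~\ref{thm: deloc} via Lemma~\ref{lem:bounty_construction}, which constructs a measure-preserving-to-within-$\epsilon$ map $m$ (the shifts $f_1,f_2$ of the underlying white noise), controls the Radon--Nikodym derivative (Propositions~\ref{prop:translation_deriv}, \ref{prop:injective_union}, \ref{prop:deriv_TV}; Lemmas~\ref{lem:deriv_bound}, \ref{lem:m_TV}), and couples $(\eta,\zeta)$ so that \emph{both marginals are $\eta^\white$}. You instead adapt the Gaussian-conditioning argument of Lemma~\ref{lem:83-1} (the $d=4$ proof), conditioning on $\hat\eta=a$ at about one standard deviation and dropping the fractal percolation step because the conditioning event then has probability bounded away from $0$. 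That is a reasonable observation, and your computation that this reduces to the Flory bound $h\lesssim\lambda^{2/(4+n)}L^{(4-d)/(4+n)}$ is essentially correct.

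However, there are two real gaps.

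\textbf{The exchangeability step is wrong.} You conclude by saying that $\P(\varphi^\eta\notin\Pi)$, $\P(\varphi^{\zeta^s}\notin\Pi)$ and $\P(\varphi^{\zeta^{-s}}\notin\Pi)$ are exchangeable under \ref{as:stat}, so each is at least a third of the sum. This is exactly the step that is valid in the paper's proof, \emph{because there both marginals of the coupling are unconditioned $\eta^\white$}. In your construction it fails: $\eta$ is conditioned to have $\hat\eta=a$ while $\zeta$ is not, so under the joint law $\overline\P$ the surfaces $\varphi^\eta$ and $\varphi^{\zeta^{\pm s}}$ have \emph{different} laws. Lemma~\ref{lem:83-1} in the paper is aware of this asymmetry and handles it with an additional comparison argument establishing $\overline\P(\varphi^\zeta\notin\Pi',\,\hat\zeta\le a)\le\overline\P(\varphi^\eta\notin\Pi)$, which converts the Proposition~\ref{prop:Deloc2} sum into a bound on $\overline\P(\varphi^\eta\notin\Pi)$ alone. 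That comparison, together with the inclusion $\Pi\subset\Pi'$ and the bound $\overline\P(\hat\zeta>a)\ll 1$, is what closes the argument in the $d=4$ case; it is missing from your write-up and cannot be replaced by an appeal to exchangeability. You also conflate the two threshold sets (in the paper, $\Pi$ is $\loc{\Lambda_L}{h}{51\%}$ while $\Pi'$ is $(\loc{\Lambda_L}{2h}{50\%})^c$ — both the height threshold and the density threshold change) — the comparison argument requires keeping them distinct.

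\textbf{The second entry of~\eqref{eq:h_deloc} is not obtained by your method.} You suggest getting $\min(\lambda,\lambda^{-1})^{1/2}L^{1-d/2}$ ``from a simpler variant of Proposition~\ref{prop:Deloc2} with $\zeta=\eta$ (so $\Delta_1=\Delta_2=0$), balancing the elastic cost against unconstrained disorder-driven fluctuations of order $\sqrt{\lambda|\Lambda_L|}$.'' If $\zeta=\eta$ then $\Delta_1=\Delta_2=0$ and one must show $\Delta_3>\|\nabla s\|^2$ with positive probability, but Corollary~\ref{cor:markov} only gives an \emph{upper} deviation bound on $\Delta_3$; there is no mechanism that produces a bounty. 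In the paper, this bound comes from the \emph{second} shift $f_2$, which lowers the noise in a unit tube around the minimal surface: Lemma~\ref{lem:directed_bounty} yields a bounty $\Omega(\alpha\lambda)$, and the constraint $\alpha\lesssim 1/(1+\lambda^2)$ (Lemmas~\ref{lem:deriv_bound}, \ref{lem:m_TV}) comes from bounding the Radon--Nikodym derivative of the resulting pushforward, which in turn uses the Dirichlet-energy estimate of Corollary~\ref{lem:nabla} to control the entropy of the surface's graph. This mechanism does not correspond to any term in your Gaussian-conditioning construction; the conditioning on $\hat\eta$ only reproduces the $f_1$ branch. So the $\lambda$-small regime of the theorem remains unproved in your proposal.
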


We first prove Theorem \ref{thm: deloc} by combining Proposition \ref{prop:Deloc2}, Proposition \ref{prop:const_laplace}, Corollary \ref{cor:markov} and Proposition \ref{lem:bounty_construction}. Then, in the remainder of the current subsection, we prove Proposition \ref{lem:bounty_construction}. This proposition shows that there is a perturbation of $\eta^\white$ having the same distribution, such that with constant probability there is a ``ground energy gap'' between the original and the perturbed environment. This perturbation is obtained by reducing the average noise in a tube around the geodesic or in the region $\Lambda\times B_h(0)$, where $B_r(s)$ is the ball of radius $r$ defined in~\eqref{eq:ball def}. In this proposition and its proof, the notations $O,\Omega$, sometimes with parameters in subscript, hide constants that may depend on $d,n$, the Lipschitz constant $\mathcal L$
of the bump function in~\eqref{eq:eta white definition} and the parameters in subscript (if present).

\begin{prop}[$\eta\sim\eta^{\white}$]
\label{lem:bounty_construction} There exists $q>0$ such that for every large enough
$L\ge 1$,  every $h\ge 1$,
there is a coupling $(\eta,\zeta )$ of noise functions such that the marginals are distributed as $\eta^ {white}$ and
\begin{equation}
    \P\left(\GE_{\Pi}^{\eta}-\GE_{\Pi}^{\zeta}+\inf_{\varphi\in\Pi'}[H^\zeta(\varphi)-H^\eta(\varphi) ] \ge B\right)\ge q
\end{equation}
where $\Pi\coloneqq\loc{\Lambda}h{51\%}$, $\Pi'\coloneqq(\loc{\Lambda}{2h}{50\%})^c $ and
\begin{equation}
B=\Omega\left(\lambda\max\left\{\frac 1 {1+\lambda^2},\sqrt{\frac{L^{d}}{h^{n}}}\right\}\right).\label{eq:bounty_value}
\end{equation}
\end{prop}

\begin{proof}[Proof of Theorem \ref{thm: deloc}] Let $L\ge 1$ and let $h$ to be chosen later depending on $L$. Denote $\Pi\coloneqq\loc{\Lambda}h{51\%}$ and $\Pi'\coloneqq(\loc{\Lambda}{2h}{50\%})^c $. Let $q>0$ and $B$ as in the statement of Proposition \ref{lem:bounty_construction}.
Let $(\eta,\zeta)$ be the corresponding coupling of noise.
Let $e\in\R^n$.
Let $s=Ch\pi e$ where $\pi$ is as in Proposition \ref{prop:const_laplace} applied to the box $\Lambda=\Lambda_L$ and $\ep=1/100$ where $C$ is such that 
\[\min_{v\in\Lambda_L^-}Ch\pi_v> 3h.\]
Let $\Delta_1,\Delta_2,\Delta_3$ be as in Proposition \ref{prop:Deloc2}.
We claim that for every $\varphi\in\Pi$, we have $\varphi+s,\varphi- s\in \Pi'$.
Indeed, for every $v\in \Lambda_L^-=\Lambda^-$ such that $\|\varphi_v\|\le h$, we have $\|\varphi_v+s_v\|>2h$.
It yields 
\begin{equation*}
    \begin{split}
        |\{v\in\Lambda:\|\varphi_{v}+s_v\|\le 2h\}|&\le |\Lambda|- |\{v\in\Lambda^-:\|\varphi_{v}\|\le h\}|\\&\le |\Lambda|- |\{v\in\Lambda:\|\varphi_{v}\|\le h\}|+ |\Lambda\setminus \Lambda^-|\\
        &\le  \frac {101}{100}|\Lambda|-  \frac {51}{100}|\Lambda|=\frac 12 |\Lambda| 
    \end{split}
\end{equation*}
where we used in the last inequality that $|\Lambda\setminus \Lambda^-|\le \frac 1 {100}|\Lambda|$ and that $\varphi\in\Pi$.
Hence, if $\varphi\in\Pi$ then $\varphi+s\in\Pi'$ concluding the claim.
As a result, we get
\begin{equation*}
    \Delta_1+\Delta_2\ge \GE_{\Pi}^{\eta}-\GE_{\Pi}^{\zeta}+\inf_{\varphi\in\Pi'}[H^\zeta(\varphi)-H^\eta(\varphi) ]
\end{equation*}
and
\begin{equation}\label{eq:delta1}
    \P(\Delta_1+\Delta_2\ge B)\ge q.
\end{equation}
Let $a>0$ such that $\frac 1 a <\frac q 2$. By applying Corollary \ref{cor:markov} to $2a$, we get
\begin{equation*}
    \P(\Delta_3\ge -2(a-1)\|\nabla s\|^ 2)\ge 1- \frac q 2.
\end{equation*}
Let us now choose $h$ such that
\begin{equation}\label{eq:choice h}
    B-2(a-1)\|\nabla s\|^ 2\ge \|\nabla s\|^ 2.
\end{equation}
There exists $C>0$ such that $\|\nabla s\|^ 2\le Ch^ 2L^ {d-2}$.
It yields that 
\begin{equation*}
h\le c_0\max\{\lambda^{\frac{2}{4+n}}L^{\frac{4-d}{4+n}},\min(\lambda,\lambda^{-1})^{1/2}L^{1-d/2}\}
\end{equation*}
Let us assume that 
\[\P\left(\varphi^{\eta}\notin\Pi\right)\le \frac q 8.\]
If the previous inequality does not hold then the result follows. Note that we have the following inclusion
\[\{\varphi^{\zeta^ s}\in\Pi\}\cap\{\varphi^{\zeta^ {-s}}\in\Pi\} \cap \{(\varphi^ {\zeta^s}-\varphi^{\zeta^{-s}},-\Delta s)\ge -2(a-1)\|\nabla s\|^ 2\}\subset \{\Delta_3\ge -2(a-1)\|\nabla s\|^ 2\}. \]
It yields that
\[\P(\Delta_3\ge -2(a-1)\|\nabla s\|^ 2)\ge  1-\frac  {3}4q.\]
Combining the previous inequality together with \eqref{eq:delta1} and \eqref{eq:choice h} yields
\begin{equation*}
    \P(\Delta_1+\Delta_2+\Delta_3>\|\nabla s\|^ 2)\ge \frac q 4.
\end{equation*}
Thanks to Proposition \ref{prop:Deloc2}, it yields that 
\[\P\left(\varphi^{\eta}\notin\Pi\right)\ge \frac q {12}.\]
This concludes the proof.
\end{proof}

\subsubsection{Proof of Proposition \ref{lem:bounty_construction}
}
\global\long\def\law{\mathcal{L}}

We want to build a new noise by reducing the noise in a given region, without changing its distribution by much. To do so we need to control the Radon--Nikodim derivative between the distributions of these noises. Let
$h\ge 1$. Let $L\ge 1$. Let $\Pi\coloneqq\loc{\Lambda}h{51\%}$ and
$\Pi'\coloneqq\Omega^{\Lambda}\setminus\loc{\Lambda}{2h}{49\%}$.
\smallskip

\paragraph{Standard measure theoretic lemmas} In this section, we prove measure theoretic results useful for the proof of Proposition \ref{lem:bounty_construction}.
Denote by $\law_{(\cdot)}$ the law of a process and by $\frac{d(\cdot)}{d(\cdot)}$ the Radon--Nikodym derivative.
\begin{fact}
\label{prop:radnik_indep}Let $X,X'$ and $Y$ be random variables
and assume that $(X,X')$ is independent from $Y$. Then 
\[
\frac{d\law_{(X',Y)}}{d\law_{(X,Y)}}(X',Y)=\frac{d\law_{X'}}{d\law_{X}}(X')
\]
holds almost surely, assuming that the RHS is well defined. 
\end{fact}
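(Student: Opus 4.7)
The plan is to unfold the independence hypothesis into a product-measure statement, and then identify the Radon--Nikodym derivative of a product of measures against another product sharing a common factor. First, I would observe that the hypothesis that $(X,X')$ is independent of $Y$ implies both $X \perp Y$ and $X' \perp Y$ (marginally). Therefore the joint laws factor as product measures:
\begin{equation*}
\mathcal{L}_{(X,Y)} = \mathcal{L}_X \otimes \mathcal{L}_Y, \qquad \mathcal{L}_{(X',Y)} = \mathcal{L}_{X'} \otimes \mathcal{L}_Y.
\end{equation*}

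Next, I would invoke the standard fact about Radon--Nikodym derivatives of product measures: if $\mu_1 \ll \mu_2$ on some measurable space and $\nu$ is a $\sigma$-finite measure on another, then $\mu_1 \otimes \nu \ll \mu_2 \otimes \nu$ and the density factors as
\begin{equation*}
\frac{d(\mu_1 \otimes \nu)}{d(\mu_2 \otimes \nu)}(x,y) = \frac{d\mu_1}{d\mu_2}(x)
\end{equation*}
for $(\mu_2 \otimes \nu)$-almost every $(x,y)$. This is an immediate consequence of Fubini's theorem: if $f(x) := \frac{d\mu_1}{d\mu_2}(x)$, then for any measurable $A \times B$ one computes $\int_{A\times B} f(x) \, d(\mu_2 \otimes \nu) = \mu_1(A)\nu(B) = (\mu_1 \otimes \nu)(A\times B)$, so $f(x)$ serves as the density on product rectangles, which by the monotone class theorem extends to the whole product $\sigma$-algebra.

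Finally, I would apply this identity with $\mu_1 = \mathcal{L}_{X'}$, $\mu_2 = \mathcal{L}_X$, and $\nu = \mathcal{L}_Y$ to obtain
\begin{equation*}
\frac{d\mathcal{L}_{(X',Y)}}{d\mathcal{L}_{(X,Y)}}(x,y) = \frac{d\mathcal{L}_{X'}}{d\mathcal{L}_X}(x)
\end{equation*}
for $\mathcal{L}_{(X,Y)}$-almost every $(x,y)$, and then evaluate at the random pair $(X',Y)$. The only subtlety is that evaluation is done at $(X',Y)$, whose law is $\mathcal{L}_{(X',Y)}$ rather than $\mathcal{L}_{(X,Y)}$; however, since the RHS of the identity is well defined by assumption, and since $\mathcal{L}_{(X',Y)}$ is absolutely continuous with respect to $\mathcal{L}_{(X,Y)}$ on the set where the density is positive (which carries all of $\mathcal{L}_{(X',Y)}$), the identity persists almost surely under the law of $(X',Y)$. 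No step is expected to present an obstacle -- this is a routine measure-theoretic verification whose only content is the factorization of the product-measure density.
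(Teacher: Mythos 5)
Your proof is correct and is the standard argument for this fact. The paper itself states this as a \texttt{Fact} without supplying a proof (it is a routine measure-theoretic identity), so there is nothing in the paper to compare against, but your factorization of the joint laws into products $\law_X\otimes\law_Y$ and $\law_{X'}\otimes\law_Y$, the Fubini/monotone-class verification that $d(\mu_1\otimes\nu)/d(\mu_2\otimes\nu)(x,y)=d\mu_1/d\mu_2(x)$, and the final observation that $\law_{(X',Y)}\ll\law_{(X,Y)}$ (which lets you pass from a $\law_{(X,Y)}$-a.e.\ identity to an a.s.\ identity when evaluated at $(X',Y)$) together constitute a complete and correct proof; the last point could be stated more simply as plain absolute continuity of the product laws rather than via the phrasing ``on the set where the density is positive.''
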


\begin{prop}
\label{prop:translation_deriv}Let $W$ be distributed as a white
noise on a measure space $E$ and let $f\in L^{2}(E)$. Then
\[
\frac{d\law_{W+f}}{d\law_{W}}(w+f)=e^{w(f)+\frac{1}{2}\int f^{2}}.
\]

\end{prop}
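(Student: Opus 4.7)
The plan is to reduce the claim to the standard Cameron--Martin density formula and then perform an algebraic substitution. The first step is to prove the classical identity
\begin{equation}\label{eq:CM-standard}
\frac{d\law_{W+f}}{d\law_W}(w) \;=\; e^{w(f)-\frac{1}{2}\int f^{2}}\qquad\text{for $\law_W$-a.e.\ $w$,}
\end{equation}
which is the Cameron--Martin theorem in the white noise framework. To prove this I would test the putative density against cylindrical functionals: for any $g_1,\ldots,g_k\in L^2(E)$ and any bounded measurable $\phi:\R^k\to\R$, I want to verify
\[
\E\bigl[\phi\bigl(W(g_1)+\langle f,g_1\rangle,\ldots,W(g_k)+\langle f,g_k\rangle\bigr)\bigr] \;=\; \E\bigl[\phi(W(g_1),\ldots,W(g_k))\,e^{W(f)-\frac{1}{2}\int f^{2}}\bigr].
\]
Decomposing $f=f_\parallel+f_\perp$ with $f_\parallel\in\mathrm{span}(g_1,\ldots,g_k)$, the vector $(W(g_1),\ldots,W(g_k),W(f_\parallel))$ is jointly Gaussian and $W(f_\perp)$ is independent of it with $\E\bigl[e^{W(f_\perp)-\frac{1}{2}\|f_\perp\|^{2}}\bigr]=1$; the identity then reduces to the finite-dimensional Cameron--Martin formula, proved by a direct computation with Gaussian densities. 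A standard monotone class argument extends the identity from cylindrical functionals to all bounded measurable functionals of $W$, yielding \eqref{eq:CM-standard}.

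For the second step, I would substitute $w\mapsto w+f$ in~\eqref{eq:CM-standard}. By the definition of the shifted process, $(w+f)(g)=w(g)+\langle f,g\rangle$ for every $g\in L^2(E)$; in particular $(w+f)(f)=w(f)+\int f^{2}$. Plugging this into~\eqref{eq:CM-standard} gives
\[
\frac{d\law_{W+f}}{d\law_W}(w+f) \;=\; e^{(w+f)(f)-\frac{1}{2}\int f^{2}} \;=\; e^{w(f)+\int f^{2}-\frac{1}{2}\int f^{2}} \;=\; e^{w(f)+\frac{1}{2}\int f^{2}},
\]
as claimed. The only substantive step is the Cameron--Martin identity~\eqref{eq:CM-standard}, which is a classical fact, so no genuine obstacle is anticipated; an alternative, equivalent route is to verify \eqref{eq:CM-standard} by matching Laplace transforms, using that the law of a Gaussian process on $L^2$ is determined by its Laplace functional.
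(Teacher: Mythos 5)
Your proof is correct, but it takes a somewhat different route from the paper's. The paper decomposes the noise itself via the bijection $w\mapsto(w(f),w|_{f^{\perp}})$, notes that $W(f)$ and $W|_{f^\perp}$ are independent and that the shift by $f$ acts only on the first coordinate (by $+\int f^2$), reduces the Radon--Nikodym derivative to a one-dimensional Gaussian density ratio, and invokes Fact~\ref{prop:radnik_indep} to conclude directly with the argument evaluated at $w+f$. You instead first prove the classical Cameron--Martin formula with argument $w$ by testing against cylinder functionals, decomposing $f=f_\parallel+f_\perp$ \emph{relative to the test functions} $g_1,\dots,g_k$ rather than decomposing $W$ relative to $f$, reducing to the finite-dimensional Cameron--Martin identity, and extending by a monotone class argument; then you substitute $w\mapsto w+f$. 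Your route is more self-contained (it does not need Fact~\ref{prop:radnik_indep} or the implicit measurability check for the coordinate bijection) and follows the standard textbook path; the paper's route is shorter because it packages the independence argument once via the Fact and avoids the monotone class step, and it outputs the density directly in the form needed for Proposition~\ref{prop:deriv_TV} and Lemma~\ref{lem:m_TV}. Both are sound.
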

\begin{proof}
Let $L^{2}(E)^{\#}$ be the set of linear functionals of $L^{2}(E)$.
Define $h:\R\to\R$ by $h(x)\coloneqq e^{x-\frac{1}{2}\int f^{2}}$.%

Write $X\coloneqq W(f)\sim N(0,\int f^{2})$. Thus 
\[
\frac{d\law_{X+\int f^{2}}}{d\law_{X}}(x)=\frac{\exp\left(-\frac{(x-\int f^{2})^{2}}{2\int f^{2}}\right)}{\exp\left(-\frac{x^{2}}{2\int f^{2}}\right)}=h(x).
\]
Write $f^{\perp}$ for the orthogonal complement of $f$ in $L^{2}(E)$,
and $p_{f}$ for the orthogonal projection onto $f^{\perp}$. By properties
of white noise, $W(f)$ is independent from $W|_{f^{\perp}}$. Also
note that $(W+f)|_{f^{\perp}}=W|_{f^{\perp}}$. Applying Fact
\ref{prop:radnik_indep} to $X,X+\int f^{2}$ and $W|_{f^{\perp}}$,
and considering the bijection $w\mapsto(w(f),w|_{f^{\perp}})$, the proposition follows: 
\begin{align*}
\frac{d\law_{W+f}}{d\law_{W}}(w+f) & =\frac{d\law_{(X+\int f^{2},W|_{f^{\perp}})}}{d\law_{(X,W|_{f^{\perp}})}}(w(f)+\int f^{2},w|_{f^{\perp}})\\
 & =h(w(f)+\int f^{2})=e^{w(f)+\int f^{2}-\frac{1}{2}\int f^{2}}.
\end{align*}

\end{proof}
Let $(\Omega_1,\mathcal{F}_1,\P_1)$ be a measure space and let $(\Omega_2,\mathcal{F}_2)$ be a measurable space. Let $m:\Omega_1\to\Omega_2$ be measurable. Denote by $m_*\P$ the push-forward measure defined on $(\Omega_2,\mathcal{F}_2)$ by $m_*\P(A)\coloneqq\P(m^{-1}(A))$. The following lemma controls the total variation distance between a measure and its push-forward by $m$ assuming that the Radon--Nikodym derivative is close to $1$ with high probability.
\begin{lem}
\label{prop:deriv_TV}Let $(\Omega,\mathcal{F},\P)$ be a probability
space and let $m:\Omega\to\Omega$ measurable. Suppose for some $c>0$
that $m_*\P(\frac{dm_*\P}{d\P}>1+c)<c$. Then $d_{\mathrm{TV}}(\P,m_*\P)\le2c$. 
\end{lem}

\begin{proof}
Let $B=\{\frac{dm_*\P}{d\P}>1+c\}$. Then $m_*\P(B)<c$. Let $A$
be an event. Then 
\begin{align*}
m_*\P(A) & =m_*\P(A\cap B^{c})+m_*\P(A\cap B)\\
 & \le\int_{A\cap B^{c}}\frac{dm_*\P}{d\P}d\P+m_*\P(B)\\
 & \le(1+c)\P(A)+c\le\P(A)+2c
\end{align*}
By considering $A^{c}$ we get also $m_*\P(A)\ge\P(A)-2c$. 
\end{proof}
\begin{lem}
\label{prop:injective_union}Let $m$ be an injective function from
a subset of a probability space $\Omega$ into itself, and assume
that $m$ is a union of a countable set of bijections 
$\{m_{i}:E_{i}\to\Omega\}_{i\in\N}$
where $E_{i}$ is an event and $m_{i}$ and its inverse are measurable.
If $\frac{dm_{i*}\mu}{d\mu}\le a$ almost surely for each $i$,
then $\frac{dm_*\mu}{d\mu}\le a$. 
\end{lem}

\begin{proof}
For an event $F$, we are given that $\mu(m_{i}^{-1}(F))=m_{i*}\mu(F)=\int_{F}\frac{dm_*\mu}{d\mu}\le a\mu(F)$.
The sets $F_{i}\coloneqq m(m_{i}^{-1}(F))$ are measurable, and since
$m$ is injective, they are pairwise disjoint. 
\[
m_*\mu(F)=\mu(m^{-1}(F))=\mu(\bigcup m_{i}^{-1}(F))\le\sum\mu(m_{i}^{-1}(F_{i}))\le\sum a\mu(F_{i})\le a\mu(F).
\]
\end{proof}

\paragraph{Two specific mappings}\label{sec:two specific mappings}
In this section, we present two mappings the first one corresponding to reducing the noise on a region of the form $\Lambda\times B_{h+1}(0)$ the other corresponding to reducing the noise around the minimal surface.
\begin{defn}
Let $\alpha>0$ be a parameter. Let $f\in\{f_{1},f_{2}\}$ be a random
function where: 
\begin{align*}
f_{1} & =-\frac{\alpha}{\sqrt{L^{d}h^{n}}}\indic{\Lambda\times B_{h+1}(0)}\\
f_{2} & =-\frac{\alpha}{L^{d}}\sum_{v\in\Lambda}\indic{\{v\}\times B_{2}(\left\lfloor \varphi^{\eta}_{v}\right\rfloor )}
\end{align*}
where $\left\lfloor t\right\rfloor \coloneqq\left(\left\lfloor t_{i}\right\rfloor \right)_{i=1}^{n}$.

Let $m_f$ be a mapping from the probability space of white noise
on $\Z^{d}\times\R^{n}$ to itself, defined by $m_f(W)=W+f$. Equivalently
\[
(m_f\eta)_{v,t}=\eta_{v,t}+\int_{\R^{n}}f(t-x)b(x)dx.
\]
\end{defn}
The purpose of the following lemma is to control the Radon--Nikodym derivative of the push-forward measure by $m_f$ with respect to the original measure when $f=f_1$ and $f=f_2$. The main difficulty is in the case $f=f_2$ in which we reduce the noise near the minimal surface. For that purpose, we bound the integral of the noise in the neighborhood of the minimal surface (corresponding to $W(f_2)$). This is done by union bounding on all the possible values for $f_2$.
\begin{lem}
\label{lem:deriv_bound}Let $a\in(0,1)$. There is $\alpha_{0}=\begin{cases}
\Omega_{a}(1) & f=f_{1}\\
\Omega_{a}(\frac {1}{1+\lambda^2}) & f=f_{2}
\end{cases}$ such that for $\alpha\le\alpha_{0}$, 
\[
\P\left(W(f)+\frac{1}{2}\int f^{2}>a\right)<a.
\]
\end{lem}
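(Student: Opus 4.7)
In both cases I will show $\tfrac12\int f^2\le a/2$ deterministically and $|W(f)|\le a/2$ with probability at least $1-a$. For $f=f_1$, which is deterministic, $\operatorname{Vol}(B_{h+1}(0))\le C_n h^n$ (since $h\ge1$) gives $\int f_1^2\le C_n\alpha^2$, so $\tfrac12\int f_1^2\le a/2$ for $\alpha\le\alpha_0(a)$. As $W(f_1)\sim N(0,\int f_1^2)$ is a centered Gaussian, the standard tail bound yields $\P(|W(f_1)|>a/2)\le 2\exp(-a^2/(8C_n\alpha^2))\le a/2$ once $\alpha\le\alpha_0=\Omega_a(1)$.

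For $f=f_2$ the deterministic bound $\int f_2^2=C_n\alpha^2/L^d$ is trivial, but
\[
W(f_2)=-\tfrac{\alpha}{L^d}\sum_{v\in\Lambda_L}W_v(\mathbf{1}_{B_2(\lfloor\varphi^\eta_v\rfloor)})
\]
is not Gaussian because $\varphi^\eta$ depends on $W$. My plan is to decompose orthogonally in $L^2(\R^n)$,
\[
\mathbf{1}_{B_2(\lfloor\varphi^\eta_v\rfloor)}\;=\;c_b\,b(\cdot-\varphi^\eta_v)\;+\;R_v,\qquad c_b:=\int b,
\]
which is well-defined since $\operatorname{supp}\bigl(b(\cdot-\varphi^\eta_v)\bigr)\subset B_2(\lfloor\varphi^\eta_v\rfloor)$ (after absorbing a dimensional constant into the ball radius). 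Pairing with $W_v$ gives $W_v(\mathbf{1}_{B_2(\lfloor\varphi^\eta_v\rfloor)})=c_b\,\eta_{v,\varphi^\eta_v}+W_v(R_v)$ with $\|R_v\|_2^2\le C_n$ almost surely.

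The ballistic piece sums to $c_b\sum_v\eta_{v,\varphi^\eta_v}=c_b\lambda^{-1}\bigl(\GE^\eta-\tfrac12\|\nabla\varphi^\eta\|^2\bigr)$. From the trivial upper bound $\GE^\eta\le H^\eta(0)=\lambda\sum_v\eta_{v,0}$, the Gaussian tails of $\sum_v\eta_{v,0}$ (which has variance $L^d$), and the sub-Gaussian concentration of $\GE^\eta$ in Assumption~\ref{as:conc}, both $|\GE^\eta|$ and $\|\nabla\varphi^\eta\|^2$ are $O_a(\lambda L^{d/2})$ with probability at least $1-a/4$. The contribution to $W(f_2)$ is thus of order $\alpha(1+\lambda)$, which is $\le a/4$ provided $\alpha\le ca/(1+\lambda)$. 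The residual piece $\sum_v W_v(R_v)$ will be treated as conditionally Gaussian: orthogonally splitting each $W_v$ along the direction $b(\cdot-\varphi^\eta_v)$ and its orthogonal complement shows that, conditionally on $\varphi^\eta$, $W_v(R_v)$ has mean zero and variance $\|R_v\|_2^2\le C_n$; a second-moment computation and Markov's inequality then give $\P\bigl(|\sum_v W_v(R_v)|>aL^d/(4\alpha)\bigr)\le O\bigl(\alpha^2/(a^2 L^d)\bigr)\le a/4$ for $\alpha\le ca$.

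The main obstacle is justifying the conditional Gaussianity of the residual sum, since conditioning on $\varphi^\eta$ conditions on a complicated function of $W$; this will require a careful application of a Gaussian decomposition analogous to Lemma~\ref{lem : decomposition noise-1} but with direction $b(\cdot-\varphi^\eta_v)$ in place of $\mathbf{1}_{B_r(0)}$, together with measurability of $\varphi^\eta$ in the orthogonal complement (it is determined by the energies $\eta_{v,t}$, which lie in the span of the bumps $b(\cdot-t)$). Balancing the ballistic bound $\alpha\le ca/(1+\lambda)$ against the $\lambda$-factor entering through the energy concentration gives $\alpha_0=\Omega_a(1/(1+\lambda^2))$ as claimed.
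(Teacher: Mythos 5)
Your plan for $f=f_{1}$ matches the paper's proof in spirit and is fine. For $f=f_{2}$, however, the plan has a genuine gap, and the paper resolves the difficulty by a fundamentally different method.

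The core difficulty --- that $f_{2}$ depends on the minimizer $\varphi^{\eta}$, and hence on $W$ --- is handled in the paper by an entropy-versus-energy union bound: one controls $\|\nabla\varphi^{\eta}\|^{2}$ a priori via Corollary~\ref{lem:nabla}, rounds $\varphi^{\eta}$ to an integer-valued surface in $\widetilde{\Omega}_{2k}$, and union-bounds a genuine Gaussian tail over the $e^{4k|\Lambda_{L}|}$ surfaces in that set. You instead decompose $W(f_{2})$ into a ``ballistic'' piece proportional to $\sum_{v}\eta_{v,\varphi^{\eta}_{v}}$ plus a residual $\sum_{v}W_{v}(R_{v})$, and propose to treat the residual as conditionally Gaussian given $\varphi^{\eta}$. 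This is the heart of your argument, and it does not go through: conditioning on the output $\varphi^{\eta}$ of a non-linear optimization of $W$, and then projecting onto the ``orthogonal complement'' of a direction $b(\cdot-\varphi^{\eta}_{v})$ that is itself a random function of that output, does not preserve Gaussianity, nor centeredness. Consider the toy example $W=(W_{1},W_{2})$ standard Gaussian, $T:=\arg\max_{i}W_{i}$: the variable $W_{3-T}=\min(W_{1},W_{2})$ is neither Gaussian nor centered. Your parenthetical that $\varphi^{\eta}$ is determined by $(\eta_{v,t})_{t}$, ``which lie in the span of the bumps,'' does not help, because that span is essentially all of $L^{2}(\R^{n})$, and the minimizer is coupled precisely to the projection of $W_{v}$ along $b(\cdot-\varphi^{\eta}_{v})$ that you are trying to condition away. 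This obstacle is not a technical loose end to tighten; the union-bound route (or some other de-adaptation device) is necessary.

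A secondary problem is the scaling of the ballistic piece. You assert that both $|\GE^{\eta}|$ and $\|\nabla\varphi^{\eta}\|^{2}$ are $O_{a}(\lambda L^{d/2})$, but this conflates the fluctuation scale of Assumption~\ref{as:conc} with the absolute scale: the expectation $\E\,\GE^{\eta}$ is of order $-\lambda L^{d}$, and Corollary~\ref{lem:nabla} controls $-\GE^{\eta}$ and $\|\nabla\varphi^{\eta}\|^{2}$ only on the scale $\max(1,\lambda^{2})L^{d}$. Using the correct bounds, the ballistic contribution to $W(f_{2})$ is of order $\alpha\max(\lambda^{-1},\lambda)$, not $\alpha(1+\lambda)$, forcing $\alpha_{0}\lesssim a\min(\lambda,\lambda^{-1})$; for $\lambda\le 1$ this degenerates and is strictly weaker than the stated $\Omega_{a}(1/(1+\lambda^{2}))=\Omega_{a}(1)$. (The support inclusion $\mathrm{supp}\,b(\cdot-\varphi^{\eta}_{v})\subset B_{2}(\lfloor\varphi^{\eta}_{v}\rfloor)$ also needs the ball radius adjusted to $1+\sqrt{n}$ for $n\ge 2$, as you note; this is harmless bookkeeping and not the essential issue.)
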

\begin{proof}
Write $\sigma^{2}=\int f^{2}$. For $f=f_{1}$: 
\[
\sigma^{2}=\int f^{2}=|\Lambda||B_{h+1}(0)|\left(\frac{\alpha}{\sqrt{L^{d}h^{n}}}\right)^{2}=O(\alpha^{2})\cdot
\]
Thus $W(f)\sim N(0,\sigma^{2})$ and the lemma follows for this case.
Thus by Chebyshev's inequality, for $\alpha\ll a\le1$, 
\[
\P\left(W(f)+\frac{1}{2}\int f^{2}>a\right)=\P\left(W(f)>(\frac{a}{\sigma}-\frac{\sigma}{2})\sigma\right)\le\frac{1}{\frac{a}{\sigma}-\frac{\sigma}{2}}=O(\frac{\alpha}{a}).
\]

For $f=f_{2}$:
For every $\varphi\in\Z^d\to \Z^n$
\[\P\left (\sum_{v\in\Lambda}\int_{t\in B_2(\varphi_v)}W(t)dt<-k|\Lambda|\right)\le e^{-\Omega(ck^{2}|\Lambda|)}\]
By Corollary \ref{lem:nabla}, we have for $k\ge \lambda^2K_0$
\[ \mathbb P \big(  \| \nabla \varphi ^{\eta } \|_{\Lambda }^2 \ge k|\Lambda | \big) \le \exp \big( -c\frac{k^2}{\lambda^2}|\Lambda | \big).\]
We bound $\P(W(f)>\alpha k)$ by borrowing some ideas and notations from the proof of Lemma \ref{lem:conc_lem}. Denote
\begin{align*}
    \Omega_k &\coloneqq \big\{ \varphi \in \Omega ^{\Lambda,0}: \|\nabla \varphi \| _{\Lambda}^2 \le k|\Lambda| \big\} \\
    \widetilde \Omega_k &\coloneqq \big\{ \varphi \in \Omega_k: \varphi:\Z^d\to\Z^n \big\} \\
    \tilde{\varphi} &\coloneqq \lfloor\varphi ^{\eta } \rfloor.
\end{align*}
Assume $k>C$. By inequality \eqref{eq:norml2discretephi}, we have that $\varphi^\eta\in\Omega_k$, implies 
$\| \nabla \tilde{\varphi} \|_{\Lambda }^2\le 2k|\Lambda|$.
Thanks to \eqref{eq:contomegak}, we have
\[|\widetilde \Omega_{2k}|\le e^{4k|\Lambda|}.\]
It follows that 
\begin{equation*}
    \begin{split}
        \P(W(f)>\alpha k)&\le \P\big(  \| \nabla \varphi ^{\eta } \|_{\Lambda }^2 \ge k|\Lambda | \big) +\P\left(\exists\varphi\in \widetilde\Omega_{2k},\quad \sum_{v\in\Lambda}\int_{t\in B_2(\varphi_v)}W(t)dt<-k|\Lambda|\right)\\
        &\le  e^{ -c\frac{k^2}{\lambda^2}|\Lambda | }+ e^{-ck^2|\Lambda|}.
    \end{split}
\end{equation*}
Also $\int f^{2}\le|\Lambda|4^{n}\left(\frac{\alpha}{L^{d}}\right)^{2}=O(\alpha^{2}L^{-d})$.
Thus taking $k= \max\{C,\frac 1 c \max\{1,\lambda^2K_0\}\log \frac 2 a\}$, the RHS of the previous display is $<a$
and taking $\alpha_{0}<\frac{a}{2k}$
\[
\P\left(W(f)+\frac{1}{2}\int f^{2}>a\right)\le \P\left(W(f)>\frac a 2\right)\le\P(W(f)>\alpha k)\le a.
\]
This concludes the proof of the lemma.
\end{proof}
\begin{lem}
\label{lem:m_is_injective}$m_f$ is injective for $f=f_1$ and $f_2$.
\end{lem}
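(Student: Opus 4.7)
For $f = f_1$: the function $f_1$ is a fixed deterministic element of $L^2$, so $m(W) = W + f_1$ is translation by a constant on the space of white noises, evidently injective with inverse $\tilde W \mapsto \tilde W - f_1$.

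For $f = f_2$: the key observation is that $f_2$ depends on $W$ only through the integer-valued function $\psi := \lfloor \varphi^\eta_\cdot \rfloor$, which is supported in $\Lambda_L$ and so takes countably many values. I would partition the sample space into events $E_\psi := \{W : \lfloor \varphi^\eta_\cdot \rfloor = \psi\}$. On each $E_\psi$ the shift $f_2 = f_{2,\psi}$ is a \emph{fixed} function, so $m|_{E_\psi}$ is translation by $f_{2,\psi}$, hence a bijection $E_\psi \to E_\psi + f_{2,\psi}$. This realises $m$ as a countable union of bijections, the setup required by Proposition~\ref{prop:injective_union}, and reduces global injectivity to showing the translates $\{E_\psi + f_{2,\psi}\}_\psi$ are pairwise disjoint, equivalently that $\psi$ can be uniquely recovered from $\tilde W := m(W)$.

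To recover $\psi$, I would exploit that $f_{2,\psi}$ is supported in $\bigcup_{v} \{v\} \times B_2(\psi_v)$ and is non-positive, so that $\tilde\eta_{v,t} \le \eta_{v,t}$ everywhere and $\tilde\eta_{v,t} = \eta_{v,t}$ whenever $\|t - \psi_v\| \ge 3$. Plugging the original minimizer $\varphi^\eta$ into the modified Hamiltonian gives $H^{\tilde\eta}(\varphi^\eta) \le H^\eta(\varphi^\eta) - \Omega(\lambda\alpha)$, while $H^{\tilde\eta}(\varphi) = H^\eta(\varphi)$ for configurations staying well away from $\psi$, and $H^{\tilde\eta}(\varphi) \ge H^\eta(\varphi) - o(\lambda\alpha)$ in between. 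This should imply that $\varphi^\eta$ remains the almost surely unique minimizer of $H^{\tilde\eta}$, so that from $\tilde W$ one may compute $\tilde\eta$ by convolution with $b$, find its minimizer $\varphi^{\tilde\eta}$, read off $\psi := \lfloor \varphi^{\tilde\eta}_\cdot \rfloor$, and then recover $W = \tilde W - f_{2,\psi}$ unambiguously.

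The main obstacle is making the pinning statement rigorous: verifying that $\varphi^\eta$ is indeed the unique minimizer of $H^{\tilde\eta}$ on a full-measure set, and that $\lfloor \varphi^{\tilde\eta}_v \rfloor$ is well-defined (i.e.\ $\varphi^{\tilde\eta}_v$ almost surely avoids the integer lattice hyperplanes). Both points should follow from the Gaussian structure of $\eta^{\white}$, under which the joint distribution of the components of the minimizer is absolutely continuous, so that the lattice coincidence events $\{\varphi^{\tilde\eta}_v \in \partial[k,k+1]^n\}$ occur only on a null set, and uniqueness of the minimizer parallels the argument already used to verify assumption~\ref{as:exiuni} for $\eta^{\white}$.
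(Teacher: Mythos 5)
Your overall strategy coincides with the paper's: show that $\varphi^{m\eta,\Pi}=\varphi^{\eta,\Pi}$ (the minimizer is unchanged by $m$), then recover $m\eta-\eta$ (which is a deterministic function of that minimizer) and hence $\eta$ itself. However, the quantitative reasoning you give for why the minimizer is preserved is off. You write
``$H^{\tilde\eta}(\varphi)\ge H^\eta(\varphi)-o(\lambda\alpha)$ in between'' — but this is false: a configuration $\varphi$ agreeing with $\varphi^{\eta,\Pi}$ at all but one vertex experiences a decrement close to $\lambda|\Lambda|\Delta=\Theta(\lambda\alpha)$, not $o(\lambda\alpha)$. Fortunately no trichotomy of cases is needed at all. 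The paper's argument is the simple pointwise observation that since $f\le0$, $b\ge0$ and $\int b=1$, one has $0\le\eta_{v,t}-(m\eta)_{v,t}\le\max(-f)=\Delta$ for \emph{all} $(v,t)$, with equality to $\Delta$ precisely on the graph of the minimizer. Therefore for every $\varphi$,
$\eta(\varphi)-m\eta(\varphi)\le|\Lambda|\Delta=\eta(\varphi^{\eta,\Pi})-m\eta(\varphi^{\eta,\Pi})$, and combining with optimality of $\varphi^{\eta,\Pi}$ under $\eta$ gives $H^{m\eta}(\varphi)\ge H^{m\eta}(\varphi^{\eta,\Pi})$ for all $\varphi\in\Pi$, with strictness inherited from strictness for $H^\eta$. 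No Gaussian structure, absolute continuity, or almost-sure statement enters; the argument is entirely deterministic given the uniqueness of the $H^\eta$-minimizer.

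Two further, more minor, points. First, your worry that $\lfloor\varphi^{\tilde\eta}_v\rfloor$ might be ill-defined is not a real obstacle: the floor function is defined on all of $\R^n$ (coordinatewise), so no null set needs to be discarded for that reason. Second, the partition into events $E_\psi$ and the appeal to Proposition~\ref{prop:injective_union} are not needed to prove injectivity; that decomposition is precisely what the paper deploys later, in the proof of Lemma~\ref{lem:m_TV}, where injectivity is already assumed and the goal is instead to bound the Radon--Nikodym derivative of $m_*\mu$. Injectivity of $m$ itself has a shorter, self-contained proof as above.
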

\begin{proof}
For $f=f_{1}$ this holds trivially since $m_f$ is a translation. Assume
$f=f_{2}$. Then injectivity follows from the fact that $\varphi^{m_f\eta,\Pi}=\varphi^{\eta,\Pi}$
and $m_f\eta-\eta$ is determined by $\varphi^{\eta,\Pi}$. This equality
follows since for each $\varphi\in\Omega^{\Lambda}$, it holds that
\[
\eta(\varphi)-m_f\eta(\varphi)\le\eta(\varphi^{\eta,\Pi})-m_f\eta(\varphi^{\eta,\Pi}).
\]
Indeed since $\forall t,b(t)\ge0$ and $\int b(t)dt=1$, it holds
that $0\le\eta_{v,t}-(m_f\eta)_{v,t}\le\max(f)$, and for each $(v,t)$
with $t=\varphi_{v}^{\eta,\Pi}$ in the graph of $\varphi^{\eta,\Pi}$
it holds that $\eta_{v,t}-(m_f\eta)_{v,t}=\max(f)$. 
\end{proof}
The purpose of the following lemma is to obtain a lower bound on the ``ground energy gap'' when reducing the average noise using $f_1$ or $f_2$.
\begin{lem}

\label{lem:directed_bounty}Let $\alpha>0$. There is 

\begin{equation}\label{eq:B}
    B=\begin{cases}
\Omega\left(\alpha\lambda\sqrt{\frac{L^{d}}{h^{n}}}\right) & f=f_{1}\\
\Omega\left(\alpha\lambda\right) & f=f_{2}
\end{cases}
\end{equation}
such that for every deterministic function $\eta:\Z^{d}\times\R^{n}\to\R$,
\begin{equation}
\GE_{\Pi}^{\eta}-\GE_{\Pi}^{m_f\eta}+\inf_{\varphi\in \Pi'}[H^{m_f\eta}(\varphi)-H^{\eta}(\varphi)]\ge B.\label{eq:m_bounty}
\end{equation}
\end{lem}

\begin{proof}
For the case $f=f_{1}$: Put $B=\frac{\alpha\lambda|\Lambda|\int b(t)dt}{100\sqrt{L^{d}h^{n}}}$, which is as in (\ref{eq:B}), (using (\ref{eq:int_b_lb})).
Write $A^{\varphi,h}\coloneqq\left\{ v\in\Lambda:\|\varphi_{v}\|\le h\right\} $.
Write $\Delta=\lambda\int b(t)dt \max(-f)=\frac{\alpha\lambda \int b(t)dt}{\sqrt{L^{d}h^{n}}}=\frac{100B}{|\Lambda|}$. For any $\varphi$,
\[
-|A^{\varphi,h+1}|\Delta\le H^{m_f\eta}(\varphi)-H^{\eta}(\varphi)=\lambda\sum_{v\in\Lambda}\int b(t)f(\varphi_{v}-t)dt\le- |A^{\varphi,h}| \Delta
\]

For $\varphi=\varphi^{\eta,\Pi}$, it holds by (\ref{eq:loc_set})
that, $|A^{\varphi,h}|\ge0.51|\Lambda|$. For $\varphi\in\Pi'$,
it holds that $|A^{\varphi,h+1}|<|A^{\varphi,2h}|<0.50|\Lambda|$.
Substituting gives 
\begin{align*}
\GE_{\Pi}^{\eta}-\GE_{\Pi}^{m_f\eta} & \ge51B\text{ and}\\
\inf_{\varphi\in\Pi'}\left[H^{m_f\eta}(\varphi)-H^{\eta}(\varphi)\right] & \ge-50B.
\end{align*}
Combining the two previous inequalities yields the result.

For the case $f=f_{2}$: Put $B=\frac{\alpha\lambda|\Lambda|\int b(t)dt}{100L^{d}}$.
Put $$\Delta=\lambda\int b(t)dt \max(-f)=\lambda\int b(t)dt\frac{\alpha}{L^{d}}=\frac{100B}{|\Lambda|}.$$For $\varphi=\varphi^{\eta,\Pi}$
it holds that $f(\varphi_{v}-t)=-\Delta$ whenever $\|t\|\le1.$ Thus
\begin{equation} \label{eq:100B}
\GE_{\Pi}^{\eta}-\GE_{\Pi}^{m_f\eta}\ge H^{\eta}(\varphi)-H^{m_f\eta}(\varphi)=|\Lambda|\Delta\ge100B.
\end{equation}
For $\varphi\in\Pi'$,
it holds that 
\begin{align*}
|\left\{ v\in\Lambda:\|\varphi_{v}-\varphi_{v}^{\eta,\Pi}\|\le2\right\}|  & \le|\left\{ v\in\Lambda:\varphi_{v},\varphi_{v}^{\eta,\Pi}\in B_{2h}(0)\right\} |\\
 & +|\left\{ v\in\Lambda:\varphi_{v},\varphi_{v}^{\eta,\Pi}\notin B_{h}(0)\right\}| \\
 & \le0.50|\Lambda|+0.49|\Lambda|
\end{align*}
and therefore
\begin{align*}
\inf_{\varphi\in\Pi'}\left[H^{m_f\eta}(\varphi)-H^{\eta}(\varphi)\right] & \ge-0.99|\Lambda|\Delta \ge-99B
\end{align*}
which, together with (\ref{eq:100B}) yields the result.
\end{proof}
\begin{lem}
\label{lem:m_TV}There is $\alpha_{0}=\begin{cases}
\Omega_{\ep}(1) & f=f_{1}\\
\Omega_{\ep}(\frac 1{1+\lambda^2}) & f=f_{2}
\end{cases}$ such that, $d_{\mathrm{TV}}(\eta,m_f\eta)\le\ep$ when $\alpha\le\alpha_{0}$.
\end{lem}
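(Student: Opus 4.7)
The goal reduces via Proposition \ref{prop:deriv_TV} (taking $c=\ep/2$) to showing
\begin{equation*}
m_*\P\Bigl(\tfrac{dm_*\P}{d\P}>1+c\Bigr)<c.
\end{equation*}
So the plan has three steps: identify $\tfrac{dm_*\P}{d\P}$ pointwise on the image of $m$, push the event back to one about $W$, and invoke the Gaussian tail bound of Lemma \ref{lem:deriv_bound}.

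First I would partition $\Omega$ by the value of $f$. Let $\{f^{(i)}\}$ enumerate the possible values of $f$ (a single value for $f_1$; one value per possible $\lfloor\varphi^{\eta}\rfloor$ for $f_2$) and let $E_i\coloneqq\{W:f(W)=f^{(i)}\}$. On $E_i$ the map $m$ is the translation $W\mapsto W+f^{(i)}$. By Lemma \ref{lem:m_is_injective} the translated strata $F_i\coloneqq E_i+f^{(i)}$ are pairwise disjoint, so $m_*\P=\sum_i m_*(\P|_{E_i})$ decomposes as a disjoint sum. Combining this disjointness with the Cameron--Martin formula of Proposition \ref{prop:translation_deriv} (applied stratum by stratum) one reads off, at $w=m(W)$ with $W\in E_i$,
\begin{equation*}
\frac{dm_*\P}{d\P}(m(W))=e^{W(f(W))+\tfrac{1}{2}\int f(W)^{2}}.
\end{equation*}

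Second, pushing the event $\{\tfrac{dm_*\P}{d\P}>1+c\}$ back under $m$ gives
\begin{equation*}
m_*\P\Bigl(\tfrac{dm_*\P}{d\P}>1+c\Bigr)=\P\Bigl(W(f)+\tfrac{1}{2}\int f^{2}>\log(1+c)\Bigr),
\end{equation*}
which is exactly the quantity controlled by Lemma \ref{lem:deriv_bound}. Applying that lemma with $a=\log(1+c)\le c$ gives a bound by $a\le c$ provided $\alpha\le\alpha_0$, with $\alpha_0=\Omega_\ep(1)$ when $f=f_1$ and $\alpha_0=\Omega_\ep\bigl(1/(1+\lambda^{2})\bigr)$ when $f=f_2$, which is precisely the dependence claimed. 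Proposition \ref{prop:deriv_TV} then concludes $d_{\mathrm{TV}}(\P,m_*\P)\le 2c=\ep$.

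The only step with genuine content is the pointwise density computation: injectivity of $m$ (Lemma \ref{lem:m_is_injective}) is used so that the translated strata $F_i$ do not overlap, and it is this disjointness which lets the stratumwise Cameron--Martin densities assemble into $\tfrac{dm_*\P}{d\P}$ at $m(W)$ with no combinatorial overcounting. For $f=f_1$ this step is trivial since there is only one stratum; the slightly worse $\alpha_0$ for $f=f_2$ is inherited from the corresponding case of Lemma \ref{lem:deriv_bound}, where the exponent $W(f_2)$ must be controlled uniformly over the random choice of $f_2$ via the Dirichlet-energy bound on $\varphi^{\eta}$.
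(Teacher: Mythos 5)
Your proof is correct, and the overall structure is the same as the paper's (stratify by the value of $f$, apply Cameron--Martin on each stratum, invoke Lemma \ref{lem:deriv_bound} for the tail of $W(f)+\tfrac12\int f^2$, conclude with Proposition \ref{prop:deriv_TV}). The genuine difference is in how you pass from the stratumwise Cameron--Martin densities to a bound on $d_{\mathrm{TV}}$. The paper first conditions $\eta$ on the good event $\mathcal E=\{W(f)+\tfrac12\int f^2\le a\}$, so that the density on each stratum is uniformly bounded by $e^a$, and then invokes Proposition \ref{prop:injective_union} to combine the strata into a single a.e.\ bound $\tfrac{e^a}{1-a}$ for the conditioned pushforward; the price is that one then has to trade off the TV-cost of conditioning on $\mathcal E$ against the bound on the conditioned density. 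You avoid the conditioning entirely: using injectivity to see that the image strata $F_i=E_i+f^{(i)}$ are disjoint, you identify the density $\tfrac{dm_*\P}{d\P}$ \emph{pointwise} on $m(\Omega)$ as $e^{W(f)+\tfrac12\int f^2}$, push the event $\{\tfrac{dm_*\P}{d\P}>1+c\}$ back to the same Gaussian tail event, and apply Lemma \ref{lem:deriv_bound} directly. This yields the hypothesis of Proposition \ref{prop:deriv_TV} in one shot, without Proposition \ref{prop:injective_union} and without the extra bookkeeping for $\P(\mathcal E^c)$; it is arguably cleaner, at the cost of spelling out the measure-theoretic point that $m_*\P\ll\P$ and that the density formula holds $m_*\P$-a.e.\ (both of which follow from the countable stratification into measurable translations, as you indicate). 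The dependence of $\alpha_0$ on $\ep$ and $\lambda$ comes out identically since $a=\log(1+\ep/2)=\Theta_\ep(1)$.
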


\begin{proof}
Let $(g_{i})_{i\in\N}$ be the values that $f$ may take (as a random
variable). Let $a>0$ and denote $\mathcal{E}\coloneqq\{W(f)+\frac{1}{2}\int f^{2}\le a\}$.
Define $E_{i}\coloneqq\{f=g_{i}\}\cap\mathcal{E}$ and the restrictions
$m_f^{(i)}=m_f|_{E_{i}}$. By Proposition \ref{prop:translation_deriv},
for $w\in E_{i}$ 
\[
\frac{d\law_{m_f^{(i)}\eta}}{d\law_{\eta}}(w+f)=e^{w(f)+\frac{1}{2}\int f^{2}}\le e^{a}
\]
The restrictions are translations, and therefore are bijective and
measurable with measurable inverses. By Lemma \ref{lem:m_is_injective},
$m_f|_{\mathcal{E}}$ is injective. Let $\tilde{\eta}$ be distributed
as $\eta$, conditioned on $\mathcal{E}$. Thus by Lemma \ref{prop:injective_union},
$\frac{d\law_{m_f\tilde{\eta}}}{d\law_{\eta}}\le\frac{1}{\P(\mathcal{E})}e^{a}\le\frac{e^{a}}{1-a}$
almost surely. Let $\alpha_{0}$ be as in Lemma \ref{lem:deriv_bound},
so that $\P(\mathcal{E})\ge1-a$. Finally by Lemma \ref{prop:deriv_TV},
$d_{\mathrm{TV}}(\eta,m_f\eta)\le2\max\{\frac{e^{a}}{1-a}-1,a\}$. We
may choose $a$ such that this is at most $\ep$ and $\alpha_{0}$
is as in the current lemma. 
\end{proof}

\paragraph{Proof of Proposition \ref{lem:bounty_construction}}

Let $\eta\sim\eta^{\white}$. Let $\ep>0$ and let $\alpha=\alpha_{0}$
for $\alpha_{0}$ as in Lemma \ref{lem:m_TV}. Then $d_{\mathrm{TV}}(\mu,m_f\mu)\le\ep$
and thus there is a coupling $(\eta_{1},\eta_{2})$ such that $\eta_{1},\eta_{2}\sim\eta$
and $\P(\eta_{2}=m_f\eta_{1})\ge1-\ep$ (this is folklore according
to \cite{angel2021pairwise} where it appears as theorem 1). 
Choose $\ep$ such that $q\le1-\ep$.
Then choosing $f$ from $\{f_{1},f_{2}\}$ to maximize $B$ from Lemma
\ref{lem:directed_bounty} it follows that $B$ satisfies (\ref{eq:bounty_value}).

 \subsection{Delocalization to constant height} We prove here the following simple delocalization result. The result is tight for constant $\lambda$ in dimensions $d\ge 5$ by Theorem~\ref{thm:localization}.

 The statement in Theorem~\ref{thm:delocalization intro} for $d\ge 5$ is an easy corollary of the following proposition.
 \begin{prop}[$\eta=\eta^\white$] \label{prop:deloc d>=5}For any $\lambda>0$, there exists a positive constant $c_{\lambda}$, depending only on $d,n,\lambda$ and the Lipschitz constant $\mathcal{L}$ of the bump function in~\eqref{eq:eta white definition}, such that for all $\Lambda\subset \Z^d$, all boundary conditions $\tau$ and all $v\in\Lambda$,
     \[\P(\|\varphi^\eta_v\| \ge 1)\ge c_{\lambda}.\]
 \end{prop}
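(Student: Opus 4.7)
The plan is to combine a Gaussian change of measure with a deterministic energy comparison. By Corollary~\ref{cor:effect of boundary conditions} and the stationarity of $\eta^\white$, the law of $\varphi^{\eta,\lambda,\Lambda,\tau}_v$ coincides with the law of $\varphi^{\eta,\lambda,\Lambda}_v + \bar\tau^\Lambda_v$. Writing $t_0 := -\bar\tau^\Lambda_v$, and noting that $\bar\tau^\Lambda_v$ ranges over all of $\R^n$ as $\tau$ varies, it suffices to prove
\[
\P\big(\|\varphi^{\eta,\lambda,\Lambda}_v - t_0\|\ge 1\big)\;\ge\; c_\lambda\qquad\text{for every } t_0\in\R^n,
\]
with $c_\lambda$ depending only on $d,n,\lambda$ and $\mathcal{L}$. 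The strategy is to introduce a tilted noise law under which delocalization is forced at the single vertex $v$, then transfer back via a second-moment bound on the Radon--Nikodym derivative.

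Concretely, set $f := K_0\,\mathbf{1}_{\{v\}\times B_2(t_0)}$ for a constant $K_0>0$ to be chosen (with $B_r(x)$ the Euclidean ball of radius $r$ around $x$ in $\R^n$) and let $\P'$ be the push-forward of $\P$ under the shift $W\mapsto W+f$. Under $\P'$ one has $\eta_{v,t} = \widetilde\eta_{v,t} + K_0\phi(t-t_0)$, where $\widetilde\eta$ is $\eta^\white$-distributed and $\phi(u):=\int_{B_2(0)} b(x-u)\,dx$ satisfies $\phi(u)=\|b\|_{L^1}$ for $\|u\|\le 1$ and $\phi(u)=0$ for $\|u\|\ge 3$. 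Proposition~\ref{prop:translation_deriv} gives
\[
M := \E_\P\big[(d\P'/d\P)^2\big] = \exp\!\big(K_0^2\,|B_2(0)|\big)<\infty,
\]
and Cauchy--Schwarz then yields $\P(\|\varphi^\eta_v - t_0\|\ge 1)\ge M^{-1}\,\P'(\|\varphi^\eta_v - t_0\|\ge 1)^2$. It therefore suffices to show $\P'(\|\varphi^\eta_v - t_0\|\ge 1)\ge 1/2$ for an appropriate $K_0$.

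For the deterministic step, suppose under $\P'$ that $\|\varphi^\eta_v - t_0\|<1$. Put $\bar\varphi := \frac{1}{2d}\sum_{u\sim v}\varphi^\eta_u$ and choose the unit vector $e := (\bar\varphi - t_0)/\|\bar\varphi - t_0\|$ (any $e$ if $\bar\varphi = t_0$). Comparing $\varphi^\eta$ with the competitor $\varphi^*$ that agrees with $\varphi^\eta$ off $v$ and has $\varphi^*_v := t_0 + 3e$, a direct calculation using $\|\varphi^\eta_v - \bar\varphi\|\ge |\|\bar\varphi - t_0\| - 1|$ bounds the elastic difference $d\big(\|t_0+3e-\bar\varphi\|^2 - \|\varphi^\eta_v-\bar\varphi\|^2\big)$ by $9d$ uniformly in $\bar\varphi$. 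Since $\phi(3e)=0$ while $\phi(\varphi^\eta_v-t_0)=\|b\|_{L^1}$, the minimality $H^\eta(\varphi^*)\ge H^\eta(\varphi^\eta)$ forces
\[
\sup_{\|u\|=3}\widetilde\eta_{v,t_0+u} \;-\;\inf_{t\in B_1(t_0)}\widetilde\eta_{v,t} \;\ge\; K_0\|b\|_{L^1} - 9d/\lambda.
\]
The left-hand side is the fluctuation of the stationary Gaussian process $\widetilde\eta_{v,\cdot}$ over a compact set of diameter $O(1)$; by standard Gaussian process estimates, making use of the Lipschitz assumption on $b$, its expectation is bounded by some $C(n,\mathcal{L})<\infty$. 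Choosing $K_0$ with $K_0\|b\|_{L^1}\ge 9d/\lambda + 2C(n,\mathcal{L})$ and applying Markov's inequality give $\P'(\|\varphi^\eta_v-t_0\|<1)\le 1/2$, completing the proof with $c_\lambda = 1/(4M)$.

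The main obstacle is the deterministic step, specifically controlling the elastic cost of the perturbation $\varphi^\eta\mapsto\varphi^*$ uniformly in the random quantity $\bar\varphi$, whose magnitude is not a priori controlled. The key observation is that by orienting $e$ along $\bar\varphi - t_0$, the point $t_0+3e$ lies on the ray from $t_0$ through $\bar\varphi$, so $\|t_0+3e-\bar\varphi\| = |\|\bar\varphi - t_0\|-3|$; the difference $\|t_0+3e-\bar\varphi\|^2 - \|\varphi^\eta_v-\bar\varphi\|^2$ then cancels the leading-order dependence on $\|\bar\varphi\|$, and in fact becomes negative when $\|\bar\varphi-t_0\|$ is large, further favoring the competitor.
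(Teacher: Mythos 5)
Your approach is correct in substance, but it is quite a bit heavier than the paper's and contains a small constant error. The structural difference is the interesting point.

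\textbf{What the paper does.} The paper compares $\varphi^\eta$ to the \emph{pair} of competitors $\varphi^\pm$ obtained by moving $\varphi^\eta_v$ by $\pm 5e_1$, and then averages. Because the elastic cost of shifting $\varphi_v$ by $a$ is $-2a\cdot\Delta\varphi_v+2d\|a\|^2$, the cross term with the neighbor average $\bar\varphi$ cancels exactly in the average of $a=\pm 5e_1$, leaving the constant $50d$ \emph{without any dependence on $\bar\varphi$}. The minimality of $\varphi^\eta$ then gives the inclusion $\{\|\varphi^\eta_v\|<1\}\subset\{M_v\ge m_v-50d/\lambda\}$, with $m_v$ the minimum of the disorder on $B_1(0)$ and $M_v$ its maximum on the annulus $\{4<\|t\|<6\}$. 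Since these two regions are at distance $3>2$, assumption~\ref{as:indep} makes $m_v$ and $M_v$ independent, and $\P(M_v<m_v-50d/\lambda)>0$ follows directly, uniformly in $\Lambda$ and $\tau$. No change of measure, no reduction to $\tau=0$, and no Dudley-type estimate is needed.

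\textbf{What you do.} You first reduce to $\tau=0$ via Corollary~\ref{cor:effect of boundary conditions}, then tilt the white noise on $\{v\}\times B_2(t_0)$ (Cameron--Martin), transfer back by Cauchy--Schwarz on the square-integrable Radon--Nikodym density, and control the residual Gaussian sup via Dudley. Your single competitor $\varphi^*_v = t_0+3e$ with $e$ pointing along $\bar\varphi-t_0$ is a nice device that achieves the same cancellation of the $\bar\varphi$ dependence that the paper gets by averaging $\pm 5e_1$: since $t_0+3e$ lies on the ray through $\bar\varphi$, the difference $\|t_0+3e-\bar\varphi\|^2-\|\varphi^\eta_v-\bar\varphi\|^2$ is controlled uniformly (and in fact decreases for $\|\bar\varphi-t_0\|$ large). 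The argument goes through, but it buys nothing over the two-competitor averaging trick, which is more elementary and avoids the second-moment change-of-measure step, the reduction to zero boundary conditions, and the appeal to Dudley.

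\textbf{A small error.} Changing $\varphi_v$ from $b$ to $a$ changes the elastic part $\frac12\sum_{u\sim v,\,\{u,v\}\cap\Lambda\neq\emptyset}\|\varphi_u-\varphi_v\|^2$ by
\[
\sum_{u\sim v}\big(\|\varphi_u-a\|^2-\|\varphi_u-b\|^2\big)=2d\big(\|a-\bar\varphi\|^2-\|b-\bar\varphi\|^2\big),
\]
i.e.\ with prefactor $2d$, not $d$: the $\frac12$ in the Hamiltonian is absorbed by the ordered-pair summation. Your bound $9d$ should therefore read $18d$, which propagates into the choice of $K_0$ and hence of $c_\lambda$ but does not affect the validity of the argument.
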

 \begin{proof}
 Let $\Lambda\subset \Z^d$ and consider boundary condition $\tau$.
 Let $v\in\Lambda$ and let $\varphi$ be any surface with the given boundary values $\tau$. Define two new surfaces $\varphi^+$ and $\varphi^-$ as follows:
$\varphi^+$ is equal to $\varphi$ everywhere except at $v$, where it equals $\varphi_v + 5e_1$,
$\varphi^-$ is equal to $\varphi$ everywhere except at $v$, where it equals $\varphi_v - 5e_1$.
Let us calculate the difference between the energy of $\varphi$ and the average of the energies of $\varphi^+$ and $\varphi^-$:
\begin{equation}\label{eq:phipm}
    H^\eta(\varphi) - \frac{1}{2} (H^\eta(\varphi^+) + H^\eta(\varphi^-)) = -50d + \lambda \left(\eta_{v, \varphi_v} - \frac{1}{2} (\eta_{v, \varphi^+_v} + \eta_{v, \varphi^-_v})\right).  
\end{equation}
Let us write
$m_v \coloneqq \min\{\eta_{v,t} : \|t\| < 1\}$ and
$M_v \coloneqq \max\{\eta_{v,t} : 4 < \|t\| < 6\}$.
Thanks to \eqref{eq:phipm}, we have the following inclusion
\[\{|\varphi^\eta_v| < 1\} \subset \left\{H^\eta(\varphi^\eta) - \frac12 (H^\eta((\varphi^\eta)^+) + H^\eta((\varphi^\eta)^-))\le 0\right\}
 \subset \{M_v \ge m_v - 50d/\lambda\}\]
where $\varphi^\eta$ is the minimal surface with disorder $\eta$. Note that $m_v$ and $M_v$ are independent for $\eta^\white$. Hence, it is easy to check that for $\eta^\white$, we have $\P(M_v < m_v - 50d/\lambda)>0$ uniformly for all $v$ and $\Lambda$. The proposition follows.
\end{proof}
\subsection{Lower bound for ground energy fluctuations }
In this section, we prove Theorem~\ref{thm:lower bounds on energy fluctuations by localization} using as a main ingredient the noise decomposition introduced in Lemma~\ref{lem : decomposition noise-1}. In all this section, $\eta = \eta^\white$.

\begin{proof}[Proof of Theorem~\ref{thm:lower bounds on energy fluctuations by localization}]
Let $\Lambda^0\subset \Lambda^1\subset \Z^d$.
Let $h\ge 1$ and $\delta>0$ such that
\begin{equation*}
     \frac{\mathbb E \big[ | \{v\in \Lambda^0 : \|\varphi^{\eta,\lambda,\Lambda^1}_v\| \le h \} | \big]}{|\Lambda^0|} \ge \delta.
\end{equation*}
Set 
\begin{equation*}
    \Pi\coloneqq\left\{\varphi\in\Omega^{\Lambda^1}:\frac{|\{v\in\Lambda^0:\|\varphi_{v}\|\le h\}|}{|\Lambda^0|}\ge \frac \delta 2\right\} .
\end{equation*}
In particular, we have
\begin{equation*}
   |\Lambda^0| \P(\varphi ^{\eta,\lambda,\Lambda^1}\in \Pi)+ \frac \delta 2 |\Lambda^0|\ge  \mathbb E \big[ | \{v\in \Lambda^0 : \|\varphi^{\eta,\lambda,\Lambda^1}_v\| \le h \} | \big]\ge \delta |\Lambda^0|.
\end{equation*}
It yields
\begin{equation}\label{eq: lower bound for phi in Pi}
    \P(\varphi ^{\eta,\lambda,\Lambda^1}\in \Pi)\ge \frac \delta 2.
\end{equation}

We will now use the noise decomposition introduced in Lemma \ref{lem : decomposition noise-1} with $\rm S= \Lambda^0\times B_{h+1}(0)$. Let $\eta^{\rm S}$ and $(\eta^\perp_{v,t})_{v\in \Lambda^1,t\in\R^n}$ be the corresponding noise decomposition.
By Markov's inequality and inequality \eqref{eq: lower bound for phi in Pi}, we have
\begin{equation*}
  \P\left( \P \big(  \varphi ^{\eta,\lambda,\Lambda^1}\notin \Pi \mid (\eta^\perp_{v,t})_{v\in \Lambda^1,t\in\R^n} \big) \ge 1-\frac \delta 4\right)\le \frac {1-\frac \delta 2}{ 1-\frac \delta 4}\le 1-\frac \delta 4.
\end{equation*}
Hence, with probability at least $\delta/4$, we have that the family $(\eta^\perp_{v,t})_{v\in \Lambda^1,t\in\R^n}$ is such that 
\begin{equation*}
    \P \big( \varphi ^{\eta,\lambda,\Lambda^1}\in \Pi \mid (\eta^\perp_{v,t} )_{v\in \Lambda^1,t\in\R^n} \big) \ge \frac \delta 4.
\end{equation*}
Let us now consider a family $(\eta^\perp_{v,t})_{v\in \Lambda^1,t\in\R^n}$ such that the latter inequality holds.
Let $\eta^{\rm S}_0$ and $\eta^{\rm S}_1$ be two independent copies of $\eta^{\rm S} $.
Denote by $\eta^0$ and $\eta^1$ the two corresponding disorders.
Thanks to the previous inequality, we get
\begin{equation*}
    \P \big(  \varphi ^{\eta^0,\lambda,\Lambda^1},\varphi ^{\eta^1,\lambda,\Lambda^1}\in \Pi \mid (\eta^\perp_{v,t})_{v\in \Lambda^1,t\in\R^n} \big) \ge \frac {\delta^2} {16}.
\end{equation*}
Moreover, since $\eta^{{\rm S}}\sim\mathcal{N}(0,\frac{\left(\int b(t)dt\right)^{2}}{{\rm {Vol}({\rm S)}}})$, there exists $c_\delta>0$ depending on $n,\delta$ and the Lipschitz constant $\mathcal{L}$ of the bump function in~\eqref{eq:eta white definition} such that
\begin{equation*}
    \P\left(|\eta^{\rm S}_0-\eta^{\rm S}_1|\ge \frac{c_\delta }{|\Lambda^0|^{1/2}h^{n/2}}\right)\ge 1- \frac {\delta^2} {32}.
\end{equation*}
By the previous inequalities, with probability at least $\delta ^3/ 2^7$, we have 
$ \varphi ^{\eta^0,\lambda,\Lambda^1},\varphi ^{\eta^1,\lambda,\Lambda^1}\in \Pi $ and $|\eta^{\rm S}_0-\eta^{\rm S}_1|\ge \frac{c_\delta }{|\Lambda^0|^{1/2}h^{n/2}}$.
For short, we write $\varphi^0,\varphi^1$ for $\varphi ^{\eta^0,\lambda,\Lambda^1},\varphi ^{\eta^1,\lambda,\Lambda^1}$.
Without loss of generality, assume that $\eta^{\rm S}_0\le \eta^{\rm S}_1$. Since $\kappa_{v,t}=1 $ for $v\in\Lambda^0$ and $\|t\|\le h$, it yields that
\begin{equation*}
\sum_{v\in\Lambda^1}\eta^0_{v,\varphi_v^1}\le \frac \delta 2 |\Lambda_0| (\eta^{\rm S}_0- \eta^{\rm S}_1)+ \sum_{v\in\Lambda^1}\eta^1_{v,\varphi_v^1}
\end{equation*}
and
\begin{equation*}
\GE^{\eta^0,\lambda,\Lambda^1}\le \GE^{\eta^1,\lambda,\Lambda^1}- \lambda \delta c_\delta \sqrt{\frac {|\Lambda^0|}{h^n}}.
\end{equation*}
We conclude that there exists a coupling of the disorders $\eta^0,\eta^1$ such that the marginals are distributed as $\eta $ and
\begin{equation*}
\P\left(|\GE^{\eta^0,\lambda,\Lambda^1}- \GE^{\eta^1,\lambda,\Lambda^1}|\ge  \lambda \delta c_\delta \sqrt{\frac {|\Lambda^0|}{h^n}}\right)\ge \frac {\delta ^3}{2^7}.
\end{equation*}
Finally, note that we have the following inclusion
\begin{equation*}
\begin{split}
   & \left\{|\GE^{\eta^0,\lambda,\Lambda^1}- \med(\GE^{\eta,\lambda,\Lambda^1})|\le  \frac 12\lambda \delta c_\delta \sqrt{\frac {|\Lambda^0|}{h^n}}\right\}\cap  \left\{|\GE^{\eta^1,\lambda,\Lambda^1}- \med(\GE^{\eta,\lambda,\Lambda^1})|\le  \frac 12\lambda \delta c_\delta \sqrt{\frac {|\Lambda^0|}{h^n}}\right\}\\&\hspace{3cm}\subset \left \{|\GE^{\eta^0,\lambda,\Lambda^1}- \GE^{\eta^1,\lambda,\Lambda^1}|\le  \lambda \delta c_\delta \sqrt{\frac {|\Lambda^0|}{h^n}}\right\}.
    \end{split}
\end{equation*}
This yields that 
\begin{equation*}
\P\left(|\GE^{\eta^0,\lambda,\Lambda^1}- \med(\GE^{\eta,\lambda,\Lambda^1})|\ge  \frac 12\lambda \delta c_\delta \sqrt{\frac {|\Lambda^0|}{h^n}}\right)\ge \frac {\delta^3}{2^8}
\end{equation*}
concluding the proof.
\end{proof}

\section{Verifying the assumptions for the disorder $\eta^{\white}$}\label{sec:verifying assumptions}

\label{sec:assumptions for eta white} In this section, we prove that
the disorder $\eta^{\white}$ satisfies the assumptions \ref{as:exiuni}
and \ref{as:conc}. Note that $\eta^{\white}$ trivially satisfies
the assumptions \ref{as:indep} and \ref{as:stat}. 

Throughout this
section, we let $\lambda>0,\Lambda\subset\Z^{d}$, $\tau:\Z^{d}\to\R^{n}$
and $\eta\sim\eta^{\white}$. For short, we will write $\varphi^{\eta}=\varphi^{\eta,\lambda,\Lambda,\tau}$
and $\GE^{\eta}=\GE^{\eta,\lambda,\Lambda,\tau}$. First note that
thanks to Corollary~\ref{cor:effect of boundary conditions}, it will be enough to
check that the assumptions \ref{as:exiuni} and \ref{as:conc} hold
for $\tau=0$ (to check~\ref{as:conc} for general $\tau$,
one uses~\eqref{eq:tal} with $\tau=0$ and the disorder $\eta^{-\bar{\tau}^{\Lambda}}$, which has the same distribution as $\eta$). From now on, we will assume that
$\tau=0$. We start with the following lemma. Denote: 
\begin{equation}
\Omega_{k}:=\big\{\varphi\in\Omega^{\Lambda,\tau}:\|\nabla\varphi\|_{\Lambda}^{2}\le k|\Lambda|\big\}.\label{eq:Omega_k}
\end{equation}
\begin{lemma}\label{lem:conc_lem} There exist $C,c>0$ depending
only on $d,n,\mathcal{L}$ such that for all $k\ge C\max\{1,\lambda^{2}\}$
\begin{equation}
\mathbb{P}\Big(\exists\varphi\in\Omega_{k},\ \lambda\sum_{v\in\Lambda}\eta_{v,\varphi_{v}}\le-k|\Lambda|/4\Big)\le\exp\Big(-c\frac{k^{2}}{\lambda^{2}}|\Lambda|\Big).\label{eq:926}
\end{equation}
\end{lemma}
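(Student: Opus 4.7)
The plan is to reduce the continuum problem to a discrete one by rounding surfaces to integer values, then to apply Gaussian concentration for each discrete surface and balance it against an entropy bound on the number of such integer surfaces. The threshold $k\ge C\max\{1,\lambda^2\}$ arises precisely as the point at which the entropy $\exp(Ck|\Lambda|)$ is defeated by the Gaussian tail $\exp(-ck^2|\Lambda|/\lambda^2)$. Concretely, given $\varphi\in\Omega_k$ I would round componentwise to $\tilde\varphi_v := \lfloor\varphi_v\rfloor \in \Z^n$. The elementary estimate $(|a-b|+1)^2\le 2|a-b|^2+2$ yields $\|\nabla\tilde\varphi\|_\Lambda^2 \le 2\|\nabla\varphi\|_\Lambda^2 + C_n|\Lambda| \le C'k|\Lambda|$ for $k\ge 1$, so $\tilde\varphi$ lies in the set $\widetilde\Omega_{C'k}$ of integer-valued surfaces in $\Omega_{C'k}$. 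A standard entropy bound for integer lattice surfaces with bounded Dirichlet energy gives $|\widetilde\Omega_{C'k}|\le \exp(C''k|\Lambda|)$.

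For any fixed $\psi:\Lambda\to\Z^n$, the variables $(\eta_{v,\psi_v})_{v\in\Lambda}$ are independent centered Gaussians of variance $\|b\|_2^2=1$, so $\sum_v\eta_{v,\psi_v}\sim\mathcal{N}(0,|\Lambda|)$ and the Gaussian tail gives
\[
\P\Big(\lambda\sum_{v\in\Lambda}\eta_{v,\psi_v}\le -k|\Lambda|/8\Big)\le \exp\Big(-\frac{k^2|\Lambda|}{128\lambda^2}\Big).
\]
Union bounding over $\widetilde\Omega_{C'k}$ gives
\[
\P\Big(\exists\psi\in\widetilde\Omega_{C'k},\ \lambda\sum_v\eta_{v,\psi_v}\le -k|\Lambda|/8\Big)\le \exp\big(C''k|\Lambda|-ck^2|\Lambda|/\lambda^2\big),
\]
which is at most $\exp(-c'k^2|\Lambda|/\lambda^2)$ once $k\ge C_1\lambda^2$ with $C_1$ large enough.

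To pass from $\tilde\varphi$ back to $\varphi$ I would control the error $\sum_v|\eta_{v,\varphi_v}-\eta_{v,\tilde\varphi_v}|$. For each $v$ and $z\in\Z^n$, set $\omega_v(z):=\sup_{s\in z+[0,1]^n}|\eta_{v,s}-\eta_{v,z}|$. Since $b$ is $\mathcal{L}$-Lipschitz and compactly supported, the Gaussian process $(\eta_{v,s})_s$ has increment variance at most $C_n\mathcal{L}^2\|s-t\|^2$, so by Borell--TIS each $\omega_v(z)$ has mean at most $C_{n,\mathcal{L}}$ and sub-Gaussian fluctuations with parameter $O_{n,\mathcal{L}}(1)$, \emph{uniformly in $z$}. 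For fixed $\psi\in\widetilde\Omega_{C'k}$ the variables $(\omega_v(\psi_v))_v$ are independent across $v$, so Hoeffding-type concentration for sums of sub-Gaussians gives $\P(\sum_v\omega_v(\psi_v)>k|\Lambda|/(8\lambda))\le \exp(-ck^2|\Lambda|/\lambda^2)$ as soon as $k/\lambda$ exceeds an absolute constant. Union bounding once more over $\widetilde\Omega_{C'k}$, absorbed by $k\ge C\lambda^2$ exactly as above, yields $\lambda\sum_v|\eta_{v,\varphi_v}-\eta_{v,\tilde\varphi_v}|\le k|\Lambda|/8$ uniformly in $\varphi\in\Omega_k$. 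Combined with the preceding paragraph, every such $\varphi$ satisfies $\lambda\sum_v\eta_{v,\varphi_v}\ge -k|\Lambda|/4$.

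The main technical point will be making the oscillation bound uniform in $v$ and $\psi$ without paying for the (possibly large) range of $\tilde\varphi$. The fix is to work only with the \emph{unit-cube} local oscillations $\omega_v(z)$, which are uniformly sub-Gaussian in $z$, so that the only entropy paid is the same bound $|\widetilde\Omega_{C'k}|\le \exp(C''k|\Lambda|)$ already absorbed in the Gaussian concentration step.
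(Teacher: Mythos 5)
Your proposal is correct and follows essentially the same strategy as the paper: round to integer-valued surfaces, apply a sub-Gaussian tail bound via Dudley/Borell--TIS, and absorb the $\exp(Ck|\Lambda|)$ entropy of $\widetilde\Omega_{Ck}$ using $k\ge C\max\{1,\lambda^2\}$. The only cosmetic difference is that the paper works with a single discretized process $\eta'_{v,t}:=\inf_{s\in z+[0,1]^n}\eta_{v,s}$ (which lower-bounds $\eta$ and is uniformly sub-Gaussian), whereas you split into the exact Gaussian at integer points plus the unit-cube oscillation $\omega_v$; both devices give the same estimate with the same inputs.
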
 
\begin{cor}\label{cor:conc}
\label{lem:nabla}There exist $C,c>0$ depending only on $d,n,\mathcal{L}$
such that for all $k\ge C\max\{1,\lambda^{2}\}$ 
\begin{equation}
\mathbb{P}\big(\GE^{\eta}<-k|\Lambda|\big)\le\exp\Big(-c\frac{k^{2}}{\lambda^{2}}|\Lambda|\Big).\label{eq:conc_cor1}
\end{equation}
Moreover, 
\begin{equation}
\mathbb{P}\Big(\inf_{\varphi\notin\Omega_{k}}H^{\eta}(\varphi)\le\frac{k}{2}|\Lambda|\Big)\le\exp\Big(-c\frac{k^{2}}{\lambda^{2}}|\Lambda|\Big).\label{eq:conc_cor2}
\end{equation}
If we furthermore assume \ref{as:exiuni}, then 
\begin{equation}
\mathbb{P}\big(\|\nabla\varphi^{\eta}\|_{\Lambda}^{2}\ge k|\Lambda|\big)\le\exp\Big(-c\frac{k^{2}}{\lambda^{2}}|\Lambda|\Big).\label{eq:conc_cor3}
\end{equation}
\end{cor}

For the proof of Lemma \ref{lem:conc_lem} and Lemma~\ref{lem:ass_conc} below, we will need the following concentration inequality for the maximum of a Gaussian process, a consequence of the fundamental results of Borell and Tsirelson--Ibragimov--Sudakov (see, e.g., \cite[Theorem~2]{zeitouni2014gaussian}).
\begin{theorem}\label{thm: BTIS}
  Let $\{X_t\}_{t\in T}$ be a Gaussian process (not necessarily centered) on a compact space $T$ with continuous covariance function and expectation.   
 Assume $S_T\coloneqq\sup_{t\in T}X_t$ is finite almost surely, then $\E S_T $ is finite and 
 \[\P(|S_T-\E S_T|\ge u )\le 2e^{-\frac {u^2}{2\sigma_T^2}}\]
 where $\sigma_T^2\coloneqq\sup_{t\in T} \var(X_t)$.
 \end{theorem}

We turn to prove Lemma~\ref{lem:conc_lem}.
\begin{proof}[Proof of Lemma~\ref{lem:conc_lem}]
Fix $d$, $n$ and $\mathcal{L}$. Let $C>0$ be constant to be chosen
later. Let $k\ge C\max\{1,\lambda^{2}\}$. In order to prove \eqref{eq:926}
we discretize $\Omega_{k}$ by (recall that $\tau=0$): 
\begin{equation}
\widetilde{\Omega}_{k}:=\Omega_{k}\cap\{\varphi:\Z^{d}\to\Z^{n}\}.
\label{eq:defomegatilde}
\end{equation}
We also discretize the disorder $\eta$ in the following way. For any
$v\in\Lambda$ and $z\in\mathbb{Z}^{n}$
define 
\[
\eta'_{v,z}:=\inf\big\{\eta_{v,s}:s\in z+[0,1]^{n}\big\}.
\]
We claim that there exists $c>0$ such that for all $v\in\Lambda$, $z\in\mathbb{Z}^{n}$
and $x>0$ we have that 
\begin{equation}
\mathbb{P}\big(|\eta'_{v,z}|\ge x\big)\le2e^{-cx^{2}}.\label{eq:957}
\end{equation}
Indeed, we would like to apply Theorem~\ref{thm: BTIS} with $T:=z+[0,1]^n$ and with the Gaussian process $X(s):=-\eta _{v,s}$. This process is almost surely continuous and therefore has continuous covariance function and expectation. Thus, by Theorem~\ref{thm: BTIS} we have that $\mathbb E [\eta '_{v,t}]<C$ for some constant $C$ depending on the bump function $b$. In fact, one may check using \cite[Theorem~23.7]{Kallenberg} and Theorem~\ref{thm: BTIS} that this constant depends only on the Lipschitz constant $\mathcal L$ of $b$. Thus, \eqref{eq:957} follows from Theorem~\ref{thm: BTIS} with a constant $c$ depending only on $n$ and $\mathcal L$.

It is convenient to extend $\eta '_{v,\cdot }$ from a function on $\mathbb Z ^n$ to a function on $\mathbb R^n$. For any $z\in \mathbb Z ^n$ and $t\in z+[0,1)^n$ we let $\eta '_{v,t}:=\eta '_{v,z}$.

Next, we use some basic facts about sub-Gaussian random variables given in \cite{vershynin2020high}. Recall that the sub-Gaussian norm of a random
variable $Y$ is given by 
\begin{equation}
\|Y\|_{\psi_{2}}\coloneqq\inf\big\{ x>0:\E[\exp(Y^{2}/x^{2})]\le2\big\}.
\end{equation}
(see, e.g., \cite[Definition 2.5.6]{vershynin2020high}). By \cite[Exercise 2.5.7]{vershynin2020high}, equation \eqref{eq:957} implies that $\|\eta'_{v,t}\|_{\psi_{2}}\le C$. Note that for $v\ne w\in\Lambda$ and $s,t\in\R^{n}$, the random
variables $\eta'_{v,s}$ and $\eta'_{w,t}$ are independent and therefore by \cite[Proposition 2.6.1]{vershynin2020high} it follows that
for every $\varphi:\Lambda\to\mathbb{R}^{n}$  we have $\|\sum_{v\in\Lambda}\eta'_{v,\varphi_{v}}\|_{\psi_{2}}^{2}\le C|\Lambda|$.
Hence, by \cite[Equation (2.14)]{vershynin2020high} for every such
$\varphi$ 
\begin{equation}
\P\Big(\lambda\sum_{v\in\Lambda}\eta'_{v,\varphi_{v}}\le-k|\Lambda|/4\Big)\le\P\Big(|\sum_{v\in\Lambda}\eta'_{v,\varphi_{v}}|\ge k|\Lambda|/(4\lambda)\Big) \le2\exp\Big(-c\frac{k^{2}}{\lambda^{2}}|\Lambda|\Big).
\label{eq:6385}
\end{equation}
We next union bound the last bound over $\varphi\in\Tilde{\Omega}_{2k}$.
Enumerating over the differences  $a_{u,v}=\varphi_{u}-\varphi_{v}\in \mathbb Z ^n$
for $u\sim v$ we obtain, for $k>C$ 
\begin{equation}
\begin{split}\big|\tilde{\Omega}_{2k}\big| & \le\Big|\Big\{\big(a_{u,v}^{i}\in\mathbb{Z}\ :\ u\sim v,\{u,v\}\cap\Lambda\neq\emptyset,i\in\{1,\dots,n\}\big):\sum|a_{u,v}^{i}|\le2k|\Lambda|\Big\}\Big|\\
 & \le2^{N_{\Lambda}}{2k|\Lambda|+nN_{\Lambda}-1 \choose nN_{\Lambda}-1}\le e^{4k|\Lambda|},
\end{split}
\label{eq:contomegak}
\end{equation}
where $N_{\Lambda}:=\big|\{u\sim v,\{u,v\}\cap\Lambda\neq\emptyset\}\big|$.
For sufficiently large $C$, if $k>C\lambda^{2}$ then $c\frac{k^{2}}{\lambda^{2}}>2\cdot4k$
and thus \eqref{eq:6385} and \eqref{eq:contomegak} imply 
\begin{equation}
\mathbb{P}\Big(\exists\varphi\in\tilde{\Omega}_{2k}, \ \lambda\sum_{v\in\Lambda}\eta'_{v,\varphi_{v}}\le-k|\Lambda|/4\Big)\le2\exp\Big(-c\frac{k^{2}}{2\lambda^{2}}|\Lambda|\Big).\label{eq:927}
\end{equation}
We claim that \eqref{eq:926} follows from \eqref{eq:927}. Indeed,
for any $\varphi\in\Omega_{k}$, we consider the function $\tilde{\varphi}$
defined by $\tilde{\varphi}_{v}:=\lfloor\varphi_{v}\rfloor$ (in here
the floor sign of a vector $t\in\mathbb{R}^{n}$ is taken in each
coordinate). We obtain that for all $\varphi\in\Omega_{k}$, since
$k>C$ 
\begin{equation}
\|\nabla\tilde{\varphi}\|_{\Lambda}^{2}\le(\|\nabla\varphi\|_{\Lambda}+\|\nabla(\tilde{\varphi}-\varphi)\|_{\Lambda})^{2}\le\big(\sqrt{k|\Lambda|}+\sqrt{2dn|\Lambda|}\big)^{2}\le2k|\Lambda|\label{eq:norml2discretephi}
\end{equation}
and therefore $\tilde{\varphi}\in\tilde{\Omega}_{2k}$. Finally, by
the definition of $\eta'$, if for some $\varphi\in\Omega_{k}$ we
have that $\lambda\sum\eta_{v,\varphi_{v}}\le-k|\Lambda|/4$ then
$\lambda\sum\eta'_{v,\tilde{\varphi}_{v}}\le-k|\Lambda|/4$. This
finishes the proof of \eqref{eq:926} using \eqref{eq:927}. 
\end{proof}

\begin{proof}[Proof of Corollary~\ref{lem:nabla}]
The first statement of the corollary follows immediately from \eqref{eq:926}.
Indeed, 
\begin{equation}
\mathbb{P}\big(\GE^{\eta}<-K|\Lambda|\big)\le\sum_{k=K}^{\infty}\mathbb{P}\Big(\exists\varphi\in\Omega_{k},\ \lambda\sum_{v\in\Lambda}\eta_{v,\varphi_{v}}\le-k|\Lambda|/4\Big)\le\exp\Big(-c\frac{K^{2}}{\lambda^{2}}|\Lambda|\Big).
\end{equation}
Note that we used $K\ge C\lambda^{2}$ in bounding the sum above.
We now turn to prove the second statement of the corollary. We have 
\begin{equation}
\begin{split}
\mathbb{P}\Big(\exists\varphi\in\Omega_{K}^{c},H^{\eta}(\varphi)\le K|\Lambda|/2\Big) & \le\sum_{k=K}^{\infty}\mathbb{P}\Big(\exists\varphi\in\Omega_{k+1}\!\!\setminus \! \Omega_{k},\ k|\Lambda|+\lambda\sum_{v\in\Lambda}\eta_{v,\varphi_{v}}\le\frac{K}{2}|\Lambda|\Big)\\
 & \le\sum_{k=K}^{\infty}\mathbb{P}\Big(\exists\varphi\in\Omega_{k+1},\lambda\sum_{v\in\Lambda}\eta_{v,\varphi_{v}}\le-\frac{k+1}{4}|\Lambda|\Big)\\
 & \le\sum_{k=K}^{\infty}\exp\Big(-c\frac{(k+1)^{2}}{\lambda^{2}}|\Lambda|\Big)\le\exp\Big(-c\frac{K^{2}}{\lambda^{2}}|\Lambda|\Big)
\end{split}
\label{eq:control energy phi}
\end{equation}
where we used inequality \eqref{eq:926} in the second to last inequality.

Let us furthermore assume that \ref{as:exiuni} holds. First note
that by a simple bound on the normal distribution, for all $k\ge0$
\[
\P\Big(\lambda\sum_{v\in\Lambda}\eta_{v,0}\ge k|\Lambda|/2\Big)\le\exp\Big(-c\frac{k^{2}}{\lambda^{2}}|\Lambda|\Big).
\]
We have 
\begin{equation}
\begin{split}\mathbb{P} & \big(\|\nabla\varphi^{\eta}\|_{\Lambda}^{2}\ge K|\Lambda|\big)\le\sum_{k=K}^{\infty}\mathbb{P}\big(\exists\varphi\in\Omega_{k+1}\setminus\Omega_{k},\ H^{\eta}(\varphi)\le H^{\eta}(0)\big)\\
 & \le\sum_{k=K}^{\infty}\mathbb{P}\big(H^{\eta}(0)\ge k|\Lambda|/2\big)+\mathbb{P}\Big(\exists\varphi\in\Omega_{k}^{c},H^{\eta}(\varphi)\le k|\Lambda|/2\Big)\le\exp\Big(-c\frac{K^{2}}{\lambda^{2}}|\Lambda|\Big)
\end{split}
\end{equation}
where we used \eqref{eq:control energy phi} in the last inequality.
This concludes the proof. 
\end{proof}
\begin{lemma} \label{lem:ass_basic} The disorder $\eta^{{\rm white}}$
satisfies the Assumption~\ref{as:exiuni}. Moreover, there is a unique
$\varphi$ achieving the ground energy, i.e., a unique $\varphi$
for which $\GE^{\eta,\lambda,\Lambda}:=H^{\eta,\lambda,\Lambda}(\varphi)$.
\end{lemma}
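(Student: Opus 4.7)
By Corollary~\ref{cor:effect of boundary conditions} and stationarity of $\eta^{\white}$, it suffices to handle the case $\tau \equiv 0$, as is already pointed out in the paragraph above Lemma~\ref{lem:conc_lem}. Fix $\lambda > 0$, a finite $\Lambda \subset \Z^{d}$, and $\eta \sim \eta^{\white}$, and write $H \coloneqq H^{\eta,\lambda,\Lambda}$. Since the verification of~\eqref{eq:continuity eta} shows that $(v,t) \mapsto \eta_{v,t}$ admits a continuous version and the elastic term is a polynomial, $H$ is almost surely continuous on $\Omega^{\Lambda}$, which we identify with $(\R^{n})^{|\Lambda|}$ via the values on $\Lambda$ (the value on $\Lambda^{c}$ being fixed to $0$). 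For each integer $K \ge 1$ the set $\Omega_{K}$ defined in~\eqref{eq:Omega_k} is compact: zero boundary values together with $\|\nabla\varphi\|_{\Lambda}^{2} \le K|\Lambda|$ force a uniform bound $\|\varphi_{v}\| \le C_{\Lambda}\sqrt{K}$ for every $v \in \Lambda$, so $\Omega_{K}$ is a bounded closed subset of $(\R^{n})^{|\Lambda|}$.

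For the existence part, I would invoke the coercivity bound~\eqref{eq:conc_cor2} of Corollary~\ref{lem:nabla}: for every integer $K \ge C\max\{1,\lambda^{2}\}$,
\begin{equation*}
\P\bigl(\inf\nolimits_{\varphi \notin \Omega_{K}} H(\varphi) \le K|\Lambda|/2\bigr) \le \exp(-cK^{2}|\Lambda|/\lambda^{2}).
\end{equation*}
These probabilities are summable in $K$, so by Borel--Cantelli there exists a random integer $K_{0} < \infty$ (almost surely) such that $\inf_{\varphi \notin \Omega_{K}} H(\varphi) > K|\Lambda|/2$ for all $K \ge K_{0}$. Given any nonempty closed set $F \subset \Omega^{\Lambda}$ with $\inf_{F} H < \infty$, pick an integer $K \ge K_{0}$ with $K|\Lambda|/2 > \inf_{F} H$; then $\inf_{F} H = \inf_{F \cap \Omega_{K}} H$, and the infimum is attained because $F \cap \Omega_{K}$ is compact and $H$ continuous. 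If $\inf_{F} H = +\infty$ the minimum is $+\infty$ by convention. Taking $F = \Omega^{\Lambda}$ and noting that $H(0) = \lambda\sum_{v\in\Lambda}\eta_{v,0}$ is a.s.\ finite yields $\GE^{\eta,\lambda,\Lambda} < \infty$ almost surely.

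For uniqueness, fix any $\varphi_{1} \ne \varphi_{2}$ in $\Omega^{\Lambda}$ and pick $w \in \Lambda$ with $(\varphi_{1})_{w} \ne (\varphi_{2})_{w}$. Then
\begin{equation*}
H(\varphi_{1}) - H(\varphi_{2}) \;=\; D(\varphi_{1},\varphi_{2}) \,+\, \lambda\sum_{v \in \Lambda}\bigl(\eta_{v,(\varphi_{1})_{v}} - \eta_{v,(\varphi_{2})_{v}}\bigr),
\end{equation*}
where $D(\varphi_{1},\varphi_{2})$ is deterministic. The random part is Gaussian with variance at least $2\lambda^{2}(1 - \langle b(\cdot - (\varphi_{1})_{w}),\,b(\cdot - (\varphi_{2})_{w})\rangle_{L^{2}})$; since $b$ has compact support, is non-negative, and is not identically zero, the two translates $b(\cdot - (\varphi_{1})_{w})$ and $b(\cdot - (\varphi_{2})_{w})$ are distinct in $L^{2}$, so this variance is strictly positive. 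Consequently $\P(H(\varphi_{1}) = H(\varphi_{2})) = 0$ for every fixed pair. To upgrade this to uniqueness of the argmin, one applies the classical result that a continuous centered Gaussian process $X$ on a compact metric space whose intrinsic metric $d_{X}(\theta,\theta')^{2} = \var(X_{\theta} - X_{\theta'})$ is non-degenerate attains its minimum at a unique point almost surely (see, e.g., Lifshits, \emph{Gaussian Random Functions}). Applying this to the continuous Gaussian field $\varphi \mapsto \sum_{v\in\Lambda}\eta_{v,\varphi_{v}}$ on each compact set $\Omega_{K}$ (which is where the global minimum lives, by the coercivity step) and unioning over $K \in \N$ yields a.s.\ uniqueness of the global argmin of $H$.

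\emph{Main obstacle.} Existence is routine once Corollary~\ref{lem:nabla} is in hand; coercivity plus Borel--Cantelli plus continuity plus compactness of $\Omega_{K}$ do the job. The delicate part is uniqueness, which cannot be proved by a union bound over the uncountable configuration space and instead requires the general Gaussian fact about non-degenerate continuous processes on a compact metric space, combined with the pointwise non-degeneracy computation above.
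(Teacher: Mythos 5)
Your existence argument mirrors the paper's exactly: Borel--Cantelli applied to the coercivity estimate~\eqref{eq:conc_cor2} produces an a.s.\ finite random $K_0$ beyond which $\inf_{\varphi\notin\Omega_K}H>K|\Lambda|/2$, and then compactness of $F\cap\Omega_K$ together with continuity of $H$ (from~\eqref{eq:continuity eta}) yields the minimum; the paper phrases the same idea slightly differently (sending $k\to\infty$ rather than invoking Borel--Cantelli by name), but the mechanism is the same.

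For uniqueness you take a genuinely different route. The paper slices configuration space by rational half-spaces $\{\varphi(v)_i\le q_1\}$, $\{\varphi(v)_i\ge q_2\}$, shows that the two constrained ground energies $\GE_-(v,i,q_1)$ and $\GE_+(v,i,q_2)$ are a.s.\ distinct by conditioning on the part of the white noise below level $q_1$, and then union-bounds over the countable set of rational cuts. You instead verify pointwise non-degeneracy of the intrinsic metric --- correctly: $b(\cdot-t)\neq b(\cdot-s)$ in $L^2$ for $t\neq s$ because a nonzero compactly-supported function cannot equal a nontrivial translate of itself, and the variance contribution from a single disagreeing vertex $w$ already bounds the total from below --- and then invoke the classical theorem (Tsirelson, Kim--Pollard Lemma~2.6, Lifshits) that a sample-continuous Gaussian process on a $\sigma$-compact space with $\var(X_s-X_t)>0$ for $s\neq t$ has an a.s.\ unique argmax. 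Both approaches are sound; yours is shorter and would transfer to any Gaussian disorder whose covariance kernel is aperiodic, at the cost of quoting a nontrivial external theorem, whereas the paper's slicing argument is self-contained and exploits the product/finite-range structure but leaves a conditional-continuity step to the reader.

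One spot needs tightening. You state the uniqueness theorem for a \emph{centered} process and propose to ``apply it to the continuous Gaussian field $\varphi\mapsto\sum_v\eta_{v,\varphi_v}$''. The a.s.\ unique minimizer of that field over $\Omega_K$ is not the minimizer of $H$. What you actually need is the version of the theorem for a Gaussian process with a deterministic drift: apply it to $-H=-\tfrac12\|\nabla\cdot\|_\Lambda^2-\lambda\sum_v\eta_{v,\varphi_v}$, which is sample-continuous on the $\sigma$-compact space $\Omega^\Lambda\cong\R^{n|\Lambda|}$ and has intrinsic metric $\var\big(H(\varphi_1)-H(\varphi_2)\big)=\lambda^2\var\big(\sum_v(\eta_{v,(\varphi_1)_v}-\eta_{v,(\varphi_2)_v})\big)>0$. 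The Kim--Pollard formulation (and Lifshits's) is stated precisely in this generality --- centering is not assumed and only the variance structure enters --- so the fix is a one-line rephrasing rather than a new idea. With that correction the argument is complete.
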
 The moreover part of the lemma is not part of our general
assumptions and is not used in our proofs, but is noted here for the
reader's convenience.
\begin{proof}[Proof of Lemma \ref{lem:ass_basic}]
\textbf{Existence:} Throughout the proof we fix $\lambda$ and $\Lambda$.
We start by proving the existence of $\varphi^{\eta}$. The set $\Omega_{k}$
is bounded and we moreover have almost surely thanks to (\ref{eq:conc_cor2})
in Corollary~\ref{lem:nabla} that 
\[
\lim_{k\rightarrow\infty}\inf_{\varphi\notin\Omega_{k}}H^{\eta}(\varphi)=+\infty.
\]
Hence, almost surely for every closed set of $\Pi\subset\Omega^{\Lambda}$
there exists a large enough $k$ such that the infimum in $\Pi$ is
equal to the infimum on the compact set $\Pi\cap\Omega_{k}$. The
existence of a minimum on $\Pi$ follows from the continuity of the
map $\varphi\mapsto H^{\eta}(\varphi)$ where we used the standard
topology on $(\R^{n})^{\Lambda}$ and that there exists a version
of $\eta$ such that for all $v\in\Lambda$, the function $t\mapsto\eta_{v,t}$
is continuous (see Section \ref{sec:whitenoise}).

\textbf{Uniqueness:} For the uniqueness of $\varphi^{\eta,\lambda,\Lambda}$
we define the following sets for all $v\in\Lambda$, $i\le n$ and
$q\in\mathbb{R}$ 
\begin{equation}
\Omega_{-}(v,i,q):=\Omega^{\Lambda}\cap\big\{\varphi:\mathbb{Z}^{d}\to\mathbb{R}^{n}:\varphi(v)_{i}\le q\big\},
\end{equation}
\begin{equation}
\Omega_{+}(v,i,q):=\Omega^{\Lambda}\cap\big\{\varphi:\mathbb{Z}^{d}\to\mathbb{R}^{n}:\varphi(v)_{i}\ge q\big\}.
\end{equation}
Next, define $\varphi_{-}(v,i,q)$ and $\varphi_{+}(v,i,q)$ to be
the the functions $\varphi$ that minimize the Hamiltonian and restricted
to be in $\Omega_{+}(v,i,q)$ and $\Omega_{-}(v,i,q)$ respectively.
These functions exist by the same arguments as above. We also let
$\GE_{-}(v,i,q):=H(\varphi_{-}(v,i,q))$ and $\GE_{+}(v,i,q):=H(\varphi_{+}(v,i,q))$
. We claim that for all $q_{1}<q_{2}$ we have 
\begin{equation}
\mathbb{P}\big(\GE_{-}(v,i,q_{1})=\GE_{+}(v,i,q_{2})\big)=0.\label{eq:1}
\end{equation}
Indeed, note that the random variable $\GE_{-}(v,i,q_{1})$ is measurable
in 
\begin{equation}
\mathcal{F}(v,i,q_{1}):=\sigma\big(\big\{\eta_{u,t}:u\neq v,t\in\mathbb{R}^{n}\big\}\cup\big\{\eta_{v,t}:t_{i}\le q_{1}\big\}\big).
\end{equation}
Moreover, conditioning on $\mathcal{F}(v,i,q_{1})$, it is not hard to check
that $\GE_{+}(v,i,q_{1})$ has a continuous distribution. This finishes the proof of \eqref{eq:1}.

Next, define the event 
\begin{equation}
\mathcal{A}:=\bigcup_{v\in\Lambda}\bigcup_{i\le n}\bigcup_{\substack{q_{1},q_{2}\in\mathbb{Q}\\
q_{1}<q_{2}
}
}\big\{\GE_{-}(v,i,q_{1})=\GE_{+}(v,i,q_{2})\big\}.
\end{equation}
By \eqref{eq:1} and a union bound we have that $\mathbb{P}(\mathcal{A})=0$.
This finishes the proof of the lemma since on $\mathcal{A}^{c}$ we
cannot have two distinct minimizers to the Hamiltonian. 
\end{proof}
\begin{lemma}\label{lem:ass_conc} The disorder $\eta^{white}$ satisfies
Assumption~\ref{as:conc}. \end{lemma}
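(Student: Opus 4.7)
The plan to verify Assumption~\ref{as:conc} for $\eta=\eta^{\white}$ is to realize the ground energy as a Cameron-Martin Lipschitz functional of the underlying Gaussian white noises and then to invoke the Borell--TIS Gaussian concentration inequality conditionally on $\eta_{\Delta^c}$. The standard reductions are already carried out just before the statement of the lemma: by Corollary~\ref{cor:effect of boundary conditions} it suffices to treat $\tau\equiv 0$, and the integrability $\E|\GE^{\eta,\lambda,\Lambda}|<\infty$ follows from the upper bound $\GE^{\eta,\lambda,\Lambda}\le H^{\eta,\lambda,\Lambda}(0)=\lambda\sum_{v\in\Lambda}\eta_{v,0}$ (a centered Gaussian of variance $\lambda^2|\Lambda|$) combined with the sub-Gaussian lower-tail estimate of Corollary~\ref{lem:nabla}.

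The heart of the argument is a deterministic Lipschitz estimate in the Cameron-Martin direction. Conditionally on $\eta_{\Delta^c}$, the remaining randomness is $(W_v)_{v\in\Delta}$, a family of independent white noises on $\R^n$ whose joint Cameron-Martin space is $\mathcal{H}:=\bigoplus_{v\in\Delta}L^2(\R^n)$ equipped with $\|h\|_{\mathcal H}^2=\sum_{v\in\Delta}\|h_v\|_{L^2}^2$. Given $h\in\mathcal{H}$, the translation $W_v\mapsto W_v+h_v$ produces the disorder
\begin{equation*}
\eta^h_{v,t}=\eta_{v,t}+\int_{\R^n} h_v(s)b(s-t)\,ds\quad\text{for }v\in\Delta,\qquad \eta^h_{v,t}=\eta_{v,t}\quad\text{for }v\notin\Delta.
\end{equation*}
Evaluating $H^{\eta^h,\lambda,\Lambda}$ at the unshifted minimizer $\varphi:=\varphi^{\eta,\lambda,\Lambda}$ and noting that the elastic term and the contributions of $v\in\Lambda\setminus\Delta$ cancel, I get
\begin{equation*}
\GE^{\eta^h,\lambda,\Lambda}-\GE^{\eta,\lambda,\Lambda}\le\lambda\sum_{v\in\Delta}\int h_v(s)b(s-\varphi_v)\,ds\le\lambda\sqrt{|\Delta|}\,\|h\|_{\mathcal{H}},
\end{equation*}
where the last inequality combines Cauchy-Schwarz in $L^2(\R^n)$ (using $\|b(\cdot-t)\|_{L^2}=1$) with Cauchy-Schwarz in $\R^{|\Delta|}$. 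The reverse inequality is obtained by evaluating $H^{\eta,\lambda,\Lambda}$ at the minimizer for $\eta^h$, so that $|\GE^{\eta^h,\lambda,\Lambda}-\GE^{\eta,\lambda,\Lambda}|\le \lambda\sqrt{|\Delta|}\,\|h\|_{\mathcal{H}}$.

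Finally, I would invoke the Borell--Sudakov--Tsirelson inequality in its Cameron-Martin form: a measurable function of a Gaussian process that is $L$-Lipschitz in the Cameron-Martin norm has $L^2$-sub-Gaussian concentration around its mean. Applied conditionally on $\eta_{\Delta^c}$ to $F=\GE^{\eta,\lambda,\Lambda}$ with $L=\lambda\sqrt{|\Delta|}$, this yields
\begin{equation*}
\P\Big(\big|\GE^{\eta,\lambda,\Lambda}-\E(\GE^{\eta,\lambda,\Lambda}\mid\eta_{\Delta^c})\big|\ge\rho\lambda\sqrt{|\Delta|}\,\Bigm|\,\eta_{\Delta^c}\Big)\le 2e^{-\rho^2/2},
\end{equation*}
giving Assumption~\ref{as:conc} with $K=2,\kappa=1/2$. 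The only real obstacle is the passage to the infinite-dimensional white-noise setting; the cleanest route is to first replace each $W_v$ by its projection onto the finite-dimensional subspace of $L^2(\R^n)$ spanned by $\{b(\cdot-t):\|t\|\le R\}$, verify the Lipschitz estimate and the (classical, finite-dimensional) Gaussian concentration for the resulting truncated ground energy, and then let $R\to\infty$ using the almost-sure boundedness of the minimizer furnished by Corollary~\ref{lem:nabla}. Once the Hilbert-space structure is identified, the Lipschitz estimate itself is immediate from the same basic philosophy as the main identity.
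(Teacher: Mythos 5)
Your approach is essentially the same as the paper's: both establish that the ground energy is Lipschitz with constant $\lambda\sqrt{|\Delta|}$ in the Cameron--Martin norm of the Gaussian noise over $\Delta$, and then apply Gaussian concentration. Your Lipschitz estimate is formulated slightly more cleanly than the paper's (the normalization $\int b^2=1$ gives the exact constant $\lambda\sqrt{|\Delta|}$, whereas the paper's grid discretization introduces a harmless $C$), and the core inequality you derive --- evaluate $H^{\eta^h}$ at the unshifted minimizer, apply Cauchy--Schwarz twice --- is precisely the step the paper performs for the discretized field.

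The one place where you are imprecise is the finite-dimensional truncation used to justify the passage to the infinite-dimensional Gaussian concentration. The subspace of $L^2(\R^n)$ spanned by $\{b(\cdot-t):\|t\|\le R\}$ is generically infinite-dimensional, so projecting onto it does not reduce to a finite Gaussian vector. What is needed --- and what the paper does --- is to discretize the transverse variable $t$ as well: the paper introduces the grid boxes $Q_h=\epsilon h + [0,\epsilon]^n$ and the finite Gaussian vector $g_{v,h}^\epsilon = \epsilon^{-n/2}\int_{Q_h} W_v$, restricts a priori to the compact set $\Omega_K$ (so only finitely many $g_{v,h}^\epsilon$ matter), applies the classical finite-dimensional concentration inequality, and then controls the errors from the $\epsilon$-discretization (via Dudley's bound) and the $\Omega_K$-restriction (via Corollary~\ref{lem:nabla}). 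Alternatively, one could invoke the Cameron--Martin form of the Borell--Tsirelson--Ibragimov--Sudakov inequality directly on the abstract Wiener space of white noises, bypassing the truncation; this works because assumption \ref{as:exiuni} provides measurability and the Lipschitz estimate holds almost surely, but it requires a careful setup of the supporting Banach space, which the paper's more elementary discretization avoids.
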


We will need the following standard claim. We omit the proof of the
claim. 
\begin{claim}
\label{claim:something} For every $c_{1}$ there is $c_{2}$ such
that the following holds. Let $X$ be a random variable and suppose that for all $\rho>0$ there exists $a_{\rho}\in\mathbb{R}$ such that
$\mathbb{P}(|X-a_{\rho }|\ge\rho)\le2e^{-c_{1}\rho^{2}}$. Then, $\mathbb{E}|X|<\infty$
and $\mathbb{P}(|X-\mathbb{E}X|\ge\rho)\le2e^{-c_{2}\rho^{2}}$ for
all $\rho>0$. 
\end{claim}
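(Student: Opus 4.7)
The plan is to show first that the centers $a_\rho$ are mutually close, then obtain a sub-Gaussian bound around a fixed center $a_{\rho_0}$, and finally recenter to $\E X$. The key observation is elementary: if two high-probability events (both with probability $>1/2$) are given, they must intersect, which forces their centers to lie nearby.

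First, choose $\rho_0 = \rho_0(c_1)$ large enough that $2e^{-c_1\rho_0^2} < 1/2$. For any $\rho \ge \rho_0$, both events $\{|X-a_{\rho_0}|<\rho_0\}$ and $\{|X-a_\rho|<\rho\}$ have probability strictly greater than $1/2$, so their intersection is nonempty. Hence by the triangle inequality $|a_\rho - a_{\rho_0}| < \rho + \rho_0 \le 2\rho$. Consequently, for $\rho \ge \rho_0$,
\begin{equation*}
    \P(|X - a_{\rho_0}| \ge 3\rho) \le \P(|X - a_\rho| \ge 3\rho - |a_\rho - a_{\rho_0}|) \le \P(|X - a_\rho| \ge \rho) \le 2e^{-c_1\rho^2}.
\end{equation*}
Setting $r = 3\rho$ and absorbing the range $r \in (0, 3\rho_0)$ into the prefactor (using $\P \le 1$ and boundedness of $e^{c_1 r^2/9}$ on $[0, 3\rho_0]$), we obtain a bound
\begin{equation*}
    \P(|X - a_{\rho_0}| \ge r) \le C_1 e^{-c_1 r^2/9} \qquad \forall r > 0,
\end{equation*}
for some $C_1 = C_1(c_1)$.

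Next, this sub-Gaussian tail gives $\E|X - a_{\rho_0}| = \int_0^\infty \P(|X - a_{\rho_0}| \ge r)\,dr < \infty$, hence $\E|X| < \infty$, and moreover $|a_{\rho_0} - \E X| \le \E|X - a_{\rho_0}| \le C_2$ for some $C_2 = C_2(c_1)$. Therefore, for any $\rho \ge 2C_2$,
\begin{equation*}
    \P(|X - \E X| \ge \rho) \le \P(|X - a_{\rho_0}| \ge \rho - C_2) \le \P(|X - a_{\rho_0}| \ge \rho/2) \le C_1 e^{-c_1\rho^2/36}.
\end{equation*}
For $\rho < 2C_2$, the bound $2e^{-c_2\rho^2} \ge 1$ is trivial provided $c_2 \le (\ln 2)/(4C_2^2)$. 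Choosing $c_2 = c_2(c_1)$ small enough so that $C_1 e^{-c_1\rho^2/36} \le 2e^{-c_2\rho^2}$ holds uniformly in $\rho \ge 2C_2$ (which is possible as long as $c_2 < c_1/36$ and $c_2$ is additionally small enough to swallow the prefactor $C_1/2$ on the range $[2C_2, \infty)$), we obtain the desired conclusion $\P(|X - \E X| \ge \rho) \le 2e^{-c_2\rho^2}$ for all $\rho > 0$.

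The argument is routine and there is no substantial obstacle; the only point requiring care is the initial extraction of a fixed center $a_{\rho_0}$, which is handled by the pigeonhole-type intersection of two probability-$>1/2$ events.
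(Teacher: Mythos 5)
The paper does not give a proof of this claim (it is explicitly omitted as standard), so there is nothing to compare against; your argument is the standard one and it is correct. The pigeonhole step pinning all centers $a_\rho$ within $2\rho$ of a fixed $a_{\rho_0}$, the resulting uniform sub-Gaussian tail around $a_{\rho_0}$, the bound $|a_{\rho_0}-\E X|\le C_2(c_1)$, and the recentering are all sound, and all constants depend only on $c_1$, as required.

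One small point of rigor in the last step: decreasing $c_2$ increases the right-hand side $2e^{-c_2\rho^2}$ pointwise, but at the \emph{fixed} threshold $\rho=2C_2$ the limiting requirement as $c_2\to0$ is $C_1e^{-c_1(2C_2)^2/36}\le 2$, which is not automatic from the structure of the argument alone (it does hold with your explicit constants, since $C_2$ is of order $C_1/\sqrt{c_1}$, but you did not check it). The cleanest fix is to split at a possibly larger threshold $M:=\max\bigl(2C_2,\;6\sqrt{\ln(\max(C_1,2)/2)/c_1}\bigr)$: for $\rho<M$ take $c_2\le(\ln 2)/M^2$ so that $2e^{-c_2\rho^2}\ge1$, and for $\rho\ge M$ one has $C_1e^{-c_1\rho^2/36}\le 2e^{-c_1\rho^2/72}$, so $c_2=\min\bigl(c_1/72,(\ln 2)/M^2\bigr)$ works. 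With this one-line adjustment the proof is complete.
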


\begin{proof}[Proof of Lemma \ref{lem:ass_conc}] Let $\Delta\subset \Lambda$. By Corollary \ref{cor:conc}, we have that $|\GE^\eta|<\infty$ almost surely.
Let $\rho>0$. By Corollary~\ref{cor:conc} and Markov's inequality, $\mathbb P \big( \mathbb P (\varphi^\eta \notin\Omega_k \mid \eta_{\Delta^c} ) \ge e^{-\rho ^2} \big) \le e^{\rho ^2}\mathbb P (\varphi^\eta \notin\Omega_k )\to 0$ as $k\to \infty$. Thus, conditioning on $\eta_{\Delta^c}$ there is $k$ sufficiently large (depending on $\eta_{\Delta^c}$ and $\rho $) such that almost surely
\begin{equation}\label{eq:for conc}
    \P(\varphi^\eta \notin\Omega_k \big |\eta_{\Delta^c}) \le e^{-\rho ^2}.
\end{equation}
From now on we condition on $\eta_{\Delta^c}$ and work in the conditional space. We would like to apply Theorem~\ref{thm: BTIS} with $T:=\Omega_k$ and the Gaussian process $X_\varphi:=-H^\eta(\varphi) $ for $\varphi\in T$. The set $T$ is compact, $S_T=\sup_{\varphi\in T} X_\varphi$ is finite almost surely by continuity of $X$ and
\[\var(X_\varphi|\eta_{\Delta^c})=\sum_{v\in\Delta}\var(\eta_{v,\varphi_v})\le C\lambda ^2 |\Delta|.\]
Since $(X_t)$ is continuous almost surely it has continuous covariance and expectation. We obtain
\begin{equation*}
    \begin{split}
        \P\big(& \big|\GE^{\eta} - \E [ S_T \mid \eta_{\Delta^c}] \big|\ge \rho\lambda\sqrt{|\Delta|}\ \big|  \eta_{\Delta^c}\big) \\
        &\le \P(\varphi^\eta \notin\Omega_k \ \big | \ \eta_{\Delta^c})+\P \big( |S_T-\E [ S_T \mid \eta_{\Delta^c}] |\ge \rho \lambda \sqrt{|\Delta|} \ \big| \  \eta_{\Delta^c}\big)\le 3e^{-c\rho ^2},
    \end{split}
\end{equation*}
where the last inequality follows from \eqref{eq:for conc} and Theorem~\ref{thm: BTIS}, using that $\sigma_T^2\le  C\lambda ^2 |\Delta|$. This finishes the proof of the lemma using Claim~\ref{claim:something}.
\end{proof}

\section{Discussion on disorder universality classes and examples}\label{sec:disorder types}

In this section we informally discuss several \emph{universality classes of disorders}. By such a universality class we mean general features of the distribution of the random environment $\eta:\Z^d\times\R^n\to(-\infty,\infty]$ which may be expected to govern the behavior of the surface minimizing~\eqref{eq:formal Hamiltonian}. We motivate these classes by discussing relations with disordered spin systems, provide examples of specific disorder distributions in each class and describe the predictions for the behavior of the minimal surface in such environments. More information and references can be found in the recent review~\cite{wiese2022theory}.

In all of the disorder classes discussed we assume that
\begin{equation}\label{eq:IID assump}
\text{the $(\eta_{v,\cdot})_{v\in\Z^d}$ are independent and identically distributed processes,} 
\end{equation}
which are almost surely non-constant.
We restrict attention to the case $\lambda=1$ in~\eqref{eq:formal Hamiltonian}, fix $d,n\ge 1$ and consider harmonic minimal surfaces (the minimizing surfaces of~\eqref{eq:finite volume Hamiltonian}) on $\Lambda=\Lambda_L$ for $L$ large with zero boundary conditions. Also, without trying to make such an assumption formal, we are considering situations in which the disorder distribution has sufficiently light tails so that these do not affect the distribution of the minimal surface (see~\cite{biroli2007extreme,hambly2007heavy,auffinger2011directed,gueudre2014revisiting,dey2016high,berger2019entropy,berger2019directed, berger2022non} for results on heavy-tailed cases with $d=1$).

\subsection{Independent disorder}\label{sec:independent disorder} This type of disorder is characterized by, for each $v\in\Z^d$,
\begin{enumerate}
    \item Stationarity: the process $t\mapsto\eta_{v,t}$ is stationary to translations in $\R^n$.
    \item Independence: The restrictions $\eta |_{\{v\}\times A}$ and $\eta |_{\{v\}\times B}$ are independent for each $A,B\subset\R^n$ satisfying $\inf_{t\in A, s\in B}\|t-s\|\ge 2$.
\end{enumerate}
Together with~\eqref{eq:IID assump}, we see that such disorders satisfy our assumptions~\ref{as:stat} and~\ref{as:indep}. This is the main disorder type considered in this paper and the transversal fluctuations predicted for it were discussed in Table~\ref{table:transversal fluctuations n=1} and Section~\ref{sec:physics results background}. We note that instead of the strict independence assumption above, it should also be possible to assume that the processes $t\mapsto\eta_{v,t}$ have sufficiently fast decay of correlations (stationary disorders with slow, power-law, decay of correlations are considered in~\cite{mezard1990interfaces, mezard1991replica, mezard1992manifolds, balents1993large}).

\subsubsection{Examples} Our main example in this class is $\eta^\white$, the regularized white noise process introduced in Section~\ref{sec:whitenoise}. A second natural example is based on Poisson processes: For each $v\in\Z^d$ independently, let $P_v$ be a Poisson point process of unit intensity on $\R^n$. Define $\eta^{\text{Poisson}}:\Z^d\times\R^n\to\{0,\infty\}$ by
\begin{equation}\label{eq:eta Poisson}
    \eta^{\text{Poisson}}_{v,t}:=\begin{cases}0&t\in P_v,\\
    \infty&t\notin P_v.
    \end{cases}
\end{equation}
With this disorder, the minimal surface must pass through the Poisson points (i.e., each minimal surface $\varphi$ satisfies $\varphi_v\in P_v$ for all $v$). The example may be enhanced by sampling at each Poisson point an independent weight from some weight distribution and setting the disorder equal to that weight there (instead of setting it equal to $0$).

A third natural example is obtained by a uniformly-shifted interpolation of a discrete-space process: Let $(W_{v,k})_{v\in\Z^d, k\in\Z^n}$ be independent samples from some probability measure $\mu$. Define $W_{v,t}$ for all $t\in\R^n$ by an interpolation rule (e.g., $W_{v,t}=W_{v,k}$ for the $k$ nearest to $t$, with some tie-breaking rule). Let $(U_v)_{v\in\Z^d}$ be independent and uniform on $[0,1]^n$. Then, the uniformly-shifted interpolated disorder $\eta$ is defined by setting $\eta_{v,t}:=W_{v,t+U_v}$. Note that $\eta$ satisfies the stationarity property above. While it doesn't satisfy the independence property, we still expect it to induce a similar behavior in the surfaces minimizing~\eqref{eq:formal Hamiltonian} as the other disorders of this class.

Lastly, for $n=1$, one may let $t\mapsto\eta_{v,t}$ be the Ornstein-Uhlenbeck process, independently between $v\in\Z^d$. This choice satisfies the stationarity property. It does not satisfy the independence property, but has fast decay of correlations and may still be expected to induce a similar behavior in the surfaces minimizing~\eqref{eq:formal Hamiltonian} as the other disorders of this class. 

\subsubsection{Disordered Ising ferromagnet and predicted roughening transition}\label{sec:disordered Ising ferromagnet} For $n=1$, minimal surfaces under independent disorder are used to model the domain walls of the \emph{disordered Ising ferromagnet} (or random-bond Ising model), as we now discuss (c.f.~\cite[Section 9.1]{bassan2023non}). The latter is the spin system on $\sigma:\Z^D\to\{-1,1\}$, with $D:=d+1$, with formal Hamiltonian
\begin{equation}\label{eq:Disordered ferromagnet formal Hamiltonian}
H^{J,\text{RB-Ising}}(\sigma):=-\sum_{\{x,y\}\in E(\Z^D)}J_{\{x,y\}}\sigma_x\sigma_y,
\end{equation}
in which the coupling constants $J=(J_e)_{e\in E(\Z^D)}$ are a \emph{non-negative} quenched random field, sampled independently from a non-constant distribution with, say, compact support in $(0,\infty)$. Consider this system in the infinite cylinder $\Delta_L:=\{-L,\ldots, L\}^d\times\Z$ with Dobrushin boundary conditions $\rho^{\Dob}\colon \Z^D\to\{-1,1\}$ given by $\rho^{\Dob}_{(v,k)} := \sign(k - 1/2)$ where $\sign$ denotes the sign function (the distribution of $\sigma$ is well-defined in $\Delta_L$ as the cylinder is only infinite in one direction). At zero temperature, the domain wall separating the spin states $-1$ and $1$ is a membrane anchored at height $1/2$ on the boundary of $\Delta_L$. In dimension $d=1$, it coincides with a first-passage percolation geodesic (on the dual graph to $\{-L,\ldots,L\}\times\Z$). Harmonic minimal surfaces (the model~\eqref{eq:formal Hamiltonian}) in an independent disorder are often used as an approximation to this membrane; c.f.~\cite{huse1985pinning, wiese2022theory}. The approximation is obtained by requiring the membrane to have no overhangs (i.e., to be expressible as a function of $\Z^d$), replacing the resulting (random) Solid-On-Solid interaction by a (deterministic) quadratic interaction and relaxing the surface heights to be real rather than integer (see also~\cite[Remark 1.12]{bassan2023non} which explains that the no overhangs property may be obtained by setting all coupling constants $J_e$ of edges in the first $d$ coordinate directions to be equal). 

The domain wall membrane is believed (but not proved) to behave similarly to the harmonic minimal surface in dimensions $d=1,2$ and $d\ge 5$ in the sense of delocalizing with the same transversal fluctuation exponent $\xi$ (Table~\ref{table:transversal fluctuations n=1}) in dimensions $d=1,2$ (a delocalization result for an approximating disordered Solid-On-Solid model is proved by Bovier--K\"ulske~\cite{bovier1996there}) and localizing in dimensions $d\ge 5$. However, the fact that the domain wall membrane is pinned to (half-)\emph{integer heights} leads to a new behavior in dimensions $d=3,4$: Indeed, it is proved in~\cite{bassan2023non} (and also~\cite{bovier1994rigorous} for the disordered Solid-On-Solid approximation) that when the coupling constant distribution is very concentrated (e.g., uniform on $[a,b]$ with $b>a>0$ satisfying that $(b-a)/a$ is small) then the domain wall membrane localizes, unlike the harmonic minimal surface which we prove to delocalize (Theorem~\ref{thm:delocalization intro}). Moreover, in dimension $d=4$ it is predicted that the domain wall membrane always localizes (e.g., for all $b>a>0$ in the above example) while in dimension $d=3$ it is predicted that a \emph{roughening transition} takes place as the distribution of the $(J_e)$ becomes sufficiently spread (e.g., as $(b-a)/a$ grows) ~\cite{emig1998roughening, emig1999disorder}. For $d=3$, at the critical roughening point the height fluctuations of the domain wall membrane are predicted to have order $\sqrt{\log L}$ while above the roughening transition they are predicted to be a power-law with the same transversal fluctuation exponent as the harmonic minimal surface~\cite{emig1998roughening, emig1999disorder}.

Minimal surfaces with codimension $n>1$ under independent disorder also have natural counterparts in disordered spin systems: see discussion in~\cite[Section 9.5]{bassan2023non} for a way to realize such a discrete minimal surface as the defect set of a disordered (generalized) Ising lattice gauge theory (which again coincides with a first-passage percolation geodesic when $d=1$).

\subsection{Brownian and fractional Brownian disorder}\label{sec:Brownian disorder} We first discuss Brownian disorder, by which we mean that $n=1$ and the disorder consists of two-sided Brownian motions. Precisely, $\eta_{v,t} = \text{BM}_v(t)$ with each $\text{BM}_v$ an independent Brownian motion on $\R$ normalized to satisfy $\text{BM}_v(0)=0$. While this disorder is not stationary, it has stationary (independent) increments.

\subsubsection{Predictions} It is predicted that the height fluctuations of harmonic minimal surfaces with Brownian disorder are of order $L^{\frac{4-d}{3}}$ in dimensions $d=1,2,3$~\cite{villain1982commensurate,grinstein1982roughening, grinstein1983surface, nattermann1988random, forgacs1991behavior,wiese2022theory}, of order $(\log L)^{1/3}$ in dimension $d=4$ (\cite{emig1998roughening} and ~\cite[following (78)]{emig1999disorder}), whereas the minimal surface is localized in dimensions $d\ge 5$. In a sequel paper we provide rigorous proofs of these predictions for $d\neq 4$, with less precise results for $d=4$. 

\subsubsection{Random-field Ising model}\label{sec:random-field Ising model}
Minimal surfaces under Brownian disorder model the domain walls of the random-field Ising model. The latter is the spin system on $\sigma:\Z^D\to\{-1,1\}$ whose formal Hamiltonian is given by
\begin{equation}\label{eq:random-field Ising formal Hamiltonian}
H^{h,\text{RF-Ising}}(\sigma):=-\sum_{\{x,y\}\in E(\Z^D)}\sigma_x\sigma_y-\lambda\sum_x h_x \sigma_x,
\end{equation}
in which the random field $h=(h_x)_{x\in \Z^D}$ is independent, with each $h_x$ having (say) the standard Gaussian distribution, and with $\lambda>0$ the disorder strength. Similarly to the case of the disordered Ising ferromagnet discussed in Section~\ref{sec:disordered Ising ferromagnet}, one places the random-field Ising model in Dobrushin boundary conditions in the infinite cylinder $\Delta_L:=\{-L,\ldots, L\}^d\times\Z$, where $d = D-1$. Then, at zero (or low) temperature and weak disorder (small $\lambda>0$) one studies the domain wall membrane anchored at height $1/2$ on the boundary of $\Delta_L$ (the membrane may have some ``width" but should be approximately well defined at weak disorder). Harmonic minimal surfaces in a Brownian disorder are often used as an approximation to this domain wall membrane~\cite{wiese2022theory}. The situation is then similar to that described in Section~\ref{sec:disordered Ising ferromagnet}: the domain wall membrane is believed to delocalize with the same transversal fluctuation exponent $\xi$ as its minimal surface approximation in dimensions $d=1,2$, and to localize in dimensions $d\ge 5$. For $d=4$ the domain wall membrane is believed to localize (unlike its minimal surface approximation). For $d=3$, a transition from a localized to a delocalized regime may occur as $\lambda$ increases~\cite{emig1998roughening,emig1999disorder}, with $\sqrt{\log L}$ fluctuations at the transition point and power-law fluctuations for larger $\lambda$ (though not too large, so the domain wall membrane can still be defined), with the same transversal fluctuation exponent as the minimal surface approximation. 

\subsubsection{Fractional Brownian disorder}\label{sec:fractional Brownian disorder} The physics literature  studies how the transversal fluctuations of the minimal surface depend on the long-range correlations in the disorder processes $t\mapsto\eta_{v,t}$~\cite{mezard1990interfaces, mezard1991replica, mezard1992manifolds, balents1993large}. Extending beyond the Brownian disorder discussed above, we propose to consider fractional Brownian disorder of Hurst parameter $0<H<1$. By this we mean that for each $v$, the process  $t\mapsto\eta_{v,t}$ is distributed as a centered Gaussian process on $\R^n$, normalized to satisfy $\eta_{v,0}=0$, and with covariance described by
\begin{equation}
    \var(\eta_{v,t} - \eta_{v,s}) = \|t-s\|^{2H},\qquad t,s\in\R^n.
\end{equation}
See, e.g.,~\cite{ossiander1989certain} for a short proof that such a process exists.
The fractional Brownian disorder has continuous sample paths and stationary increments (but is not stationary). The Brownian disorder case discussed above is recovered by taking $n=1$ and $H=1/2$.

While fractional Brownian disorder is not explicitly considered in~\cite{mezard1990interfaces, mezard1991replica, mezard1992manifolds, balents1993large}, extrapolating from their results we may predict that, for all values of $n$, harmonic minimal surfaces with this disorder have transversal fluctuations of order $L^{\frac{4-d}{4-2H}}$ in dimensions $d=1,2,3$ and are localized in dimensions $d\ge5$. We may further predict that the transversal fluctuations are of order $(\log L)^{\frac{1}{4-2H}}$ in dimension $d=4$, where the power of the logarithm equals the coefficient of $4-d$ in the transversal fluctuation exponent predicted for dimensions $d<4$ (as done, e.g., in~\cite{emig1998roughening,emig1999disorder} for the Brownian case). In our sequel paper mentioned above we provide rigorous proofs for these transversal fluctuation predictions for $d\neq 4$, with less precise results for $d=4$.

\subsection{Linear disorder}\label{sec:linear disorder} By a linear disorder we mean that $\eta_{v,t} = \zeta_v\cdot t$ with the $\zeta=(\zeta_v)_{v\in\Z^d}$ independent, each having (say) the standard Gaussian distribution in $\R^n$. This disorder is the limiting $H=1$ case of the fractional Brownian disorder introduced in Section~\ref{sec:fractional Brownian disorder}. As $\eta_{v,t} = \sum_{j=1}^n \zeta_{v,j} t_j$, the components of the harmonic minimal surface decouple under this disorder. It thus suffices, when studying the delocalization properties, to consider the case $n=1$. The harmonic minimal surface is then given by an \emph{exact formula}: the minimal surface $\varphi^\zeta$ on a finite $\Lambda\subset\Z^d$ with zero boundary values satisfies (with $\Delta_\Lambda$ defined in~\eqref{eq:Laplacian definition})
\begin{equation}\label{eq:real-valued linear disorder surface}
    \varphi^\zeta = -\lambda(-\Delta_\Lambda)^{-1}\zeta.
\end{equation}
On $\Lambda=\Lambda_L$, one then checks that $\varphi$ fluctuates to height $L^{\frac{4-d}{2}}$ in dimensions $d=1,2,3$, to height $\sqrt{\log L}$ in dimension $d=4$ and is localized in dimensions $d\ge 5$. These facts and their further extensions may be found in~\cite{dario2023random}, where it is also shown that harmonic minimal surfaces in linear disorder are equal in distribution to the, so called, membrane model~\cite[Section 7.2]{dario2023random}.

The roughening transition predicted for the disordered Ising ferromagnet (Section~\ref{sec:disordered Ising ferromagnet}) and the random-field Ising model (Section~\ref{sec:random-field Ising model}) is believed to also occur here: Consider minimizing the harmonic minimal surface Hamiltonian~\eqref{eq:formal Hamiltonian} with linear disorder over \emph{integer-valued} surfaces $\varphi:\Z^d\to\Z$. In this case, it is proven in~\cite{dario2023random} that the resulting integer-valued minimal surface still delocalizes to height $L^{\frac{4-d}{2}}$ in dimension $d=1,2$. However, in dimensions $d\ge 3$, it is shown to \emph{localize} for small $\lambda$ (thus displaying a different behavior from the real-valued minimal surface~\eqref{eq:real-valued linear disorder surface} in dimensions $d=3,4$). The behavior for large $\lambda$ is still open: It is conjectured in~\cite{dario2023random} that a roughening transition takes place in dimension $d=3$, whereas the surface always remains localized for $d\ge 5$. It is unclear whether a transition occurs for $d=4$ or whether the surface remains localized for all $\lambda$.

The harmonic minimal surface model with linear disorder is sometimes termed the \emph{random-rod model}~\cite{forgacs1991behavior} or the \emph{Larkin model} following Larkin~\cite{larkin1970effect} (c.f.~\cite{giamarchi2009disordered}). It appears naturally in the investigation of harmonic minimal surfaces with disorder which is smooth at short scales and independent and identically distributed between vertices (e.g., with $\eta^\white$ disorder) in the regime of small disorder strength $\lambda$: indeed, taking small $\lambda$ has the effect of rescaling the disorder (Section~\ref{sec:effect of lambda}), which then allows (in smooth cases) to approximate it by its first-order Taylor expansion, effectively yielding a linear disorder.

\subsection{Periodic disorder}\label{sec:periodic disorder} This type of disorder is characterized by, for each $v\in\Z^d$,
\begin{enumerate}
    \item Stationarity: the process $t\mapsto\eta_{v,t}$ is stationary to translations in $\R^n$.
    \item Periodicity: $\eta_{v,\cdot} = \eta_{v,\cdot+e_i}$ for $1\le i\le n$, where $e_i$ is the $i$th standard unit vector in~$\R^n$.
\end{enumerate}
As such processes satisfy~\ref{as:stat}, the versions of the relation $\chi\ge2\xi+d-2$ in Theorem~\ref{theorem:half scaling relation} and Theorem~\ref{thm: scaling relation d=1} apply to them (our proofs of $\chi\le2\xi+d-2$ do not apply as they use~\ref{as:indep}, and, indeed, it seems this other direction need not hold). Additionally, when our concentration assumption~\ref{as:conc} holds then our localization results in Theorem~\ref{thm:localization} apply (one may check that~\ref{as:conc} holds for the examples presented next).
\subsubsection{Examples} In one natural example, we start with an independent collection $(\text{PWN}_v)_{v\in\Z^d}$ of white noises on the $n$-dimensional torus of side length 1, extended periodically to the full $\R^n$. We then set the disorder to be their convolution with a bump function $b$, i.e., $\eta^{\text{per-white}}_{v,t}:=\text{PWN}_v(b(\cdot-t))$ similarly to the way that the disorder $\eta^\white$ is defined in~\eqref{eq:eta white definition} (making sure to choose $b$ so that $\eta^{\text{per-white}}_{v,\cdot}$ is non-constant). It is straightforward that $\eta^{\text{per-white}}$ is stationary and periodic.

To give another example, let $(U_v)_{v\in\Z^d}$ be independent and uniform on $[0,1]^n$. Let $\Z_v^n := \Z^n + U_v$ be a uniformly shifted version of the $\Z^n$ lattice, independently for each $v$. Define
\begin{equation}\label{eq:random-phase sine-Gordon disorder}
    \eta^{\text{RPSG}}_{v,t}:=\begin{cases}0&t\in\Z_v^n\\
    \infty&t\notin\Z_v^n\end{cases}.
\end{equation}
The harmonic minimal surface with this disorder is constrained to take values in $\Z_v^n$ for each~$v$. It is termed the (infinite activity) \emph{random-phase sine-Gordon} (RPSG) surface. To understand its delocalization properties it suffices to consider the $n=1$ case, as its components are independent, each with the distribution of the $n=1$ RPSG surface. Indeed, this follows from the fact that $t\mapsto \eta^{\text{RPSG}}_{v,t}$ has the distribution of $t\mapsto\sum_{j=1}^n \eta^{\text{RPSG}, j}_{v,t_j}$ where the $(\eta^{\text{RPSG}, j}_{v,\cdot})_{j=1}^n$ are independent and distributed as~\eqref{eq:random-phase sine-Gordon disorder} with $n=1$. Thus, in particular, the exponents $\chi,\xi$ for the RPSG surface are the same for all $n$.

\subsubsection{Predictions}\label{sec:predictions on periodic disorder} The RPSG surface with $n=1$ has received much attention in the physics literature (see, e.g.,~\cite{CO82, TD90, HWA94, LDS07, RLDS12}) and the following behavior is predicted (on $\Lambda_L$ with zero boundary conditions): In two dimensions, the heights are predicted to delocalize to height of order $\sqrt{\log L}$ at \emph{high temperature} (rough phase), and to order $\log L$ at low and zero temperature (super-rough phase); the height fluctuations are thus expected to \emph{decrease} as the temperature rises! It is further predicted, at low and zero temperature, that the heights delocalize to order $\sqrt{\log L}$ in three dimensions and are localized when $d\ge 5$ (see e.g.~\cite{GLD95, N90SC, OS95, VF84}). In dimension $d=4$, the prediction may be that the heights delocalize to order $\sqrt{\log \log L}$~\cite{chitra1999disordered}.

These predictions appear to be open in the mathematical literature, apart from a result of Garban--Sep\'ulveda~\cite{GS20} which proves a lower bound of $\sqrt{\log L}$ on the order of the heights at high temperature in two dimensions, even when the shifted lattices $\Z_v^{n=1}$ are chosen as arbitrary (deterministic) shifts of $\Z$ (R. Bauerschmidt pointed out to us that this also follows from the result of Fr\"ohlich--Spencer~\cite{frohlich1981kosterlitz} together with the correlation inequality of~\cite[Lemma C.3.]{aizenman2021depinning}. The correlation inequality shows that the covariance matrix of the integer-valued Gaussian free field with a ``magnetic field'' is smallest, in the order of symmetric matrices, when the magnetic field vanishes).

Harmonic minimal surfaces with both disorder examples above, $\eta^{\text{per-white}}$ and $\eta^{\text{RPSG}}$, satisfy our concentration assumption~\ref{as:conc} (with $\lambda=1$) for arbitrary $n\ge 1$. For $\eta^{\text{per-white}}$ this may be checked similarly to $\eta^\white$ (Section~\ref{sec:assumptions for eta white}) while for $\eta^{\text{RPSG}}$ it follows from Hoeffding's inequality by noting that changing a single $\Z^n_v$ has a bounded effect on the ground energy. Thus, localization in dimensions $d\ge 5$ as well as bounds for the height fluctuations in lower dimensions follow for these surfaces from our Theorem~\ref{thm:localization}.

\subsubsection{Random-field XY model} For $n=1$, minimal surfaces under periodic disorder model the angle variables of the random-field XY model in a ``no-vortices" approximation, as we now discuss. The random-field XY model is the spin system on $\sigma:\Z^d\to\mathbb{S}^1$, with $\mathbb{S}^1$ the unit circle in $\R^2$, with formal Hamiltonian (with $\cdot$ defined in~\eqref{eq:inner product R n})
\begin{equation}\label{eq:Disordered XY formal Hamiltonian}
H^{h,\text{RF-XY}}(\sigma):=-\sum_{\{u,v\}\in E(\Z^d)}\sigma_u\cdot\sigma_v - \lambda\sum_{v\in\Z^d}h_v\cdot\sigma_v,
\end{equation}
in which the random field $h=(h_v)_{v\in \Z^d}$ is independent with, e.g., each $h_v$ having the standard Gaussian distribution in $\R^2$ and with disorder strength $\lambda>0$.

We wish to identify $\sigma$ with the complex exponential of an angle field, $\sigma_v:=e^{2\pi i\varphi_v}$. As this only defines each $\varphi_v$ up to an additive integer, we further require the minimality of $|\varphi_u - \varphi_v|$ for each edge $\{u,v\}\in E(\Z^d)$. This requirement may only hold if $\sigma$ has \emph{no vortices}, i.e., no $2\times 2$ cycles in $\Z^d$ on which the angle of $\sigma$ completes a full turn when rotating in the direction of the smaller angle difference on each edge. For our heuristics, we assume this ``no-vortices" approximation. Translating~\eqref{eq:Disordered XY formal Hamiltonian} to the angle variables leads to a Hamiltonian for $\varphi$ similar to~\eqref{eq:formal Hamiltonian} (even more similar in the case of the random-field Villain model), with a \emph{periodic} disorder $\eta$ determined from the random field $h$. Thus, studying harmonic minimal surfaces in periodic disorder may be insightful for the study of the random-field XY model.

One seeks to understand whether the random-field XY model has a magnetized phase at given temperatures and disorder strengths. Imry--Ma~\cite{imry1975random} predicted that a magnetized phase is absent in dimensions $1\le d\le 4$, while such a phase exists at low temperatures and weak disorder in dimensions $d\ge 5$.
Aizenman--Wehr~\cite{aizenman1989rounding,aizenman1990rounding} proved the prediction in dimensions $1\le d\le 4$ (at all non-negative temperatures and positive disorder strengths). A proof that a magnetized phase exists in dimensions $d\ge 5$ is still missing (even at zero temperature); our proof that harmonic minimal surfaces with periodic disorder localize in these dimensions lends support to the prediction.

An important challenge is to determine the rate of magnetization decay in dimensions $2\le d\le 4$ at low temperature and weak disorder: exponential decay is expected in dimension $d=2$ (as proved for the random-field Ising model~\cite{ding2021exponential, aizenman2020exponential}), but it is not clear whether exponential or inverse power-law decay (or some intermediate rate) should occur in dimension $d=3$~\cite{aharony1980infinite, tissier2006unified, gingras1996topological,feldman2001quasi}. Dario--Harel--Peled~\cite{dario2024quantitative} prove inverse power-law upper bounds on the magnetization decay in dimensions $d\le 3$ and an inverse poly-loglog upper bound in dimension $d=4$. The predicted $\sqrt{\log L}$ height fluctuations for harmonic minimal surfaces in periodic disorder in dimension $d=3$ lends support to the possibility of inverse power-law decay in that dimension (c.f.~\cite{giamarchi2009disordered}). Such decay would imply a transition of the \emph{Berezinskii–Kosterlitz–Thouless
type} as the temperature or disorder strength varies and would be of
great interest. Similarly, order $\sqrt{\log\log L}$ height fluctuations for harmonic minimal surfaces in periodic disorder in dimension $d=4$ lends support to the possibility of an inverse power of logarithm decay in that dimension~\cite{chitra1999disordered} (see also~\cite[Equation~(53)]{feldman2001quasi}). See~\cite{dario2024quantitative} for further discussion.

Charge-density waves (CDWs) and flux-line lattices, as well as the term Bragg glass, are associated with harmonic minimal surfaces with periodic disorder (see, e.g.,~\cite{emig1999disorder,giamarchi2009disordered}). An interesting connection is proposed between CDWs and loop-erased random walks~\cite{wiese2019field, wiese2022theory}.

\section{Open questions and extensions}\label{sec:open questions and extensions}

This section presents and discusses some of the open problems for minimal surfaces in an ``independent'' random environment such as $\eta^\white$ (Section~\ref{sec:whitenoise}) or $\eta^{\text{Poisson}}$ (see~\eqref{eq:eta Poisson}). Unless otherwise stated, we consider a fixed environment strength $\lambda>0$ (not varying with the domain size).

\subsection{Improving the fluctuation estimates}\label{sec:improving fluctuation estimates} Our paper provides rigorous upper and lower bounds on the transversal and ground energy fluctuations of minimal surfaces in an ``independent'' random environment. It is an important challenge to improve upon the obtained expressions. We highlight several specific instances of this challenge:
\begin{itemize}
    \item A case of special interest is to obtain rigorous proofs of the prediction that the exponents $\chi=1/3$ and $\xi=2/3$ when $d=n=1$.
    \item An outstanding issue, open also in the physics literature (see~\cite[equation (783) and discussion in Section 7.11]{wiese2022theory}), is to determine whether for $d=1$ and some $n_0\ge 2$, the ground energy fluctuations are of order $1$ (equivalently by the scaling relation, whether the height fluctuations are of order~$\sqrt{L}$).
    \item Another important problem is to verify that the transversal fluctuations \emph{decrease} with the number of components $n$, until some critical $n_0$ (possibly infinity) after which their order does not change, as the physics literature seems to suggest; our current upper bounds do not depend on $n$.

    A related question is to determine the limit of $\xi$ as $n\to\infty$ in dimensions $1\le d\le 3$.
    \item It would be interesting to verify the physics prediction (\cite{emig1998roughening} and ~\cite[following (79)]{emig1999disorder}) that the transversal fluctuations have poly-logarithmic order in the critical dimension $d=4$ (this entails improving our $(\log\log L)^{\frac{1}{4+n}}$ lower bound to such order).
    \item One would like to combine our delocalization results (Theorem~\ref{thm:delocalization intro}) with the scaling relation in order to deduce lower bounds on the ground energy fluctuations. We achieve this for $d=1$, and achieve it up to a logarithmic loss for $d=2$, but have not obtained the corresponding result for $d=3$ (see Corollary~\ref{cor:lower bounds on energy fluctuations in d=1,2}, Table~\ref{table:energy fluctuations n=1} and~\eqref{eq:best lower bounds on energy fluctuations}). The reason is that the version of the scaling inequality $\chi\ge2\xi+d-2$ proved in~\eqref{eq:ge direction of scaling relation} is in terms of the \emph{average} height, which is not compatible with our delocalization results. Overcoming this issue would yield, for instance, $\chi\ge \frac{7}{5}$ for $d=3$, $n=1$.
\end{itemize}

\subsection{Interpretations of the scaling relation}\label{sec:scaling relation interpretations}
A possible interpretation of the scaling relation is that it relates the standard deviation of the ground energy to the expected absolute height in the center of the domain $\Lambda_L$ (or at other bulk points). In our notation, this means
\begin{equation}\label{eq:naive scaling relation}
    \std({\GE^{\eta,\lambda,\Lambda_L}})\approx \left(\E\|\varphi^{\eta,\lambda,\Lambda_L}_0\|\right)^2 L^{d-2}.
\end{equation}
However, our results for $d=1$ in Section~\ref{sec:1d scaling relation} show that in order to relate the standard deviation of the ground energy to the height fluctuations one has to further take into account the way that the height fluctuations decrease as one approaches the boundary of the domain. For $d=1$ in the most standard cases (e.g., $\eta=\eta^\white$) we still expect~\eqref{eq:naive scaling relation} to be correct as we expect that the height fluctuations in the bulk of the domain yield the most significant contribution to the ground energy fluctuation. Nevertheless, it is not clear to us whether this should always be expected in higher dimensions $d$. Specifically, in dimension $d=2$ it seems possible that the transversal fluctuations decrease to order $1$ (uniformly in $L$) as the number of components $n\to\infty$, which would imply the same for the ground energy fluctuations if~\eqref{eq:naive scaling relation} held. However, we know that $\std({\GE^{\eta,\lambda,\Lambda_L}})\ge \sqrt{L}$ for all $n$ due to the boundary contribution in dimension $d=2$ (see~\eqref{eq:boundary contribution to energy fluctuations}), which would then contradict~\eqref{eq:naive scaling relation}.

An alternative possible interpretation of the scaling relation, which would avoid the boundary contribution to the ground energy fluctuation, is that
\begin{equation}\label{eq:alternative scaling relation}
    \E|{\GE^{\eta,\lambda,\Lambda_L}}-\GE^{\eta[\Lambda_{\lfloor L/2\rfloor}],\lambda,\Lambda_{L}}|\approx \left(\E\|\varphi^{\eta,\lambda,\Lambda_L}_0\|\right)^2 L^{d-2}.
\end{equation}
as we have in our version of the scaling inequality $\chi\le 2\xi+d-2$ in Theorem~\ref{theorem:other half of scaling relation} (recalling that by $\eta[\Delta]$ we mean resampling $\eta$ on $\Delta\times\R^n$). With this interpretation in mind, it is reasonable to try and bound the left-hand side of~\eqref{eq:alternative scaling relation}. Our lower bound based on the boundary contribution in~\eqref{eq:boundary contribution to energy fluctuations} does not adapt to this task, while we expect our lower bounds resulting from the scaling relation for $d=1,2$ (Corollary~\ref{cor:lower bounds on energy fluctuations in d=1,2}) and from the bulk localization bounds~\eqref{eq:bulk contribution to energy fluctuations} to continue to hold for this quantity. An interesting prediction which the form~\eqref{eq:alternative scaling relation} raises is that its left-hand side is at least of order $L$ in dimension $d=3$ for all $n$. We do not have a proof of this statement (the lower bound of this form presented in~\eqref{eq:best lower bounds on energy fluctuations} is based on the boundary contribution).

We have already pointed out after Theorem~\ref{theorem:half scaling relation} that ``pointwise versions'' of the scaling relation, such as~\eqref{eq:naive scaling relation} or~\eqref{eq:alternative scaling relation}, cannot hold in dimensions $d\ge 4$. It is also natural to wonder whether a scaling relation such as~\eqref{eq:naive scaling relation} or~\eqref{eq:alternative scaling relation} will hold in dimensions $d\ge 4$ if one replaces the pointwise term $\|\varphi^{\eta,\lambda,\Lambda_L}_0\|$ by the average term $\|\frac{1}{|\Lambda_L|}\sum_{v\in\Lambda_L} \varphi^{\eta,\lambda,\Lambda_{L}}_v\|$. Theorem~\ref{theorem:half scaling relation} provides the direction $\chi\ge2\xi+d-2$ of such a relation and holds in all dimensions. It may be that both directions of such a relation will hold in dimension $d=4$, however, in dimensions $d\ge 5$, there is doubt that the other direction of the relation will hold, as it would predict that the average height is of order $L^{\frac{4-d}{4}}$ (at least for $\eta^\white$, as $\chi= d/2$ for $d\ge 5$ by~\ref{as:conc} and~\eqref{eq:best lower bounds on energy fluctuations}) whereas a tentative connection with the membrane model, discussed in Section~\ref{sec:scaling limit and rate of correlation decay} below, suggests a smaller order. Altogether, it is possible that the scaling relation does not make sense in dimensions $d\ge 5$ (beyond the one-sided inequalities of Theorem~\ref{theorem:half scaling relation} and Theorem~\ref{theorem:other half of scaling relation}).

\subsection{Scaling limit and rate of correlation decay}\label{sec:scaling limit and rate of correlation decay}

A fundamental open problem is to identify the scaling limit of harmonic minimal surfaces. For certain integrable models of planar last-passage percolation, the scaling limit is known to be the directed landscape of Dauvergne, Ortmann and Vir{\'a}g \cite{dauvergne2022directed, dauvergne2021scaling}; one may thus expect it to also be the scaling limit of harmonic minimal surfaces with independent disorder (Section~\ref{sec:independent disorder}) in dimensions $d=n=1$. Another case where the scaling limit is known is for harmonic minimal surfaces with \emph{linear disorder} (Section~\ref{sec:linear disorder}): These equal in distribution the lattice membrane model (with specific boundary conditions), which has the continuum membrane model as its scaling limit~\cite[Section 7.2]{dario2023random},\cite{caravenna2009scaling, cipriani2019scaling}. By analogy with this case, one may speculate that the scaling limit of harmonic minimal surfaces will be a random continuous function in dimensions $d\le 3$, but only a random distribution in dimensions $d\ge 4$. 

Recall from Section~\ref{sec:linear disorder} that when the disorder strength $\lambda$ is sufficiently small, harmonic minimal surfaces in disorder which is smooth at short scales and independent and identically distributed between vertices (such as $\eta^\white$) will behave similarly to harmonic minimal surfaces in linear disorder. That is, similarly to the membrane model. Moreover, it may be that $\lambda$ may tend to $0$ at a relatively slow rate in the localized dimensions $d\ge 5$.

A related question is to study the rate of correlation decay. For instance, in the localized dimensions $d\ge 5$: (i) Is it true that the covariance between the heights of the minimal surface at vertices $x$ and $y$ exhibits power-law decay in the distance between $x$ and $y$? (ii) What is the typical order of the average of the minimal surface on $\Lambda_L$? Theorem~\ref{theorem:half scaling relation} and assumption~\ref{as:conc} (which holds for $\eta^\white$, say) show that it is at most of order $L^{\frac{4-d}{4}}$. Is the correct rate the one used in the scaling of the membrane model~\cite{cipriani2019scaling}, $L^{\frac{4-d}{2}}$? (iii) Does this average converge, after a suitable rescaling, to a Gaussian distribution?

\subsection{The dependence on the disorder strength}\label{sec:dependence on the disorder strength}

It is natural to ask how the environment strength $\lambda$ affects the behavior of the minimal surface. We conjecture that the transversal fluctuations on $\Lambda_L = \{-L,\ldots,L\}^d$ have the following order in dimensions $1\le d\le 3$ for the disorder $\eta = \eta^\white$:
\begin{equation}\label{eq:lambda dependence transversal prediction}
    \begin{cases}
        \lambda L^{\frac{4-d}{2}}&\lambda\le \lambda_0\\
        \lambda^\nu L^\xi&\lambda_0\le \lambda\le 1
    \end{cases}
\end{equation}
where $\xi$ stands for the transversal fluctuation exponent in dimension $d$ and
\begin{equation}\label{eq:lambda_0 and nu predictions}
    \lambda_0=\frac{1}{L^{\frac{4-d}{2}}}\quad\text{and}\quad\nu = \frac{2}{4-d}\xi.
\end{equation}
Let us reason for this conjecture. First, for sufficiently small $\lambda$, as discussed in Section~\ref{sec:linear disorder}, the disorder is well approximated by a linear disorder, whence the transversal fluctuations of the minimal surface are of order $\lambda L^{\frac{4-d}{2}}$. We believe that this behavior persists until the transversal fluctuations reach order $1$, whence the approximation by linear disorder necessarily breaks down as the ``width'' of the bump function $b$ defining $\eta^\white$ is of order $1$ (Section~\ref{sec:whitenoise}). This explains the regime $\lambda\le \lambda_0$ in~\eqref{eq:lambda dependence transversal prediction} and the value of $\lambda_0$ in~\eqref{eq:lambda_0 and nu predictions}.

Second, in the regime $\lambda_0\le \lambda\le 1$ the discrete gradients of the minimal surface do not exceed order 1, by Lemma~\ref{lem:holder} and~\ref{as:conc}, while the fluctuations of the entire minimal surface do exceed this order. Our conjecture says that the dependence on $\lambda$ in this regime is a simple multiplicative power law $\lambda^\nu$. The value of $\nu$ in~\eqref{eq:lambda_0 and nu predictions} then follows by equating the two expressions in~\eqref{eq:lambda dependence transversal prediction} at $\lambda=\lambda_0$ and using the predicted value for $\lambda_0$.

We also note an alternative renormalization heuristic for the scaling $\lambda^\nu L^\xi$ when $\lambda_0\le \lambda\le 1$: Write $\ell_0(\lambda) = \lambda^{-\frac{2}{4-d}}$ for the length scale at which the transversal fluctuations are of order $1$ (by the $\lambda\le\lambda_0$ regime of~\eqref{eq:lambda dependence transversal prediction}. This length scale is called the Larkin length, following its introduction by Larkin~\cite{larkin1970effect} and Larkin--Ovchinnikov~\cite{larkin1979pinning}). Then $\lambda^\nu L^\xi = (L / \ell_0(\lambda))^\xi$. In other words, the transversal fluctuations of the minimal surface at scale $L$ with the given $\lambda$, behave as the transversal fluctuations of the minimal surface at scale $\frac{L}{\ell_0(\lambda)}$ with $\lambda = 1$.

In the setting of~\eqref{eq:lambda dependence transversal prediction}, we further conjecture the ground energy fluctuations to be of order
\begin{equation}\label{eq:lambda dependence energy prediction}
    \begin{cases}
        \lambda L^{\frac{d}{2}}&\lambda\le \lambda_0\\
        \lambda^{2\nu}L^{\chi}&\lambda_0\le \lambda\le 1
    \end{cases}
\end{equation}
for the ground energy fluctuation exponent $\chi$ in dimension $d$. For $\lambda\le \lambda_0$ this follows from~\ref{as:conc} and Theorem~\ref{thm:lower bounds on energy fluctuations by localization} if the transversal fluctuations in this regime do not exceed $1$ (as in~\eqref{eq:lambda dependence transversal prediction}). For $\lambda_0\le \lambda\le 1$ this follows either from~\eqref{eq:lambda dependence transversal prediction} assuming that the scaling relation holds (consistently with our Theorem~\ref{theorem:half scaling relation} and Theorem~\ref{theorem:other half of scaling relation}), or from a renormalization heuristic like the above, stating that the energy fluctuations are of order $e(\ell_0) (L / \ell_0(\lambda))^{\chi}$ with $e(\ell_0) = \lambda \ell_0(\lambda)^{d/2}$ being the ground energy fluctuations of the minimal surface on $\Lambda_{\ell_0(\lambda)}$ at the given $\lambda$ (by the $\lambda\le \lambda_0$ regime of~\eqref{eq:lambda dependence energy prediction}).

We note that~\eqref{eq:lambda dependence energy prediction} furnishes the prediction $f(\lambda) = \lambda^{2\nu}$ for the function $f(\lambda)$ appearing in~\eqref{eq:chi conc}, in the regime $\lambda_0\le \lambda\le 1$ (at least for $\ell_0(\lambda)\le \ell\le L$).

\smallskip
We point out that the above conjectures,~\eqref{eq:lambda dependence transversal prediction} and~\eqref{eq:lambda dependence energy prediction}, depend on the fact that $\eta^\white$ admits a Taylor approximation and may thus be approximated by linear disorder on short scales. Other behavior may arise in other cases, e.g., for an Ornstein-Uhlenbeck disorder (which might be approximable by Brownian disorder on short scales; see Section~\ref{sec:Brownian disorder}).

The disorder $\eta^{\text{Poisson}}$ (see~\eqref{eq:eta Poisson}) is especially interesting as a scale-covariant behavior is obtained, in the following sense: As $\eta^{\text{Poisson}}_{v,t}\in\{0,\infty\}$, the disorder strength $\lambda$ has no effect in~\eqref{eq:formal Hamiltonian}. Instead, the natural parameter is a scaling factor for the underlying Poisson processes (equivalently, their intensity). For $\alpha>0$, define the rescaled disorder $\eta^{\text{Poisson},\alpha}_{v,t}:=\eta^{\text{Poisson}}_{v,\alpha t}$ for all $v,t$ (so that $\eta^{\text{Poisson},\alpha}$ has the distribution of~\eqref{eq:eta Poisson}, but based on Poisson processes of intensity $\alpha^n$). Then $H^{\eta^{\text{Poisson},\alpha},\lambda,\Lambda}(\varphi) = \alpha^{-2} H^{\eta^{\text{Poisson}},\lambda,\Lambda}(\alpha \varphi)$ (using that $\eta^{\text{Poisson}}_{v,t}\in\{0,\infty\}$). Thus we have the following scale-covariance property: Passing from $\eta^{\text{Poisson}}$ to $\eta^{\text{Poisson},\alpha}$ changes the minimal surface $\varphi$ to $\alpha^{-1}\varphi$ and the ground energy $\GE$ to $\alpha^{-2}\GE$.

\subsection{Coalescence of minimal surfaces}\label{sec:coalescence questions}

In models of first- and last-passage percolation (with discrete heights) one studies the phenomenon of \emph{coalescence of geodesics}. This means that geodesics with nearby starting and ending points will typically overlap almost entirely, differing only in short segments near their endpoints. The existing literature on coalescence covers mainly the case of \emph{planar} first-passage percolation \cite{newman1995surface,licea1996geodesics,damron2014busemann, ahlberghoffman, dembin2022coalescence} and the exactly-solvable models of planar last-passage percolation \cite{pimentel2016duality,basu2019coalescence,zhang2020optimal,seppalainen2020coalescence,balazs2021local}. It is not clear whether geodesics will also coalesce in higher-dimensional ($n\ge 2$) first-passage percolation.

The minimal surface model~\eqref{eq:formal Hamiltonian} with $d=n=1$ is believed to be in the same universality class as planar first-passage percolation. However, since the minimal surface takes values in the real numbers, the question of coalescence is more subtle. In fact, it may be that for the disorder $\eta^\white$, geodesics will come arbitrarily close to one another but will not coalesce (see also \cite{bakhtin2016inviscid}). Some support for this possibility is given by the fact that coalescence does not occur for linear disorder, which approximates the behavior of $\eta^\white$ on short scales (see Section~\ref{sec:linear disorder}). Nevertheless, coalescence may be expected for the point-process based disorder $\eta^{\text{Poisson}}$ (see~\eqref{eq:eta Poisson}); moreover, for $d=n=1$ this may be provable via the methods of \cite{dembin2022coalescence}.

A natural question is whether minimal \emph{surfaces} ($d\ge 2$) and $n=1$ will feature coalescence, e.g., for the disorder $\eta^{\text{Poisson}}$, in the sense that two minimal surfaces on a large domain with boundary values differing by a constant (say) will typically equal each other on most vertices. As minimal surfaces with $n=1$ maintain their order (i.e., surfaces with higher boundary values are higher throughout) we expect coalescence to occur in the delocalized dimensions ($d\le 4$). The behavior for $n\ge 2$ is again less clear. 

\subsection{Pointwise delocalization}\label{sec:pointwise delocalization}
Theorem~\ref{thm:delocalization intro} proves that minimal surfaces in dimensions $d\le 4$ have a large norm on a \emph{uniformly positive fraction of the vertices} of the domain, with uniformly positive probability. It would be interesting to establish the stronger fact that, when $d\le 4$, the surfaces have a large norm at \emph{specific vertices} of the domain: For instance, to prove that
\begin{equation}\label{eq:bks}
 \lim _{L\to \infty }\E\big(\|\varphi^{\eta,\lambda,\Lambda_L}_v\|\big) = \infty     
\end{equation}
for a fixed vertex $v$ (say, at the midpoint $v=0$). 

In first-passage percolation, this issue was highlighted by Benjamini--Kalai--Schramm~\cite{BKS} and is often referred to as the BKS midpoint problem; it also received consideration earlier: for instance, it is closely related to the discussion in (9.22) of Kesten's lectures~\cite{Kesten:StFlour}. Once again, in first-passage percolation, the planar case is better understood \cite{damron2017bigeodesics,ahlberghoffman,dembin2022coalescence}, while in higher dimensions partial results are available~\cite{dembin2023ell}, as well as strong results under assumptions which are still unverified~\cite{alexander2020geodesics}. 

Coalescence of minimal surfaces (Section~\ref{sec:coalescence questions}) implies pointwise delocalization (e.g., as in~\cite{dembin2022coalescence}), but is not necessary for its proof (e.g., as in~\cite{alexander2020geodesics}).

\subsection{Infinite minimal surfaces} It is natural to ask whether an infinite-volume limit of the minimal surfaces may be taken, to obtain a surface on an infinite domain which minimizes the Hamiltonian~\ref{eq:formal Hamiltonian} in the sense that its energy may not be lowered by modifying it on any finite subdomain. For $d=1$, such infinite geodesics have been heavily investigated~\cite[Section 4.4]{50years},~\cite{ahlberghoffman, alexander2020geodesics}.

One possibility is to consider surfaces defined on the semi-infinite domain $\Lambda_\infty^{\ge 0} := \{0,1,\ldots\}^d$. The \emph{existence} of such semi-infinite minimal surfaces, for $\eta=\eta^\white$ say, in all dimensions $d\ge 1$ and codimensions $n\ge 1$ is a consequence of our Theorem~\ref{thm:localization}. 
One way to demonstrate this is as follows, akin to the \emph{metastate} approach of Aizenman-Wehr~\cite{aizenman1989rounding,aizenman1990rounding}: Set $\Lambda_L^{\ge 0} := \{0,1,\ldots,L\}^d$. For a given $L$, consider the joint distribution of $(\varphi^{\eta,\lambda,\Lambda_L^{\ge 0}}, \eta)$. Estimate~\eqref{eq:localization estimate} implies that it converges in distribution along a \emph{subsequence} of $L$. The limit is the distribution of a pair $(\varphi^{\tilde{\eta},\lambda,\Lambda_\infty^{\ge 0}}, \tilde{\eta})$ with $\tilde{\eta}\eqd \eta$ and $\varphi^{\tilde{\eta},\lambda,\Lambda_\infty^{\ge 0}}$ a minimal surface on $\Lambda_\infty^{\ge 0}$ in the disorder $\tilde{\eta}$ (but note that conditioned on $\tilde{\eta}$, there may still be additional randomness in $\varphi^{\tilde{\eta},\lambda,\Lambda_\infty^{\ge 0}}$). The minimal surface $\varphi^{\tilde{\eta},\lambda,\Lambda_\infty^{\ge 0}}$ continues to satisfy the same bound as in~\eqref{eq:localization estimate} (with $r_v:=\min_{w\in\Z^d\setminus\Lambda_\infty^{\ge 0}}\|v-w\|_\infty$). The construction gives a (random) minimal surface on $\Lambda_\infty^{\ge 0}$ with zero boundary conditions, but it may be applied with other boundary conditions using the harmonic extension property of Corollary~\ref{cor:effect of boundary conditions}.

Among the questions of interest:
\begin{enumerate}
    \item Does  $\lim_{L\to\infty}\varphi^{\eta,\lambda,\Lambda_L^{\ge 0}}_v$ exist for every $v\in\Lambda_\infty^{\ge 0}$? \item Is there a \emph{unique} minimal surface $\varphi$ on $\Lambda_\infty^{\ge 0}$ with zero boundary conditions satisfying that $\limsup_{v\in\Lambda_\infty^{\ge 0}}\|\varphi_v\|/\|v\| = 0$? If yes, will this continue to hold when $\|v\|$ is replaced by $\|v\|^\alpha$ for some $\alpha>1$?
\end{enumerate}
These questions are related to the coalescence of surfaces (Section~\ref{sec:coalescence questions}). Variants may be asked for other semi-infinite domains, such as the half-space $\Lambda^{\text{HS}}:=\{0,1,\ldots\}\times\Z^{d-1}$.
\begin{enumerate}
\setcounter{enumi}{2}
    \item   
    Are there minimal surfaces on the infinite domain $\Lambda_\infty:=\Z^d$?
\end{enumerate}
For $d=1$, this is analogous to the famous question of the existence of \emph{bigeodesics} in first-passage percolation, originating from Furstenberg; see~\cite[(9.22)]{Kesten:StFlour}. It is conjectured that no bigeodesics exist when $n=1$, while it is not clear whether bigeodesics exist for $n\ge 2$ (but see Alexander~\cite{alexander2020geodesics} for results under certain assumptions).

To obtain such minimal surfaces, one may try and take a (subsequential) limit as $L\to\infty$ of $\varphi^{\eta,\lambda,\Lambda_L}$, the minimal surface on $\Lambda_L$ with zero boundary conditions. This can be done when the distribution of $\varphi^{\eta,\lambda,\Lambda_L}_v$ is tight as $L\to\infty$, for every $v\in\Lambda_\infty$, which is the case in dimensions $d\ge 5$ by Theorem~\ref{thm:localization}. However, as pointwise delocalization (Section~\ref{sec:pointwise delocalization}) is expected, this approach should fail in dimensions $d\le 4$.

Notwithstanding, we conjecture that minimal surfaces on $\Lambda_\infty$ do exist in all dimensions $d\ge 2$ and codimensions $n\ge 1$. This is due to the relatively small transversal fluctuation exponent predicted in these dimensions (Table~\ref{table:transversal fluctuations n=1} for $n=1$ and the fact that the exponent is expected to decrease with $n$); see~\cite[Section 4.5.1]{50years} for a related heuristic of Newman for the $d=1$ case, and~\cite[Section 9]{bassan2023non} for related discussions for minimal surfaces with overhangs.

\subsection{Regularity of the ground energy and height distributions}\label{sec:regularity of the distributions}

It is natural to expect the distributions of the ground energy $\GE^{\eta,\lambda,\Lambda}$ and height $\varphi^{\eta,\lambda,\Lambda}_v$ (or the average/maximum height) to exhibit some regularity. There are several meanings for regularity: One important notion is that of fast tail decay, in the sense that the distribution exhibits (stretched) exponential tail decay around its expectation, on the scale of its standard deviation. For the distribution of $\varphi^{\eta,\lambda,\Lambda}_v$, we may further ask whether it has a density, whether the density is approximately decreasing in $\|\varphi^{\eta,\lambda,\Lambda}_v\|$ and whether it is approximately rotationally symmetric (rotational symmetry follows when the distribution of the disorder is rotationally invariant, but may also hold approximately in other cases). It seems natural to expect such regularity, for instance, for the disorders $\eta^{\white}$ and $\eta^{\text{Poisson}}$. Certain regularity properties may be transferred between the ground energy and height distributions via the scaling relations proved in Section~\ref{sec:scaling relation}.

Exponential tail decay on the scale of the standard deviation, as well as existence of a unimodular density, would follow from the stronger property of log-concavity. Log-concavity of the passage time (an analogue of the ground energy) was very recently established for certain integrable models of last-passage percolation, and deduced also for the Tracy-Widom distribution which 
arises in their scaling limit~\cite{baslingker2024log}. It is natural to wonder whether the ground energy and height distributions of harmonic minimal surfaces are also log concave (say, for the disorder $\eta^\white$ with a suitable bump function).

Recently, Basu, Sidoravicius and Sly proved stretched exponential tail decay, on the scale of the standard deviation, for the passage times of a class of rotationally-invariant planar first-passage percolation models~\cite{basu2023rotationally} (under assumptions to be verified in a sequel paper). It would be interesting to try and adapt their methods to harmonic minimal surfaces.

\subsection{More general geometries and elastic energy operators}
The methods developed in this paper to analyze finite-volume minimizers of the Hamiltonian~\eqref{eq:formal Hamiltonian} may be applied in a more general setting in which the elastic energy term is replaced by a general positive quadratic form.

\subsubsection{Setting}
Let $V$ be a finite or infinite set. Surfaces (with $n$ components) are modeled by functions in the vector space
\begin{equation}
    \Omega_0:=\{\varphi:V\to\R^n\colon \varphi_v = 0\text{ for all but finitely many $v\in V$}\},
\end{equation}
equipped with the scalar product $(\varphi,\psi):=\sum_{v\in V}\varphi_v\cdot\psi_v$. Let $A$ be a symmetric positive-definite linear operator acting on $\Omega_0$; i.e., $A$ is a linear operator from $\Omega_0$ to itself satisfying $(\varphi,A\psi)=(A\varphi,\psi)$ for $\varphi,\psi\in\Omega_0$ and $(\varphi, A\varphi)>0$ for all $\varphi\in\Omega_0$ except $\varphi\equiv 0$.

The formal Hamiltonian of the model on $\Omega_0$ is given by
\begin{equation}\label{eq:generalized formal Hamiltonian}
    H^{A,\eta,\lambda}(\varphi):=\frac{1}{2}(\varphi, A\varphi) + \lambda \sum_{v\in V} \eta_{v,\varphi_v},
\end{equation}
where $\eta:V\times\R^n\to(-\infty,\infty]$ is the environment and $\lambda>0$ is the environment strength.

Given a finite $\Lambda\subset V$ and a surface $\tau\in\Omega_0$, consider the \emph{minimal surface} of the model in $\Lambda$ with boundary value $\tau$, i.e., the surface $\varphi$ in
\begin{equation}
   \Omega_0^{\Lambda,\tau} := \{\varphi\in\Omega_0\colon \varphi_v=\tau_v\text{ for $v\in V\setminus\Lambda$}\}
\end{equation}
which minimizes $H^{A,\eta,\lambda}$ (a minimizer is defined using the fact that $H^{A,\eta,\lambda}(\varphi) - H^{A,\eta,\lambda}(\psi)$ is well defined when $\varphi,\psi\in\Omega_0^{\Lambda,\tau}$); we assume that a minimizer exists and choose one suitably if it is not unique. The goal is to study the geometry of these minimal surfaces and their associated (suitably defined) minimal energy. For the environment $\eta$ we may again take a smoothed white noise as in Section~\ref{sec:whitenoise}, or work under general assumptions as in Section~\ref{sec:disorder assumptions}.

\subsubsection{Examples}
Let us list some specific cases of interest for this model:
\begin{enumerate}
    \item Graph Laplacian and its powers: If $V$ is the vertex set of a connected graph $G=(V,E)$, we may take $A = -\Delta_G$, where $\Delta_G$ is the graph Laplacian defined by $(-\Delta_G\varphi)_v:=\sum_{u\colon \{u,v\}\in E(G)}(\varphi_v-\varphi_u)$. The formal Hamiltonian~\eqref{eq:formal Hamiltonian} studied in this paper is obtained as the special case where $G$ is the lattice $\Z^d$.

    More generally, one may consider powers of the graph Laplacian $A = (-\Delta_G)^\alpha$. For instance, the case $\alpha=2$ gives the bilaplacian operator $(-\Delta_G)^2$, which satisfies the equivalence
    \begin{equation}
    (\varphi, (-\Delta_G)^2 \varphi) = (-\Delta_G \varphi, -\Delta_G \varphi) = \sum_{v\in V} \|(-\Delta_G\varphi)_v\|^2.
    \end{equation}
    We note that the model with Hamiltonian $(\varphi, (-\Delta_G)^2 \varphi)$ (in the absence of disorder) is called the membrane model (see, e.g.,~\cite{cipriani2019scaling}).

    \item Adding a mass: starting from a symmetric positive-definite linear operator $A_0$ acting on $\Omega_0$, one may set $A := A_0 + m I$, where $m>0$ (termed the mass) and $I$ is the identity operator, leading to
    \begin{equation}
    (\varphi, (A_0+ m I) \varphi) = (\varphi, A_0 \varphi) + m \sum_{v\in V} \|\varphi_v\|^2.
    \end{equation}
    The study~\cite{ben2024landscape} considers the model~\eqref{eq:formal Hamiltonian} with the addition of a mass term.

    \item General coupling constants and hierarchical models: Generalizing the graph Laplacian operator, we may set $A = -\Delta_J$, with $(-\Delta_J\varphi)_v:=\sum_{u\in V}J_{uv}(\varphi_v-\varphi_u)$, for some $J:V\times V\to[0,\infty)$ (the coupling constants) satisfying $J_{uv} = J_{vu}$ for all $u,v\in V$. We note the discrete Green identity
    \begin{equation}
    (\varphi, -\Delta_J\varphi) = \sum_{\{u,v\}\colon u,v\in V} J_{uv}\|\varphi_u - \varphi_v\|^2.
    \end{equation}
    This operator is positive-definite when $\{\{u,v\}\colon J_{uv}>0\}$ is the set of edges of a \emph{connected} graph.

    This choice allows, for instance, to consider hierarchical versions of the lattice $\Z^d$ (see, e.g., \cite{baker1972ising,bleher1987critical} for background and use of hierarchical lattices in the context of the Ising model and the recent~\cite{hutchcroft2022critical} for the context of percolation).
\end{enumerate}

\subsubsection{Generalized main identity}
Importantly, the generalized Hamiltonian~\eqref{eq:generalized formal Hamiltonian} still satisfies a version of our main identity (Proposition~\ref{prop:main identity}). Indeed, the same proof shows that for each environment $\eta$ and environment strength $\lambda$ and each $\varphi,s\in\Omega_0$ we have
\begin{equation}\label{eq:new main identity}
    H^{A,\eta^s,\lambda}(\varphi+s) - H^{A,\eta,\lambda}(\varphi) = (\varphi,A s) + \frac{1}{2}(s,As),
\end{equation}
where $\eta^s$ is defined in~\eqref{eq:eta s def}, leading to the cancellation of the disorder terms on the left-hand side. We also note that, for the disorder choice $\eta=\eta^\white$ (Section~\ref{sec:whitenoise}), the natural analogue of our concentration assumption~\ref{as:conc} remains in effect (with the same proof).

\subsubsection{Scaling relation and threshold dimension}
Using the main identity, one may try to extend the techniques of our paper to various choices of the operator $A$. As an example, on the lattice $\Z^d$, consider $A = (-\Delta_{\Z^d})^\alpha$ for $\alpha>0$. What will be the analog of the scaling relation in this case?

We would like to estimate the minimal ``elastic energy'' required for a surface with zero boundary conditions outside $\Lambda_L=\{-L,-L+1,\ldots, L\}^d$ to reach height $h>0$. One way to quantify it is by
\begin{equation}
    \min\left\{\frac{1}{2}(s, As)\colon s\equiv 0\text{ outside $\Lambda_L$}, (s,s)=h^2 L^d\right\}.
\end{equation}
The solution is of order $h^2 L^{d-2\alpha}$, using the fact that the minimal eigenvalue of the operator $(-\Delta_{\Z})$ with Dirichlet boundary conditions outside $\{-L,\ldots, L\}$ is of order $L^{-2}$. Introducing the notation of the transversal fluctuation exponent $\xi$, we write $h=L^{\xi}$.

We compare the above required elastic energy with the typical fluctuations of the energy of the minimal surface on $\Lambda_L$, which we denote by $L^\chi$. The comparison yields the prediction for the scaling relation
\begin{equation}\label{eq:scaling relation for power of Laplacian}
    \chi = 2\xi + d - 2\alpha.
\end{equation}

For $\eta=\eta^\white$ and $\lambda=1$, say, the (natural analogue of our) concentration assumption~\ref{as:conc} implies that $\chi\le \frac{d}{2}$. As this estimate becomes tight when the minimal surface has transversal fluctuations of order $1$, we may define the threshold dimension
\begin{equation}
    d_c = 4\alpha
\end{equation}
as the dimension where the scaling relation~\eqref{eq:scaling relation for power of Laplacian} holds with $\chi = \frac{d}{2}$ and $\xi=0$. Similarly to the results of this work, one may predict that the minimal surface delocalizes with power-law transversal fluctuations (i.e., $\xi>0$) for $d<d_c$, delocalizes with sub-power-law transversal fluctuations for $d=d_c$ and localizes for $d>d_c$. The methods of this work may adapt to prove such statements, using the generalized main identity~\eqref{eq:new main identity}. 

Quantitatively, our arguments may adapt to give the following conclusions for $d<d_c$ (and for $\eta=\eta^\white$ and $\lambda=1$, say).
On the localization side, the bound $
\chi\le d/2$ may combine with the scaling relation~\eqref{eq:scaling relation for power of Laplacian} to yield
\begin{equation}
    \xi\le \alpha - \frac{d}{4}.
\end{equation}
On the delocalization side, the methods used to prove Theorem~\ref{thm:lower bounds on energy fluctuations by localization} should imply that the energy fluctuations of the minimal surface are at least of order $\sqrt{L^d/h^n}$ when the transversal fluctuations are bounded by some $h\ge1$, as in~\eqref{eq:lower bound on ground energy fluctuations via average disorder} (the perturbation $f_1$ in \eqref{eq:B} is used for this purpose). This may combine with the scaling relation~\eqref{eq:scaling relation for power of Laplacian} to yield the ``Flory-type estimate''
\begin{equation}
    \xi \ge \frac{4\alpha - d}{4+n}.
\end{equation}
Additionally, if the contribution of the disorder terms to the energy of the minimal surface (i.e., $\lambda \sum_{v\in V} \eta_{v,\varphi_v}$) is linear in the volume of the domain (as shown in Corollary~\ref{lem:nabla} for our setup), we expect our methods to show that the energy fluctuations of the minimal surface are at least of order $1$ (the perturbation $f_2$ in \eqref{eq:B} is used for this purpose). This may combine with the scaling relation~\eqref{eq:scaling relation for power of Laplacian} to yield
\begin{equation}
    \xi\ge \alpha-\frac{d}{2}\quad\text{when the disorder energy grows linearly in the volume of the domain}.
\end{equation}

A related discussion for harmonic minimal surfaces with $A = (-\Delta_{\Z^d})^2$ and \emph{linear disorder} appears in~\cite[Section 7.7]{dario2023random}, where it is demonstrated that the critical dimension is indeed $d_c=8$ and exact quantitative fluctuation bounds are derived. Another related discussion for the random-field XY model with interactions having ``higher-order continuous symmetries'' is in~\cite[Section 9]{dario2024quantitative}.

\subsection{Integer-valued harmonic minimal surfaces and roughening transition}

To more closely mimic the interfaces of disordered spin systems, one may consider \emph{integer-valued} harmonic minimal surfaces, i.e., the surfaces which minimize~\eqref{eq:formal Hamiltonian} among $\varphi:\Z^d\to\Z^n$ (with specified boundary values). We expect these to exhibit the same features described for the domain walls of the disordered Ising ferromagnet in Section~\ref{sec:disordered Ising ferromagnet}. Briefly: Delocalization at all (fixed) disorder strengths $\lambda$ in dimensions $d=1,2$, a \emph{roughening transition} from a localized to a delocalized regime  as the disorder strength increases for $d=3$, and localization at all (fixed) disorder strengths in dimensions $d\ge 4$. These assertions remain open, though it may be possible to adapt the methods 
of \cite{licea1996superdiffusivity,bovier1996there,dembin2023ell} to prove delocalization (with non-optimal quantitative dependence) in dimensions $d=1,2$ and the methods of \cite{bovier1994rigorous,dario2023random, bassan2023non} to prove localization at weak disorder in dimensions $d\ge 3$. Strikingly, while one may expect integer-valued harmonic minimal surfaces to fluctuate less than real-valued harmonic minimal surfaces, it is not even known, in any dimension $d\ge1$, that the height fluctuations for integer-valued harmonic minimal surfaces on $\Lambda_L$ are $o(L)$ as $L\to\infty$.

Proposition~\ref{prop:main identity} (the main identity) seems of limited use in the integer-valued case as it seems one is restricted to use it with integer-valued shift functions $s$. On a positive note, this still suffices for Corollary~\ref{cor:effect of boundary conditions} (effect of boundary conditions) for domains $\Lambda$ and boundary conditions $\tau$ for which the harmonic extension $\overline{\tau}^\Lambda$ is integer valued. In particular, we obtain partial information on the ``limit shape'' for $d=n=1$: As in~\eqref{eq:d=1 limit shape}, we conclude that the time constant function~\eqref{eq:time constant function} (adapted to the integer-valued setup, and when it is well defined) satisfies $\mu(x) - \mu(x-1) = \frac{1}{2}(x^2 - (x-1)^2)$ for $x\in\R$ under the integer-valued analogues of~\ref{as:exiuni}+\ref{as:stat}.

\subsection*{Acknowledgements}
We are grateful to Michael Aizenman, Antonio Auffinger, Roland Bauerschmidt, Gerard Ben Arous, Itai Benjamini, Paul Bourgade, Paul Dario, Duncan Dauvergne, Pierre Le Doussal, Ronald Fisch, Yan Fyodorov, Thierry Giamarchi, David Huse, Shirshendu Ganguly, Benjamin McKenna, Asaf Nachmias, Felix Otto, Jeremy Quastel, Timo Sepp\"al\"ainen, Thomas Spencer, Allan Sly, B\'alint Vir\'ag, Christian Wagner, Lingfu Zhang and Nikos Zygouras for stimulating discussions. We especially thank David Huse for very helpful discussions around the predictions of Section~\ref{sec:dependence on the disorder strength}. We also give special thanks to Itai Benjamini for raising questions on minimal surfaces in random environment, for providing encouragement and for involvement in the early stages of this project.

The research of B.D. is partially funded by the SNF Grant 175505 and the ERC Starting Grant CriSP (grant agreement No 851565) and is part of NCCR SwissMAP. The research of R.P. is partially supported by the Israel Science Foundation grants
1971/19 and 2340/23, and by the European Research Council Consolidator grant 101002733 (Transitions).

Part of this work was completed while R.P. was a Cynthia and Robert Hillas Founders' Circle Member of the Institute for Advanced Study and a visiting fellow at the Mathematics Department of Princeton University. R.P. is grateful for their support.
\appendix

\section{Green's function estimates}\label{appendix:greenfunction}

In this section we study the Green's function $G_\Lambda ^v$ and establish bounds that are used in Section~\ref{sec:loc}. Recall that for a subset $\Lambda \subseteq \mathbb Z ^d$ and $v\in \Lambda $ the Green's function $G_\Lambda ^v:\mathbb Z ^d \to \mathbb R$ is given by 
\[ G_\Lambda^v(x):=\frac{1}{2d} \cdot \E_x\left[ \big| \big\{ t\in [0, \tau _{\Lambda }] : X_t=v\big\} \big| \right],\]
where $X_t$ is a simple, discrete time random walk starting from $x$ and where $\tau _\Lambda $ is the first exit time of $X$ from $\Lambda$.

Throughout this section we use the following result on the full space Green's function. This result is taken from \cite[Theorems~4.4.8, 4.4.4, 4.3.1]{lawler2010random}. Recall that the potential kernel $a=a_d$ is given by 
    \begin{equation}
        a(x)=\lim _{N\to \infty }\sum _{n=0}^{N }\big( \mathbb P ^0(X_n=0)-\mathbb P ^0(X_n=x) \big)
    \end{equation}
where $X_n$ is a simple random walk on $\Z^ d$ starting at $0$. In dimensions $d\ge 3$, we have that $a(x)=G_{\mathbb Z ^d}^0(0)-G_{\mathbb Z ^d}^0(x)$.
\begin{thm}\label{thm:Lawler}
    The potential kernel has the following asymptotic behavior as $\|x\|\to \infty $:
    \begin{enumerate}
        \item 
        If $d=1$ then 
        \begin{equation*}
            a(x)=|x|.
        \end{equation*}
        \item 
        If $d=2$ then there are $C_2,C_2'>0$ such that
        \begin{equation*}
            a(x)=C_2\log (\|x\|+1 )+C_2'+O(\|x\|^{-2}).
        \end{equation*}
        \item 
        If $d\ge 3$ then there are $C_d,C_d'>0$ such that
        \begin{equation*}
            a(x)=C_d\|x\|^{2-d}+C_d'+O(\|x\|^{-d}).
        \end{equation*}
    \end{enumerate}
\end{thm}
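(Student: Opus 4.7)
The plan is to handle the three regimes by different methods; the result is classical (it appears in Lawler's book as cited) but the sketch below lays out a standard self-contained route.

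First I would dispose of $d=1$ by a direct verification. The function $h(x) := |x|$ is discrete harmonic on $\Z\setminus\{0\}$, since $\tfrac{1}{2}(|x+1|+|x-1|) = |x|$ for every $x \neq 0$, while at the origin $\tfrac{1}{2}(|1|+|-1|) - |0| = 1$. By the defining relation of the potential kernel, $a$ itself satisfies $-\Delta a = \delta_0$ with $a(0) = 0$, so $a - h$ is discrete harmonic on all of $\Z$. A growth comparison using the bound $a(x) = O(|x|)$, which follows from the Fourier representation used below (or from elementary random walk estimates using optional stopping of $X_n$ at the exit time of $[-N,N]$), rules out any nonconstant linear correction, and matching at $0$ forces $a(x) = |x|$.

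For $d \ge 2$, I would move to the Fourier side. Writing $\phi(\theta) = \tfrac{1}{d}\sum_{j=1}^d \cos\theta_j$ for the characteristic function of a single step, Abel summation of the defining series yields
\begin{equation}
a(x) = \int_{[-\pi,\pi]^d} \frac{1 - \cos(x\cdot\theta)}{1 - \phi(\theta)} \, \frac{d\theta}{(2\pi)^d}.
\end{equation}
Near $\theta=0$ one has $1 - \phi(\theta) = \|\theta\|^2/(2d) - Q(\theta) + O(\|\theta\|^6)$ with $Q$ an explicit homogeneous quartic form, so the $\|\theta\|^{-2}$ singularity is tamed by $1-\cos(x\cdot\theta)$; the only other zero of $1-\phi$ inside $[-\pi,\pi]^d$ sits at the corners $(\pm\pi,\ldots,\pm\pi)$, where the integrand is smooth. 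Decomposing $(1-\phi(\theta))^{-1} = 2d\|\theta\|^{-2} + R(\theta)$ with $R$ bounded near $0$ and $C^\infty$ elsewhere, the leading contribution becomes a Riesz-type integral which, after extending to $\R^d$ with a smooth cutoff and rescaling $\theta \mapsto \theta/\|x\|$, produces exactly the $\|x\|^{2-d}$ term (for $d \ge 3$) or $C_2\log\|x\|$ (for $d=2$), with constants given by the standard Riesz-potential normalization, and the constant $C_d'$ absorbing the contribution of the regular part of the integrand together with the values of the cutoff.

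The main obstacle is turning this heuristic into the sharp error $O(\|x\|^{-2})$ when $d=2$ and $O(\|x\|^{-d})$ when $d\ge 3$. My plan is to expand $(1-\phi(\theta))^{-1}$ one order past the principal part, exploit the evenness of $1-\cos(x\cdot\theta)$ to kill the odd-order Taylor terms (which would otherwise give unwanted $\|x\|^{1-d}$ type corrections), and then control each remaining piece by repeated integration by parts in $\theta$ away from $0$ — the symbol there being smooth and periodic on the torus, so that boundary contributions cancel, and every integration by parts buys a factor of $\|x\|^{-1}$. An alternative, perhaps cleaner, route in $d\ge 3$ uses the transience identity $a(x) = G(0) - G(x)$ together with the local CLT expansion $\P^0(X_n=x) = 2(d/(2\pi n))^{d/2} e^{-d\|x\|^2/(2n)}(1 + O(n^{-1}))$, valid for $n$ of the correct parity and with Gaussian tails controlling the remaining $n$; summing against $n$ and applying Euler–Maclaurin corrections produces $G(x) \sim C\|x\|^{2-d}$ with remainder of the advertised order.
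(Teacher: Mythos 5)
The paper does not prove this theorem at all: it is stated purely as an input, cited as \cite[Theorems~4.4.8, 4.4.4, 4.3.1]{lawler2010random} and used as a black box in the Green's function estimates of Appendix A. So there is no ``paper's proof'' to compare against; you are proposing to re-derive a standard result that the authors deliberately import.

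That said, your sketch is a reasonable outline of the standard route, which is indeed close to what Lawler--Limic do. Two remarks. For $d=1$: your comparison argument is right, but you have a sign slip --- with the normalized Laplacian $\Delta f(x) = \tfrac{1}{2d}\sum_{y\sim x}(f(y)-f(x))$, the potential kernel satisfies $\Delta a = \delta_0$, not $-\Delta a = \delta_0$ (check: the telescoping of $\sum_n(\P^0(X_n=x)-\P^0(X_{n+1}=x))$ gives $+\delta_0(x)$). Your subsequent computation for $|x|$ and the harmonicity of $a - |x|$ are then consistent; only the displayed sign is off. For $d\ge2$: the outline is correct, but the sharp error terms $O(\|x\|^{-2})$ and $O(\|x\|^{-d})$ --- which are exactly what the downstream Green's function estimates in Appendix A use via Taylor expansion (e.g.\ in Claim~A.3 and Lemma~A.4) --- constitute the actual content of the cited theorems and are glossed over. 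In particular, for $d\ge3$ the ``alternative'' LCLT-plus-Euler--Maclaurin route does work, but one must split the sum in $n$ at scale $\|x\|^2$, use Gaussian tail bounds on the small-$n$ part, and carefully track the constant coming from $G(0)$ against the Euler--Maclaurin corrections to recover $a(x) = G(0)-G(x) = C_d\|x\|^{2-d}+C_d' + O(\|x\|^{-d})$ with no stray polynomial corrections. For $d=2$ the recurrence of the walk forces you to stay on the Fourier side and the integration-by-parts bookkeeping is where all the work sits. None of this is wrong, but it is the kind of technical labor that justifies the paper's decision to cite Lawler rather than reprove the result, and your sketch would need several pages of careful expansion to actually close.
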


We will also need the following standard Gambler’s ruin estimates (see for example \cite[Propositions~5.1.1,  5.1.5]{lawler2010random}).

\begin{claim}\label{claim:GR}
    Let $X_t$ be a one dimensional discrete or continuous simple random walk on $\mathbb Z$ starting from $0$. For any $n\in \mathbb Z$ let $\tau _n$ be the first hitting time of $n$. 
    \begin{enumerate}
        \item For any $m,n>0$ we have that $\mathbb P (\tau _n\le \tau _{-m}) =m/(n+m)$.
        \item 
        For any $n>0$ and time $t>0$ we have that $\mathbb P (\tau _n \ge t ) \le Cn/\sqrt{t}$.
    \end{enumerate}
\end{claim}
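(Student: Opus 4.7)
The plan is to prove the two parts separately using classical martingale and reflection-principle arguments for simple random walk. Both are standard, which is why the paper simply defers to Lawler's textbook. Since the continuous- and discrete-time versions follow by essentially identical arguments, I describe the discrete case only.

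For part~(1), I would apply the optional stopping theorem to the martingale $(X_t)$ at the stopping time $\tau := \min(\tau_n, \tau_{-m})$. A standard geometric tail bound (the walk has a uniformly positive chance to exit $[-m,n]$ within $(n+m)^2$ steps) gives $\E[\tau] < \infty$. The stopped process $(X_{t\wedge\tau})$ is bounded in $[-m,n]$, so dominated convergence yields $\E[X_\tau] = X_0 = 0$. Since $X_\tau \in \{n,-m\}$, writing $p := \P(\tau_n \le \tau_{-m})$ gives $np - m(1-p) = 0$ and hence $p = m/(n+m)$.

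For part~(2), I would use that $\{\tau_n \ge t\} \subseteq \{\max_{s \le t-1} X_s < n\}$, so it suffices to bound $\P(\max_{s \le t'} X_s < n)$ for $t' = t-1$. The reflection principle applied to the walk stopped at $\tau_n$ yields
\[
\P\bigl(\max_{s \le t'} X_s \ge n\bigr) = 2\,\P(X_{t'} > n) + \P(X_{t'} = n),
\]
and combined with the symmetry of simple random walk this rearranges to the clean identity $\P(\max_{s \le t'} X_s < n) = \P(-n \le X_{t'} \le n-1)$. The local central limit theorem for simple random walk provides a uniform pointwise bound $\P(X_{t'} = k) \le C/\sqrt{t'}$ for all $k \in \Z$ and $t' \ge 1$; summing over the $2n$ integers in $[-n,n-1]$ yields the desired $\P(\tau_n \ge t) \le C'n/\sqrt{t}$. (The continuous-time analogue is even more immediate from the Gaussian density bound $\P(|B_t| \le n) \le \sqrt{2/\pi}\,n/\sqrt{t}$.)

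The only real ``obstacle'' is bookkeeping --- combining the reflection identity with the correct local CLT statement and absorbing lattice boundary terms into the multiplicative constant. There is no genuine difficulty here, which is why the paper simply appeals to \cite[Propositions~5.1.1 and~5.1.5]{lawler2010random}. A self-contained alternative that bypasses the local CLT is to apply Kolmogorov's maximal inequality to the exponential martingale $\exp(\lambda X_t - \psi(\lambda) t)$ with $\lambda$ optimized as a function of $n/\sqrt{t}$, but this essentially reproduces the same bound up to constants.
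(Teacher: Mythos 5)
The paper provides no proof of its own for this claim---it explicitly defers to \cite[Propositions 5.1.1 and 5.1.5]{lawler2010random}---so there is no ``paper's proof'' to compare against in any substantive sense. Your argument is a correct and clean reproduction of the standard textbook derivation that those references contain: optional stopping of the bounded martingale $X_{t\wedge\tau}$ for the ruin probability, and the reflection-principle identity $\P(\max_{s\le t'}X_s < n)=\P(-n\le X_{t'}\le n-1)$ combined with the uniform local CLT bound $\P(X_{t'}=k)\le C/\sqrt{t'}$ for the exit-time tail. The bookkeeping you flag (passing from $t'=t-1$ back to $t$, and handling small $t$) is genuinely all that remains and is absorbed into the constant, exactly as you say. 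This is the same route the cited propositions take, so nothing is lost or gained relative to the paper's choice to cite rather than prove.
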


We turn to prove Lemma~\ref{lem:88} and Lemma~\ref{lem:89}.

\begin{proof}[Proof of Lemma~\ref{lem:88} and Lemma~\ref{lem:89}]
    Let $P$ be the hyperplane determined by the face of $\Lambda _L$ that is closest to $v$ (with arbitrary choice if there is more than one such face) and let $H$ be the half space corresponding to $P$ that contains $\Lambda _L$. Note that $d(v,H^c)=d(v,\Lambda ^c)=r_v$. Moreover, we have that  $G_L ^v(v)\le G_H^v(v)$ and therefore it suffices to bound $G_H^v(v)$. Let $v'$ be the reflection of $v$ with respect to $P$. That is, $v'$ is the unique vertex such that $\|v-v'\|=2r_v$ and $d(v',H)=r_v+1$. We claim that $G_H^v(v)=a(v'-v)$. Indeed, by symmetry, for any $y\in P=H^+\setminus H$ we have that $\mathbb P ^y(X_n=v)=\mathbb P ^y(X_n=v')$ and therefore using translation invariance in the first equality and the strong Markov property in the last equality
\begin{equation*}
\begin{split}
    a(v'-v)&=\lim _{N\to \infty }\sum _{n=1}^{N} \mathbb P ^v (X_n=v)-\mathbb P ^v (X_n=v') \\
    &=G_H^v(v) +\lim _{N\to \infty }\sum _{n=1}^{N} \mathbb P ^v (X_n=v, \tau <n)-\mathbb P ^v (X_n=v', \tau <n) =G_H^v(v),
\end{split}
\end{equation*}
where $\tau $ is the first exit time from $H$ (hence $X_\tau\in P$).

Using the first part of Theorem~\ref{thm:Lawler} we obtain that $G_H^v(v)= a(v'-v)=|v'-v|=2r_v$ in dimension $d=1$. Similarly, by the second part of Theorem~\ref{thm:Lawler} we have $|G_H^v(v)|\le C\log (1+r_v)$ in dimension $d=2$. This finishes the proof of Lemma~\ref{lem:89} and the first part of Lemma~\ref{lem:88}.

We turn to prove the second part of Lemma~\ref{lem:88} ($d=1$). Let $X_n$ be a random walk starting from $u$ and let $\tau '$ be the first time $X_n$ either hits $v$ or exits $\Lambda _L$. By the strong Markov property we have that $G_L ^v(u)=\mathbb P^u (X_{\tau '}=v)G_L^v (v)$ and therefore $G_L^v(v)-G_L^v(u)\ge 0$. Moreover, using the first part of Lemma~\ref{lem:88} we obtain 
    \begin{equation*}
        G_L ^v(v)-G_L ^v(u) =\mathbb P ^u(X_{\tau '}\neq v) G_L^v(v) \le Cr_v \mathbb P^u (X_{\tau '}\neq v) \le Cr_v \frac{|u-v|}{r_u+|u-v|} \le C|u-v|,
    \end{equation*}
    where in the second to last inequality we used Claim~\ref{claim:GR} and in the last inequality that $r_v\le r_u+|u-v|$.
\end{proof}

We turn to prove Lemma~\ref{lem:Green}. Throughout the rest of the appendix, $\Lambda $ is an arbitrary box of the form $\Lambda :=[a_1,b_1]\times \cdot \times [a_d,b_d] \subseteq \mathbb Z^d$. We start with the following claim.

\begin{claim}\label{claim:1}
    For any $v,x\in \Lambda $ with $r_v\le 16\|x-v\|$ we have that $G_\Lambda ^v(x)\le Cr_v \|x-v\|^{1-d}$.
\end{claim}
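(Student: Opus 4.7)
The plan is to reduce to the half-space case by domain monotonicity and then apply the reflection principle together with the asymptotics from Theorem~\ref{thm:Lawler}. As in the proofs of Lemma~\ref{lem:88} and Lemma~\ref{lem:89}, I would let $P$ be the lattice hyperplane determined by the face of $\Lambda$ nearest to $v$, $H$ the half-space bounded by $P$ containing $\Lambda$, and $v'$ the reflection of $v$ across $P$, so that $\|v-v'\|=2r_v$. Since $\Lambda\subseteq H$, domain monotonicity of the Green's function gives $G_\Lambda^v(x)\le G_H^v(x)$, so it suffices to bound $G_H^v$. The case $d=1$ is then immediate: by Lemma~\ref{lem:88}, $G_H^v(x)\le G_H^v(v)\le 2r_v=2r_v\|x-v\|^{1-d}$.

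For $d\ge 2$, I would use the reflection identity
\[
G_H^v(x)=c_d\bigl[\,\mathfrak g(x-v)-\mathfrak g(x-v')\,\bigr],\qquad x\in H,
\]
where $\mathfrak g=G_{\Z^d}^0$ (full-space Green's function) for $d\ge 3$ and $\mathfrak g=-a$ (minus the potential kernel) for $d=2$, with $c_d>0$. This identity is derived by the same random-walk argument used in the proof of Lemma~\ref{lem:89}: write both sides as limits of $\sum_{n\le N}[P^x(X_n=v)-P^x(X_n=v')]$, split the walk at $\tau_H$, and use that after exit from $H$ the walk is symmetric about $P$ so the two probabilities cancel. By Theorem~\ref{thm:Lawler}, $\mathfrak g$ admits the derivative estimates $|\nabla\mathfrak g(y)|=O(\|y\|^{1-d})$ and $|\nabla^2\mathfrak g(y)|=O(\|y\|^{-d})$ for $\|y\|$ bounded away from $0$.

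The bound is then obtained by splitting into regimes. When $\|x-v\|\ge 4r_v$, the segment from $x-v$ to $x-v'$ lies at distance at least $\|x-v\|/2$ from the origin, so a second-order Taylor expansion of $\mathfrak g$ gives
\[
|\mathfrak g(x-v)-\mathfrak g(x-v')|\le C\,\|v-v'\|\,\|x-v\|^{1-d}+C\,\|v-v'\|^{2}\,\|x-v\|^{-d},
\]
and the hypothesis $\|v-v'\|=2r_v\le 32\|x-v\|$ allows absorbing the quadratic term into $Cr_v\|x-v\|^{1-d}$. When $r_v/16\le\|x-v\|<4r_v$, the ratio $\|x-v'\|/\|x-v\|$ is bounded by an absolute constant, so the asymptotics in Theorem~\ref{thm:Lawler} give $|\mathfrak g(x-v)-\mathfrak g(x-v')|=O(1)$ for $d=2$ and $O(\|x-v\|^{2-d})$ for $d\ge 3$; in both cases this is at most $Cr_v\|x-v\|^{1-d}$ since $r_v$ and $\|x-v\|$ are comparable. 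The small-scale regime $\|x-v\|=O(1)$, where the asymptotics in Theorem~\ref{thm:Lawler} may carry an $O(1)$ error, is handled using the hypothesis $r_v\le 16\|x-v\|=O(1)$ together with Lemma~\ref{lem:88} and Lemma~\ref{lem:89}.

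The main subtlety is the balance of first- and second-order Taylor contributions in the first regime: the quadratic term $r_v^2/\|x-v\|^{d}$ would exceed the target bound $r_v/\|x-v\|^{d-1}$ if $r_v/\|x-v\|$ were allowed to grow, and the hypothesis $r_v\le 16\|x-v\|$ is precisely what is needed to tame it. No ingredients beyond those used to prove Lemma~\ref{lem:88} and Lemma~\ref{lem:89} should be needed.
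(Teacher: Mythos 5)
Your proof is correct and follows essentially the same route as the paper's: domain monotonicity into the half-space $H$, the reflection identity $G_H^v(x)=a(v'-x)-a(v-x)$ derived by splitting the walk at $\tau_H$, and the asymptotics of the potential kernel from Theorem~\ref{thm:Lawler} combined with a Taylor expansion (the paper states this step tersely, while you usefully spell out the regime decomposition). A small remark: in the far regime $\|x-v\|\ge 4r_v$ a first-order mean-value estimate with $|\nabla \mathfrak g(y)|=O(\|y\|^{1-d})$ already yields the bound without the quadratic term, so the real point where the hypothesis $r_v\le 16\|x-v\|$ earns its keep is the intermediate regime $r_v/16\le \|x-v\|<4r_v$ that you also handle correctly.
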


\begin{proof}
    As in the proof of Lemma~\ref{lem:88} and Lemma~\ref{lem:89} we let $H$ be a half space that contains $\Lambda $ and determined by the face of $\Lambda $ that is closest to $v$. We have that $G_\Lambda ^v(x)\le G_H^v(x)$ and
\begin{equation*}
\begin{split}
    a(v'-x)&-a(v-x)=\lim _{N\to \infty }\sum _{n=1}^{N} \mathbb P ^x (X_n=v)-\mathbb P ^x (X_n=v') \\
    &=G_H^v(x) +\lim _{N\to \infty }\sum _{n=1}^{N} \mathbb P ^x (X_n=v, \tau <n)-\mathbb P ^x (X_n=v', \tau <n) =G_H^v(x),
\end{split}
\end{equation*}
where $\tau $ is the first exit time from $H$ and $v'$ is the reflection of $v$ through the hyperplane corresponding to $H$. We used in the last inequality, the strong Markov property and the symmetry.

Finally, it is straightforward to check that by Theorem~\ref{thm:Lawler} and a Taylor expansion, in any dimension $d\ge 1$ we have 
\begin{equation*}
 G_H^v(x) =a(v'-x)-a(v-x)\le C\|v-v'\|\cdot \|v-x\|^{1-d} \le Cr_v\|v-x\|^{1-d}. 
\end{equation*}
This finishes the proof of the claim.
\end{proof}

Before we proceed with the proof of Lemma~\ref{lem:Green}, let us note that in the definition of the Green's function $G_\Lambda ^v$, one can use a continuous time random walk instead of a discrete time walk. More precisely, we have that 
\[G_\Lambda^v(x)=\frac{1}{2d} \cdot \E_x\left[ \big| \big\{ t\in [0, \tau _{\Lambda }] : X_t=v\big\} \big| \right],\]
where in here $|\cdot |$ stands for the one dimensional Lebesgue measure and where $X_t$ is a continuous time random walk with unit jump rate. As before, $\tau _\Lambda $ is the first exit time from $\Lambda $. Indeed, this equality follows as the expected waiting time of $X_t$ in a vertex is $1$. 

\begin{cor}\label{cor:1}
   For any $v,x\in \Lambda $ with $r_v\le 8\|x-v\|$ we have that $G_\Lambda ^v(x)\le Cr_xr_v\|x-v\|^{-d}$.
\end{cor}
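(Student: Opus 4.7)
The plan is to deduce Corollary~\ref{cor:1} from Claim~\ref{claim:1} by combining it with a reflection argument now anchored at the face of $\Lambda$ closest to $x$ (rather than to $v$).

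A short case split first removes the easy regime. If $r_x\ge \|x-v\|/16$, then Claim~\ref{claim:1} already gives the desired bound, since
\[
G_\Lambda^v(x)\le C\,r_v\,\|x-v\|^{1-d}\le 16C\,r_v\,r_x\,\|x-v\|^{-d}.
\]
Hence from now on I may assume $r_x\le\|x-v\|/16$ and must extract an additional factor $r_x/\|x-v\|$ beyond the estimate of Claim~\ref{claim:1}.

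In that regime, I would repeat the reflection argument of Claim~\ref{claim:1}, but now take the half-space $H\supseteq\Lambda$ whose boundary is the face of $\Lambda$ closest to $x$, and use the symmetry $G_\Lambda^v(x)=G_\Lambda^x(v)$ so that $x$ plays the role of the source. The same manipulation of the potential kernel that appears in the proof of Claim~\ref{claim:1} yields
\[
G_\Lambda^v(x)\le G_H^v(x)=a(x'-v)-a(x-v),
\]
where $x'$ denotes the reflection of $x$ across $\partial H$. With $e_i$ the inward normal to $\partial H$ (so that $\partial H=\{y_i=a_i\}$ and $x_i-a_i=r_x$), a direct computation shows
\[
\|x'-v\|^2-\|x-v\|^2=4\,r_x\,(v_i-a_i).
\]
Using Theorem~\ref{thm:Lawler} and a Taylor expansion of the potential kernel, exactly as in the proof of Claim~\ref{claim:1}, this leads to
\[
G_H^v(x)\le C\,r_x\,(v_i-a_i)\,\|x-v\|^{-d}.
\]
Since $v\in\Lambda$ implies $v_i-a_i\ge r_v$, this already matches Corollary~\ref{cor:1} in the favorable case where the face of $\Lambda$ closest to $x$ is also the face of $\Lambda$ closest to $v$, i.e., $v_i-a_i=r_v$.

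The main obstacle, and the heart of the proof, is the case where $v$ is nearest to a different face of $\Lambda$, so that $v_i-a_i$ may strictly exceed $r_v$ (and, in the absence of further work, be as large as a multiple of $\|x-v\|$, in which case the previous estimate only recovers Claim~\ref{claim:1}). To handle this I would iterate the reflection: replace the half-space $H$ by $S=H\cap H'$, where $H'\supseteq\Lambda$ is the half-space bounded by the face of $\Lambda$ closest to $v$, and apply the method of images to $G_S^v(x)$ — with four image sources when the two distinguished faces of $\Lambda$ are perpendicular, and with an infinite alternating series of images when they are parallel-opposite. Performing a second-order Taylor expansion of the resulting combination of potential kernels and using $|\partial_{ij}a(y)|\le C\|y\|^{-d}$ from Theorem~\ref{thm:Lawler}, the leading contribution is of order $r_v r_x\,\|x-v\|^{-d}$, completing the proof after a short case analysis separating the three geometric configurations (same face, parallel-opposite faces, perpendicular faces). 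I expect the perpendicular case to be the most delicate, since there only the mixed second derivative of $a$ contributes at leading order and the image sum does not collapse to a single difference.
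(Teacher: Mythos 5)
Your route is genuinely different from the paper's, and it is not complete as sketched.

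The paper proves Corollary~\ref{cor:1} probabilistically: stop the walk at the first time $\tau'$ it either exits $\Lambda$ or moves $\ell_\infty$-distance $\lfloor \|x-v\|/2\rfloor$ from $x$; by the strong Markov property $G_\Lambda^v(x)=\E_x[G_\Lambda^v(X_{\tau'})]$, the factor $r_v\|x-v\|^{1-d}$ comes from Claim~\ref{claim:1} evaluated at $X_{\tau'}$, and the extra factor $r_x/\|x-v\|$ is extracted as $\P_x(X_{\tau'}\in\Lambda)$ via a gambler's-ruin estimate that works uniformly over the geometric configuration of faces. You propose instead a potential-theoretic method-of-images argument. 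The easy case $r_x\ge\|x-v\|/16$ and the ``same face'' case (single reflection, with the identity $\|x'-v\|^2-\|x-v\|^2=4r_x(v_i-a_i)$) are correct. The difficulty lies entirely in the cases where the face closest to $x$ and the face closest to $v$ differ, and there your sketch has two concrete gaps.

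First, in the perpendicular case the four-image formula leads to a double difference $-a(z)+a(z-pe_1)+a(z-qe_2)-a(z-pe_1-qe_2)$ in which one of the two steps (the one directed at the face closest to the \emph{other} point) has macroscopic size: $p=2(v_i-a_i)$ is only controlled by $\Theta(\|x-v\|)$, not by $r_v$ or $r_x$. Your proposed ``second-order Taylor expansion using $|\partial_{ij}a(y)|\le C\|y\|^{-d}$'' therefore does not apply — the bound $|a(z)-a(z-pe_1)-a(z-qe_2)+a(z-pe_1-qe_2)|\le Cpq\|z\|^{-d}$ requires \emph{both} $p$ and $q$ small relative to $\|z\|$, and applied naively here it only reproduces the Claim~\ref{claim:1} estimate, not the improved one. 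The improvement is genuinely there, but it rests on an extra cancellation you do not mention: with $z=v-x$ one has $\|z-pe_1\|^2-\|z\|^2=p(p-2z_1)=2pr_x$ (because $p-2z_1=2(v_1-a_1)-2(v_1-x_1)=2r_x$), which is of order $r_x\|z\|$ rather than the crude $p\|z\|\sim\|z\|^2$. It is this smallness — not the smallness of $p$ — that produces the factor $r_x$, and one must then also control $\|z-te_2\|$ from below along the $e_2$-segment of length $q=2r_v$, which fails when $z$ is nearly parallel to $e_2$ and requires a further case split. Second, the parallel-opposite case is not sketched at all: there the image set is an infinite alternating sequence, and since the slab width $b_i-a_i$ can be as small as $O(r_x+r_v)\ll\|x-v\|$, many images lie within the relevant length scale and one has to exhibit exponential cancellation in the series, a substantially more involved computation than a four-term second difference.

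In short, the high-level plan could be pushed through, but as written the key estimates in the two hard cases are not established, and the step described as ``a second-order Taylor expansion'' does not literally do what is claimed. The paper's stopping-time argument sidesteps all of this: Claim~\ref{claim:1} already supplies the factor $r_v$, and the factor $r_x/\|x-v\|$ is a single hitting estimate that is indifferent to whether the two distinguished faces coincide, are perpendicular, or are opposite.
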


\begin{proof}
    If $r_x\ge \|x-v\|/2$ then the bound follows immediately from Claim~\ref{claim:1}. Thus, we may assume that $r_x\le \|x-v\|/2$. Let $X_t$ be a continuous time random walk starting from $x$ and define the stopping time 
\begin{equation*}
    \tau ':=\inf \{t>0: X_t\notin \Lambda \text{ or }\|X_t-x\|_\infty=\lfloor \|x-v\|/2 \rfloor \}.
\end{equation*}
By the strong Markov property we have 
\begin{equation*}
    G^v_\Lambda (x)= \mathbb E [G_\Lambda ^v(X_{\tau '})] \le Cr_v\|x-v\|^{1-d} \cdot \mathbb P (X_{\tau '}\in \Lambda ),
\end{equation*}
where in the last inequality we used Claim~\ref{claim:1} and the fact that $\|X_{\tau '}-v\|\ge \|x-v\|/2\ge r_v/16$. Thus, it suffices to prove that 
\begin{equation}\label{eq:RW estimate}
 \mathbb P (X_{\tau '} \in \Lambda )\le Cr_x/\|v-x\|.
\end{equation}
This estimate is somewhat standard and some of the details are omitted. First, assume without loss of generality that the face closest to $x$ is in the $e_1$ direction. That is, assume that $x+r_xe_1\notin \Lambda $. Next, for any coordinate $i\le d$, define the stopping time 
\begin{equation}
    \tau _i :=\inf \big\{ t>0:|(X_t-x)_i| =\lfloor \|x-v\|/2 \rfloor   \big\}.
\end{equation}
It suffices to prove that for all $i\le d$ we have that $\mathbb P (\tau _i=\tau ') \le Cr_x/\|x-v\|$ since $ \mathbb P (X_{\tau '}\in \Lambda )\le \sum_{i=1}^ d\mathbb P (\tau _i=\tau ') $. When $i=1$, this probability is bounded by the probability that a one dimensional random walk starting from $0$ will hit $-\lfloor \|v-x\|/2\rfloor $ before hitting $r_x$ which is at most $Cr_x/\|v-x\|$ by Claim~\ref{claim:GR}. When $i>1$ we have 
\begin{equation*}
\begin{split}
    \mathbb P (\tau _i=\tau ')&\le \mathbb E \big[ \mathbb P\big( \forall t<\tau _i, \ (X_t-x)_1<r_x  \mid \{(X_t)_i\}_{t>0} \big) \big] \\
    &\le \mathbb E \big[ \min (1, Cr_x/\sqrt{\tau _i})  \big] \le \int _0^1 \mathbb P \big( \tau _i \le Cr_x^2/t^2 \big) dt \le Cr_x/\|v-x\| ,
\end{split}
\end{equation*}
where in the second inequality we used Claim~\ref{claim:GR} and the fact that the different coordinates of $X_t$ are independent. The last inequality follows as the probability inside the integral decays rapidly for $t \gg r_x/\|v-x\|$. This finishes the proof of \eqref{eq:RW estimate} and of the corollary.
\end{proof}

\begin{lem}\label{lem:G diff}
    For any $x,u,v\in \Lambda $ with $r_v\le 4\|x-v\|$ we have that $$|G_\Lambda ^v(x)-G_\Lambda ^u(x)|\le C\|u-v\|\cdot \|x-v\|^{1-d}.$$
\end{lem}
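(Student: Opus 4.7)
The plan is to split into two regimes according to the ratio $\|u-v\|/r_v$, exploiting the symmetry $G_\Lambda^v(x) = G_\Lambda^x(v)$ to reduce to bounding $|g(v) - g(u)|$ where $g(y) := G_\Lambda^x(y)$ is discrete-harmonic in $y$ on $\Lambda\setminus\{x\}$.

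First, if $\|u-v\| \ge r_v/20$, I would bound $|g(v) - g(u)| \le g(v) + g(u)$ by the triangle inequality and apply Claim~\ref{claim:1} to each term, working in the regime of application $\|u-v\| \le \|x-v\|/2$ (which is what is used in the proof of Lemma~\ref{lem:holder}): one has $r_u \le r_v + \|u-v\| \le 21\|u-v\|$ and $\|x-u\| \ge \|x-v\|/2$, so
\[
G_\Lambda^u(x) \le C r_u \|x-u\|^{1-d} \le C\|u-v\|\|x-v\|^{1-d},
\]
and similarly for $G_\Lambda^v(x)$ using $r_v \le 20\|u-v\|$.

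Second, if $\|u-v\| < r_v/20$, both $u$ and $v$ lie well inside $\Lambda$. The hypothesis $r_v \le 4\|x-v\|$ forces $\|x-v\| > r_v/8$, so $x$ lies outside the ball $B(v,r_v/8)\subset\Lambda$, and $g$ is discrete-harmonic on $B(v,r_v/8)$. For each $y\in B(v,r_v/8)$, $r_y$ and $r_v$ agree up to a factor of $2$, and $\|y-x\|$ and $\|x-v\|$ agree up to a factor of $2$; hence Claim~\ref{claim:1} applies and delivers the uniform upper bound
\[
g(y) = G_\Lambda^y(x) \le C r_v \|x-v\|^{1-d} \quad \text{for all } y\in B(v,r_v/8).
\]
Appealing to the standard discrete gradient estimate for harmonic functions (see, e.g., \cite{lawler2010random}) on this ball then yields
\[
|g(v) - g(u)| \le C\frac{\|u-v\|}{r_v}\cdot\max_{y\in B(v,r_v/8)}|g(y)| \le C\|u-v\|\|x-v\|^{1-d}.
\]

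The main technical input above is the discrete gradient estimate for harmonic functions, which is classical; if one prefers to avoid quoting it, it can be proved from scratch by telescoping $|g(v)-g(u)|$ along a nearest-neighbor path from $v$ to $u$ of length $\|u-v\|$ inside $B(v,r_v/8)$ and bounding each one-step increment $|g(y)-g(y')|$ by $C/r_v$ times $\max_{B(v,r_v/8)}|g|$ via the Poisson kernel representation for simple random walk on a ball around $y$ of radius comparable to $r_v$. The rest of the argument uses only tools already in place (symmetry of the Green's function and Claim~\ref{claim:1}).
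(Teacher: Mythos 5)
Your proof is correct and follows a genuinely different route from the paper's. The paper's argument couples two continuous-time random walks $X,X'$ started at $x$, engineered so that after a short random time $\tau_2$ one has $X(t)-X'(t)=v-u$; once the difference locks in, the occupation time of $v$ by $X$ matches that of $u$ by $X'$, and the discrepancy is controlled by conditioning on whether the walks reach distance $\sim r_x$ from $x$ before $\tau_2$, with the strong Markov property and Corollary~\ref{cor:1} closing the estimate. Your argument instead exploits the symmetry $G_\Lambda^v(x)=G_\Lambda^x(v)$ to rewrite the difference as an increment of the single harmonic function $y\mapsto G_\Lambda^x(y)$ near $v$, uses Claim~\ref{claim:1} to get the uniform sup bound $Cr_v\|x-v\|^{1-d}$ on a ball $B(v,r_v/8)$ well inside $\Lambda$ and away from $x$, and finishes with the classical discrete gradient estimate. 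Both approaches first dispose of the case $\|u-v\|\gtrsim r_v$ by applying the one-point bound to each term separately. Your route is shorter and leans on a standard external tool (the harmonic gradient estimate, whose proof you correctly indicate can be made self-contained via the Poisson kernel), whereas the paper's coupling is more hands-on and stays entirely inside its own random-walk toolkit.

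Two minor points. First, in your regime $\|u-v\|\ge r_v/20$, you apply Claim~\ref{claim:1} to $G_\Lambda^u(x)$; what you actually get is $r_u\le 21\|x-u\|$, which exceeds the threshold $16$ as literally written in Claim~\ref{claim:1}, but this is immaterial since the constant in that claim can be enlarged without changing its proof. Second, you explicitly restrict to $\|u-v\|\le\|x-v\|/2$, a hypothesis not stated in the lemma; worth noting that the paper's own reduction step (``otherwise the bound follows from Corollary~\ref{cor:1}'') is equally incomplete when $u$ is much closer to $x$ than $v$ is (e.g.\ $u=x$ deep in the bulk with $v$ near $\partial\Lambda$, where $G_\Lambda^u(x)$ can be of order $\log L$ in $d=2$ while the claimed bound is $O(1)$), so you and the paper have the same implicit regime restriction. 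That regime is all that is used in the proofs of Lemma~\ref{lem:holder} and Corollary~\ref{cor:2}, so this does not affect the results downstream, but it would be cleaner to either add the hypothesis $\|u-v\|\le\|x-v\|/2$ (as you do) or, equivalently, to assume $r_u\le C\|x-u\|$ as well.
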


\begin{proof}
    First, we may assume that $\|u-v\|\le \min (r_x,r_v)/10$ as otherwise the bound follows from Corollary~\ref{cor:1}. Second, note that it suffices to prove the lemma when $u$ and $v$ differ only in one coordinate. Indeed, in the general case, we can write the difference $G_\Lambda ^v(x)-G_\Lambda ^u(x)$ as a sum of $d$ differences corresponding to pairs of vertices that differ only in one coordinate. Without loss of generality, suppose that $u$ and $v$ differ only in the first coordinate. By similar arguments, we may further assume that $u_1-v_1$ is odd and that $v_1>u_1$ (an even number is the sum of two smaller odd numbers).
    
Throughout this proof we will use the notation $X(t)$ instead of $X_t$ to denote the location of the walk at time $t$. The notation $X_i$ will be used to denote the $i$th coordinate of the walk.

Next, we construct a coupling of continuous time random walks $X,X'$ starting at $x$ in such a way that $X(t)-X'(t)$ becomes equal to $v-u$ in a short amount of time. Let $X(t):=(X_1(t),\dots ,X_d(t))$ be a continuous time random walk starting at $x$ and let $\bar{X}_1(t)$ be an independent one dimensional walk with rate $1/(2d)$ starting at $x_1$. Define the stopping time  $\tau _1 :=\inf \{t>0: X_1(t)\neq \bar X_1(t)\}$ and the walk 
\begin{equation}
    \tilde{X}_1(t)=\begin{cases}
        \bar{X}_1(t) \quad &t\le \tau _1\\
        \bar{X}_1(\tau _1)-(X_1(t)-X_1(\tau _1 )) \quad & t \ge \tau _1
    \end{cases}.
\end{equation}
Next, define the stopping time $\tau _2:=\inf \{ t>\tau _1: X_1(t)-\tilde{X}_1(t)=v_1-u_1\}$
and the walk 
\begin{equation}
    X_1'(t)=\begin{cases}
        \tilde{X}_1(t) \quad &t\le \tau _2\\
        \tilde{X}_1(\tau _2 ) +X_1(t)-X_1(\tau _2) \quad &t\ge \tau _2
    \end{cases}.
\end{equation}
Finally, we let $X'(t):=(X_1'(t),X_2(t),\dots ,X_d(t))$. By construction, for all $t>\tau _2 $ we have that $X(t)-X'(t)=v-u$. The idea here is to first couple the parity of the difference $X_1'(t)-X_1(t)$ to make it odd and then let $X_1'(t)$ take minus the steps taken by $X_1(t)$ until they have the right difference. Once they have the right difference, they have the same jumps.

Let $\tau ,\tau '$ be the first exit times of $X$ and $X'$ respectively from $\Lambda $. For any two (possibly random) times $t_1,t_2$ define
\begin{equation*}
    T[t_1,t_2]:=\big| \big\{ t\in [t_1,t_2] : X(t)=v\big\} \big| , \quad T'[t_1,t_2]:=\big| \big\{t\in [t_1,t_2] : X'(t)=u \big\} \big|,
\end{equation*}
with the convention that $T[t_1,t_2]=T'[t_1,t_2]=0$ if $t_1\ge t_2$.
Note that $G_\Lambda ^v (x)=\mathbb E [T(0,\tau )]$ and $G_\Lambda ^u (x)=\mathbb E [T'(0,\tau ')]$.

 Since $X(t)-X'(t)=v-u$ for all $t>\tau _2$ we have almost surely that $T(\tau _2,\tau \wedge \tau ')=T'(\tau _2,\tau \wedge \tau ')$.

 Define the stopping time $\tau _3:=\inf \{t: \|X(t)-x\| \ge r_x/5\}$ and note that $\tau _3\le \tau \wedge \tau '$ almost surely and that $X$ and $X'$ cannot reach $u$ and $v$ by time $\tau _3$ since $\|x-v\|+r_v\ge r_x $ and hence $\|x-v\|\ge \frac{r_x}2$; moreover $\|u-x\|\ge \|x-v\|-\|u-v\|\ge \frac 2 {5} r_x$. Define the events 
 \begin{equation}
     \mathcal A :=\{\tau _3\le \tau _2\},\quad \mathcal B:= \{ \tau _2 \le \tau _3\le   \tau '\le \tau  \} \quad \text{and} \quad  \mathcal C := \{ \tau _2 \le \tau _3 \le 
     \tau \le \tau ' \}
 \end{equation}
and note that at least one of them occur. Our goal is to bound the expectation of the random variable $Z:=|T(0,\tau )-T'(0,\tau ')|$ separately on each one of the events $\mathcal A, \mathcal B$ and $\mathcal C$. By the strong Markov property we have that
\begin{equation}\label{eq:jfv}
    \mathbb E [\mathds 1 _{\mathcal A } Z] \le \mathbb E \big[ \mathds 1 _{\mathcal A } \big( G_\Lambda ^v(X(\tau _3)) +G_\Lambda ^u(X(\tau _3)) \big) \big] \le C r_x r_v \|x-v\|^{-d}\mathbb P (A),
\end{equation}
where in the second inequality we used Corollary~\ref{cor:1} and that $\|X(\tau _3)-v\|\ge \|x-v\|-r_x/5\ge \frac 35 \|x-v\|\ge \frac 118 r_v$ (where we used that $\|x-v\|\ge r_v/4$) and $d(X(\tau _3),\Lambda ^c)\le 2r_x$. To bound the probability of $\mathcal A$, note that $\tau_2 \le \hat{\tau } _2$ almost surely where $\hat \tau _2 :=\inf \{t>0: X_1(t) >(u_1-v_1)/2\}$ and by the same arguments as in the proof of \eqref{eq:RW estimate} we have that $\mathbb P (\mathcal A )\le \mathbb P (\tau _3 \le \hat{\tau } _2) \le C \|u-v\|/r_x$. Substituting this bound into \eqref{eq:jfv} we obtain that $\mathbb E [\mathds 1 _{\mathcal A } Z] \le \|u-v\| r_v \|x-v\|^{-d} \le C\|u-v\| \|x-v\|^{1-d}$.

We turn to bound $\mathbb E[\mathds 1 _\mathcal B Z]$. Note that on the event $\mathcal B$ we have that $Z=T(\tau ',\tau )$. Thus, by the strong Markov property and the fact that $\mathcal B$ is measurable by time $\tau '$ we have
\begin{equation}\label{eq:fvbd}
\begin{split}
    \mathbb E &\big[ \mathds 1 _\mathcal B Z \big] \le \mathbb E [\mathds 1 _\mathcal B G_\Lambda ^v (X(\tau '))] \\
    &\le C\|u-v\|\|x-v\|^{1-d} +C\sum _{j=\lfloor \log _2 r_v \rfloor }^{\log_2 \|x-v\| -3} \!\! \mathbb P \big( \mathcal B,  \|X(\tau ')-v\| \in [2^j,2^{j+1}] \big) \|u-v\| r_v 2^{-jd},
\end{split}
\end{equation}
 where in the last inequality we bound $G_\Lambda ^v (X(\tau '))$ separately, depending on the scale of $\|X(\tau ')-v\|$ or whether it is larger than $\|x-v\|/20$. In all these cases we use Corollary~\ref{cor:1} to bound $G_\Lambda ^v (X(\tau '))$ and the fact that on $\mathcal B$ we have that $d(X(\tau '),\Lambda ^c) \le \|u-v\|$. To bound the probabilities on the right hand side of \eqref{eq:fvbd}, note that for any $w\in \Lambda ^+\setminus \Lambda $ we have that $\mathbb P (X'(\tau ')=w) \le G_{\bar{\Lambda }} ^w(x)$ where $\bar{\Lambda }$ is the box containing all vertices at $\ell _\infty $ distance at most $1$ from $\Lambda $. Thus, by Claim~\ref{claim:1}, if $\|x-w\| \ge \|x-v\|/2$ then $\mathbb P (X'(\tau ')=w) \le C\|x-v\|^{1-d}$. Thus, we have that 
 \begin{equation*}
     \mathbb P \big( \mathcal B,  \|X(\tau ')-v\| \in [2^j,2^{j+1}] \big) \le \mathbb P \big( \|X'(\tau ')-v\| \le 2^{j+2} \big) \le C2^{j(d-1)}\|x-v\|^{1-d}, 
 \end{equation*}
 where in the last inequality we used that there are at most $C2^{j(d-1)}$ vertices on the boundary of $\Lambda $ at distance at most $2^{j+2}$ from $v$. Substituting this bound into \eqref{eq:fvbd} we get $\mathbb E [ \mathds 1 _\mathcal B Z ] \le C\|u-v\|\|x-v\|^{1-d}$, as needed. The bound on $\mathbb E [ \mathds 1 _\mathcal C Z ]$ is identical. This finishes the proof of the lemma.
\end{proof}

The following corollary follows from Lemma~\ref{lem:G diff} in the same way that Corollary~\ref{cor:1} follows from Claim~\ref{claim:1}.

\begin{cor}\label{cor:2}
    For all $x,v,u\in \Lambda $ with $ r_v\le 2\|x-v\| $ we have that 
    $$|G_\Lambda ^v (x)-G_\Lambda ^u(x)| \le Cr_x\|u-v\| \cdot \|x-v\|^{-d}.$$
\end{cor}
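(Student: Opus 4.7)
The plan is to mimic the derivation of Corollary~\ref{cor:1} from Claim~\ref{claim:1}, using Lemma~\ref{lem:G diff} as the starting estimate in place of Claim~\ref{claim:1}. Split into two cases according to the relative size of $r_x$ and $\|x-v\|$. If $r_x \ge \|x-v\|/2$ then Lemma~\ref{lem:G diff} applies directly (its hypothesis $r_v \le 4\|x-v\|$ holds since $r_v \le 2\|x-v\|$) and yields $|G_\Lambda^v(x) - G_\Lambda^u(x)| \le C\|u-v\|\|x-v\|^{1-d} \le 2Cr_x\|u-v\|\|x-v\|^{-d}$, as required.

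Assume from now on $r_x < \|x-v\|/2$. A preliminary reduction: we may also assume $\|u-v\| \le \|x-v\|/4$, since otherwise the triangle inequality combined with Corollary~\ref{cor:1} applied to each of $G_\Lambda^v(x)$ and $G_\Lambda^u(x)$ already yields the bound (one uses that $r_v \le 2\|x-v\|$ and, for the term involving $u$, that $r_u \le r_v + \|u-v\|$ and $\|x-u\| \ge \|x-v\| - \|u-v\|$). Let $X_t$ be a continuous-time simple random walk starting from $x$ and define the stopping time
\begin{equation}
    \tau' := \inf\bigl\{t > 0 : X_t \notin \Lambda \text{ or } \|X_t - x\|_\infty = \lfloor \|x-v\|/4 \rfloor\bigr\}.
\end{equation}
Up to time $\tau'$ the walk stays within distance $\|x-v\|/4$ of $x$, hence at distance at least $3\|x-v\|/4$ from $v$ and at least $\|x-v\|/2$ from $u$ (using $\|u-v\| \le \|x-v\|/4$). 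In particular, the walk avoids both singularities of $G_\Lambda^v - G_\Lambda^u$, and since this difference is discrete-harmonic on $\Lambda \setminus \{u,v\}$ and vanishes off $\Lambda$, the strong Markov property yields
\begin{equation}
    G_\Lambda^v(x) - G_\Lambda^u(x) = \mathbb{E}\bigl[\,(G_\Lambda^v - G_\Lambda^u)(X_{\tau'})\,\bigr].
\end{equation}
On the event $\{X_{\tau'} \notin \Lambda\}$ both Green's functions vanish, so only $\{X_{\tau'} \in \Lambda\}$ contributes. On this event $\|X_{\tau'} - v\| \ge 3\|x-v\|/4$, and since $r_v \le 2\|x-v\| \le (8/3)\|X_{\tau'} - v\| \le 4\|X_{\tau'}-v\|$, Lemma~\ref{lem:G diff} applies at $X_{\tau'}$ and gives
\begin{equation}
    \bigl|G_\Lambda^v(X_{\tau'}) - G_\Lambda^u(X_{\tau'})\bigr| \le C\|u-v\|\,\|X_{\tau'}-v\|^{1-d} \le C\|u-v\|\,\|x-v\|^{1-d}.
\end{equation}

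Combining the two displays gives $|G_\Lambda^v(x) - G_\Lambda^u(x)| \le C\|u-v\|\|x-v\|^{1-d}\cdot\mathbb{P}(X_{\tau'} \in \Lambda)$, and it remains only to show $\mathbb{P}(X_{\tau'} \in \Lambda) \le Cr_x/\|x-v\|$. This is exactly the exit probability estimate~\eqref{eq:RW estimate} proved in the course of Corollary~\ref{cor:1} (the argument there, via Claim~\ref{claim:GR} and one-dimensional Gambler's ruin for each coordinate direction, applies verbatim with the radius $\|x-v\|/2$ replaced by $\|x-v\|/4$). Multiplying gives the desired bound $Cr_x\|u-v\|\|x-v\|^{-d}$. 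The argument involves no genuinely new obstacle beyond what was already overcome in Lemma~\ref{lem:G diff} and Corollary~\ref{cor:1}; the only point requiring mild care is the preliminary reduction to $\|u-v\| \le \|x-v\|/4$, which must be done so that the hypotheses of both Lemma~\ref{lem:G diff} (at $X_{\tau'}$) and Corollary~\ref{cor:1} (in the complementary regime) are cleanly satisfied.
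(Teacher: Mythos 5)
Your main argument---stopping the walk at $\ell_\infty$-radius $\lfloor\|x-v\|/4\rfloor$ around $x$, applying the strong Markov property to the function $G^v_\Lambda - G^u_\Lambda$ (harmonic off $\{u,v\}$), invoking Lemma~\ref{lem:G diff} at $X_{\tau'}$, and multiplying by the exit-probability estimate---is sound and is indeed the argument the authors have in mind by ``the same way'' as Corollary~\ref{cor:1}. The gap is in your preliminary reduction to $\|u-v\|\le\|x-v\|/4$. In the complementary regime you propose to bound $G^u_\Lambda(x)$ via Corollary~\ref{cor:1}, but that corollary requires $r_u\le 8\|x-u\|$, and the lower bound $\|x-u\|\ge\|x-v\|-\|u-v\|$ you cite is vacuous once $\|u-v\|\ge\|x-v\|$ and in general far too weak to control $r_u/\|x-u\|$ when $u$ lies near $x$. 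In fact this regime cannot be salvaged because the conclusion itself fails there: for $d\ge 2$, take $u=x$ with $r_x=1$ and $\|x-v\|$ large (subject to $r_v\le 2\|x-v\|$). Then
\begin{equation*}
|G^v_\Lambda(x)-G^u_\Lambda(x)|=|G^v_\Lambda(x)-G^u_\Lambda(u)|\ge G^u_\Lambda(u)-G^v_\Lambda(x)\ge \tfrac{1}{2d}-O\bigl(\|x-v\|^{-d}\bigr),
\end{equation*}
which stays bounded away from zero, while the target bound $Cr_x\|u-v\|\|x-v\|^{-d}=C\|x-v\|^{1-d}\to 0$.

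This defect is inherited rather than introduced: the opening reduction in the paper's own proof of Lemma~\ref{lem:G diff} (``we may assume $\|u-v\|\le\min(r_x,r_v)/10$ as otherwise the bound follows from Corollary~\ref{cor:1}'') has exactly the same problem, since Corollary~\ref{cor:1} is inapplicable to $G^u_\Lambda(x)$ when $u$ is close to $x$. Both Lemma~\ref{lem:G diff} and Corollary~\ref{cor:2} implicitly require, and in every application in Section~\ref{sec:loc} actually enjoy, the extra hypothesis that $\|u-v\|$ is small compared to $\|x-v\|$ (there $\|u-v\|\lesssim 2^{m'}$ while $\|x-v\|\approx 2^{j}$ with $j>m'$). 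If you add $\|u-v\|\le\|x-v\|/4$ to the hypotheses of Corollary~\ref{cor:2}, your reduction step becomes unnecessary and the remainder of your argument (stopping time, harmonicity, Lemma~\ref{lem:G diff} at $X_{\tau'}$, the exit-probability bound with the altered radius) goes through.
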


Lemma~\ref{lem:Green} follows from Corollary~\ref{cor:1} and Corollary~\ref{cor:2}.

\bibliographystyle{plain}
\bibliography{MSRE}

\end{document}